\newtheorem{theo}{Theorem}
\newtheorem{lemma}{Lemma}
\newtheorem{proposition}{Proposition}
\newtheorem{rem}{Remark}
\newtheorem{example}{Example}
\let\pdfoutput=\undefined\fi
\chardef\@x10\chardef\@xv60
\def\tcitime{
\def\@time{%
  \@minute\time\@hour\@minute\divide\@hour\@xv
  \ifnum\@hour<\@x 0\fi\the\@hour:%
  \multiply\@hour\@xv\advance\@minute-\@hour
  \ifnum\@minute<\@x 0\fi\the\@minute
  }}%
\def\x@hyperref#1#2#3{%
   \catcode`\~ = 12
   \catcode`\$ = 12
   \catcode`\_ = 12
   \catcode`\# = 12
   \catcode`\& = 12
   \catcode`\% = 12
   \y@hyperref{#1}{#2}{#3}%
}
\def\y@hyperref#1#2#3#4{%
   #2\ref{#4}#3
   \catcode`\~ = 13
   \catcode`\$ = 3
   \catcode`\_ = 8
   \catcode`\# = 6
   \catcode`\& = 4
   \catcode`\% = 14
}
\def\QCTOpt[#1]#2{%
  \def\QCTOptB{#1}
  \def\QCTOptA{#2}
}
\def\QCTNOpt#1{%
  \def\QCTOptA{#1}
  \let\QCTOptB\empty
}
\def\Qct{%
  \@ifnextchar[{%
    \QCTOpt}{\QCTNOpt}
}
\def\QCBOpt[#1]#2{%
  \def\QCBOptB{#1}%
  \def\QCBOptA{#2}%
}
\def\QCBNOpt#1{%
  \def\QCBOptA{#1}%
  \let\QCBOptB\empty
}
\def\Qcb{%
  \@ifnextchar[{%
    \QCBOpt}{\QCBNOpt}%
}
\def\PrepCapArgs{%
  \ifx\QCBOptA\empty
    \ifx\QCTOptA\empty
      {}%
    \else
      \ifx\QCTOptB\empty
        {\QCTOptA}%
      \else
        [\QCTOptB]{\QCTOptA}%
      \fi
    \fi
  \else
    \ifx\QCBOptA\empty
      {}%
    \else
      \ifx\QCBOptB\empty
        {\QCBOptA}%
      \else
        [\QCBOptB]{\QCBOptA}%
      \fi
    \fi
  \fi
}
\def\GRAPHICSPS#1{%
 \ifcase\GRAPHICSTYPE
   \special{ps: #1}%
 \or
   \special{language "PS", include "#1"}%
 \fi
}%
\def\graffile#1#2#3#4{%
    \bgroup
	   \@inlabelfalse
       \leavevmode
       \@ifundefined{bbl@deactivate}{\def~{\string~}}{\activesoff}%
        \raise -#4 \BOXTHEFRAME{%
           \hbox to #2{\raise #3\hbox to #2{\null #1\hfil}}}%
    \egroup
}%
\def\draftbox#1#2#3#4{%
 \leavevmode\raise -#4 \hbox{%
  \frame{\rlap{\protect\tiny #1}\hbox to #2%
   {\vrule height#3 width\z@ depth\z@\hfil}%
  }%
 }%
}%
\let\nographics=\@msidraft
\newif\ifwasdraft
\def\GRAPHIC#1#2#3#4#5{%
   \ifnum\@msidraft=\@ne\draftbox{#2}{#3}{#4}{#5}%
   \else\graffile{#1}{#3}{#4}{#5}%
   \fi
}
\def\addtoLaTeXparams#1{%
    \edef\LaTeXparams{\LaTeXparams #1}}%
\newif\ifBoxFrame \BoxFramefalse
\newif\ifOverFrame \OverFramefalse
\newif\ifUnderFrame \UnderFramefalse
\def\BOXTHEFRAME#1{%
   \hbox{%
      \ifBoxFrame
         \frame{#1}%
      \else
         {#1}%
      \fi
   }%
}
\def\doFRAMEparams#1{\BoxFramefalse\OverFramefalse\UnderFramefalse\readFRAMEparams#1\end}%
\def\readFRAMEparams#1{%
 \ifx#1\end%
  \let\next=\relax
  \else
  \ifx#1i\dispkind=\z@\fi
  \ifx#1d\dispkind=\@ne\fi
  \ifx#1f\dispkind=\tw@\fi
  \ifx#1t\addtoLaTeXparams{t}\fi
  \ifx#1b\addtoLaTeXparams{b}\fi
  \ifx#1p\addtoLaTeXparams{p}\fi
  \ifx#1h\addtoLaTeXparams{h}\fi
  \ifx#1X\BoxFrametrue\fi
  \ifx#1O\OverFrametrue\fi
  \ifx#1U\UnderFrametrue\fi
  \ifx#1w
    \ifnum\@msidraft=1\wasdrafttrue\else\wasdraftfalse\fi
    \@msidraft=\@ne
  \fi
  \let\next=\readFRAMEparams
  \fi
 \next
 }%
\def\IFRAME#1#2#3#4#5#6{%
      \bgroup
      \let\QCTOptA\empty
      \let\QCTOptB\empty
      \let\QCBOptA\empty
      \let\QCBOptB\empty
      #6%
      \parindent=0pt
      \leftskip=0pt
      \rightskip=0pt
      \setbox0=\hbox{\QCBOptA}%
      \@tempdima=#1\relax
      \ifOverFrame
          \typeout{This is not implemented yet}%
          \show\HELP
      \else
         \ifdim\wd0>\@tempdima
            \advance\@tempdima by \@tempdima
            \ifdim\wd0 >\@tempdima
               \setbox1 =\vbox{%
                  \unskip\hbox to \@tempdima{\hfill\GRAPHIC{#5}{#4}{#1}{#2}{#3}\hfill}%
                  \unskip\hbox to \@tempdima{\parbox[b]{\@tempdima}{\QCBOptA}}%
               }%
               \wd1=\@tempdima
            \else
               \textwidth=\wd0
               \setbox1 =\vbox{%
                 \noindent\hbox to \wd0{\hfill\GRAPHIC{#5}{#4}{#1}{#2}{#3}\hfill}\\%
                 \noindent\hbox{\QCBOptA}%
               }%
               \wd1=\wd0
            \fi
         \else
            \ifdim\wd0>0pt
              \hsize=\@tempdima
              \setbox1=\vbox{%
                \unskip\GRAPHIC{#5}{#4}{#1}{#2}{0pt}%
                \break
                \unskip\hbox to \@tempdima{\hfill \QCBOptA\hfill}%
              }%
              \wd1=\@tempdima
           \else
              \hsize=\@tempdima
              \setbox1=\vbox{%
                \unskip\GRAPHIC{#5}{#4}{#1}{#2}{0pt}%
              }%
              \wd1=\@tempdima
           \fi
         \fi
         \@tempdimb=\ht1
         \advance\@tempdimb by -#2
         \advance\@tempdimb by #3
         \leavevmode
         \raise -\@tempdimb \hbox{\box1}%
      \fi
      \egroup%
}%
\def\DFRAME#1#2#3#4#5{%
  \vspace\topsep
  \hfil\break
  \bgroup
     \leftskip\@flushglue
	 \rightskip\@flushglue
	 \parindent\z@
	 \parfillskip\z@skip
     \let\QCTOptA\empty
     \let\QCTOptB\empty
     \let\QCBOptA\empty
     \let\QCBOptB\empty
	 \vbox\bgroup
        \ifOverFrame 
           #5\QCTOptA\par
        \fi
        \GRAPHIC{#4}{#3}{#1}{#2}{\z@}%
        \ifUnderFrame 
           \break#5\QCBOptA
        \fi
	 \egroup
  \egroup
  \vspace\topsep
  \break
}%
\def\FFRAME#1#2#3#4#5#6#7{%
  \@ifundefined{floatstyle}
    {
     \begin{figure}[#1]%
    }
    {
	 \ifx#1h
      \begin{figure}[H]%
	 \else
      \begin{figure}[#1]%
	 \fi
	}
  \let\QCTOptA\empty
  \let\QCTOptB\empty
  \let\QCBOptA\empty
  \let\QCBOptB\empty
  \ifOverFrame
    #4
    \ifx\QCTOptA\empty
    \else
      \ifx\QCTOptB\empty
        \caption{\QCTOptA}%
      \else
        \caption[\QCTOptB]{\QCTOptA}%
      \fi
    \fi
    \ifUnderFrame\else
      \label{#5}%
    \fi
  \else
    \UnderFrametrue%
  \fi
  \begin{center}\GRAPHIC{#7}{#6}{#2}{#3}{\z@}\end{center}%
  \ifUnderFrame
    #4
    \ifx\QCBOptA\empty
      \caption{}%
    \else
      \ifx\QCBOptB\empty
        \caption{\QCBOptA}%
      \else
        \caption[\QCBOptB]{\QCBOptA}%
      \fi
    \fi
    \label{#5}%
  \fi
  \end{figure}%
 }%
\def\makeactives{
  \catcode`\"=\active
  \catcode`\;=\active
  \catcode`\:=\active
  \catcode`\'=\active
  \catcode`\~=\active
}
   \gdef\activesoff{%
      \def"{\string"}%
      \def;{\string;}%
      \def:{\string:}%
      \def'{\string'}%
      \def~{\string~}%
    }
\def\FRAME#1#2#3#4#5#6#7#8{%
 \bgroup
 \ifnum\@msidraft=\@ne
   \wasdrafttrue
 \else
   \wasdraftfalse%
 \fi
 \def\LaTeXparams{}%
 \dispkind=\z@
 \def\LaTeXparams{}%
 \doFRAMEparams{#1}%
 \ifnum\dispkind=\z@\IFRAME{#2}{#3}{#4}{#7}{#8}{#5}\else
  \ifnum\dispkind=\@ne\DFRAME{#2}{#3}{#7}{#8}{#5}\else
   \ifnum\dispkind=\tw@
    \edef\@tempa{\noexpand\FFRAME{\LaTeXparams}}%
    \@tempa{#2}{#3}{#5}{#6}{#7}{#8}%
    \fi
   \fi
  \fi
  \ifwasdraft\@msidraft=1\else\@msidraft=0\fi{}%
  \egroup
 }%
\def\TEXUX#1{"texux"}
\long\def\QQQ#1#2{%
     \long\expandafter\def\csname#1\endcsname{#2}}%
\long\def\QQA#1#2{}%
\def\QTR#1#2{{\csname#1\endcsname {#2}}}%
\def\EXPAND#1[#2]#3{}%
\def\NOEXPAND#1[#2]#3{}%
\def\LaTeXparent#1{}%
\def\ChildStyles#1{}%
\def\ChildDefaults#1{}%
\def\QTagDef#1#2#3{}%
  \providecommand{\UNICODE}[2][]{\protect\rule{.1in}{.1in}}
  \providecommand{\U}[1]{\protect\rule{.1in}{.1in}}
\def\QQfnmark#1{\footnotemark}
 \def\abstract{%
  \if@twocolumn
   \section*{Abstract (Not appropriate in this style!)}%
   \else \small 
   \begin{center}{\bf Abstract\vspace{-.5em}\vspace{\z@}}\end{center}%
   \quotation 
   \fi
  }%
   \def\registered{\relax\ifmmode{}\r@gistered
                    \else$\m@th\r@gistered$\fi}%
 \def\r@gistered{^{\ooalign
  {\hfil\raise.07ex\hbox{$\scriptstyle\rm\text{R}$}\hfil\crcr
  \mathhexbox20D}}}}{}%
\newdimen\theight
\def\newfmtname{LaTeX2e}
  \DeclareOldFontCommand{\rm}{\normalfont\rmfamily}{\mathrm}
  \DeclareOldFontCommand{\sf}{\normalfont\sffamily}{\mathsf}
  \DeclareOldFontCommand{\tt}{\normalfont\ttfamily}{\mathtt}
  \DeclareOldFontCommand{\bf}{\normalfont\bfseries}{\mathbf}
  \DeclareOldFontCommand{\it}{\normalfont\itshape}{\mathit}
  \DeclareOldFontCommand{\sl}{\normalfont\slshape}{\@nomath\sl}
  \DeclareOldFontCommand{\sc}{\normalfont\scshape}{\@nomath\sc}
\def\alpha{{\Greekmath 010B}}%
\def\beta{{\Greekmath 010C}}%
\def\gamma{{\Greekmath 010D}}%
\def\delta{{\Greekmath 010E}}%
\def\epsilon{{\Greekmath 010F}}%
\def\zeta{{\Greekmath 0110}}%
\def\eta{{\Greekmath 0111}}%
\def\theta{{\Greekmath 0112}}%
\def\iota{{\Greekmath 0113}}%
\def\kappa{{\Greekmath 0114}}%
\def\lambda{{\Greekmath 0115}}%
\def\mu{{\Greekmath 0116}}%
\def\nu{{\Greekmath 0117}}%
\def\xi{{\Greekmath 0118}}%
\def\pi{{\Greekmath 0119}}%
\def\rho{{\Greekmath 011A}}%
\def\sigma{{\Greekmath 011B}}%
\def\tau{{\Greekmath 011C}}%
\def\upsilon{{\Greekmath 011D}}%
\def\phi{{\Greekmath 011E}}%
\def\chi{{\Greekmath 011F}}%
\def\psi{{\Greekmath 0120}}%
\def\omega{{\Greekmath 0121}}%
\def\varepsilon{{\Greekmath 0122}}%
\def\vartheta{{\Greekmath 0123}}%
\def\varpi{{\Greekmath 0124}}%
\def\varrho{{\Greekmath 0125}}%
\def\varsigma{{\Greekmath 0126}}%
\def\varphi{{\Greekmath 0127}}%
\def\nabla{{\Greekmath 0272}}
\def\FindBoldGroup{%
   {\setbox0=\hbox{$\mathbf{x\global\edef\theboldgroup{\the\mathgroup}}$}}%
}
\def\Greekmath#1#2#3#4{%
    \if@compatibility
        \ifnum\mathgroup=\symbold
           \mathchoice{\mbox{\boldmath$\displaystyle\mathchar"#1#2#3#4$}}%
                      {\mbox{\boldmath$\textstyle\mathchar"#1#2#3#4$}}%
                      {\mbox{\boldmath$\scriptstyle\mathchar"#1#2#3#4$}}%
                      {\mbox{\boldmath$\scriptscriptstyle\mathchar"#1#2#3#4$}}%
        \else
           \mathchar"#1#2#3#4%
        \fi 
    \else 
        \FindBoldGroup
        \ifnum\mathgroup=\theboldgroup 
           \mathchoice{\mbox{\boldmath$\displaystyle\mathchar"#1#2#3#4$}}%
                      {\mbox{\boldmath$\textstyle\mathchar"#1#2#3#4$}}%
                      {\mbox{\boldmath$\scriptstyle\mathchar"#1#2#3#4$}}%
                      {\mbox{\boldmath$\scriptscriptstyle\mathchar"#1#2#3#4$}}%
        \else
           \mathchar"#1#2#3#4%
        \fi     	    
	  \fi}
\newif\ifGreekBold  \GreekBoldfalse
\let\SAVEPBF=\pbf
\def\pbf{\GreekBoldtrue\SAVEPBF}%
  \newcounter{equationnumber}  
  \def\mathletters{%
     \addtocounter{equation}{1}
     \edef\@currentlabel{\theequation}%
     \setcounter{equationnumber}{\c@equation}
     \setcounter{equation}{0}%
     \edef\theequation{\@currentlabel\noexpand\alph{equation}}%
  }
    \def\BibTeX{{\rm B\kern-.05em{\sc i\kern-.025em b}\kern-.08em
                 T\kern-.1667em\lower.7ex\hbox{E}\kern-.125emX}}}{}%
\def\AmS{{\protect\usefont{OMS}{cmsy}{m}{n}%
                A\kern-.1667em\lower.5ex\hbox{M}\kern-.125emS}}}{}%
\def\@@eqncr{\let\@tempa\relax
    \ifcase\@eqcnt \def\@tempa{& & &}\or \def\@tempa{& &}%
      \else \def\@tempa{&}\fi
     \@tempa
     \if@eqnsw
        \iftag@
           \@taggnum
        \else
           \@eqnnum\stepcounter{equation}%
        \fi
     \fi
     \global\tag@false
     \global\@eqnswtrue
     \global\@eqcnt\z@\cr}
\def\TCItag{\@ifnextchar*{\@TCItagstar}{\@TCItag}}
\def\@TCItag#1{%
    \global\tag@true
    \global\def\@taggnum{(#1)}%
    \global\def\@currentlabel{#1}}
\def\@TCItagstar*#1{%
    \global\tag@true
    \global\def\@taggnum{#1}%
    \global\def\@currentlabel{#1}}
\def\tint{\msi@int\textstyle\int}%
\def\tiint{\msi@int\textstyle\iint}%
\def\tiiint{\msi@int\textstyle\iiint}%
\def\tiiiint{\msi@int\textstyle\iiiint}%
\def\tidotsint{\msi@int\textstyle\idotsint}%
\def\toint{\msi@int\textstyle\oint}%
\newtoks\temptoksa
\newtoks\temptoksb
\newtoks\temptoksc
\def\msi@int#1#2{%
 \def\@temp{{#1#2\the\temptoksc_{\the\temptoksa}^{\the\temptoksb}}}%
 \futurelet\@nextcs
 \@int
}
\def\@int{%
   \ifx\@nextcs\limits
      \typeout{Found limits}%
      \temptoksc={\limits}%
	  \let\@next\@intgobble%
   \else\ifx\@nextcs\nolimits
      \typeout{Found nolimits}%
      \temptoksc={\nolimits}%
	  \let\@next\@intgobble%
   \else
      \typeout{Did not find limits or no limits}%
      \temptoksc={}%
      \let\@next\msi@limits%
   \fi\fi
   \@next   
}%
\def\@intgobble#1{%
   \typeout{arg is #1}%
   \msi@limits
}
\def\msi@limits{%
   \temptoksa={}%
   \temptoksb={}%
   \@ifnextchar_{\@limitsa}{\@limitsb}%
}
\def\@limitsa_#1{%
   \temptoksa={#1}%
   \@ifnextchar^{\@limitsc}{\@temp}%
}
\def\@limitsb{%
   \@ifnextchar^{\@limitsc}{\@temp}%
}
\def\@limitsc^#1{%
   \temptoksb={#1}%
   \@ifnextchar_{\@limitsd}{\@temp}%
}
\def\@limitsd_#1{%
   \temptoksa={#1}%
   \@temp
}
\def\dint{\msi@int\displaystyle\int}%
\def\diint{\msi@int\displaystyle\iint}%
\def\diiint{\msi@int\displaystyle\iiint}%
\def\diiiint{\msi@int\displaystyle\iiiint}%
\def\didotsint{\msi@int\displaystyle\idotsint}%
\def\doint{\msi@int\displaystyle\oint}%
\def\ExitTCILatex{\makeatother }
\if@compatibility\message{amsmath already loaded}\fi\aftergroup\ExitTCILatex}
\if@compatibility\message{amstex already loaded}\fi\aftergroup\ExitTCILatex}
\if@compatibility\message{amsgen already loaded}\fi\aftergroup\ExitTCILatex}
\let\DOTSI\relax
\def\RIfM@{\relax\ifmmode}%
\def\FN@{\futurelet\next}%
\def\iint{\DOTSI\intno@\tw@\FN@\ints@}%
\def\iiint{\DOTSI\intno@\thr@@\FN@\ints@}%
\def\iiiint{\DOTSI\intno@4 \FN@\ints@}%
\def\idotsint{\DOTSI\intno@\z@\FN@\ints@}%
\def\ints@{\findlimits@\ints@@}%
\newif\iflimtoken@
\newif\iflimits@
\def\findlimits@{\limtoken@true\ifx\next\limits\limits@true
 \else\ifx\next\nolimits\limits@false\else
 \limtoken@false\ifx\ilimits@\nolimits\limits@false\else
 \ifinner\limits@false\else\limits@true\fi\fi\fi\fi}%
\def\multint@{\int\ifnum\intno@=\z@\intdots@                          
 \else\intkern@\fi                                                    
 \ifnum\intno@>\tw@\int\intkern@\fi                                   
 \ifnum\intno@>\thr@@\int\intkern@\fi                                 
 \int}
\def\multintlimits@{\intop\ifnum\intno@=\z@\intdots@\else\intkern@\fi
 \ifnum\intno@>\tw@\intop\intkern@\fi
 \ifnum\intno@>\thr@@\intop\intkern@\fi\intop}%
\def\intic@{%
    \mathchoice{\hskip.5em}{\hskip.4em}{\hskip.4em}{\hskip.4em}}%
\def\negintic@{\mathchoice
 {\hskip-.5em}{\hskip-.4em}{\hskip-.4em}{\hskip-.4em}}%
\def\ints@@{\iflimtoken@                                              
 \def\ints@@@{\iflimits@\negintic@
   \mathop{\intic@\multintlimits@}\limits                             
  \else\multint@\nolimits\fi                                          
  \eat@}
 \else                                                                
 \def\ints@@@{\iflimits@\negintic@
  \mathop{\intic@\multintlimits@}\limits\else
  \multint@\nolimits\fi}\fi\ints@@@}%
\def\intkern@{\mathchoice{\!\!\!}{\!\!}{\!\!}{\!\!}}%
\def\plaincdots@{\mathinner{\cdotp\cdotp\cdotp}}%
\def\intdots@{\mathchoice{\plaincdots@}%
 {{\cdotp}\mkern1.5mu{\cdotp}\mkern1.5mu{\cdotp}}%
 {{\cdotp}\mkern1mu{\cdotp}\mkern1mu{\cdotp}}%
 {{\cdotp}\mkern1mu{\cdotp}\mkern1mu{\cdotp}}}%
\def\RIfM@{\relax\protect\ifmmode}
\def\text{\RIfM@\expandafter\text@\else\expandafter\mbox\fi}
\let\nfss@text\text
\def\text@#1{\mathchoice
   {\textdef@\displaystyle\f@size{#1}}%
   {\textdef@\textstyle\tf@size{\firstchoice@false #1}}%
   {\textdef@\textstyle\sf@size{\firstchoice@false #1}}%
   {\textdef@\textstyle \ssf@size{\firstchoice@false #1}}%
   \glb@settings}
\def\textdef@#1#2#3{\hbox{{%
                    \everymath{#1}%
                    \let\f@size#2\selectfont
                    #3}}}
\newif\iffirstchoice@
\def\Let@{\relax\iffalse{\fi\let\\=\cr\iffalse}\fi}%
\def\vspace@{\def\vspace##1{\crcr\noalign{\vskip##1\relax}}}%
\def\multilimits@{\bgroup\vspace@\Let@
 \baselineskip\fontdimen10 \scriptfont\tw@
 \advance\baselineskip\fontdimen12 \scriptfont\tw@
 \lineskip\thr@@\fontdimen8 \scriptfont\thr@@
 \lineskiplimit\lineskip
 \vbox\bgroup\ialign\bgroup\hfil$\m@th\scriptstyle{##}$\hfil\crcr}%
\def\Sb{_\multilimits@}%
\def\endSb{\crcr\egroup\egroup\egroup}%
\def\Sp{^\multilimits@}%
\newdimen\ex@
\def\rightarrowfill@#1{$#1\m@th\mathord-\mkern-6mu\cleaders
 \hbox{$#1\mkern-2mu\mathord-\mkern-2mu$}\hfill
 \mkern-6mu\mathord\rightarrow$}%
\def\leftarrowfill@#1{$#1\m@th\mathord\leftarrow\mkern-6mu\cleaders
 \hbox{$#1\mkern-2mu\mathord-\mkern-2mu$}\hfill\mkern-6mu\mathord-$}%
\def\leftrightarrowfill@#1{$#1\m@th\mathord\leftarrow
\mkern-6mu\cleaders
 \hbox{$#1\mkern-2mu\mathord-\mkern-2mu$}\hfill
 \mkern-6mu\mathord\rightarrow$}%
\def\overrightarrow{\mathpalette\overrightarrow@}%
\def\overrightarrow@#1#2{\vbox{\ialign{##\crcr\rightarrowfill@#1\crcr
 \noalign{\kern-\ex@\nointerlineskip}$\m@th\hfil#1#2\hfil$\crcr}}}%
\def\overleftarrow{\mathpalette\overleftarrow@}%
\def\overleftarrow@#1#2{\vbox{\ialign{##\crcr\leftarrowfill@#1\crcr
 \noalign{\kern-\ex@\nointerlineskip}$\m@th\hfil#1#2\hfil$\crcr}}}%
\def\overleftrightarrow{\mathpalette\overleftrightarrow@}%
\def\overleftrightarrow@#1#2{\vbox{\ialign{##\crcr
   \leftrightarrowfill@#1\crcr
 \noalign{\kern-\ex@\nointerlineskip}$\m@th\hfil#1#2\hfil$\crcr}}}%
\def\underrightarrow{\mathpalette\underrightarrow@}%
\def\underrightarrow@#1#2{\vtop{\ialign{##\crcr$\m@th\hfil#1#2\hfil
  $\crcr\noalign{\nointerlineskip}\rightarrowfill@#1\crcr}}}%
\def\underleftarrow{\mathpalette\underleftarrow@}%
\def\underleftarrow@#1#2{\vtop{\ialign{##\crcr$\m@th\hfil#1#2\hfil
  $\crcr\noalign{\nointerlineskip}\leftarrowfill@#1\crcr}}}%
\def\underleftrightarrow{\mathpalette\underleftrightarrow@}%
\def\underleftrightarrow@#1#2{\vtop{\ialign{##\crcr$\m@th
  \hfil#1#2\hfil$\crcr
 \noalign{\nointerlineskip}\leftrightarrowfill@#1\crcr}}}%
\def\qopnamewl@#1{\mathop{\operator@font#1}\nlimits@}
\let\nlimits@\displaylimits
\def\setboxz@h{\setbox\z@\hbox}
\def\varlim@#1#2{\mathop{\vtop{\ialign{##\crcr
 \hfil$#1\m@th\operator@font lim$\hfil\crcr
 \noalign{\nointerlineskip}#2#1\crcr
 \noalign{\nointerlineskip\kern-\ex@}\crcr}}}}
 \def\rightarrowfill@#1{\m@th\setboxz@h{$#1-$}\ht\z@\z@
  $#1\copy\z@\mkern-6mu\cleaders
  \hbox{$#1\mkern-2mu\box\z@\mkern-2mu$}\hfill
  \mkern-6mu\mathord\rightarrow$}
\def\leftarrowfill@#1{\m@th\setboxz@h{$#1-$}\ht\z@\z@
  $#1\mathord\leftarrow\mkern-6mu\cleaders
  \hbox{$#1\mkern-2mu\copy\z@\mkern-2mu$}\hfill
  \mkern-6mu\box\z@$}
\def\projlim{\qopnamewl@{proj\,lim}}
\def\injlim{\qopnamewl@{inj\,lim}}
\def\varinjlim{\mathpalette\varlim@\rightarrowfill@}
\def\varprojlim{\mathpalette\varlim@\leftarrowfill@}
\def\varliminf{\mathpalette\varliminf@{}}
\def\varliminf@#1{\mathop{\underline{\vrule\@depth.2\ex@\@width\z@
   \hbox{$#1\m@th\operator@font lim$}}}}
\def\varlimsup{\mathpalette\varlimsup@{}}
\def\varlimsup@#1{\mathop{\overline
  {\hbox{$#1\m@th\operator@font lim$}}}}
\def\align{\@verbatim \frenchspacing\@vobeyspaces \@alignverbatim
You are using the "align" environment in a style in which it is not defined.}
\let\csname endalign*\endcsname =\endtrivlist
\def\alignat{\@verbatim \frenchspacing\@vobeyspaces \@alignatverbatim
You are using the "alignat" environment in a style in which it is not defined.}
\let\csname endalignat*\endcsname =\endtrivlist
\def\xalignat{\@verbatim \frenchspacing\@vobeyspaces \@xalignatverbatim
You are using the "xalignat" environment in a style in which it is not defined.}
\let\csname endxalignat*\endcsname =\endtrivlist
\def\gather{\@verbatim \frenchspacing\@vobeyspaces \@gatherverbatim
You are using the "gather" environment in a style in which it is not defined.}
\let\csname endgather*\endcsname =\endtrivlist
\def\multiline{\@verbatim \frenchspacing\@vobeyspaces \@multilineverbatim
You are using the "multiline" environment in a style in which it is not defined.}
\let\csname endmultiline*\endcsname =\endtrivlist
\def\arrax{\@verbatim \frenchspacing\@vobeyspaces \@arraxverbatim
You are using a type of "array" construct that is only allowed in AmS-LaTeX.}
\def\tabulax{\@verbatim \frenchspacing\@vobeyspaces \@tabulaxverbatim
You are using a type of "tabular" construct that is only allowed in AmS-LaTeX.}
\let\csname endarrax*\endcsname =\endtrivlist
\let\csname endtabulax*\endcsname =\endtrivlist
 \def\endequation{%
     \ifmmode\ifinner 
      \iftag@
        \addtocounter{equation}{-1} 
        $\hfil
           \displaywidth\linewidth\@taggnum\egroup \endtrivlist
        \global\tag@false
        \global\@ignoretrue   
      \else
        $\hfil
           \displaywidth\linewidth\@eqnnum\egroup \endtrivlist
        \global\tag@false
        \global\@ignoretrue 
      \fi
     \else   
      \iftag@
        \addtocounter{equation}{-1} 
        \eqno \hbox{\@taggnum}
        \global\tag@false%
        $$\global\@ignoretrue
      \else
        \eqno \hbox{\@eqnnum}
        $$\global\@ignoretrue
      \fi
     \fi\fi
 } 
 \newif\iftag@ \tag@false
 \def\TCItag{\@ifnextchar*{\@TCItagstar}{\@TCItag}}
 \def\@TCItag#1{%
     \global\tag@true
     \global\def\@taggnum{(#1)}%
     \global\def\@currentlabel{#1}}
 \def\@TCItagstar*#1{%
     \global\tag@true
     \global\def\@taggnum{#1}%
     \global\def\@currentlabel{#1}}
     \def\tag{\@ifnextchar*{\@tagstar}{\@tag}}
     \def\@tag#1{%
         \global\tag@true
         \global\def\@taggnum{(#1)}}
     \def\@tagstar*#1{%
         \global\tag@true
         \global\def\@taggnum{#1}}
\def\tfrac#1#2{{\textstyle {#1 \over #2}}}%
\begin{document}

\date{}
\title{Random multiplication versus random sum: auto-regressive-like models
with integer-valued random inputs}
\author{Abdelhakim Aknouche\\
 Department of Mathematics, College of Science\\
Qassim University, Al Qassim, Saudi Arabia\\
email: \href{aknouche_ab@yahoo.com}{aknouche\_ab@yahoo.com}
 \and  S\'onia Gouveia\thanks{corresponding author}\\
  Institute of Electronics and Informatics Engineering of Aveiro (IEETA)\\  Department of Electronics, Telecommunications and Informatics (DETI)\\
 University of Aveiro, Portugal\\ email: \href{mailto:sonia.gouveia@ua.pt}{sonia.gouveia@ua.pt} \and Manuel G. Scotto\\ 
Center for Computational and Stochastic Mathematics (CEMAT)\\
Department of Mathematics, IST\\
University of Lisbon, Portugal\\
email: \href{mailto: manuel.scotto@tecnico.ulisboa.pt}{manuel.scotto@tecnico.ulisboa.pt}
}
\maketitle

\begin{abstract}
A common approach to analyze count time series is to fit models based on random sum operators.
As an alternative, this paper introduces time series models based on a
random multiplication operator, which is simply the multiplication of a
variable operand by an integer-valued random coefficient, whose mean is
the constant operand. Such operation is endowed into auto-regressive-like models with integer-valued random inputs, addressed as RMINAR. Two special
variants are studied, namely the $\mathbb{N}_0$-valued\ random coefficient auto-regressive model and the $\mathbb{N}_0$-valued random coefficient multiplicative error model. Furthermore, $\mathbb{Z}$-valued extensions are considered. The dynamic
structure of the proposed models is studied in detail. In particular, their corresponding solutions are everywhere strictly stationary and ergodic, a fact that is not common neither in the literature on integer-valued time series models nor real-valued random coefficient auto-regressive models. Therefore, the parameters of the RMINAR model 
are estimated using a four-stage weighted least squares estimator, with consistency and asymptotic normality established everywhere in the parameter space. Finally, the new RMINAR models are
illustrated with some simulated and empirical examples.\\

\textbf{Keywords.} integer-valued random
coefficient AR, random multiplication integer-valued auto-regression, random multiplication operator, RMINAR, WLS estimators
\end{abstract}

\section{Introduction}

Modeling low integer-valued time series data is, nowadays, an ongoing concern in time
series research. To this end, three common approaches are generally undertaken. The first assumes a discrete conditional distribution whose
conditional mean is a parametric function of past observations. The
resulting models are known as 
observation-driven in the terminology of Cox ($1981$). The most known examples are the integer-valued
Generalized ARMA model (Zeger and Qaqish, $1988$; Benjamin {\it et al.}, $2003$; Zheng {\it et al.}, $2015$) and, in particular, the integer-valued GARCH (INGARCH) model (e.g.~Reydberg and
Shephard, $2000$; Heinen, $2003$; Ferland {\it et al.}, $2006$; Fokianos
{\it et al.}, $2009$; Zhu, $2011$; Davis and Liu, $2016$; Aknouche and Francq 2021). The primary feature of observation-driven models is that the likelihood function is explicit in terms of observations, which turns maximum likelihood estimation, inference, and forecasting quite easy to investigate. Other
M-estimation methods such as quasi-maximum likelihood (QML) and weighted
least squares (WLS) estimators are also straightforward to derive. However,
observation-driven models suffer from some limitations, namely that they
are often fully parametric and hence not robust to a distributional
misspecification. Moreover, their probabilistic structures (e.g.~ergodicity,
tail behavior, extremal properties) are inherently complex since they are not
defined through equations driven by independent and identically distributed
(iid) innovation sequences.

The second approach is addressed as parameter-driven (Cox, $1981$) and shares with the first approach the fact that also requires the specification of a discrete conditional
distribution for the data. Nonetheless, this distribution is rather
conditioned on a latent process since the conditional mean has a proper
dynamics in terms of its past (latent) values (Zeger, $1988$; Davis and
Rodriguez-Yam, $2005$; Davis and Wu, $2009$). Parameter-driven models have in general simple probability
structures and are quite flexible to represent dynamic dependence of count
data (e.g.~Davis and Dunsmuir, $2016$). In particular, the conditional mean
depends on present shocks unlike INGARCH models for which the conditional intensity only
depends on past observations. Due to the latent process, however, their estimation is rather difficult because the likelihood function
cannot in general be obtained in a closed form and involves
cumbersome multidimensional integration. Moreover, QML and WLS estimators are
not simple to derive either, since the conditional mean is not explicit in
terms of observations. This difficulty also arises in inference and
prediction which explains why parameter-driven count models have received less attention than observation-driven ones.

The third approach considers appropriate stochastic difference equations driven by iid inputs whose solutions are integer-valued sequences. The main concern of these models is to handle integer-valued random operations on inputs to produce integer-valued outputs that have similar features to real-world
integer-valued data. Random sum operations, aka thinning operations,
are the best-known examples. The rule is that, given a positive constant and an integer-valued random variable as operands, a random sum operation is the sum of iid discrete variables whose mean and number are the
constant and variable operands, respectively. In particular, the binomial thinning
operator (Steutel and van Harn, $1978$) produces a binomial distributed
variable that is bounded from above 
by the operand variable. Random sum operators not satisfying the latter feature are still called generalized thinning, e.g.~Poisson and negative binomial random sums (see e.g.~Scotto {\it et al.}, $%
2015$).

The most elegant property of random sum-based equations is that the
marginal\ distribution of the output sequence is readily known and depends
on the operator and input distributions. For instance, the first order
integer-valued auto-regressive process (INAR$\left( 1\right) $; McKenzie, $%
1985$; Al-Osh and Alzaid, $1987$) based on binomial thinning and driven by a
Poisson iid innovation has a Poisson marginal. Since any random
sum involves unobserved summands, the likelihood calculation is cumbersome
and requires high dimensional summations, just like parameter-driven models.
This is the reason why some authors classify thinning-based models as parameter-driven
(e.g.~Ahmad and Francq, 2016; Aknouche and Francq, 2023). Furthermore, the conditional mean and variance are parametric
functions of past observations as in observation-driven models. In fact,
the latent variables in a random sum only intervene in model's
conditional distribution through their mean and variance as being unknown constant
parameters. The same happens for random coefficient auto-regressive models
(RCAR; Nicholls and Quinn, $1982$). In this respect, thinning-based models
are more similar to observation-driven models. In particular, QML and WLS
estimators are quite easy to derive and analyze. Thus, thinning-based and RCAR models can be seen as semi- (or partially-) observation-driven
models.

Numerous thinning-based count models have been emerged so far (e.g.~Scotto
{\it et al.}, 2015; Weiss, 2018). They differ fundamentally in the
assumed distributions of the summands or/and the form of the stochastic
equations. Note that the implicit form of the random sum in terms of the
operand variable makes the study of the corresponding equation more complex
than in conventional stochastic equations (e.g.~ARMA, GARCH, RCAR). Most
existing thinning specifications are based on simple pure auto-regressions
(INAR) or moving averages (INMA) with low orders. Only a few research works
deal with integer-valued ARMA (INARMA)\ equations or their multivariate forms due
to their inherent complexity. In particular, invertibility, tails behavior,
asymptotic properties of maximum likelihood estimators, and forecasting
remain an issue for INARMA-like models. Many other statistical
aspects of INARMA models are not as developed as in standard continuous-valued ARMA models.
Also, non-linear forms are not explored as much except in a few special
cases. This is why most thinning-based models are unable to reproduce
various interesting features such as high over-dispersion, multi-modality,
persistence, etc.~(e.g.~Aknouche and Scotto, $2024$).

As a simple alternative to random sum operators, this paper proposes a
random multiplication operator and shows how to build on it simple and
analytically tractable integer-valued time series models. An $%
\mathbb{N}_0
$-valued 
random multiplication operator is in fact a random sum with identical summands and constitutes the direct multiplication of an operand variable by an integer-valued variable whose expected value (defined in $\mathbb{R}^{+}$) is precisely the constant operand.
An extended $\mathbb{Z}$-valued 
random multiplication makes it possible to deal with the constant and variable operands defined in $\mathbb{R}$ and $\mathbb{Z}$, respectively.
Compared to thinning operators, the random multiplication
is analytically simpler and can produce variables with higher volatility even with Poisson multipliers. Actually, a random multiplication-based model is
nothing but a model with integer-valued random inputs (coefficients and
innovations). In particular, a random multiplication-based auto-regressive
model with $\mathbb{N}_0$-valued inputs (henceforth RMINAR) is a special case of the RCAR model but with $\mathbb{N}_0$-valued random inputs. Likewise, a $\mathbb{Z}$-valued RMINAR model is an RCAR with $\mathbb{Z}$-valued random inputs.

Continuous-valued RCAR models have been widely studied since the late 1970s and most of their probabilistic and statistical properties are now well understood (e.g.~Nicholls and Quinn, $1982$; Tsay, $1987$; Schick, $1996$; Diaconis and
Freedman, $1999$; Aue {\it et al.}, $2006$; Aknouche, $2013$-$2015$; Aue and
Horvath, $2019$; Trapani, $2021$, Regis {\it et al.}, $2022$). Although
the proposed RMINAR model belongs to the general class of RCAR models, it holds
surprising features that differ from those of the continuous-valued counterpart and of the aforementioned integer-valued models. In particular, any RMINAR solution
is strictly stationary and ergodic regardless of its coefficients values, so it is useless testing strict stationary as
in the continuous-valued case (Aue and Horvath, $2011$; Aknouche, $2013$-$%
2015$). In other words, RMINAR models are everywhere stationary and ergodic
and thus can be strictly stationary with infinite means. As a
consequence, multi-stage WLS and QML estimators are consistent and asymptotically Normal everywhere in the parameter space and for all parameter components.

The rest of this paper is organized as follows. Section \ref{section2} sets up the random multiplication operator and shows its main properties. Section \ref{section3}
defines the RMINAR model for $%
\mathbb{N}
_{0}$-valued data, a $%
\mathbb{Z}
$-valued extension, and a multiplicative variant for $%
\mathbb{N}
_{0}$-valued data. Section \ref{4SWLSE} proposes four-stage WLS estimators (4SWLSE) for the mean and variance of the random inputs in the three models. A simulation study and two real applications with a $\mathbb{N}_{0}$-valued and a $\mathbb{Z}$-valued time series are given in Section \ref{section5}. Section \ref{section6} summarizes the conclusions of this work and the main proofs are left to an appendix. 

\section{Random operators: sum versus multiplication}
\label{section2}
This section overviews important properties of the random sum operator (RSO), denoted by $\circ _{s}$, and introduces the so-called random multiplication operator (RMO), denoted by $\odot _{m}$, highlighting fundamental differences between these operators. A few examples are given for the distribution of the random variable resulting from these random operations.
\subsection{Random sum operation}

Random sum operators also known as 
generalized thinning (Latour, 1998) are commonly
used in branching and INAR-like processes. In its general form a
RSO, denoted by $\circ _{s}$, is defined for any positive constant $%
\alpha $ and any integer-valued random variable $X$ by%
\begin{equation}
\alpha \circ _{s}X:=\sum\limits_{i=1}^{X}\xi _{i},  
\label{2.1}
\end{equation}%
where the integer-valued sequence $\left( \xi _{i}\right) $ is iid with mean 
$\mathbb{E}\left( \xi _{i}\right) =\alpha $ and variance $\sigma _{\xi }^{2}:=\mathbb{V}\left( \xi _{i}\right)
$.
It is further assumed that $X$ and $\left( \xi _{i}\right)$
are independent. The terms in $\left( \xi _{i}\right) $ are called summands or
counting series (or also offspring sequence) while $\alpha $ and $X$ are
the constant and variable operands, respectively. Thus, the operation $\alpha
\circ _{s}X$ is a random sum up to the operand variable $X$ where the common
mean of the iid summands $(\xi _{i})$ is the constant operand $%
\alpha$. The first two moments of $\alpha \circ
_{s}X$ are%
\begin{eqnarray}
\mathbb{E}\left(
\alpha \circ _{s}X\right) =\alpha \mathbb{E}\left( X\right) \text{ \ and \ }
\mathbb{E}\left( \alpha \circ _{s}X|X\right) &=&\alpha X,  
\label{2.2} \\
\mathbb{V}\left( \alpha \circ _{s}X\right) =\sigma _{\xi }^{2}\mathbb{E}\left( X\right)
+\alpha ^{2}\mathbb{V}\left( X\right)\text{ \ and \ }
\mathbb{V}\left( \alpha \circ _{s}X|X\right) &=&\sigma _{\xi }^{2}X.  
\label{2.3}
\end{eqnarray}%
An interesting property of the $\circ _{s}$ is that it
generally inherits its distribution from those of $(\xi _{i})$ and $X$. Indeed, the probability generating
function of $\alpha \circ _{s}X$ is
\begin{equation}
G_{\alpha \circ _{s}X}\left( z\right) :=E\left( z^{\alpha \circ
_{s}X}\right) =G_{X}\left( G_{\xi _{1}}\left( z\right) \right) ,  
\label{2.4} \nonumber
\end{equation}%
which shows that the probability law of $\alpha \circ _{s}X$\ is uniquely determined from
those of $\left( \xi _{i}\right) $ and $X$. The most commonly used RSO are as follows.\\
\textbf{Example 2.1 }
i) \textit{Binomial thinning}: The Bernoulli random
sum, aka the binomial thinning operator, and being denoted by $\circ $ (Steutel and van Harn, $1978$), assumes $%
\left( \xi _{i}\right) $ to be Bernoulli distributed with mean $\alpha \in
\left( 0,1\right) $. This implies that $\alpha \circ X|X \sim B\left( X,\alpha\right) $
is conditionally binomial distributed and $\alpha\circ X$  is stochastically smaller than $X$, that is
\begin{equation}
\alpha \circ X \leq X,  
\label{2.5}
\end{equation}%
hence the term thinning. 
For this operator,  
\begin{equation*}
\mathbb{V}\left( \alpha \circ X\right) =\alpha \left( 1-\alpha \right)
\mathbb{E}\left( X\right) +\alpha ^{2}\mathbb{V}\left( X\right) \text{ \ and \ }
\mathbb{V}\left( \alpha \circ X|X\right) =\alpha \left( 1-\alpha \right) X .
\end{equation*}

ii) \textit{Poisson random sum}: When $\left( \xi _{i}\right) $
are Poisson distributed with parameter $\alpha >0$, the random sum $\circ
_{s}$ is known as the Poisson generalized thinning and satisfies $%
\alpha \circ _{s}X|X \sim \mathcal{P}\left( \alpha X\right) $ where $\mathcal{%
P}\left( \alpha X\right) $ stands for the Poisson distribution with mean $%
\alpha X$. In this case, the range of $\alpha \circ _{s}X\in 
\mathbb{N}_0
:=\left\{ 0,1,\dots\right\} $ is the set of integers and the above thinning
property (\ref{2.5}) is no longer satisfied when $\alpha >1$, so the term
thinning makes no sense. In fact, as the range of $\alpha \circ _{s}X$ is
larger than that of $X$, the Poisson sum operator cannot be seen as a thinning operator in the sense of (\ref{2.5}). The variances in (\ref{2.3}) reduce to%
\begin{equation*}
\mathbb{V}\left(
\alpha \circ _{s}X\right) =\alpha \mathbb{E}\left( X\right) +\alpha ^{2}\mathbb{V}\left(
X\right) \text{ \ and \ }
\mathbb{V}\left( \alpha \circ _{s}X|X\right) =\alpha X.
\end{equation*}
iii) \textit{Negative binomial random sum}: If $\left( \xi
_{i}\right) $ are geometric distributed with parameter $\frac{1}{1+\alpha }$%
, the geometric random sum $\circ _{s}$ is known as the negative binomial generalized thinning and $\alpha \circ _{s}X|X \sim \mathcal{NB}\left( X,\frac{1}{1+\alpha }\right) $ is the negative binomial distribution with dispersion $%
X $ and probability $\frac{1}{1+\alpha }$. As a consequence $\alpha
\circ _{s}X\in \mathbb{N}_0$.

\subsection{Random multiplication operation}
Let $\Phi$ and $X$ be two independent non-negative integer-valued random variables, and $\phi$ a positive constant representing $\mathbb{E}\left( \Phi \right) =\phi >0$. The random multiplication operation (RMO) is defined as 
\begin{equation}
\phi \odot _{m}X
:=\Phi X=\sum_{i=1}^X\Phi = \sum_{i=1}^{\Phi} X.
\label{2.6}
\end{equation}%
Equality (\ref{2.6}) is satisfied for all $s\in \Omega $, assuming that all variables and sequences are defined on a probability space $\left( \Omega,\tciFourier ,P\right) $), and
\begin{equation*}
\left[ \phi \odot _{m}X\right] \left( s\right) =\left[ \sum_{i=1}^{X}\Phi %
\right] \left( s\right) =\left[ \sum_{i=1}^{\Phi }X\right] \left( s\right) =%
\left[ \Phi X\right] \left( s\right). 
\end{equation*}
Naturally, these everywhere or sure equalities (in the sense of being
satisfied $\forall s\in \Omega $) imply the almost sure equalities%
\begin{equation*}
\sum_{i=1}^{X}\Phi \overset{a.s.}{=}\sum_{i=1}^{\Phi }X%
\overset{a.s.}{=}\Phi X,  
\end{equation*}
which in turn implies that the equalities also hold in distribution. 
The converse, however, is not true. The sure equalities in (\ref{2.6}) are necessary to establish the almost sure convergence of the estimators in Section \ref{4SWLSE}. The $\phi \odot _{m}X$ operation can also be seen as a random sum (\ref{2.1}), with the restriction of having identical summands, designated by $\Phi $. 
The mean and variance of $\phi \odot _{m}X$ are given by%
\begin{eqnarray}
\mathbb{E}\left( \phi
\odot _{m}X\right) =\phi \mathbb{E}\left( X\right) \text{  \ and  \ } 
\mathbb{E}\left( \phi \odot _{m}X|X\right) &=&\phi X, 
\label{2.7} \\
\mathbb{V}\left( \phi \odot _{m}X\right) =\sigma _{\Phi }^{2}\mathbb{E}\left( X^{2}\right)
+\phi ^{2}\mathbb{V}\left( X\right) \text{ \ and \ } 
\mathbb{V}\left( \phi \odot _{m}X|X\right) &=&\sigma _{\Phi }^{2}X^{2}.  \label{2.8}
\end{eqnarray}%
While the operations $\alpha \circ _{s}X$ and $\phi \odot _{m}X$ share the
same conditional mean structure, the conditional variance is proportional
to the operand variable $X$ for the RSO (\ref{2.3}) and is proportional
to $X^{2}$ for the RMO (\ref{2.8}). The
variability implied by (\ref{2.8}) is therefore allowed to be larger than that
in (\ref{2.3}). Moreover, the range of the random multiplication $\phi \odot _{m}X$ is
\begin{equation}
\phi \odot _{m}X|(\Phi,X)\in \left\{ 0, \dots, \Phi X\right\} ,  \nonumber
\label{2.9}
\end{equation}%
so that $\odot _{m}$ is not a proper thinning operation in the sense of (\ref{2.5}). 

The random multiplication (\ref{2.6}) is much simpler and more
tractable than the random sum (\ref{2.1}) since it consists of a direct
multiplication by an integer-valued random coefficient. However, what is
gained in simplicity is lost in distributional reproducibility because the distribution of $\alpha \circ _{s}X$ is readily known
while that of $\phi \odot _{m}X$ is in general not usual. In
fact, although many well-known discrete distributions are stable under
independent summation (e.g.~binomial, Poisson and negative binomial), this is not the case for the multiplication operation. For example, the distribution of the product of two
independent Poisson variables is not Poisson distributed. This
makes the distribution of $\phi \odot _{m}X$ quite unusual except for
special cases (e.g.~the class of Bernoulli distributions is stable under
multiplication). For other cases, the
distributional properties of $\phi \odot _{m}X$ could easily be explored
using direct methods where specific distributions
for the random operand $\Phi $ in the $\phi\, \odot _{m}$ operation are of interest, as in the following examples.

\textbf{Example 2.2} i) \textit{Binomial random multiplication}: For a binomial
distributed $\Phi \sim \mathcal{B}\left( r,\tfrac{\phi }{r}\right)$ with  $r\geq
1$, the operation $\odot _{m}$ is called binomial multiplication. The
variances in (\ref{2.8}) translate into%
\begin{equation*}
\mathbb{V}\left( \phi \odot _{m}X\right) =\phi \left( 1-\tfrac{%
\phi }{r}\right) \mathbb{E}\left( X^{2}\right) +\phi ^{2}\mathbb{V}\left( X\right) \text{ \ and \ }
\mathbb{V}\left( \phi \odot _{m}X|X\right) =\phi \left( 1-\tfrac{\phi }{r}\right)
X^{2}.
\end{equation*}

ii) \textit{Poisson random multiplication}: When $\Phi \sim \mathcal{P}\left( \phi
\right) $ is Poisson distributed, the operation $\odot _{m}$ is called
Poisson multiplication. In this case, the variances in (\ref{2.8}) simplify to%
\begin{equation*}
\mathbb{V}\left( \phi
\odot _{m}X\right) =\phi \mathbb{E}\left( X^{2}\right) +\phi ^{2}\mathbb{V}\left( X\right)
\text{ \ and \ }
\mathbb{V}\left( \phi \odot _{m}X|X\right) =\phi X^{2}.
\end{equation*}%
Note that if $X \sim \mathcal{P}\left( \lambda \right) $, then $\mathbb{V}\left(
\phi \odot _{m}X\right) =\phi \left( \lambda +\lambda ^{2}\right) +\phi ^{2} \lambda
$ is allowed to be larger than $\mathbb{V}\left( \alpha \circ _{s}X\right)
=\alpha \lambda  + \alpha ^{2}\lambda$ implied by (\ref{2.3}) with $\alpha \circ
_{s}X|X \sim \mathcal{P}\left( \alpha X\right) $ and $X \sim \mathcal{P}\left(
\lambda \right) $.

iii) \textit{Negative binomial} \textit{I} \textit{random multiplication}: Assuming that $%
\Phi \sim \mathcal{NB}\left( r\phi ,\frac{r}{r+1}\right) $ is negative
binomial distributed (denoted by NB1, see Aknouche {\it et al.}, 2018%
), the operation $\odot _{m}$ is called NB1 multiplication. The variances in 
(\ref{2.8}) are thus%
\begin{equation*}
\mathbb{V}\left( \phi \odot _{m}X\right) =\phi \left( 1+\tfrac{1}{r}%
\right)  \mathbb{E}\left( X^{2}\right) +\phi ^{2}\mathbb{V}\left( X\right) \text{ \ and \ }
\mathbb{V}\left( \phi \odot _{m}X|X\right) = \phi \left( 1+\tfrac{1}{r}\right)  X^{2}.
\end{equation*}

iv) \textit{Negative binomial} \textit{II} \textit{random multiplication}: If $%
\Phi \sim \mathcal{NB}\left( r,\frac{r}{r+\phi }\right) $ ($r>0$) has a
negative binomial distribution (usually denoted by NB2, see Aknouche {\it et al.}, 2018; Aknouche and Francq 2021), then the operation $\odot _{m}$ is called NB2
multiplication. In this case, the variances in (\ref{2.8}) become%
\begin{equation*}
\mathbb{V}\left( \phi \odot _{m}X\right) =\phi \left( 1+\tfrac{1%
}{r}\phi \right) \mathbb{E}\left( X^{2}\right) +\phi ^{2}\mathbb{V}\left( X\right)  \text{ \ and \ }
\mathbb{V}\left( \phi \odot _{m}X|X\right) =\phi \left( 1+\tfrac{1}{r}\phi \right)
X^{2}.
\end{equation*}

Other constructions or extensions of the RMO can be considered including e.g.~a $\mathbb{Z}$-valued extension of the random multiplication in (\ref{2.6}) simply by assuming that $X$ and $\Phi$ are $\mathbb{Z}$-valued random variables and $\phi \in \mathbb{R}$.


\section{Random multiplication based auto-regressive models}
\label{section3}
This section presents a few examples of auto-regressive-like models constructed from the new RMO. These models will be addressed as random multiplication INAR of order $p$ and designated as RMINAR($p$).

\subsection{$\mathbb{N}_0$-valued random multiplication AR (RMINAR) model}

Let $\left\{ \Phi _{it},\text{ }t\in 
\mathbb{Z}
\right\} \ (i=1, \dots, p)$ and $\left\{ \varepsilon _{t},\text{ }t\in 
\mathbb{Z}
\right\} $ be mutually independent $%
\mathbb{N}_0
$-valued iid sequences with $\phi _{i}:=\mathbb{E}\left( \Phi _{it}\right) >0$, $%
\sigma_{\phi_i}^2:=\mathbb{V}\left( \Phi _{it}\right)>0$ ($i=1, \dots, p$), $\mu_{\varepsilon}:=\mathbb{E}\left(
\varepsilon _{t}\right) >0$, and $\sigma _{\varepsilon }^{2}:=\mathbb{V}\left( \varepsilon _{t}\right)>0$. A $%
\mathbb{N}_0
$-valued process $\left\{ Y_{t},\text{ }t\in 
\mathbb{Z}
\right\} $ is said to be an integer-valued random multiplication AR model, in short RMINAR($p$), if $Y_t$ admits the representation 
\begin{equation}
Y_{t}=\sum\limits_{i=1}^{p}\Phi _{it}Y_{t-i}+\varepsilon _{t}\text{, \ }t\in 
\mathbb{Z}
.  \label{3.1a}
\end{equation}%
Note that the RMINAR($p$) model $(\ref{3.1a})$ can be rewritten as follows
\begin{equation*}
Y_{t}=\sum\limits_{i=1}^{p}\phi _{i}\odot _{m}Y_{t-i}+\varepsilon _{t}\text{%
, \ }t\in 
\mathbb{Z}
\text{.}  
\end{equation*}%
The distribution of the input sequences $\left\{ \Phi _{it},\text{ }t\in 
\mathbb{Z}
\right\} \ (i=1, \dots, p)$ and $\left\{ \varepsilon _{t},\text{ }t\in 
\mathbb{Z}
\right\} $ can be specified as in Example 2.2 (binomial, Poisson, negative binomial, etc.). Thus, \eqref{3.1a} can be seen as a random coefficient AR model (RCAR) in the sense of Nicholls and Quinn $(1982)$ but with $\mathbb{N}_0$-valued random inputs $\left\{ \Phi _{it},\text{ }t\in 
\mathbb{Z}
\right\} \ (i=1, \dots, p)$ and $\left\{ \varepsilon _{t},\text{ }t\in 
\mathbb{Z}
\right\} $. The conditional mean and variance of  $(\ref{3.1a})$ are given by 
\begin{equation}
\mathbb{E}\left( Y_{t}|{\cal F}_{t-1}^{Y}\right) =\sum\limits_{i=1}^{p}\phi
_{i}Y_{t-i}+\mu_{\varepsilon} \text{ \ and \ }\mathbb{V}\left( Y_{t}|{\cal F}
_{t-1}^{Y}\right) =\sum\limits_{i=1}^{p}\sigma_{\phi_i}^2Y_{t-i}^{2}+\sigma
_{\varepsilon }^{2},  
\label{3.2}
\end{equation}%
where ${\cal F}_{t}^{Y}$ is the $\sigma $-algebra generated by $\left\{
Y_{t-u},u\geq 0\right\} $. From $(\ref{3.2})$ it is clear that the RMINAR$(p)$
model allows both conditional overdispersion and underdispersion. E.g.~when the inputs $\Phi _{it}$ and $\varepsilon
_{t}$ are Poisson distributed, then $(\ref{3.2})$ becomes%
\begin{equation*}
\mathbb{E}\left( Y_{t}|{\cal F}_{t-1}^{Y}\right) =\sum\limits_{i=1}^{p}\phi
_{i}Y_{t-i}+\mu_{\varepsilon}\text{ \ and \ }\mathbb{V}\left( Y_{t}|{\cal F}_{t-1}^{Y}\right) =\sum\limits_{i=1}^{p}\phi _{i}Y_{t-i}^{2}+\mu_{\varepsilon},
\end{equation*}%
and $\mathbb{V}\left( Y_{t}|{\cal F}_{t-1}^{Y}\right) >\mathbb{E}\left(
Y_{t}|{\cal F}_{t-1}^{Y}\right) $ as $Y_t \leq Y_t^2$ for all $t$. Also, $(\ref{3.2})$  can lead to $\mathbb{V}\left( Y_{t}|{\cal F}_{t-1}^{Y}\right) <\mathbb{E}\left( Y_{t}|{\cal F}_{t-1}^{Y}\right) $ e.g.~when the
inputs are binomial distributed with appropriate parameters.

Note that for $p=1$ the model $(\ref{3.1a})$ reduces to 
\begin{equation*}
Y_{t}=\Phi_{t}Y_{t-1}+\varepsilon _{t}\text{, \ }t\in 
\mathbb{Z},
\end{equation*}%
which is a homogeneous Markov chain with transition probabilities given by
\begin{eqnarray*}
\mathbb{P}\left( Y_{t}=j|Y_{t-1}=i\right) &=&\mathbb{P}\left( \Phi _{t}Y_{t-1}+\varepsilon _{t}=j|Y_{t-1}=i\right)  \\
&=&\mathbb{P}\left( \Phi _{t}i+\varepsilon _{t}=j\right)  \\
&=&\left\{ 
\begin{array}{ll}
\sum\limits_{k\leq j,\frac{j-k}{i}\in 
\mathbb{N}
_{0}}\mathbb{P}\left( \varepsilon _{t}=k\right) \mathbb{P}\left( \Phi _{t}=\frac{j-k}{i}%
\right)&, i>0 \\ 
\mathbb{P}\left( \varepsilon _{t}=j\right) &, i=0%
\end{array}.%
\right. 
\end{eqnarray*}%

The novel RMINAR($p$) model $(\ref{3.1a})$  can be compared to the random coefficient INAR
model of Zheng {\it et al.} ($2006$, $2007$), denoted as RCINAR$\left( p\right) $ and defined as%
\begin{equation}
X_{t}=\sum\limits_{i=1}^{p}\alpha _{it}\circ X_{t-i}+\zeta _{t},\text{ \ }%
t\in 
\mathbb{Z}
\text{,}  \label{3.3}
\end{equation}%
where $``\circ"$ is the binomial thinning operator whereas $\left\{ \alpha _{it}, \text{ }t\in \mathbb{Z}
\right\}$ ($i=1, \dots, p$) and $\left\{ \zeta
_{t},\text{ }t\in 
\mathbb{Z}
\right\} $ are mutually independent iid
sequences valued in $(0,1)$ and $%
\mathbb{N}_0
$, respectively.  Models in $(\ref{3.1a})$ and $(\ref{3.3})$ share the same conditional mean structure and similar quadratic
conditional variances in terms of past observations. Note that $(\ref{3.3})$
reduces to the INAR$(p)$ model of Du and Li $(1991)$ (see also McKenzie, $%
1985$; Al-Osh and Alzaid, $1987$ for the case $p=1$) and thus the conditional
mean and variance of $(\ref{3.3})$ are 
\begin{equation*}
\mathbb{E}\left( X_{t}|{\cal F}_{t-1}^{X}\right) =\sum\limits_{i=1}^{p}\alpha
_{i}\circ X_{t-i}+\zeta \text{ \ and \ }\mathbb{V}\left( X_{t}|{\cal F}
_{t-1}^{X}\right) =\sum\limits_{i=1}^{p}\sigma _{\alpha
_{i}}^{2}X_{t-i}^{2}+\left( \alpha
_{i}\left( \alpha_{i}-1\right) -\sigma
_{\alpha _{i}}^{2}\right) X_{t-i}+\sigma _{\zeta }^{2},
\end{equation*}
where $\alpha_{i}:=\mathbb{E}(\alpha_{it})$, $\sigma^2_{\alpha_{i}}:=\mathbb{V}(\alpha_{it})$, $\zeta :=\mathbb{E}\left( \zeta _{t}\right)$ and $\mathbb{V}\left( \zeta _{t}\right) =\sigma _{\zeta }^{2}>0$ (Zheng \textit{et al.}, $2006$). 

The probabilistic structure of $(\ref{3.1a})$ is already well known when the inputs
are real-valued (e.g..~Nicholls and Quinn, $1982$; Feigin and Tweedie, $1985$%
; Tsay, $1987$; Diaconis and Freedman, $1999$). In the integer-valued case,
however, there are some surprising properties. Let $\mathbf{Y%
}_{t}=\left( Y_{t}, \dots, Y_{t-p+1}\right) ^{\prime }$ and $\mathbf{\Xi }%
_{t}=\left( \varepsilon _{t},0, \dots, 0\right) ^{\prime }$ be $p$-dimensional column vectors, and define the $%
p\times p$ companion matrix%
\begin{equation}
A_{t}=\left( 
\begin{array}{cc}
\left(\Phi _{1t}, \dots, \Phi _{(p-1)t} \right) & \Phi _{pt} \\ 
\mathbf{I}_{p-1} & \mathbf{0}_{\left( p-1\right) \times 1}%
\end{array}%
\right),
\label{eq: At}
\end{equation}%
where $\mathbf{I}_{p}$ and $\mathbf{0}_{p \times 1}$ are the $p$-dimensional identity matrix and zero vector, respectively. Then model $(\ref{3.1a})$ can be
written in the following vector form%
\begin{equation}
\mathbf{Y}_{t}=A_{t}\mathbf{Y}_{t-1}+\mathbf{\Xi }_{t},\text{ \ }t\in 
\mathbb{Z}
\label{3.4}.
\end{equation}

For almost all common integer-valued time series models (observation-driven,
parameter-driven, random sum based), the conditions for strict
stationary and mean stationary on the auto-regressive parameter coincide. The formulation in $(\ref{3.1a})$, however, allows
the output process $\left\{ Y_{t},t\in 
\mathbb{Z}
\right\} $\ to be everywhere (i.e. for all parameter values) strictly stationary even with infinite mean. In
fact, the following result shows that any solution of $(\ref{3.1a})$ is everywhere (i.e.~universally) 
strictly stationary provided that $\mathbb{P}\left( \Phi _{it}=0\right) >0$ for all $i=1, \dots, p$. For simplicity in notation and readability, the latter condition is denoted as \textbf{A0}.
\begin{theo}
Under \mbox{\textbf{A0}},
\textit{\ the series} $%
\mathbf{Y}_{t}:=\sum\limits_{j=0}^{\infty }\prod\limits_{i=0}^{j-1}A_{t-i}\mathbf{\Xi }%
_{t-j}$ converges absolutely a.s. for all $t\in 
\mathbb{Z}
$\textit{\ and the process }$\left\{ Y_{t},t\in 
\mathbb{Z}
\right\} $\textit{\ given by }$Y_{t}=\left( 1,0, \dots, 0\right) ^{\prime }%
\mathbf{Y}_{t}$ is the unique (causal) strictly stationary and ergodic solution
to the RMINAR equation $(\ref{3.1a})$\textit{.}
\label{teo1}
\end{theo}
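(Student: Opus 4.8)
The plan is to exploit condition \textbf{A0} to show that the backward matrix products defining $\mathbf{Y}_t$ are \emph{eventually exactly zero}, so that the infinite series in the statement is almost surely a finite sum; everything else then follows from standard ergodic-theoretic and iteration arguments. First I would introduce the event $E_t=\{\Phi_{1t}=\cdots=\Phi_{pt}=0\}$. By the mutual independence of the input sequences together with \textbf{A0}, $\mathbb{P}(E_t)=\prod_{i=1}^{p}\mathbb{P}(\Phi_{it}=0)=:\rho>0$, and on $E_t$ the companion matrix $(\ref{eq: At})$ reduces to the fixed nilpotent shift matrix
$$
B=\left(\begin{array}{cc}\mathbf{0}_{1\times(p-1)} & 0 \\ \mathbf{I}_{p-1} & \mathbf{0}_{(p-1)\times 1}\end{array}\right),\qquad B^{p}=\mathbf{0}.
$$

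The crucial step is to show that long backward products of the $A_s$ vanish. Partition $\mathbb{Z}$ into the disjoint blocks $\{kp+1,\dots,kp+p\}$, $k\in\mathbb{Z}$, and let $G_k$ be the event that $E_s$ holds for every $s$ in block $k$. The $G_k$ are independent, since they involve disjoint collections of the iid inputs, with $\mathbb{P}(G_k)=\rho^{p}>0$; so by the second Borel--Cantelli lemma infinitely many $G_k$ occur almost surely, both as $k\to-\infty$ and $k\to+\infty$. Whenever $G_k$ occurs, the $p$ consecutive factors over block $k$ multiply to $B^{p}=\mathbf{0}$, and hence any backward product $\prod_{i=0}^{j-1}A_{t-i}=A_tA_{t-1}\cdots A_{t-j+1}$ whose index window contains a realized block carries this zero sub-product and is itself $\mathbf{0}$. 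Consequently, for each fixed $t$ there is an almost surely finite $J(t)$ with $\prod_{i=0}^{j-1}A_{t-i}=\mathbf{0}$ for all $j\ge J(t)$. The defining series therefore has only finitely many nonzero terms and converges absolutely a.s.; call its sum $\mathbf{Y}_t$.

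It remains to verify that this $\mathbf{Y}_t$ solves $(\ref{3.4})$, is strictly stationary and ergodic, and is the unique such solution. Substituting the series into $A_t\mathbf{Y}_{t-1}+\mathbf{\Xi}_t$ and using the identity $A_t\prod_{i=0}^{j-1}A_{t-1-i}=\prod_{i=0}^{j}A_{t-i}$, a reindexing $j\mapsto j+1$ (with $\mathbf{\Xi}_t$ supplying the $j=0$ term) recovers the series for $\mathbf{Y}_t$, so $(\ref{3.4})$ holds and $Y_t=(1,0,\dots,0)'\mathbf{Y}_t$ solves $(\ref{3.1a})$. For stationarity and ergodicity I would note that $\mathbf{Y}_t=g\big((A_s,\mathbf{\Xi}_s):s\le t\big)$ for a fixed measurable map $g$, well defined because the sum is a.s.\ finite and hence a measurable limit of its partial sums; since the driving sequence $\{(A_s,\mathbf{\Xi}_s)\}$ is iid, hence strictly stationary and ergodic, the image process $\{\mathbf{Y}_t\}$ inherits both properties by the standard Bernoulli-shift argument, and the same passes to $\{Y_t\}$ as a fixed coordinate projection. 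For uniqueness, if $\tilde{\mathbf{Y}}_t$ is another strictly stationary solution, the difference $\mathbf{D}_t:=\mathbf{Y}_t-\tilde{\mathbf{Y}}_t$ satisfies $\mathbf{D}_t=A_t\mathbf{D}_{t-1}$, whence $\mathbf{D}_t=\big(\prod_{i=0}^{j-1}A_{t-i}\big)\mathbf{D}_{t-j}$ for every $j$; choosing $j\ge J(t)$ annihilates the matrix factor and forces $\mathbf{D}_t=\mathbf{0}$ a.s.

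The genuinely novel ingredient, and the only real obstacle, is the finite-termination observation of the second paragraph: it is what makes the conclusion hold \emph{everywhere} in the parameter space, in sharp contrast to the real-valued RCAR case where one must instead verify a negative top Lyapunov exponent. The remaining verifications (the telescoping check, the measurable-functional argument for ergodicity, and the iteration argument for uniqueness) are routine, the only mild care being the measurability of $g$, which is secured precisely by the a.s.\ finiteness of the sum.
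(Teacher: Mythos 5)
Your proof is correct and takes essentially the same route as the paper's: both use \textbf{A0} to show that a block of $p$ consecutive companion matrices equals the zero matrix with positive probability, so the backward products $\prod_{i=0}^{j-1}A_{t-i}$ vanish for all large $j$ almost surely, the series reduces to an a.s.\ finite sum, and stationarity, ergodicity and uniqueness follow from the iid inputs. Your write-up actually fills in details the paper's Lemma \ref{lemmaA} leaves terse (the explicit nilpotency $B^{p}=\mathbf{0}$, the Borel--Cantelli argument on disjoint blocks, the telescoping verification, and the annihilation argument for uniqueness), but the key idea is identical.
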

\begin{proof}
See Appendix A.
\end{proof}

Most of the usual count distributions (binomial, Poisson, negative binomial,
etc.) satisfy \textbf{A0} and thus guarantee that
model $(\ref{3.1a})$ has a (causal) strictly stationary and ergodic solution
whatever the value of the inputs $\Phi _{it}$ and $\varepsilon _{t}$. Then, the RMINAR $(\ref{3.1a})$ is universally (or everywhere) stable
with respect to stationarity and ergodicity.

\begin{rem} i) When \textbf{A0} is not satisfied (e.g.~for the truncated geometric distribution, modeling the number of Bernoulli trials to get the first success, and taking $\mathbb{P}\left( \Phi
_{1t}\in \left\{ 1,2,\dots\right\} \right) =1$), assuming that $\mathbb{E}\left( \log
^{+}\left( \varepsilon _{t}\right) \right) <\infty $ and $\mathbb{E}\left( \log
^{+}\left( \Phi _{it}\right) \right) <\infty $ ($i=1, \dots, p$), a sufficient
condition\ for the existence of a strictly stationary and ergodic solution
to $(\ref{3.1a})$ is that the largest Lyapunov exponent $\gamma$ (Furstenberg and Kesten, $%
1960$; Kesten, $1973$)%
\begin{equation*}
\gamma =\lim_{t\rightarrow \infty }\tfrac{1}{t}\log
\left\Vert A_{1} A_{2} \cdots A_{t}\right\Vert<0 \text{ a.s.,}
\end{equation*}%
where $\log^{+} (x):=\max \left( \log (x),0\right) $, $\left\Vert
.\right\Vert $ is an operator norm on the space of square real matrices of
dimension $p$ and $A_i$ ($i=1, \dots, t$) are $p\times p$ random matrices.

ii) When \textbf{A0} is satisfied and $p=1$, the Lyapunov exponent $\mathbb{E} \left(\log \left(\Phi _{1t}\right)\right)=-\infty $ provided that $\mathbb{E}\left( \log ^{+} \left(\Phi _{1t}\right) \right) <\infty $ with the
convention $\max (-\infty ,0)=0$.
\label{remk1}
\end{rem}
Using stochastic recurrence equations theory (Kesten, $1973$; Vervaat, $1979$%
; Goldie, $1991$; Grey, $1994$), the tail behavior of model $(\ref{3.1a})$ can be
easily revealed. Note that due to \textbf{A0}, the distribution of $\log
\left( \Phi _{it}\right) $ given $\Phi _{it}\neq 0$ is non-arithmetic for
all $i=1, \dots, p$. Moreover, the equation $(\ref{3.3})$ admits a strictly stationary
solution whatever the value of $\gamma$.
Therefore, the following proposition states a result that is an obvious corollary of Theorems 3-5 of Kesten $(1973)$ (see also Theorem 1 of Grey $(1994)$ when $p=1$). For this reason the proof of proposition \ref{prop31} is omitted. 

\begin{proposition}
Consider the RMINAR model \eqref{3.1a} under \textbf{A0} and $\mathbb{E}\left( \log ^{+}\left( \Phi _{it}\right) \right)
<\infty $ ($i=1, \dots, p$).

i) For $p=1$, assume that
\begin{equation}
\mathbb{E}\left( \Phi _{1t}^{\tau _{0}}\log ^{+}\left(\Phi _{1t}\right)\right) <\infty \text{ \ 
\textit{and} \ } \mathbb{E}\left( \Phi _{1t}^{\tau _{0}}\right) \geq 1\text{ \textit{, for
some} }\tau _{0}>0.  
\label{3.5}
\end{equation}%
Then there exists $\tau _{1}\in \left[ 0,\tau _{0}\right] $ such that 
 $\mathbb{E}\left( \Phi _{1t}^{\tau _{1}}\right) =1$ has a
unique solution. In addition, if $\mathbb{E}\left( \varepsilon _{t}^{\tau
_{1}}\right) <\infty$ then
\begin{equation}
\mathbb{P}\left( Y_{t}\geq y\right)  \rightarrow y^{-\tau _{1}},\text{ \textit{as} }%
y\rightarrow \infty \text{.} 
\label{3.6}
\end{equation}

ii) For $p>1$, assume there exists $\tau _{0}>0$ such that
\begin{equation}
\mathbb{E}\left( \sum_{j=1}^{p}\Phi _{jt}^{\tau _{0}}\right) \geq p^{\frac{\tau _{0}}{%
2}}\text{, }\mathbb{E}\left( \left\Vert A_{t}\right\Vert ^{\tau _{0}}\log
^{+}\left\Vert A_{t}\right\Vert \right) <\infty \text{ \ 
\textit{and} \ } \mathbb{E}\left(
\varepsilon _{t}^{\tau _{1}}\right) <\infty \text{, }\tau _{1}\in \left[
0,\tau _{0}\right] \text{.}  
\label{3.7}
\end{equation}

Then for all non-negative $p$-vector $\mathbf{x}$ such that $\left\Vert \mathbf{x}\right\Vert =1$\textit{,}
\begin{equation}
\mathbb{P}\left( \mathbf{x}^{\prime }\mathbf{Y}_{t}\geq y\right) \rightarrow y^{-\tau _{1}},%
\text{ \textit{as} }y\rightarrow \infty \text{.}
\label{3.8}
\end{equation}
\label{prop31}
\end{proposition}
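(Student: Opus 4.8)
The plan is to recognize $(\ref{3.4})$ as a stochastic recurrence equation with i.i.d.\ nonnegative random coefficient matrices $A_{t}$ and nonnegative forcing $\mathbf{\Xi}_{t}$, and to read off the tail from Kesten--Goldie theory. By Theorem~\ref{teo1} (together with \textbf{A0}) the equation already possesses a unique strictly stationary and ergodic solution given by the convergent series, so the entire content of the proposition is the \emph{identification} of its tail index; hence it suffices to verify, case by case, the hypotheses under which the cited renewal-theoretic results apply.

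For $p=1$ the recurrence reduces to the scalar affine equation $Y_{t}=\Phi_{1t}Y_{t-1}+\varepsilon_{t}$ with $\Phi_{1t}\geq 0$. First I would establish existence and uniqueness of the index $\tau_{1}$: the map $\tau\mapsto\mathbb{E}(\Phi_{1t}^{\tau})$ is convex and, under \textbf{A0}, tends to $\mathbb{P}(\Phi_{1t}>0)<1$ as $\tau\downarrow 0$, while $(\ref{3.5})$ forces it to be $\geq 1$ at $\tau_{0}$; the intermediate value theorem then yields a root $\tau_{1}\in(0,\tau_{0}]$ of $\mathbb{E}(\Phi_{1t}^{\tau_{1}})=1$, which is unique because a convex map starting below $1$ can attain the value $1$ at most once. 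With the non-arithmeticity of the law of $\log(\Phi_{1t})$ given $\Phi_{1t}\neq 0$ already recorded before the statement, the moment bound $\mathbb{E}(\Phi_{1t}^{\tau_{1}}\log^{+}(\Phi_{1t}))<\infty$ from $(\ref{3.5})$, and $\mathbb{E}(\varepsilon_{t}^{\tau_{1}})<\infty$, the Kesten--Goldie theorem (equivalently Grey, $1994$, Theorem~1) applies and delivers the power law $\mathbb{P}(Y_{t}\geq y)\sim c\,y^{-\tau_{1}}$, that is $(\ref{3.6})$.

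For $p>1$ I would invoke the matrix version, Theorems~3--5 of Kesten $(1973)$, applied to the nonnegative companion matrices $A_{t}$ of $(\ref{eq: At})$. The verification amounts to checking three things: the nonnegativity of the entries of $A_{t}$ (immediate from the companion form, whose nonzero entries are the $\Phi_{it}\geq 0$ and the ones of $\mathbf{I}_{p-1}$); the allowability, irreducibility and non-arithmeticity requirements on the random products $A_{1}\cdots A_{t}$ (which follow from \textbf{A0} together with the non-degeneracy of the $\Phi_{it}$, exactly as in the scalar case); and the spectral condition encoded in $(\ref{3.7})$, where $\mathbb{E}(\sum_{j=1}^{p}\Phi_{jt}^{\tau_{0}})\geq p^{\tau_{0}/2}$ plays the role of $\mathbb{E}(\Phi^{\tau_{0}})\geq 1$ and guarantees a homogeneity exponent $\tau_{1}\in[0,\tau_{0}]$ common to all directions, the logarithmic moment bound $\mathbb{E}(\|A_{t}\|^{\tau_{0}}\log^{+}\|A_{t}\|)<\infty$ supplying the integrability Kesten requires. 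The conclusion is that $\mathbf{Y}_{t}$ is multivariate regularly varying of index $\tau_{1}$, which, projected onto any nonnegative unit vector $\mathbf{x}$, gives $(\ref{3.8})$.

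The hard part will be the multivariate case rather than the scalar one: one must check that the sparse companion matrices $A_{t}$ meet Kesten's irreducibility and aperiodicity (non-lattice) hypotheses, since any degenerate or lattice-type behaviour of the products $A_{1}\cdots A_{t}$ would destroy the exact power law and leave at most an upper bound. This is delicate precisely because a single $A_{t}$ has only its first row and first sub-diagonal nonzero, so irreducibility and the spreading of the mass of $\log\|A_{1}\cdots A_{t}\|$ over a non-lattice set must be argued for products over several time steps, and it is here that \textbf{A0} does the essential work by supplying the needed positivity. Once these structural hypotheses are secured, the existence of $\tau_{1}$ and the directional tail $(\ref{3.8})$ follow directly from the cited theorems, which is exactly why the result can be presented as a corollary.
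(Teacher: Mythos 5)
Your proposal coincides with the paper's treatment: Proposition \ref{prop31} is stated there without proof, precisely on the grounds that it is an immediate corollary of Theorems 3--5 of Kesten (1973) and, for $p=1$, of Theorem 1 of Grey (1994) --- the very results you invoke --- and your convexity/intermediate-value argument for the existence and uniqueness of $\tau_{1}$, together with the non-arithmeticity of $\log(\Phi_{1t})$ given $\Phi_{1t}\neq 0$ recorded just before the statement, is exactly the standard hypothesis-checking those theorems presuppose. One small correction to your closing remark: \textbf{A0} asserts $\mathbb{P}(\Phi_{it}=0)>0$ and thus supplies \emph{zeros} (exploited for universal stationarity in Theorem \ref{teo1}), whereas the positivity needed for Kesten's irreducibility and non-degeneracy conditions on products of the companion matrices comes instead from $\phi_{i}=\mathbb{E}(\Phi_{it})>0$, i.e.\ $\mathbb{P}(\Phi_{it}\geq 1)>0$.
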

\vspace{-1cm}
Since the inputs of the RMINAR($p$) model \eqref{3.1a} have, in general, count distributions (binomial, Poisson, negative binomial, double Poisson, etc.)~for which all
moments exists, the conditions \eqref{3.5} and \eqref{3.7} are generally satisfied.
Thus, the stationary solutions of \eqref{3.1a} would have regularly-varying
tails in the sense of \eqref{3.6} and \eqref{3.8}.

While the conditions of strict stationarity, ergodicity and regular variation for the RMINAR model \eqref{3.1a} are somewhat different from those of the real-valued RCAR case, the
moment conditions for the two cases are the same. These conditions for model \eqref{3.1a} are stated in proposition \ref{prop32}, the proof of which is very similar to that of Theorem 2.9 in Francq and Zakoian (2019) and hence is omitted. The reader is further referred to Feigin and Tweedie $(1985)$ for more details (see also Tsay, $1987$ and Ling, $%
1999$ for similar models).

\begin{proposition}
Suppose that $\mathbb{E}\left(\Phi _{it}^{m}\right)
<\infty $ and $\mathbb{E}\left( \varepsilon _{t}^{m}\right) <\infty$ for some positive integer $m$. Then the RMINAR model \eqref{3.1a} admits a strictly stationary and ergodic solution with $\mathbb{E}\left(
Y_{t}^{m}\right) <\infty $ if
\begin{equation}
\rho \left( \mathbb{E}\left( A_{t}^{\otimes m}\right) \right) <1, 
\label{3.9}
\end{equation}
where $A^{\otimes m}=A\otimes\cdots\otimes A$ is the Kronecker product with $m$
factors. 
If $\rho \left( \mathbb{E}\left( A_{t}^{\otimes m}\right) \right) \geq 1$ then $\mathbb{E}\left( Y_{t}^{m}\right) =\infty$.
\label{prop32}
\end{proposition}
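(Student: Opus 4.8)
The plan is to exploit the vector form $(\ref{3.4})$ and its a.s.\ convergent series solution from Theorem \ref{teo1}, reducing the claim to the finiteness of $\mathbb{E}(\Vert \mathbf{Y}_{t}\Vert^{m})$, since $Y_{t}$ is the first component of $\mathbf{Y}_{t}$ and, in the $\mathbb{N}_0$-valued case, all entries of $\mathbf{Y}_{t}$, $A_{t}$ and $\mathbf{\Xi }_{t}$ are non-negative; the absence of cancellation is what lets the argument proceed through equalities and monotone limits. First I would vectorise the $m$-th order moments via Kronecker powers: setting $\bu_{t}:=\mathbf{Y}_{t}^{\otimes m}$, a non-negative random vector in $\mathbb{R}^{p^{m}}$ whose entries are the mixed moments of order $m$ of the components of $\mathbf{Y}_{t}$, finiteness of $\mathbb{E}(\Vert\mathbf{Y}_{t}\Vert^{m})$ is equivalent to finiteness of every entry of $\mathbb{E}(\bu_{t})$. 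Expanding $(A_{t}\mathbf{Y}_{t-1}+\mathbf{\Xi }_{t})^{\otimes m}$ by multilinearity and using the mixed-product identity $(A_{t}\mathbf{Y}_{t-1})^{\otimes m}=A_{t}^{\otimes m}\mathbf{Y}_{t-1}^{\otimes m}$ gives the affine recursion $\bu_{t}=A_{t}^{\otimes m}\bu_{t-1}+\bR_{t}$, where $\bR_{t}$ gathers the cross terms, each a tensor product carrying at least one factor $\mathbf{\Xi }_{t}$ and at most $m-1$ factors of $\mathbf{Y}_{t-1}$. As $\mathbf{Y}_{t-1}$ is independent of $(A_{t},\mathbf{\Xi }_{t})$, taking expectations and writing $\bb_{m}:=\mathbb{E}(\bR_{t})$ and $\bmu_{m}:=\mathbb{E}(\bu_{t})$ (constant in $t$ by strict stationarity) yields the fixed-point relation $\bmu_{m}=\mathbb{E}(A_{t}^{\otimes m})\bmu_{m}+\bb_{m}$.

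The finiteness would be established by induction on $m$, the inductive hypothesis being that all moments of $\mathbf{Y}_{t}$ of order $\le m-1$ are finite. Together with $\mathbb{E}(\Phi_{it}^{m})<\infty$, $\mathbb{E}(\varepsilon_{t}^{m})<\infty$ and H\"older's inequality, this makes every cross term in $\bR_{t}$ integrable, so $\bb_{m}<\infty$ entrywise. Since $\rho(\mathbb{E}(A_{t}^{\otimes m}))<1$, the Neumann series $(\mathbf{I}-\mathbb{E}(A_{t}^{\otimes m}))^{-1}=\sum_{k\ge 0}\mathbb{E}(A_{t}^{\otimes m})^{k}$ converges, and its summands being non-negative it has non-negative entries, so $\bmu_{m}=(\mathbf{I}-\mathbb{E}(A_{t}^{\otimes m}))^{-1}\bb_{m}<\infty$. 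To avoid presupposing $\bmu_{m}<\infty$, I would run the recursion on the truncated sums $\mathbf{Y}_{t}^{(N)}=\sum_{j=0}^{N}\prod_{i=0}^{j-1}A_{t-i}\mathbf{\Xi }_{t-j}$, which increase to $\mathbf{Y}_{t}$; their $m$-th moments are bounded by $\sum_{k=0}^{N}\mathbb{E}(A_{t}^{\otimes m})^{k}\bb_{m}$ uniformly in $N$, and monotone convergence upgrades this to $\bmu_{m}<\infty$.

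For the converse I would assume $\rho(\mathbb{E}(A_{t}^{\otimes m}))\ge 1$ and, for contradiction, $\bmu_{m}<\infty$. Iterating the fixed-point relation gives $\bmu_{m}=\sum_{k=0}^{K}\mathbb{E}(A_{t}^{\otimes m})^{k}\bb_{m}+\mathbb{E}(A_{t}^{\otimes m})^{K+1}\bmu_{m}$ for every $K$. Because $\sigma_{\varepsilon}^{2}>0$ forces $\bb_{m}$ to have a strictly positive component along the Perron eigenvector of the non-negative matrix $\mathbb{E}(A_{t}^{\otimes m})$, the partial sums on the right diverge once the Perron root is $\ge 1$, contradicting finiteness; hence $\mathbb{E}(Y_{t}^{m})=\infty$.

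The hard part will be two structural facts underpinning the induction and the converse. First, for the induction to be legitimate one needs $\rho(\mathbb{E}(A_{t}^{\otimes m}))<1$ to propagate to all lower orders $k\le m-1$; this rests on a monotonicity property of the spectral radii of Kronecker powers of a non-negative matrix, exactly the property exploited in the proof of Theorem 2.9 of Francq and Zakoian (2019). Second, the divergence step in the converse requires enough irreducibility of $\mathbb{E}(A_{t}^{\otimes m})$ --- provided here by the companion structure of $A_{t}$ together with the strict positivity of the input variances --- so that Perron--Frobenius theory applies and $\bb_{m}$ is not orthogonal to the dominant eigendirection. Everything else is routine bookkeeping of the cross terms and standard monotone-convergence arguments, which is precisely why the statement follows the lines of Theorem 2.9 of Francq and Zakoian (2019) and can be presented as a near-corollary.
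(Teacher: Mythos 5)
Your proposal is correct and follows exactly the route the paper itself indicates: the paper omits the proof of Proposition~\ref{prop32}, stating it is ``very similar to that of Theorem~2.9 in Francq and Zakoian (2019),'' and your argument --- Kronecker-power vectorisation $\mathbf{Y}_{t}^{\otimes m}$ with the mixed-product identity, non-negativity of all entries, truncation of the a.s.\ series from Theorem~\ref{teo1} plus monotone convergence to avoid circularity, the Lyapunov-type monotonicity $\rho(\mathbb{E}(A_{t}^{\otimes k}))^{1/k}\leq\rho(\mathbb{E}(A_{t}^{\otimes m}))^{1/m}$ for the induction, and Perron--Frobenius with the companion structure for the necessity --- is precisely that argument. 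The two ``hard parts'' you flag are indeed the ingredients of the Francq--Zakoian proof (for the second, note the clean fix: $\mathbb{E}(A_{t}^{\otimes m})^{p}\geq(A^{p})^{\otimes m}$ entrywise with $A=\mathbb{E}(A_{t})$, and $A^{p}$ has all entries positive since $\phi_{i}>0$), so nothing essential is missing.
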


From the result in \eqref{3.9} with $m=1$, the RMINAR($p$) model \eqref{3.1a} admits a stationary and ergodic
solution with $\mathbb{E}\left( Y_{t}\right) <\infty $ if%
\begin{equation}
\rho \left( A_t \right) <1,  
\label{3.10} \nonumber
\end{equation}%
which is equivalent to 
\begin{equation}
\sum\limits_{i=1}^{p}\phi _{i}<1.  \label{3.11}
\end{equation}%
Under $(\ref{3.11})$, the unconditional mean has the following expression 
\begin{equation}
\mathbb{E}\left( Y_{t}\right) =\mu_{\varepsilon} \left( 1-\sum\limits_{i=1}^{p}\phi _{i}\right)
^{-1},  
\label{3.12}
\end{equation}%
which is the same as the mean of RCINAR$\left( p\right) $ and AR$(p)$ models. 
If $\sum\limits_{i=1}^{p}\phi _{i}= 1$ or $\sum\limits_{i=1}^{p}\phi _{i}>1$ then the RMINAR($p$) model $(\ref{3.1a})$ still admits a
strictly stationary and ergodic solution $\left\{ Y_{t},t\in 
\mathbb{Z}
\right\} $ (under \textbf{A0}), but with an infinite mean $\mathbb{E}\left(
Y_{t}\right) =\infty $.

A sufficient condition for second-order stationary is given by 
$(\ref{3.9})$ with $m=2$. As an example, for the RMINAR($1$) the condition is 
\begin{equation*}
0\leq \mathbb{E}\left( \Phi _{1t}^{2}\right) =\sigma_{\phi_1}^2+\phi _{1}^2<1,
\end{equation*}%
and the unconditional variance of the process is
\begin{equation}
\mathbb{V}\left( Y_{t}\right) =\frac{\sigma_{\varepsilon}^{2}+\sigma_{\phi_1}^2\left( \mathbb{E}\left(Y_{t-1}\right)\right)
^{2}}{1-\left(\sigma_{\phi_1}^2+\phi _{1}^2\right) }.  
\label{3.13}
\end{equation}%
For any $p$, $\mathbb{V}\left( Y_{t}\right) $ can be given from $(\ref%
{3.4})$ using the vector stacking operator $vec$. Let $A:=\mathbb{E}\left(
A_{t}\right) $, $\mathbf{\Xi }:=\mathbb{E}\left( \mathbf{\Xi }_{t}\right) $, $%
\mathbf{\mu }:=\mathbb{E}\left( \mathbf{Y}_{t}\right) $, $\Gamma _{Y}:=%
\mathbb{V}(\mathbf{Y}_{t})=\mathbb{E}\left( \mathbf{Y}_{t}-\mathbf{\mu }%
\right) \left( \mathbf{Y}_{t}-\mathbf{\mu }\right) ^{\prime }$ and $\Gamma
_{\mathbf{\Xi }}:=\mathbb{V}\left( \mathbf{\Xi }_{t}\right) $. Then%
\begin{equation}
vec(\Gamma _{Y})=\left( I-E\left( A_{t}\otimes A_{t}\right) \right)
^{-1}\left( \left( E\left( A_{t}\otimes A_{t}\right) -\left( A\otimes
A\right) \right) vec\left( \mathbf{\mu \mu }^{\prime }\right) +vec\left(
\Gamma _{\mathbf{\Xi }}\right) \right) .  
\label{3.13b}\nonumber
\end{equation}
In particular, for $p=2$ it follows that
\begin{equation*}
\mathbb{V}\left( Y_{t}\right) =\frac{\left( 1-\phi _{2}\right) \left( \left(
\sigma _{\phi _{1}}^{2}+\sigma _{\phi _{2}}^{2}\right) \left( \mathbb{E}%
\left( Y_{t}\right) \right) ^{2}+\sigma _{\varepsilon }^{2}\right) }{%
1-\left( \phi _{1}^{2}+\phi _{1}^{2}\phi _{2}-\phi _{2}^{3}+\phi
_{2}^{2}+\phi _{2}+\left( 1-\phi _{2}\right) \left( \sigma _{\phi
_{1}}^{2}+\sigma _{\phi _{2}}^{2}\right) \right) }\text{.}
\end{equation*}%
On a final note, the RMINAR model is able to generate both
(unconditional) over-dispersion (e.g.~with Poisson distributed inputs) and
under dispersion (e.g.~with binomial inputs).

\subsection{A $\mathbb{Z}$-valued RMINAR extension}

The RMINAR$(p)$ model $(\ref{3.1a})$ can be extended to address $%
\mathbb{Z}
$-valued time series simply by considering $%
\mathbb{Z}
$-valued random inputs. Thus, let $\left\{ \Phi _{it},\text{ }t\in 
\mathbb{Z}
\right\} \ (i=1, \dots, p)$ and $\left\{ \varepsilon _{t},\text{ }t\in 
\mathbb{Z}
\right\} $ be mutually independent $%
\mathbb{Z}
$-valued iid sequences with $\phi _{i}\in 
\mathbb{R}
$ 
($i=1, \dots, p$) and $%
\mu_{\varepsilon} \in \mathbb{R}$ instead of being constrained to be positive values. 
A $%
\mathbb{Z}
$-valued RMINAR$(p)$ writes as in $(\ref{3.1a})$ as follows%
\begin{equation}
Y_{t}=\sum\limits_{i=1}^{p}\Phi _{it}Y_{t-i}+\varepsilon _{t},\ t\in 
\mathbb{Z}
,  \label{3.14}
\end{equation}%
where the random multiplication $\Phi _{it}Y_{t-i}\equiv \phi _{i}\odot
_{m}Y_{t-i}$ now acts on $%
\mathbb{Z}
$ with $\phi _{i}\in 
\mathbb{R}
$ and $Y_{t-i}\in 
\mathbb{Z}
$. Many distributions supported on $%
\mathbb{Z}
$ can be chosen for the inputs $\left\{ \Phi _{it},\text{ }t\in 
\mathbb{Z}
\right\} \ (i=1,\dots, p)$ and $\left\{ \varepsilon _{t},\text{ }t\in 
\mathbb{Z}
\right\} $. For example, a useful candidate is the Skellam distribution
(Irwin, $1937$; Skellam, $1946$) for the
difference of two independent Poisson variables (or simply Poisson
difference, PD). Another useful choice with heavier tails is the negative
binomial difference (NBD) distribution (cf. Barndorff-Nielsen {\it et al.}, $2012$; Barra and Koopman, $2018$).

The conditional mean and variance of model $(\ref{3.14})$ are given exactly as $(\ref{3.2})$. Many efforts have been made to define $\mathbb{Z}$-valued (or signed integer-valued) models following different approaches, namely thinning-based models (e.g.~Kim and Park, $2008$; Zhang
\textit{et al.}, $2009$; Alzaid and Omair, $2014$), parameter driven models
(Koopman \textit{et al.}, $2017$; Barra and Koopman, $2018$), and observation
driven models (Alomani \textit{et al.}, $2018$; Cui \textit{et al.}, $2021$).
The resulting specifications, however, appear to be less straightforward than those of the $\mathbb{Z}$-valued RMINAR($p$) proposed in this work. In particular, the INGARCH-type models
(Alomani \textit{et al.}, $2018$; Cui \textit{et al.}, $2021$) only concern
the conditional variance and not the conditional mean. On the contrary, the RMINAR($p$) approach $(\ref{3.14})$ simultaneously addresses the conditional mean and variance and is analytically much simpler.
Furthermore, given the connections between RMINAR($p$) and RCAR($p$), the existing real-valued RCAR($p$) tools (Nicholls and Quinn, $1982$)
can be adapted e.g.~for the estimation of RMINAR($p$) models.

The probability structure of $(\ref{3.14})$ is similar to that of $(\ref{3.1a})$ as $(\ref{3.14})$ can be written in the vector form $(\ref{3.4})$ with obvious notations. Furthermore, the
universal ergodicity property for the $\mathbb{N}_0$-valued RMINAR model $(\ref{3.1a})$ still holds for the $\mathbb{Z}$-valued RMINAR model $(\ref{3.14})$. Under the assumption $\mathbb{P}\left( \Phi _{it}=0\right) >0$ $(i=1, \dots, p)$, which is condition $\textbf{A0}$ where $\Phi_{it}$ is $\mathbb{Z}$-valued instead of $\mathbb{N}_0$-valued (hereafter referred as condition \textbf{A0*}), Theorem \ref{teo1} still holds true for the $\mathbb{Z}$-valued RMINAR model $(\ref{3.14})$.

\begin{theo}
Under \textbf{A0*} the conclusions
of Theorem $\ref{teo1}$ remain true for the $\mathbb{Z}$-valued RMINAR model $(\ref{3.14})$.
\label{teo2}
\end{theo}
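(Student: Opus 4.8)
The plan is to show that the proof of Theorem~\ref{teo1} given in Appendix~A transfers essentially verbatim to the $\mathbb{Z}$-valued model $(\ref{3.14})$, because that proof uses the state space of the inputs only through condition \textbf{A0} (now \textbf{A0*}) and never through their sign or non-negativity. First I would observe that $(\ref{3.14})$ admits exactly the same vector representation $(\ref{3.4})$, namely $\mathbf{Y}_t = A_t\mathbf{Y}_{t-1}+\mathbf{\Xi}_t$ with the companion matrix $A_t$ of $(\ref{eq: At})$ and $\mathbf{\Xi}_t=(\varepsilon_t,0,\dots,0)'$; the only change is that the entries of $A_t$ and $\mathbf{\Xi}_t$ now take values in $\mathbb{Z}$ and $\mathbb{R}$. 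Consequently the candidate solution is again the backward series $\mathbf{Y}_t=\sum_{j=0}^{\infty}\prod_{i=0}^{j-1}A_{t-i}\,\mathbf{\Xi}_{t-j}$, and the whole question reduces to re-establishing its absolute almost-sure convergence, after which the stationarity, ergodicity and uniqueness statements follow unchanged.

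The key step is the convergence argument, which I would run through nilpotency rather than through any moment or Lyapunov-exponent bound. Whenever $\Phi_{1t}=\cdots=\Phi_{pt}=0$, which under \textbf{A0*} occurs with probability $q:=\prod_{i=1}^{p}\mathbb{P}(\Phi_{it}=0)>0$, the companion matrix $A_t$ collapses to the nilpotent shift $\mathbf{N}$ (the matrix with sub-diagonal identity block and zero first row), and $\mathbf{N}^{p}=\mathbf{0}$. Hence, if the inputs vanish on $p$ consecutive dates $s-p+1,\dots,s$, then $A_sA_{s-1}\cdots A_{s-p+1}=\mathbf{N}^{p}=\mathbf{0}$, so every product $\prod_{i=0}^{j-1}A_{t-i}$ containing this window is zero. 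Considering the independent, non-overlapping length-$p$ windows ending at $t, t-p, t-2p,\dots$, each of probability $q^{p}>0$, the second Borel--Cantelli lemma guarantees that almost surely infinitely many of them consist entirely of such zero blocks. Therefore, for each fixed $t$, only finitely many terms of the series are nonzero almost surely, which makes the series a finite sum and hence trivially absolutely convergent --- an argument completely insensitive to the signs of $\Phi_{it}$ and $\varepsilon_t$.

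Finally, I would note that the remaining conclusions are purely structural. The series is a fixed measurable function of the stationary, ergodic (indeed iid) sequence $\{(A_t,\mathbf{\Xi}_t),\,t\in\mathbb{Z}\}$, so $\{Y_t\}$ is strictly stationary and ergodic by the standard measurable-image argument; substituting the series into $(\ref{3.14})$ verifies that it solves the equation; and uniqueness of the causal solution follows by the same annihilation reasoning, since the difference $\mathbf{D}_t$ of two solutions obeys $\mathbf{D}_t=\prod_{i=0}^{j-1}A_{t-i}\,\mathbf{D}_{t-j}$ and is killed once a common zero window flushes the shared past. The only point requiring genuine care --- and hence the main obstacle --- is to confirm that the convergence proof of Theorem~\ref{teo1} nowhere relies on $\mathbb{N}_0$-valuedness (for instance on monotone or non-negative bounds for $\|\prod_{i=0}^{j-1}A_{t-i}\|$); once it is recast in the nilpotency-plus-Borel--Cantelli form above, the passage to $\mathbb{Z}$-valued inputs is immediate and \textbf{A0*} plays exactly the role that \textbf{A0} did.
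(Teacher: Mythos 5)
Your proposal is correct and follows essentially the same route as the paper: the paper proves Theorem~\ref{teo2} by noting that the argument for Theorem~\ref{teo1} (via Lemma~\ref{lemmaA}) carries over unchanged, and Lemma~\ref{lemmaA}'s $p>1$ case is exactly your zero-window annihilation idea, using $\mathbb{P}\left( A_{t}A_{t-1}\cdots A_{t-p+1}=\mathbf{0}_{p}\right) >0$ under \textbf{A0*} so that only finitely many terms of the backward series survive a.s. If anything, your write-up is slightly more explicit than the paper's, spelling out the nilpotent-shift structure, the second Borel--Cantelli step on non-overlapping windows, and the uniqueness argument, and it treats $p=1$ by the same mechanism rather than citing Vervaat ($1979$) as the paper does.
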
 

\begin{rem}
i) If \textbf{A0*} fails (e.g.~$\mathbb{P}\left( \Phi_{1t}\in \mathbb{Z}\backslash \left\{ 0\right\} \right) =1$), then assuming that $\mathbb{E}\left( \log
^{+}\left( \left\vert \varepsilon _{t}\right\vert \right) \right) <\infty $
and $\mathbb{E}\left( \log ^{+}\left( \left\vert \Phi _{it}\right\vert \right)
\right) <\infty $ $(i=1, \dots, p)$, the $\mathbb{Z}$-valued RMINAR model $(\ref{3.14})$ still admits a unique strictly
stationary and ergodic solution if $\gamma <0$.    
\label{rem32}
\end{rem} 
The tail behavior is much more difficult to analyze for the $\mathbb{Z}$-valued RMINAR model \eqref{3.14} than for the  $\mathbb{N}_{0}$-valued RMINAR model \eqref{3.1a}. However, it can be easily unmasked for $p=1$ by using
the following Kesten-Goldie theorem (Kesten, $1973$; Goldie, $1991$). This result is presented in proposition \ref{prop33} which constitutes a corollary of Kesten's theorem 5.

\begin{proposition}
Consider the $\mathbb{Z}$-valued RMINAR(1) model \eqref{3.14} subject
to condition \textbf{A0*} and $\mathbb{E}\left( \log ^{+}\left( \left\vert
\Phi _{1t}\right\vert \right) \right) <\infty$. Suppose further that
\begin{equation}
\mathbb{E}\left( \left\vert \Phi _{1t}\right\vert ^{\tau _{1}}\right) =1\text{, }%
\mathbb{E}\left( \left\vert \Phi _{1t}\right\vert ^{\tau _{1}}\log ^{+}\left\vert
\Phi _{1t}\right\vert \right) <\infty \text{ \ 
\textit{and} \ } \mathbb{E}\left(
\left\vert \varepsilon _{t}\right\vert ^{\tau _{1}}\right) <\infty \text{ 
\textit{, for some} }\tau _{1}>0.  
\label{3.15}\nonumber
\end{equation}%
Then
\begin{equation}
\mathbb{P}\left( Y_{t}\geq y\right)\rightarrow  c^{+}y^{-\tau _{1}} \text{ \ 
\textit{and} \ } \mathbb{P}(Y_{t}<-y)\rightarrow c^{-}y^{-\alpha }, \text{ \textit{as} }y\rightarrow \infty 
\text{,}  
\label{3.16}\nonumber
\end{equation}%
if and only if $\mathbb{P}(\varepsilon _{1}+\mu_{\varepsilon}=(1-\phi _{1})c)<1$ for any $c \in \mathbb{R}$ such that $c^{+}+c^{-}>0$.
\label{prop33}
\end{proposition}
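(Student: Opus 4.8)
The plan is to read the $\mathbb{Z}$-valued RMINAR(1) recursion \eqref{3.14}, $Y_t=\Phi_{1t}Y_{t-1}+\varepsilon_t$, as a scalar affine stochastic recurrence equation $Y_t=A_tY_{t-1}+B_t$ driven by the iid pairs $(A_t,B_t):=(\Phi_{1t},\varepsilon_t)$, and then to quote the Kesten--Goldie theorem (Kesten, 1973, Theorem 5; Goldie, 1991, Theorem 4.1) in its one-dimensional signed form. The strictly stationary, ergodic solution on which the tail statement is made already exists and is unique by Theorem \ref{teo2} (under \textbf{A0*}), so the entire task reduces to checking the hypotheses of Kesten--Goldie and matching its conclusion to the displayed two-sided asymptotics.

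Second, I would verify those hypotheses one at a time, aligning each with an assumption of the proposition. The Cram\'er-type root condition $\mathbb{E}(|\Phi_{1t}|^{\tau_1})=1$ is assumed outright, and so are the companion integrability $\mathbb{E}(|\Phi_{1t}|^{\tau_1}\log^+|\Phi_{1t}|)<\infty$ and the forcing moment $\mathbb{E}(|\varepsilon_t|^{\tau_1})<\infty$. The one genuinely delicate requirement is that the conditional law of $\log|\Phi_{1t}|$ given $\Phi_{1t}\neq 0$ be non-arithmetic. This is exactly where \textbf{A0*} and integer-valuedness enter: for the count distributions of interest the support of $|\Phi_{1t}|$ contains integers with incommensurable logarithms (e.g.~$2$ and $3$), so $\log|\Phi_{1t}|$ cannot be supported on any lattice, as already invoked for \textbf{A0} just before Proposition \ref{prop31}.

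With the hypotheses secured, Goldie's theorem delivers constants $c^+,c^-\geq 0$ with $\mathbb{P}(Y_t\geq y)\sim c^+y^{-\tau_1}$ and $\mathbb{P}(Y_t<-y)\sim c^-y^{-\tau_1}$ as $y\to\infty$, the common exponent $\tau_1$ governing both tails; both half-lines receive mass because negative values of $\Phi_{1t}$ (and of $\varepsilon_t$) send positive contributions into the lower tail. The remaining and decisive point is the dichotomy for positivity: Goldie's non-degeneracy criterion states that $c^++c^->0$ holds if and only if the stationary law is not a point mass, equivalently there is no $c$ that is a common fixed point of the affine maps $y\mapsto \Phi_{1t}y+\varepsilon_t$, i.e.~$\mathbb{P}(\Phi_{1t}c+\varepsilon_t=c)<1$ for every $c\in\mathbb{R}$. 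Negating the degenerate alternative yields the non-degeneracy condition in the statement, since a common fixed point $c$ would force $Y_t\equiv c$ and hence $c^+=c^-=0$. I expect the non-arithmeticity check and the signed-case bookkeeping for the tail constants (separating $c^++c^->0$ from the vanishing of one individual constant) to be the only non-routine steps; the rest is a direct transcription of the Kesten--Goldie hypotheses, which is why the authors can present the proof as an omitted corollary.
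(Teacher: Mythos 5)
Your proposal is correct and follows essentially the same route as the paper, which deliberately omits the proof and presents Proposition \ref{prop33} as an immediate corollary of Kesten's Theorem 5 (and Goldie, 1991) applied to the scalar stochastic recurrence $Y_t=\Phi_{1t}Y_{t-1}+\varepsilon_t$, with existence and uniqueness of the stationary solution taken from Theorem \ref{teo2} and non-arithmeticity of $\log\left\vert \Phi_{1t}\right\vert$ given $\Phi_{1t}\neq 0$ justified exactly as in the remark preceding Proposition \ref{prop31} (note the common exponent $\tau_1$ on both tails; the $y^{-\alpha}$ in the displayed statement is a typo). Your reformulation of the positivity dichotomy as $\mathbb{P}\left( \Phi_{1t}c+\varepsilon_{t}=c\right) <1$ for every $c\in\mathbb{R}$ is precisely Goldie's canonical non-degeneracy criterion, which is what the condition printed in the proposition is intended to encode.
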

Concerning the existence of moments for the $\mathbb{Z}$-valued RMINAR($p$) model \eqref{3.14} assume that $%
\mathbb{E}\left( \left\vert \varepsilon _{t}\right\vert ^{2m}\right) <\infty$, for $%
m\geq 1$. Using again the results of Feigin and Tweedie ($1985$, Theorems
4-5), a sufficient condition for the existence of a stationary and ergodic
solution with $\mathbb{E}(\left\vert Y_{t}\right\vert ^{2m})<\infty$ is that 
\begin{equation*}
\rho \left( \mathbb{E}\left( A_{t}^{\otimes 2m}\right) \right) <1.
\end{equation*}

As for the $\mathbb{N}_{0}$-valued RMINAR($p$) model \eqref{3.1a}, the marginal mean and variance of $\mathbb{Z}$-valued RMINAR($p$) model \eqref{3.14} are given, respectively, by \eqref{3.12} and \eqref{3.13} for $p=1$.

\subsection{RMINAR model with multiplicative errors}

The RMINAR($p$) models can be combined together with a scheme of multiplicative errors instead of additive ones. An example is the multiplicative thinning-based INGARCH model (MthINGARCH)
of Aknouche and Scotto $(2024)$, which has been shown to be flexible in representing many important features commonly observed
integer-valued time series, such as over-dispersion and heavy-tailedness.
The MthINGARCH($p$,$q$) process is defined as %
\begin{equation}
Y_{t}=\lambda _{t} \ \varepsilon _{t}\text{ \ and \ }\lambda _{t}=1+\omega
\circ m+\sum\limits_{i=1}^{p}\alpha _{i}\circ
Y_{t-i}+\sum\limits_{j=1}^{q}\beta _{j}\circ \lambda _{t-j},  
\label{3.17}
\end{equation}%
where $``\circ"$ stands for the BTO. The multiplicative error $\left( \varepsilon _{t}\right) $ is an iid integer-valued sequence with $\mathbb{E}\left( \varepsilon _{t}\right)=1$ and variance $\sigma _{\varepsilon }^{2}:=\mathbb{V}\left( \varepsilon _{t}\right)$ while the random variable $\lambda _{t}$ is such that $\lambda _{t}\in \mathbb{N}$, as a consequence of $(\ref{3.17})$. Given the restrictions on the BTO, the coefficients of the MthINGARCH satisfy $0\leq \alpha _{i}<1$, $0\leq \beta _{j}<1$ ($i=1, \dots, p$, $%
j=1, \dots, q$) and $0\leq \omega \leq 1$, being $\omega$ the expected value of the Bernoulli random variables $\left( \xi _{i}\right) \ (i=1, \dots, m)$ underlying the BTO. Finally, being $m$ defined as a fixed positive integer then $\omega \circ m \sim  B\left( m,\omega \right)$.

The MthINGARCH model $(\ref{3.17})$ can be modified by replacing the BTO $\circ$ by the RMO $\odot _{m}$ in $(\ref{2.6})$. This case considers i) the MthINGARCH framework $(\ref{3.17})$ with $\beta _{j}=0$ ($j=1, \dots, q$) for simplicity and ii) the replacement of the random variable $\omega\circ m$ by an
integer-valued random coefficient $\omega _{t}$ with $\mathbb{E}\left( \omega _{t}\right)=\omega$ and $\sigma _{\omega }^{2}:=\mathbb{V}\left( \omega _{t}\right)>0$. These specifications result in the following RMINAR($p$) model with multiplicative errors
\begin{equation}
Y_{t}= \lambda _{t} \ \varepsilon _{t} = \left( 1+\omega _{t}+\sum\limits_{i=1}^{p}\phi _{i} \odot _{m} Y_{t-i}\right) \varepsilon _{t} = \left( 1+\omega _{t}+\sum\limits_{i=1}^{p}\Phi _{it} Y_{t-i}\right)
\varepsilon _{t}, 
\label{3.18}
\end{equation}%
where the integer-valued iid sequences $\left\{ \omega _{t},\text{ }t\in \mathbb{Z} \right\} $, $\left\{ \Phi _{it},\text{ }t\in \mathbb{Z}\right\} \ (i=1, \dots, p)$ and $\left\{\varepsilon _{t},\text{ }t\in \mathbb{Z} \right\} $ are mutually independent. As before, $\phi _{i}:=\mathbb{E}\left( \Phi_{it}\right) \ (i=1, \dots, p)$ and $\sigma _{\phi
_{i}}^{2}:=\mathbb{V}\left( \Phi _{it}\right)>0$.
It follows from the RMINAR($p$) formulation $(\ref{3.18})$ that
\begin{eqnarray}
\mathbb{E}\left( Y_{t}|\tciFourier _{t-1}^{Y}\right)  &=&1+\omega
+\sum\limits_{i=1}^{p}\phi _{i}Y_{t-i},  
\label{3.19a} \\
\mathbb{V}\left( Y_{t}|\tciFourier _{t-1}^{Y}\right)  &=&\left( \sigma
_{\varepsilon }^{2}+1\right) \mathbb{V}\left( \lambda _{t}|\tciFourier _{t-1}^{Y}\right) +\sigma
_{\varepsilon }^{2}\left(\mathbb{E}\left( Y_{t}|\tciFourier _{t-1}^{Y}\right)\right) ^{2}, 
\label{3.19b}
\end{eqnarray}
where%
\begin{equation*}
\mathbb{V}\left( \lambda _{t}|\tciFourier _{t-1}^{Y}\right)=\sigma _{\omega
}^{2}+\sum\limits_{i=1}^{p}\sigma _{\phi _{i}}^{2}Y_{t-i}^{2},
\end{equation*}%
which highlights that the RMINAR($p$) with multiplicative errors $(\ref{3.18})$ presents the same conditional-variance-to-mean relationship as that of the BTO-based MthINGARCH($p$,$0$) model (Aknouche and Scotto, $2024$).

Under a similar assumption as \textbf{A0}, the ergodic properties of the RMINAR($p$) model with multiplicative errors \eqref{3.18} are obtained everywhere similarly to the above cases. The model \eqref{3.18} can be written in the following vector form%
\begin{equation}
\mathbf{Y}_{t}=A_{t}\mathbf{Y}_{t-1}+\mathbf{\Psi }_{t},\text{ \ }t\in \mathbb{Z}, 
\label{3.20}\nonumber
\end{equation}%
with the definition of the $p$-dimensional column vectors $\mathbf{Y}_{t}=\left( Y_{t}, \dots, Y_{t-p+1}\right)
^{\prime }$ and $\mathbf{\Psi }_{t}=\left( \varepsilon _{t}\left( 1+\omega
_{t}\right) , 0, \dots, 0\right) ^{\prime }$, and the $p\times p$ companion matrix%
\begin{equation*}
A_{t}=\left( 
\begin{array}{cc}
\left(\varepsilon _{t}\phi _{1t}, \dots, \varepsilon _{t}\phi _{(p-1)t}\right) & \varepsilon
_{t}\phi _{pt} \\ 
\mathbf{I}_{p-1} & \mathbf{0}_{\left( p-1\right) \times 1}%
\end{array}%
\right) .
\end{equation*}%

Similarly to Theorem \ref{teo1}, the following result shows that under $\mathbb{P}\left( \varepsilon _{t}=0\right) >0$ and $\mathbb{P}\left( \Phi
_{it}=0\right) >0 \ (i=1, \dots, p)$ (which will be referred as condition \textbf{A0**}), any solution of the RMINAR($p$) model with multiplicative errors \eqref{3.18} is everywhere stationary and ergodic.

\begin{theo}
Under \textbf{A0}**, the series 
\begin{equation}
\mathbf{Y}_{t}:=\sum\limits_{j=0}^{\infty }\prod\limits_{i=0}^{j-1}A_{t-i}\mathbf{\Psi }%
_{t-j},
\nonumber
\end{equation}
converges absolutely a.s. for all $t\in \mathbb{Z}$. In addition, the process $\left\{ Y_{t},t\in \mathbb{Z} \right\} $\ given by $Y_{t}=\left(1, 0, \dots, 0\right) ^{\prime }%
\mathbf{Y}_{t}$\ $(t\in \mathbb{Z})$ is the unique strictly stationary and ergodic solution to \eqref{3.18}. 
\label{teo3}
\end{theo}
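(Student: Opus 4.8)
The plan is to treat \eqref{3.18} through its companion-matrix vector form $\mathbf{Y}_t = A_t\mathbf{Y}_{t-1}+\mathbf{\Psi}_t$ as a stochastic recurrence equation driven by the iid sequence $\{(A_t,\mathbf{\Psi}_t),\ t\in\mathbb{Z}\}$, and to argue exactly as in the proof of Theorem \ref{teo1}. The decisive feature, as there, is that under \textbf{A0**} the backward matrix products $\prod_{i=0}^{j-1}A_{t-i}$ do not merely decay but vanish \emph{identically} once $j$ is large enough, so that the defining series is almost surely a finite sum. This is what removes any need for a negative top Lyapunov exponent or for log-moment conditions, and is precisely what delivers stationarity everywhere in the parameter space.

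First I would record the structural consequence of \textbf{A0**}. Since the first row of $A_t$ equals $(\varepsilon_t\Phi_{1t},\dots,\varepsilon_t\Phi_{pt})$, it is the zero row whenever $\varepsilon_t=0$, an event of probability $p_0:=\mathbb{P}(\varepsilon_t=0)>0$. On this event $A_t$ coincides with the fixed nilpotent companion matrix $N$ obtained by zeroing the first row, i.e. the lower-shift matrix with subdiagonal block $\mathbf{I}_{p-1}$, and a direct computation gives $N^{p}=\mathbf{0}$. Next I would run a Borel--Cantelli argument: partition the past into disjoint blocks of $p$ consecutive indices and let $F_k$ be the event $\{\varepsilon_{t-kp}=\cdots=\varepsilon_{t-kp-p+1}=0\}$, so that $\mathbb{P}(F_k)=p_0^{\,p}>0$ and the $F_k$ are independent. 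By the second Borel--Cantelli lemma $F_k$ occurs for infinitely many $k$ almost surely; on any such $F_k$ the corresponding block satisfies $A_{t-kp}\cdots A_{t-kp-p+1}=N^{p}=\mathbf{0}$, whence $\prod_{i=0}^{j-1}A_{t-i}=\mathbf{0}$ for every $j\ge (k+1)p$. Thus there is an almost surely finite random index beyond which every term of the series is zero, so the series converges absolutely a.s.; taking a countable union over $t\in\mathbb{Z}$ yields simultaneous finiteness for all $t$.

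Finally I would close the standard loop. Factoring $A_t$ out of the terms with $j\ge 1$ and reindexing gives $\mathbf{Y}_t=\mathbf{\Psi}_t+A_t\mathbf{Y}_{t-1}$, so the series solves \eqref{3.18}; and since $\mathbf{Y}_t$ is one fixed measurable map of the iid input $\{(A_s,\mathbf{\Psi}_s),\ s\le t\}$, the process $\{Y_t\}$ given by $Y_t=(1,0,\dots,0)^{\prime}\mathbf{Y}_t$ is strictly stationary and ergodic. For uniqueness, iterating any stationary solution $\mathbf{Y}'_t$ gives $\mathbf{Y}'_t=\sum_{j=0}^{n-1}(\prod_{i=0}^{j-1}A_{t-i})\mathbf{\Psi}_{t-j}+(\prod_{i=0}^{n-1}A_{t-i})\mathbf{Y}'_{t-n}$, and because the product prefactor is exactly $\mathbf{0}$ for $n$ large, the remainder drops out and $\mathbf{Y}'_t=\mathbf{Y}_t$ a.s.

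The only genuine obstacle is the vanishing-product step: identifying that $\varepsilon_t=0$ forces $A_t=N$ and establishing $N^{p}=\mathbf{0}$, then combining this with the independence of disjoint length-$p$ windows. Once these are in hand everything else is routine, and, crucially, the argument needs no moment restrictions whatsoever on $\Phi_{it}$, $\omega_t$ or $\varepsilon_t$, which is what makes the conclusion hold everywhere.
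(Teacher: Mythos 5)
Your proposal is correct and is essentially the paper's own argument: the paper proves Theorem \ref{teo3} by noting it follows as in Theorem \ref{teo1}, i.e., via Lemma \ref{lemmaA} applied to the companion form $\mathbf{Y}_{t}=A_{t}\mathbf{Y}_{t-1}+\mathbf{\Psi }_{t}$, where a positive-probability event forces a block of $p$ consecutive companion matrices to equal the zero matrix, so the defining series is a.s.~a finite sum, and existence, uniqueness, strict stationarity and ergodicity then follow by iteration and the iid structure of the inputs, exactly as you argue. Your one refinement is worth noting: by using only the event $\{\varepsilon _{t}=0\}$ (which has positive probability under \textbf{A0}**) to annihilate the entire first row of $A_{t}$ at once, you need neither $\mathbb{P}\left( \Phi _{it}=0\right) >0$ nor mutual independence across $i$ of the first-row entries $\varepsilon _{t}\Phi _{it}$ — an hypothesis of Lemma \ref{lemmaA} that is not literally satisfied in the multiplicative model, since those entries all share the common factor $\varepsilon _{t}$ — so your blocking/Borel--Cantelli version is in fact slightly cleaner and proves the conclusion under a weaker assumption than \textbf{A0}**.
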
 


The tail behavior of the RMINAR model with multiplicative errors \eqref{3.18} can be obtained similarly to that of the $\mathbb{N}_0$-valued RMINAR model \eqref{3.1a} using Kesten's theorem. Proposition \ref{prop34} states such result and its proof its omitted, given that it is an obvious corollary of Kesten’s theorems 3-5 (Kesten, 1973). 

\begin{proposition} Under \textit{\textbf{A0}}**, all conclusions of
Proposition \ref{prop31} hold for the RMINAR model with multiplicative errors \eqref{3.18} while replacing $\Phi _{it}$ by $\varepsilon _{t}\Phi _{it}$.
\label{prop34}
\end{proposition}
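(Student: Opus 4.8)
The plan is to observe that the multiplicative-error model \eqref{3.18}, once put in companion form, is a stochastic recurrence equation (SRE) of exactly the same type as the one underlying Proposition \ref{prop31}, the only change being that the autoregressive multipliers $\Phi_{it}$ are replaced throughout by the products $\varepsilon_t\Phi_{it}$. Concretely, I would start from the vectorised recursion $\mathbf{Y}_t=A_t\mathbf{Y}_{t-1}+\mathbf{\Psi}_t$ in which $A_t$ is the $p\times p$ companion matrix with top row $\left(\varepsilon_t\Phi_{1t},\dots,\varepsilon_t\Phi_{pt}\right)$ and $\mathbf{\Psi}_t=\left(\varepsilon_t(1+\omega_t),0,\dots,0\right)^{\prime}$. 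Comparing with the companion form used for \eqref{3.1a} (top row $\left(\Phi_{1t},\dots,\Phi_{pt}\right)$, innovation $\left(\varepsilon_t,0,\dots,0\right)^{\prime}$), the two recursions are structurally identical after the substitution $\Phi_{it}\mapsto\varepsilon_t\Phi_{it}$, together with the harmless replacement of the scalar innovation $\varepsilon_t$ by $\varepsilon_t(1+\omega_t)$ in the first coordinate. Since Theorem \ref{teo3} already supplies a unique strictly stationary ergodic solution under \textbf{A0**}, only the tail statement remains, and it reduces to verifying Kesten's hypotheses for this new matrix.

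Next I would check, condition by condition, that the assumptions of Kesten's Theorems 3--5 (1973) transcribe into those of Proposition \ref{prop31} under the substitution. For $p=1$ this is immediate: the scalar multiplier is $M_t:=\varepsilon_t\Phi_{1t}\ge 0$, and the mutual independence of $\{\varepsilon_t\}$ and $\{\Phi_{1t}\}$ factorises every moment, $\mathbb{E}\!\left(M_t^{\tau}\right)=\mathbb{E}\!\left(\varepsilon_t^{\tau}\right)\mathbb{E}\!\left(\Phi_{1t}^{\tau}\right)$; hence the Kesten equation $\mathbb{E}\!\left(M_t^{\tau_1}\right)=1$ and the side conditions \eqref{3.5} are exactly those of Proposition \ref{prop31}(i) written for $\varepsilon_t\Phi_{1t}$, while the innovation moment condition is read with $\varepsilon_t$ replaced by $\varepsilon_t(1+\omega_t)$. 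For $p>1$ the matrix $A_t$ of \eqref{3.18} is simply that of \eqref{3.1a} with its top row scaled by the common factor $\varepsilon_t$, so the existence inequality $\mathbb{E}\bigl(\sum_{j}\Phi_{jt}^{\tau_0}\bigr)\ge p^{\tau_0/2}$ and the $\log$-moment bound $\mathbb{E}\bigl(\|A_t\|^{\tau_0}\log^{+}\|A_t\|\bigr)<\infty$ of \eqref{3.7} become the same statements for $\varepsilon_t\Phi_{jt}$, again factorising through independence; the conclusions \eqref{3.6} and \eqref{3.8} then follow verbatim with $\Phi_{it}$ replaced by $\varepsilon_t\Phi_{it}$.

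The one point that genuinely needs care, and which I expect to be the main obstacle, is the non-arithmetic (non-lattice) requirement that Kesten's theorems impose on the law of the logarithm of the multipliers, since it is precisely this hypothesis that rules out a lattice obstruction and yields a genuine power-law tail. Here I would argue as in the remark preceding Proposition \ref{prop31}: under \textbf{A0**} one has $\varepsilon_t\Phi_{it}\neq 0$ if and only if $\varepsilon_t\neq 0$ and $\Phi_{it}\neq 0$, and on this event $\log(\varepsilon_t\Phi_{it})=\log\varepsilon_t+\log\Phi_{it}$ is a sum of two independent variables; because the conditional law of $\log\Phi_{it}$ given $\Phi_{it}\neq 0$ is non-arithmetic (its support contains incommensurable points such as $\log 2$ and $\log 3$), the convolution is non-arithmetic as well, so the non-lattice hypothesis is met. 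For $p>1$ the analogous non-arithmeticity together with the irreducibility/positivity of the semigroup generated by the support of $A_t$ must be confirmed, but these transfer unchanged from the additive case, since multiplying the top row by $\varepsilon_t$ only rescales entries without altering the incidence structure of the support. With these hypotheses verified, invoking Kesten's Theorems 3--5 delivers \eqref{3.6} and \eqref{3.8} for the multiplicative-error model with $\Phi_{it}$ replaced by $\varepsilon_t\Phi_{it}$, which is the assertion of the proposition.
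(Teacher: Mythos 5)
Your proposal is correct and follows exactly the route the paper intends: the paper omits the proof, stating that Proposition \ref{prop34} is an obvious corollary of Kesten's Theorems 3--5 once model \eqref{3.18} is written in the companion form $\mathbf{Y}_{t}=A_{t}\mathbf{Y}_{t-1}+\mathbf{\Psi}_{t}$ with top row $\left(\varepsilon_{t}\Phi_{1t},\dots,\varepsilon_{t}\Phi_{pt}\right)$ and innovation $\varepsilon_{t}\left(1+\omega_{t}\right)$, which is precisely your starting point. Your verification of the moment factorisations via the mutual independence of the input sequences and of the non-arithmeticity of $\log\left(\varepsilon_{t}\Phi_{it}\right)$ conditional on non-vanishing simply supplies the details the paper leaves implicit (your non-arithmeticity argument mirrors the paper's own assertion preceding Proposition \ref{prop31}), so there is nothing to correct.
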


\section{Four-stage WLS estimation}\label{4SWLSE}

This section presents an estimation strategy for the parameters mean $\theta _{0}$ and variance $\Lambda _{0}$ of the RMINAR models. 
For the $\mathbb{N}_0$-valued (\ref{3.1a}) and $\mathbb{Z}$-valued (\ref{3.14}) models, the parameters are set as $\theta _{0}=\left(
\mu _{0\varepsilon },\phi _{01},\dots,\phi _{0p}\right) ^{\prime }$ and $%
\Lambda _{0}=\left( \sigma _{0\varepsilon }^{2},\sigma _{0\phi
_{1}}^{2},\dots,\sigma _{0\phi _{p}}^{2}\right) ^{\prime }$ while for the model with multiplicative errors (\ref{3.18}), $\theta _{0}=\left( \omega _{0},\phi _{01},\dots,\phi
_{0p}\right) ^{\prime }$ and $\Lambda _{0}=\left( \sigma _{0\varepsilon
}^{2},\sigma _{0\omega }^{2},\sigma _{0\phi _{1}}^{2},\dots,\sigma _{0\phi
_{p}}^{2}\right) ^{\prime }$. The subscript 0 in the notation of $\theta _{0}$
and $\Lambda _{0}$ aims at distinguishing the true parameters from generic ones $\theta \in \Theta $ and $\Lambda \in \Pi $, where $%
\Theta $ and $\Pi $ represent the parameter spaces. 

As the RMINAR models (\ref{3.1a}), (\ref{3.14}) and (\ref{3.18}) have similar conditional means, being linear in the models' parameters, this work provides a unified estimation procedure based on a four-stage weighted least
squares estimation (4SWLSE) approach (see e.g.~Aknouche, $2015$). Let $\mu _{t}\equiv\mu
_{t}\left( \theta _{0}\right) :=\mathbb{E}\left( Y_{t}|\tciFourier _{t-1}^{Y}\right) $
and $V_{t}\equiv V_{t}\left(\Lambda _{0}\right) :=\mathbb{V}\left( Y_{t}|\tciFourier
_{t-1}^{Y}\right) $ be, respectively, the conditional mean and variance of the RMINAR process $Y_{t}$ in one of the three models (\ref{3.1a}), (\ref{3.14}) or (\ref{3.18}). In some cases, $V_{t}:=V_{t}\left( \theta
_0,\Lambda _{0}\right) $ also depends on $\theta _{0}$ (as in the multiplicative model). The principle and the rationale of the method are as follows. The 4SWLSE develops in 4 stages;
firstly, $\theta _{0}$
is first estimated by $\widehat{\theta}_{1n}$ from the regression $Y_{t}=\mu _{t}+e_{t}$, using a WLS estimator weighted by the conditional variance $%
V_{t}\left( \Lambda _{\ast }\right) $, evaluated at
some arbitrarily and fixed known vector $\Lambda _{\ast }\in \Pi $. The sequence $(e_{t})$
is a martingale difference with respect to $(\tciFourier _{t}^{Y})$. Secondly, $\Lambda _{0}$ is estimated by $\widehat{\Lambda}_{1n}$ from the regression $\left( Y_{t}-\mu _{t}\right) ^{2}=V_{t}+u_{t}$, where the conditional variance residual $(u_{t})$ is also a martingale difference with respect to $(\tciFourier _{t}^{Y})$. Thus, this step makes use of a WLS estimator weighted by $V_{t}^{2}(\Lambda_{*})$ arbitrarily evaluated. The third stage reestimates $\theta _{0}$ by $\widehat{\theta}_{2n}$ from the WLS approach used in the first stage but considering the estimated conditional variance as weights. The
same reasoning is used in the fourth stage to reestimate $%
\Lambda _{0}$, where the weight is $V_{t}^{2}(\widehat{\Lambda}_{1n})$. For any generic parameters $\theta \in \Theta $ and $%
\Lambda \in \Pi $, let $e_{t}\left( \theta \right) =Y_{t}-\mu _{t}\left(
\theta \right) $ and $u_{t}\left( \theta ,\Lambda \right) =e_{t}^{2}\left(
\theta \right) -V_{t}\left( \theta ,\Lambda \right) $, where $V_{t}\left( \theta ,\Lambda
\right) $ is the conditional variance function in which the true $\Lambda
_{0}$ is replaced by a generic $\Lambda $. Naturally, $e_{t}\left( \theta
_{0}\right) =e_{t}$ and $u_{t}\left( \theta _{0},\Lambda _{0}\right) =u_{t}$%
. Formally, the proposed 4SWLSE method is given
by the following cascade of four optimization problems%
\begin{eqnarray}
\text{i) }\widehat{\theta }_{1n} &=&\arg \min\limits_{\theta \in \Theta
}\sum_{t=1}^{n}\tfrac{e_{t}^{2}\left( \theta \right) }{V_{t}\left( \Lambda
_{\ast }\right) }\text{, ii)\ }\widehat{\Lambda }_{1n}=\arg
\min\limits_{\Lambda \in \Pi }\sum_{t=1}^{n}\tfrac{u_{t}^{2}\left( \widehat{%
\theta }_{1n},\Lambda \right) }{V_{t}^{2}(\Lambda_{*})} ,
\label{4.1a} \\
\text{iii) }\widehat{\theta }_{2n} &=&\arg \min\limits_{\theta \in \Theta
}\sum_{t=1}^{n}\tfrac{e_{t}^{2}\left( \theta \right) }{V_{t}\left( \widehat{%
\Lambda }_{1n}\right) }\text{, iv)\ }\widehat{\Lambda }_{2n}=\arg
\min\limits_{\Lambda \in \Pi }\sum_{t=1}^{n}\tfrac{u_{t}^{2}\left( \widehat{%
\theta }_{2n},\Lambda \right) }{V_{t}^{2}(\Lambda_{*})}\text{,}  
\label{4.1b}
\end{eqnarray}%
where the estimators $\widehat{\theta }_{\cdot n}$ and $\widehat{\Lambda }_{\cdot n}$ have the same notation as the corresponding target parameters, with an extra \textit{hat} symbol and the $n$ to refer to the sample size. Furthermore, the subscript $1$ or $2$ is used to highlight if the step performs estimation or re-estimation, respectively, of the corresponding parameter.

There is relevant literature connected with the 4SWLSE approach and the derivation of the properties of the estimators in \eqref{4.1a} and \eqref{4.1b} with respect to consistency, efficiency and asymptotic distribution. As an instance of that, Nicholls and Quinn $(1982)$ proposed the two-stage least squares estimation
(2SLSE) approach based on
\begin{equation}
\text{i) }\widehat{\theta }_{n}=\arg \min\limits_{\theta \in \Theta
}\sum_{t=1}^{n}e_{t}^{2}\left( \theta \right) \text{ \ and \ ii)\ }\widehat{%
\Lambda }_{n}=\arg \min\limits_{\Lambda \in \Pi
}\sum_{t=1}^{n}u_{t}^{2}\left( \widehat{\theta }_{1n},\Lambda \right),
\label{2SLSE}
\end{equation}%
to estimate $\theta _{0}$ and $\Lambda _{0}$ in a consecutive manner. Schick (1996) studied the properties of $\widehat{\theta }%
_{1n}$ in (\ref{4.1a})\ when $\Lambda _{\ast }$ is replaced by any consistent
estimator $\widehat{\Lambda }$ of $\Lambda _{0}$. For a RCAR$\left( 1\right) $ model, Aknouche $(2015)$ proposed a four-stage WLSE
similar to (\ref{4.1a}) considering the weights $V_{t}^{2}\left( \Lambda _{\ast }\right) $\ and $%
V_{t}\left( \widehat{\Lambda }_{1n}\right)$, respectively, for the second and fourth stages. The choice for the latter
weights results in an estimator for $\Lambda_0$ with the same asymptotic variance as that of the Gaussian QMLE. In our case, the 4SWLSE approach makes use of the latter weights in the second and third stages leading to estimators that are everywhere consistent and
asymptotically Gaussian with a certain optimality. Moreover, it will be proved that $\widehat{\theta }_{2n}$ and $\widehat{\Lambda }_{2n}$ \eqref{4.1b} are asymptotically
more efficient than $\widehat{\theta }_{1n}$ and $\widehat{\Lambda }_{1n}$ \eqref{4.1a},
respectively, and more efficient than the 2SLSE estimators $\widehat{\theta }_{n}$ and $\widehat{\Lambda}_{n}$ \eqref{2SLSE} and any
of the exponential family QMLE.

Note that for most usual discrete distributions, $\Lambda _{0}$ is a
function of $\theta _{0}$ and (possibly) some additional parameters. E.g.~$%
\Lambda _{0}=\theta _{0}$ when all coefficients are Poisson-distributed, and 
$\Lambda _{0}=\theta _{0}\oplus \left( 1_{1+p}+\theta _{0}\right) $ when are geometric distributed, where $1_{p}$ is a $p$-length vector of ones and $\oplus$ stands for the Hadamard matrix product. These cases only require a two-stage estimation procedure corresponding to the first and third stages of the 4SWLSE. If the
random coefficients $\Phi_{it}\sim \mathcal{NB}\left( \nu _{0i},\frac{\nu
_{0i}}{\nu _{i}+\phi _{0i}}\right) $ ($i=1,\dots,p$) and $\epsilon _{t}\sim 
\mathcal{NB}\left( \nu _{0},\frac{\nu _{0}}{\nu _{0}+\omega _{0}}\right) $
are negative binomial II distributed, then $\sigma _{\Phi _{i}}^{2}=\phi
_{0i}\left( 1+\frac{1}{\nu _{i}}\phi _{0i}\right) $ and $\sigma
_{\varepsilon }^{2}=\mu_{\epsilon}\left( 1+\frac{1}{\nu _{0}}\mu_{\epsilon}\right) 
$. Hence, $\nu _{0}$ and $\nu _{0i}>0$\ are estimated by $\widehat{\nu }%
_{n}^{-1}$ and $\widehat{\nu }_{in}^{-1}$, respectively, from the stage (iv) in \eqref{4.1b} using %
\begin{equation*}
\widehat{\nu }_{n}^{-1}=\tfrac{\widehat{\sigma }_{2n}^{2}-\widehat{\mu }%
_{\epsilon n}}{\widehat{\mu}_{\epsilon n}^{2}}\text{ \ and \ }\widehat{\nu }_{in}^{-1}=%
\tfrac{\widehat{\sigma }_{\Phi _{i}n}^{2}-\widehat{\phi }_{in}}{\widehat{%
\phi }_{in}^{2}}\text{, }i=1,\dots,p.
\end{equation*}

\subsection{The $\mathbb{N}_{0}$-valued RMINAR model}

Due to the mutual independence of the random sequences $\left\{ \Phi _{it},%
\text{ }t\in \mathbb{Z} \right\} \ (i=1,\dots,p)$ and $\left\{ \varepsilon _{t},\text{ }t\in \mathbb{Z}
\right\} $, the conditional mean and variance of the RMINAR model (\ref{3.1a}) have the
following linear-in-parameters expressions 
\begin{equation}
\mu _{t}\left( \theta _{0}\right) =\theta _{0}^{\prime }\mathcal{Y}_{t-1}%
\text{ \ and \ }V_{t}\left( \theta_0,\Lambda _{0}\right) =\mathcal{Z}_{t-1}^{\prime
}\Lambda _{0},
\label{eqweqqw}
\end{equation}%
where $\mathcal{Y}_{t-1}=\left( 1,Y_{t-1},\dots,Y_{t-p}\right) ^{\prime }\text{ and }%
\mathcal{Z}_{t-1}=\left( 1,Y_{t-1}^{2},\dots,Y_{t-p}^{2}\right) ^{\prime }=%
\mathcal{Y}_{t-1}\oplus\mathcal{Y}_{t-1}$.
Then, the estimators $[\widehat{\theta }_{1n},\widehat{\Lambda }%
_{1n},\widehat{\theta }_{2n},\widehat{\Lambda }_{2n}]$ in (\ref{4.1a})-(\ref{4.1b}) have the more specific closed-form expressions
\begin{eqnarray}
\text{i) }\widehat{\theta }_{1n} &=&\left( \sum_{t=1}^{n}\tfrac{\mathcal{Y}_{t-1}\mathcal{Y}_{t-1}^{\prime
}}{\mathcal{Z%
}_{t-1}^{\prime }\Lambda _{\ast }}\right) ^{-1}\sum_{t=1}^{n}\tfrac{\mathcal{Y}_{t-1}Y_{t}}{\mathcal{Z}%
_{t-1}^{\prime }\Lambda _{\ast }},  \label{4.3a} \\
\text{ii) }\widehat{\Lambda }_{1n} &=&\left( \sum_{t=1}^{n}\tfrac{\mathcal{Z}_{t-1}%
\mathcal{Z}_{t-1}^{\prime }}{
\left(
\mathcal{Z}_{t-1}^{\prime }\Lambda _{\ast }\right) ^{2}}\right) ^{-1}\sum_{t=1}^{n}\mathcal{Z}_{t-1}%
\tfrac{\left( Y_{t}-\mathcal{Y}_{t-1}^{\prime }\widehat{\theta }_{1n}\right)
^{2}}{\left( 
\mathcal{Z}_{t-1}^{\prime }\Lambda _{\ast }\right) ^{2}},%
\text{ \ \ \ \ \ \ }  \label{4.3b} \\
\text{iii) }\widehat{\theta }_{2n} &=&\left( \sum_{t=1}^{n}\tfrac{\mathcal{Y}_{t-1}\mathcal{%
Y}_{t-1}^{\prime }}{%
\mathcal{Z}_{t-1}^{\prime }\widehat{\Lambda }_{1n}}\right) ^{-1}\sum_{t=1}^{n}\tfrac{\mathcal{Y}_{t-1}Y_{t}}{%
\mathcal{Z}_{t-1}^{\prime }\widehat{\Lambda }_{1n}},  \label{4.3c} \\
\text{iv)\ }\widehat{\Lambda }_{2n} &=&\left( \sum_{t=1}^{n}\tfrac{\mathcal{Z}%
_{t-1}\mathcal{Z}_{t-1}^{\prime }}{\left(
\mathcal{Z}_{t-1}^{\prime }\widehat{\Lambda }_{1n}\right) ^{2}}\right) ^{-1}\sum_{t=1}^{n}\mathcal{Z}_{t-1}%
\tfrac{\left( Y_{t}-\mathcal{Y}_{t-1}^{\prime }\widehat{\theta }_{2n}\right)
^{2}}{\left( 
\mathcal{Z}_{t-1}^{\prime }\widehat{\Lambda }_{1n}\right)
^{2}},\text{ \ \ \ \ \ \ }  \label{4.3d}
\end{eqnarray}
where $\Lambda _{\ast }=\left( \sigma _{\ast \varepsilon }^{2},\sigma _{\ast
\phi _{1}}^{2},\dots,\sigma _{\ast \phi _{p}}^{2}\right) ^{\prime }>0$ is
arbitrarily fixed. When all random coefficients are Poisson or geometric
distributed, then the estimation steps (\ref{4.3b}) and (\ref{4.3d}) are useless but can be used to test the Poisson hypothesis $H_{0}:\theta _{0}=\Lambda _{0}$ or the geometric one $H_{0}:\Lambda _{0}=\theta _{0}\oplus \left( 1_{1+p}+\theta _{0}\right)$. 

The everywhere consistency and asymptotic normality of the estimators in  (\ref{4.3a})-(\ref{4.3d}) are established in Theorem \eqref{teo41}, with complete proof in Appendix B. Consider
\begin{eqnarray}
\Sigma \left( \Lambda _{0},\Lambda _{\ast }\right) :=A\left( \Lambda
_{\ast }\right) ^{-1}B\left( \Lambda _{0},\Lambda _{\ast }\right) A\left(
\Lambda _{\ast }\right) ^{-1}\text{\ and \ }\Omega \left( \Lambda _{\ast }\right)
:=C\left( \Lambda _{\ast }\right) ^{-1}D\left( \Lambda _{\ast }\right)
C\left( \Lambda _{\ast }\right) ^{-1}, \notag
\label{4.6c}
\end{eqnarray}
with auxiliary matrices $A, B, C$ and $D$ defined as
\begin{eqnarray}
A\left( \Lambda _{\ast }\right) :=\mathbb{E}\left( \tfrac{\mathcal{Y}_{t-1}\mathcal{Y}_{t-1}^{\prime
}}{\mathcal{Z}%
_{t-1}^{\prime }\Lambda _{\ast }}\right) &\text{,}& B\left( \Lambda _{0},\Lambda _{\ast }\right) :=\mathbb{E}\left( 
\tfrac{\left( \mathcal{Z}_{t-1}^{\prime }\Lambda _{0}\right) }{\left( 
\mathcal{Z}_{t-1}^{\prime }\Lambda _{\ast }\right) ^{2}}\mathcal{Y}_{t-1}%
\mathcal{Y}_{t-1}^{\prime }\right), \notag \label{4.6a} \\
C\left( \Lambda _{\ast }\right) :=\mathbb{E}\left( \tfrac{\mathcal{Z}_{t-1}%
\mathcal{Z}_{t-1}^{\prime }}{\left( %
\mathcal{Z}_{t-1}^{\prime }\Lambda _{\ast }\right) ^{2}}\right) &\text{,}& D\left( \Lambda _{\ast }\right)
:=\mathbb{E}\left( \tfrac{u_{t}^{2}}{\left( \mathcal{Z}_{t-1}^{\prime
}\Lambda _{\ast }\right) ^{4}}\mathcal{Z}_{t-1}\mathcal{Z}_{t-1}^{\prime
}\right). \notag  \label{4.6b}
\end{eqnarray}
Note that $B\left( \Lambda _{0},\Lambda _{0}\right) =A\left( \Lambda
_{0}\right)$ implies $\Sigma \left( \Lambda _{0},\Lambda _{0}\right) =A\left(
\Lambda _{0}\right) ^{-1}$ and that $D\left( \Lambda _{0}\right) =C\left(
\Lambda _{0}\right)$ leads to $\Omega \left( \Lambda _{0}\right) =C\left( \Lambda
_{0}\right) ^{-1}$. Also, recall that the finiteness of $\mathbb{E}\left( \varepsilon _{t}^{2}\right)$ and $\mathbb{E}\left( \Phi _{it}^{2}\right) $ holds by the model's definition.

\begin{theo}
\label{teo41}
 For the RMINAR model $(\ref{3.1a})$ under \textbf{A0}, for all $\Lambda _{\ast }>0$ it holds that%
\begin{eqnarray}
\text{i)} &&\widehat{\theta }_{1n}\overset{a.s.}{\underset{n\rightarrow
\infty }{\rightarrow }}\theta _{0}, \text{ \ ii) }\widehat{\theta }_{2n}%
\overset{a.s.}{\underset{n\rightarrow \infty }{\rightarrow }}\theta _{0},
\text{ \ iii) } \widehat{\Lambda }_{1n}\overset{a.s.}{\underset{n\rightarrow
\infty }{\rightarrow }}\Lambda _{0}, \text{ \ iv) }\widehat{\Lambda }_{2n}%
\overset{a.s.}{\underset{n\rightarrow \infty }{\rightarrow }}\Lambda _{0}. 
\label{4.7} 
\end{eqnarray}
If, in addition, $\mathbb{E}\left( \varepsilon _{t}^{4}\right) <\infty $ and $\mathbb{E}\left( \Phi _{it}^{4}\right) <1$ ($i=1,\dots,p$) then%
\begin{small}
\begin{eqnarray}
\text{i)} &&\sqrt{n}\left( \widehat{\theta }_{1n}-\theta _{0}\right) \overset%
{D}{\underset{n\rightarrow \infty }{\rightarrow }}\mathcal{N}\left( 0,\Sigma
\left( \Lambda _{0},\Lambda _{\ast }\right) \right),
\text{ii)} \sqrt{n}\left( \widehat{\theta }_{2n}-\theta _{0}\right) \overset%
{D}{\underset{n\rightarrow \infty }{\rightarrow }}\mathcal{N}\left( 0,\Sigma
\left( \Lambda _{0},\Lambda _{0}\right) \right) \notag \\
\text{ iii)\ } &&\sqrt{n}%
\left( \widehat{\Lambda }_{1n}-\Lambda _{0}\right) \overset{D}{\underset{%
n\rightarrow \infty }{\rightarrow }}\mathcal{N}\left( 0,\Omega \left(
\Lambda _{\ast }\right) \right),
\text{ iv)\ }\sqrt{n}\left( 
\widehat{\Lambda }_{2n}-\Lambda _{0}\right) \overset{D}{\underset{%
n\rightarrow \infty }{\rightarrow }}\mathcal{N}\left( 0,\Omega \left(
\Lambda _{0}\right) \right) .
\label{4.8}
\end{eqnarray}
\end{small}
\end{theo}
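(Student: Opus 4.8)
The plan is to exploit the three structural features the model provides: the strict stationarity and ergodicity of $\{Y_t\}$ guaranteed everywhere by Theorem \ref{teo1}, the linearity of $\mu_t$ and $V_t$ in the parameters (so that \eqref{4.3a}--\eqref{4.3d} are explicit ratios of sample averages), and the fact that $e_t=Y_t-\mu_t$ and $u_t=e_t^2-V_t$ are martingale differences with respect to $\mathcal{F}_{t-1}^Y$. With these in hand the whole argument reduces to a law of large numbers for the empirical matrices, a central limit theorem for the martingale-difference score sums, and the bookkeeping needed to carry the estimated weights and residuals from one stage into the next.

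For the consistency statement \eqref{4.7} I substitute $Y_t=\mathcal{Y}_{t-1}'\theta_0+e_t$ into \eqref{4.3a} to get
\[
\widehat{\theta}_{1n}-\theta_0=\Big(\tfrac1n\sum_{t=1}^n\tfrac{\mathcal{Y}_{t-1}\mathcal{Y}_{t-1}'}{\mathcal{Z}_{t-1}'\Lambda_\ast}\Big)^{-1}\tfrac1n\sum_{t=1}^n\tfrac{\mathcal{Y}_{t-1}e_t}{\mathcal{Z}_{t-1}'\Lambda_\ast}.
\]
The decisive observation — and the reason consistency holds everywhere, even when $\mathbb{E}(Y_t)=\infty$ — is that each entry of the normalised regressor $\mathcal{Y}_{t-1}\mathcal{Y}_{t-1}'/(\mathcal{Z}_{t-1}'\Lambda_\ast)$ is a.s. bounded by a fixed constant, since $\mathcal{Z}_{t-1}'\Lambda_\ast\ge\sigma_{\ast\varepsilon}^2>0$ dominates each $Y_{t-k}^2$ and the elementary bounds $|Y_{t-i}Y_{t-j}|\le\tfrac12(Y_{t-i}^2+Y_{t-j}^2)$ and $x/(a+bx^2)\le C$ control the remaining entries. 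Hence $A(\Lambda_\ast)$ is finite and (by non-degeneracy of $\mathcal{Y}_{t-1}$) invertible, and by ergodicity the first average converges a.s. to it; meanwhile $\mathcal{Y}_{t-1}e_t/(\mathcal{Z}_{t-1}'\Lambda_\ast)$ is a stationary ergodic martingale difference that is integrable thanks to the same bound together with $\mathbb{E}(\Phi_{it}),\mathbb{E}(\varepsilon_t)<\infty$, so its average vanishes a.s. and $\widehat{\theta}_{1n}\to\theta_0$. The identical normalise-and-bound device makes $C(\Lambda_\ast)$ finite and invertible and $\mathcal{Z}_{t-1}u_t/(\mathcal{Z}_{t-1}'\Lambda_\ast)^2$ an integrable martingale difference, so expanding $(Y_t-\mathcal{Y}_{t-1}'\widehat\theta_{1n})^2=\mathcal{Z}_{t-1}'\Lambda_0+u_t+R_t$ (with $R_t$ generated by $\widehat\theta_{1n}-\theta_0\to0$) yields $\widehat\Lambda_{1n}\to\Lambda_0$. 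In the third and fourth stages the only new feature is that $\Lambda_\ast$ is replaced by the consistent $\widehat\Lambda_{1n}$; because $1/(\mathcal{Z}_{t-1}'\Lambda)$ and its products against the bounded regressors are Lipschitz and bounded for $\Lambda$ near $\Lambda_0$, I replace $\widehat\Lambda_{1n}$ by $\Lambda_0$ with negligible error and obtain $\widehat\theta_{2n}\to\theta_0$, $\widehat\Lambda_{2n}\to\Lambda_0$.

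For the asymptotic normality \eqref{4.8} I write each $\sqrt n$-scaled estimator as $(\text{empirical matrix})^{-1}\times(\text{normalised score})$. For $\widehat\theta_{1n}$ the score is $n^{-1/2}\sum\mathcal{Y}_{t-1}e_t/(\mathcal{Z}_{t-1}'\Lambda_\ast)$, a stationary ergodic martingale difference with conditional covariance $\mathbb{E}\big(V_t\mathcal{Y}_{t-1}\mathcal{Y}_{t-1}'/(\mathcal{Z}_{t-1}'\Lambda_\ast)^2\big)=B(\Lambda_0,\Lambda_\ast)$, using $\mathbb{E}(e_t^2\mid\mathcal{F}_{t-1}^Y)=\mathcal{Z}_{t-1}'\Lambda_0$; the stationary martingale CLT (e.g.\ Billingsley) then gives $\mathcal{N}(0,B(\Lambda_0,\Lambda_\ast))$ for the score and, combined with $A(\Lambda_\ast)^{-1}$, the matrix $\Sigma(\Lambda_0,\Lambda_\ast)$. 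For $\widehat\Lambda_{1n}$ the score is $n^{-1/2}\sum\mathcal{Z}_{t-1}u_t/(\mathcal{Z}_{t-1}'\Lambda_\ast)^2$ with conditional covariance $D(\Lambda_\ast)$, where $\mathbb{E}(u_t^2\mid\mathcal{F}_{t-1}^Y)$ is a quartic function of the $Y_{t-i}$ whose coefficients are the fourth central moments of the inputs; this is exactly where $\mathbb{E}(\varepsilon_t^4)<\infty$ and $\mathbb{E}(\Phi_{it}^4)<1$ enter, the latter being, for $p=1$, the spectral condition $\rho(\mathbb{E}(A_t^{\otimes4}))<1$ of Proposition \ref{prop32} that also secures $\mathbb{E}(Y_t^4)<\infty$ and hence finiteness of $D$. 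For the second-round estimators $\Lambda_\ast$ is replaced by $\widehat\Lambda_{1n}$ and, in the $\Lambda$-stages, $\theta_0$ by the estimated $\widehat\theta$ in the residual; both substitutions are asymptotically free. Expanding $1/(\mathcal{Z}_{t-1}'\widehat\Lambda_{1n})$ around $\Lambda_0$ produces a correction $[n^{-1/2}\sum(\text{mean-zero MDS})]\,(\widehat\Lambda_{1n}-\Lambda_0)=O_P(1)\,O_P(n^{-1/2})=o_P(1)$, while expanding the squared residual produces a cross term $[\,n^{-1}\sum\mathcal{Z}_{t-1}\mathcal{Y}_{t-1}'e_t/(\mathcal{Z}_{t-1}'\Lambda_0)^2\,]\sqrt n(\widehat\theta-\theta_0)$ whose bracket converges a.s.\ to $\mathbb{E}(\mathcal{Z}_{t-1}\mathcal{Y}_{t-1}'e_t/(\mathcal{Z}_{t-1}'\Lambda_0)^2)=0$ by $\mathbb{E}(e_t\mid\mathcal{F}_{t-1}^Y)=0$. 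Thus $\sqrt n(\widehat\theta_{2n}-\theta_0)$ and $\sqrt n(\widehat\Lambda_{2n}-\Lambda_0)$ have the limits of their infeasible versions at the true weights, namely $\mathcal{N}(0,\Sigma(\Lambda_0,\Lambda_0))$ and $\mathcal{N}(0,\Omega(\Lambda_0))$, using $B(\Lambda_0,\Lambda_0)=A(\Lambda_0)$ for the former.

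I expect the main obstacle to be twofold and to lie in the moment bookkeeping rather than in any single clever step. First, one must verify that $B$ and $D$ (equivalently the score second moments) are finite under the stated input-moment conditions while the process $Y_t$ itself may have divergent moments; this forces systematic use of the normalise-and-bound estimates in place of naive moment bounds on $Y_t$ and is the mechanism behind the everywhere validity. Second, one must carefully show that the two layers of plug-in — estimated weights $\widehat\Lambda_{1n}$ and estimated mean residuals — contribute only $o_P(1)$ to each $\sqrt n$-scaled score, which rests precisely on the martingale-difference orthogonality $\mathbb{E}(e_t\mid\mathcal{F}_{t-1}^Y)=0$. Once integrability and square-integrability of the scores are established, the ergodic theorem and the stationary martingale CLT are invoked off the shelf.
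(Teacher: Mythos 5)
Your proposal follows essentially the same route as the paper's own proof: the same closed-form decompositions $\widehat{\theta}_{\cdot n}-\theta _{0}=(\text{empirical matrix})^{-1}\times (\text{martingale score})$, the same normalise-and-bound device (the paper's Lemma \ref{a.2}) giving the ergodic LLN and finiteness of $A,B,C,D$ without any moment condition on $Y_{t}$, the same Taylor-expansion lemmas (Lemmas \ref{a.3}--\ref{a.5}) showing that plugging in $\widehat{\theta }_{1n}$ and $\widehat{\Lambda }_{1n}$ contributes only $o_{a.s.}(1)$, resp.\ $o_{p}(1)$, terms, and the same square-integrable martingale CLT plus Slutsky, with $B\left( \Lambda _{0},\Lambda _{0}\right) =A\left( \Lambda _{0}\right) $ yielding the second-stage variance. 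The only quibble is your parenthetical that $\mathbb{E}\left( \Phi _{1t}^{4}\right) <1$ matters because it secures $\mathbb{E}\left( Y_{t}^{4}\right) <\infty $ ``and hence finiteness of $D$'': as your own quartic-over-quartic bound shows (and as the paper's argument in Lemma \ref{a.6} relies on), $D\left( \Lambda _{\ast }\right) $ is finite from $\mathbb{E}\left( \varepsilon _{t}^{4}\right) <\infty $ and finite fourth moments of the inputs alone, with no moments of $Y_{t}$ needed --- which is precisely the mechanism behind the everywhere-validity you correctly identify elsewhere in your write-up.
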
 
\begin{proof}
See Appendix B.
\end{proof}

The most surprising fact given by the above result is that the proposed 4SWLSE estimators are consistent and asymptotically normal everywhere, even when
the RMINAR model has an infinite mean. In addition, $\widehat{\theta }_{2n}$ and $%
\widehat{\Lambda }_{2n}$ are more efficient than $\widehat{\theta }_{1n}$
and $\widehat{\Lambda }_{1n}$, respectively, as $\Sigma
\left( \Lambda _{0},\Lambda _{0}\right)\leq \Sigma
\left( \Lambda _{0},\Lambda _{\ast}\right)$ and $\Omega \left(
\Lambda _{0}\right) \leq \Omega \left(
\Lambda _{\ast}\right)$ for all $\Lambda_{\ast}$. The finiteness of the moments $%
\mathbb{E}\left( \varepsilon _{t}^{4}\right) <\infty $\textit{\ }and $\mathbb{E}\left( \Phi
_{it}^{4}\right) <1$\ ($i=1,\dots,p$), required for the asymptotic normality
to hold, does not constitute a restrictive condition as most usual discrete distributions (Poisson,
binomial, negative binomial, etc.) have finite higher-order moments.

Finally, note that the main advantage of precisely including the weights $%
\left( \mathcal{Z}_{t-1}^{\prime }\Lambda _{\ast }\right) ^{2}$ and $\left( 
\mathcal{Z}_{t-1}^{\prime }\widehat{\Lambda }_{1n}\right) ^{2}$ in the
second and fourth stages is to have the everywhere-consistency property of
the corresponding estimates $\widehat{\Lambda }_{.n}$. However, while $%
\widehat{\theta }_{2n}$ is optimal in the sense of Schick $(1996)$, the
resulting estimate $\widehat{\Lambda }_{2n}$ is not asymptotically optimal
in the class of all WLSEs. In fact, the optimal weight in the sense of
obtaining an estimate such that $C\left( \Lambda _{0}\right) $ and $D\left(
\Lambda _{0}\right) $ are proportional is $\mathbb{E}\left( u_{t}^{2}|\tciFourier
_{t-1}^{Y}\right) =\mathbb{V}\left( u_{t}|\tciFourier _{t-1}^{Y}\right) $ and the
resulting estimate would be more efficient than $\widehat{\Lambda }_{2n}$.
Nonetheless, the closed form expression of $\mathbb{V}\left( u_{t}|\tciFourier
_{t-1}^{Y}\right) $ in terms of model parameters is not always available and
depends on the model distribution which is not necessarily known in our
framework. Moreover, $\mathbb{V}\left( u_{t}|\tciFourier _{t-1}^{Y}\right) $ may
depend on some third- and fourth-moment parameters that need to be
specified. E.g.~the RMINAR(1) model has%
\begin{equation*}
\mathbb{V}\left( u_{t}|\tciFourier _{t-1}^{Y}\right) =\mathbb{E}\left( \Phi _{1t}-\phi
_{01}\right) ^{4}Y_{t-1}^{4}+6\sigma _{\phi }^{2}\sigma _{\varepsilon
}^{2}Y_{t-1}^{2}+\mathbb{E}\left( \left( \varepsilon _{t}-\mu _{\varepsilon }\right)
^{4}\right) -\left( \mathcal{Z}_{t-1}^{\prime }\Lambda _{0}\right) ^{2},
\end{equation*}%
which requires specifying $\mathbb{E}\left( \Phi _{1t}-\phi _{01}\right) ^{4}$ and $%
\mathbb{E}\left( \left( \varepsilon _{t}-\mu _{\varepsilon }\right) ^{4}\right) $ in
the model's definition. Therefore, the above weight consisting of the
squared conditional variance seems a reasonably good choice, especially
since for the ARCH model and the double auto-regression (DAR, Ling, $2007$)
model, $\left( \mathcal{Z}_{t-1}^{\prime }\Lambda _{0}\right) ^{2}$ and $%
\mathbb{V}\left( u_{t}|\tciFourier _{t-1}^{Y}\right) $ are proportional (see also
Aknouche, $2012$).

\subsection{The $\mathbb{Z}$-valued RMINAR model}

It is clear that the $%
\mathbb{Z}
$-valued RMINAR model \eqref{3.14} has a similar structure as its $\mathbb{N}_{0}$-valued counterpart (\ref{3.1a}). In particular, the conditional mean
and variance of (\ref{3.14}) are the same as those for (\ref{3.1a}) and are given
by (\ref{eqweqqw}). Thus the general 4SWLSE procedure (\ref{4.1a})-(\ref{4.1b}) reduces
for model (\ref{3.14}) to (\ref{4.3a})-(\ref{4.3d}). Keeping the same notations for the parameters, namely $\theta _{0}=\left( \mu _{0\varepsilon },\phi _{01},\dots,\phi _{0p}\right)
^{\prime }$ and $\Lambda _{0}=\left( \sigma _{0\varepsilon }^{2},\sigma
_{0\phi _{1}}^{2},\dots,\sigma _{0\phi _{p}}^{2}\right) ^{\prime }$, the estimators $[\widehat{\theta }_{1n},\widehat{\Lambda }_{1n},\widehat{\theta }%
_{2n},\widehat{\Lambda }_{2n}]$, and the asymptotic matrices $\Sigma$ and $\Omega$ in \eqref{4.8}, the
following result states that the same conclusions of Theorem \ref{teo41} hold for the $\mathbb{Z}$-valued RMINAR model \eqref{3.14}.
\begin{theo}
For the RMINAR model $(\ref{3.14})$ under \textbf{A0}*, the same conclusions of Theorem $\ref{teo41}$  hold for all arbitrarily and fixed known $%
\Lambda _{\ast }>0$.
\label{teorema42}
\end{theo}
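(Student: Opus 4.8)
The plan is to show that the proof of Theorem~\ref{teo41}, carried out in Appendix~B for the $\mathbb{N}_0$-valued model, transfers essentially verbatim to the $\mathbb{Z}$-valued model \eqref{3.14}. First I would isolate the ingredients on which that proof actually relies: (a) strict stationarity and ergodicity of $\{Y_t\}$; (b) the linear-in-parameters form of the conditional mean and variance in \eqref{eqweqqw}; (c) the fact that the weighting sequence $\mathcal{Z}_{t-1}'\Lambda$ is bounded away from zero; (d) the martingale-difference property of $(e_t)$ and $(u_t)$ with respect to $(\mathcal{F}_{t-1}^Y)$; and (e) finiteness of the relevant moments. I would then check that each holds for \eqref{3.14} under \textbf{A0*}.

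Ingredient (a) is supplied directly by Theorem~\ref{teo2}, which guarantees that under \textbf{A0*} the model \eqref{3.14} admits a unique strictly stationary and ergodic solution; Theorem~\ref{teo2} thus plays exactly the role that Theorem~\ref{teo1} plays in Appendix~B. Ingredient (b) is immediate, since the conditional mean and variance of \eqref{3.14} coincide in form with those of \eqref{3.1a} and are given by \eqref{eqweqqw} with the same regressors $\mathcal{Y}_{t-1}=(1,Y_{t-1},\dots,Y_{t-p})'$ and $\mathcal{Z}_{t-1}=(1,Y_{t-1}^2,\dots,Y_{t-p}^2)'$. The decisive point for (c) is that the entries of $\mathcal{Z}_{t-1}$ enter only through the squares $Y_{t-i}^2$ and are therefore non-negative even when the lagged observations are negative; hence $\mathcal{Z}_{t-1}'\Lambda_*\geq\sigma_{*\varepsilon}^2>0$ for every fixed $\Lambda_*>0$, exactly as in the $\mathbb{N}_0$-valued case. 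Ingredient (d) follows from the defining equation \eqref{3.14} in the same manner as before, and for (e) the second moments of $\Phi_{it}$ and $\varepsilon_t$ are finite by the model's specification, while the fourth-moment conditions $\mathbb{E}(\varepsilon_t^4)<\infty$ and $\mathbb{E}(\Phi_{it}^4)<1$ are imposed exactly as in Theorem~\ref{teo41} to obtain the asymptotic-normality statement.

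The crux of the argument---and the reason everywhere-consistency persists even when $\mathbb{E}(Y_t)$ is infinite---is that the weighting by $\mathcal{Z}_{t-1}'\Lambda$, which is quadratic in the lagged observations, dominates the numerators of the estimating equations \eqref{4.3a}--\eqref{4.3d}. Concretely, every entry of $\mathcal{Y}_{t-1}\mathcal{Y}_{t-1}'/(\mathcal{Z}_{t-1}'\Lambda_*)$ and of $\mathcal{Z}_{t-1}\mathcal{Z}_{t-1}'/(\mathcal{Z}_{t-1}'\Lambda_*)^2$ is a bounded function of $(Y_{t-1},\dots,Y_{t-p})$, so the matrices $A(\Lambda_*)$ and $C(\Lambda_*)$ are finite and, by the ergodic theorem, the associated sample averages in \eqref{4.3a}--\eqref{4.3d} converge almost surely to them. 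This boundedness estimate is driven purely by the quadratic denominator and is insensitive to the sign of $Y_t$, which is precisely what makes the passage from $\mathbb{N}_0$ to $\mathbb{Z}$ seamless.

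The main obstacle is therefore to confirm that no step of the Appendix~B proof genuinely exploits the non-negativity of $Y_t$ rather than merely that of $Y_t^2$ inside the weights, and in particular to re-establish the positive definiteness (hence invertibility) of $A(\Lambda_*)$ and $C(\Lambda_*)$ under \textbf{A0*}, so that the estimators are well defined for $n$ large. Once this is verified, the consistency statements in \eqref{4.7} follow from the ergodic theorem applied to the bounded weighted regressors together with the orthogonality supplied by the martingale-difference sequences $(e_t)$ and $(u_t)$, and the asymptotic-normality statements in \eqref{4.8} follow from the martingale central limit theorem under the fourth-moment conditions, the limiting covariances $\Sigma$ and $\Omega$ retaining exactly the form displayed in Theorem~\ref{teo41}.
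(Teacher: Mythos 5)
Your proposal is correct and takes essentially the same route as the paper: the authors omit the proof of Theorem~\ref{teorema42} entirely (Appendix~B states only that it is ``quite similar'' to that of Theorem~\ref{teo41}), and your checklist---ergodicity supplied by Theorem~\ref{teo2} under \textbf{A0*}, the identical linear conditional mean/variance \eqref{eqweqqw}, and the observation that every boundedness estimate in Lemmas~\ref{a.2}--\ref{a.6} uses only $Y_{t-i}^{2}$ in the weights and is therefore sign-insensitive, with $\mathcal{Z}_{t-1}^{\prime}\Lambda_{\ast}\geq\sigma_{\ast\varepsilon}^{2}>0$ still holding---is precisely the verification the authors leave implicit.
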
 

\begin{proof}
See Appendix B.
\end{proof}

\subsection{The RMINAR model with multiplicative errors}

For the RMINAR model with multiplicative errors \eqref{3.18}, the conditional mean (\ref{3.19a}) is linear in the parameter $\theta _{0}=\left( \omega
_{0},\phi _{01},\dots,\phi _{0p}\right) ^{\prime }$ as in models (\ref{3.1a}) and (\ref{3.14}) and rewrites as%
\begin{equation*}
\mu _{t}\left( \theta _{0}\right) :=\mathbb{E}\left( Y_{t}|\tciFourier _{t-1}\right)
=1+\mathcal{Y}_{t-1}^{\prime }\theta _{0}.
\end{equation*}%
In view of (\ref{3.19b}), however, it turns out that the conditional variance 
\begin{equation*}
V_t\left( \Lambda_0,\theta _{0}\right) :=\mathbb{V}\left( Y_{t}|\tciFourier _{t-1}\right)
=\left( \sigma_\varepsilon^2 +1\right) \left(\mathcal{Z}_{t-1}^{\prime } \Lambda_0\right) + \sigma_\varepsilon^2\left(1+\mathcal{Y}_{t-1}^{\prime }\theta _{0} \right)^2,
\end{equation*}%
is not linear in the variance parameter $\Lambda
_{0}=\left( \sigma _{0\varepsilon }^{2},\sigma _{0\omega }^{2},\sigma
_{0\phi _{1}}^{2},\dots,\sigma _{0\phi _{p}}^{2}\right) ^{\prime }$, which contains an extra parameter $\sigma _{0\varepsilon }^{2}$. Hence, the first and third
stages to estimate $\theta _{0}$\ are given in closed-form as for model (\ref{3.1a}) and (\ref{3.14}), but the second and fourth stages are now iterative.
Thus, the general 4SWLSE procedure (\ref{4.1a})-(\ref{4.1b}) reduces for model (\ref{3.18}) to
the following problems
\begin{eqnarray}
\text{i) }\widehat{\theta }_{1n} &=&\left( \sum_{t=1}^{n}\tfrac{\mathcal{Y}_{t-1}\mathcal{Y}_{t-1}^{\prime
}}{ V_{t}\left( \theta _{\ast},\Lambda
_{\ast}\right)}\right) ^{-1}\sum_{t=1}^{n}\tfrac{\mathcal{Y}_{t-1}Y_{t}}{V_{t}\left( \theta _{\ast},\Lambda
_{\ast}\right)}\text{ \ and \ ii) }\widehat{\Lambda }_{1n}=\arg
\min\limits_{\Lambda \in \Pi }\tfrac{1}{n}\sum_{t=1}^{n}\tfrac{%
u_{t}^{2}\left( \widehat{\theta }_{1n},\Lambda \right) }{V_{t}^2\left( \theta _{\ast},\Lambda
_{\ast}\right)}, \text{ \ \ \ \ \ \ \ \ } 
\label{4.9a} \\
\text{iii) }\widehat{\theta }_{2n} &=&\left( \sum_{t=1}^{n}\tfrac{\mathcal{Y}_{t-1}\mathcal{%
Y}_{t-1}^{\prime }}{V_{t}\left( \widehat{\theta} _{1 n},\widehat{\Lambda}
_{1 n}\right)}\right) ^{-1}\sum_{t=1}^{n}\tfrac{\mathcal{Y}_{t-1}Y_{t}}{%
V_{t}\left( \widehat{\theta} _{1 n},\widehat{\Lambda}
_{1 n}\right)}\text{ \ and \ iv)\ }\widehat{%
\Lambda }_{2n}=\arg \min\limits_{\Lambda \in \Pi }\tfrac{1}{n}\sum_{t=1}^{n}%
\tfrac{u_{t}^{2}\left( \widehat{\theta }_{2n},\Lambda \right) }{V_{t}^2\left( \widehat{\theta} _{2n},\widehat{\Lambda}
_{1 n}\right)}, \text{
\ \ \ \ \ \ \ \ }  \label{4.9b}
\end{eqnarray}%
where $\Lambda =\left( \sigma _{\varepsilon }^{2},\sigma _{\omega
}^{2},\sigma _{\phi _{1}}^{2},\dots,\sigma _{\phi _{p}}^{2}\right) ^{\prime
}\in \Pi $ and $\Lambda _{\ast }=\left( \sigma _{\ast \varepsilon
}^{2},\dots,\sigma
_{\ast \phi _{p}}^{2}\right) ^{\prime }>0$ are the generic
and the fixed known parameter vectors, respectively, and $\theta_{\ast }$ is fixed and known.

Next, Theorem \ref{teo43} establishes the consistency and asymptotic normality of the 4SWLSE given by (\ref{4.9a})-(\ref{4.9b}), relying on the assumptions that $\Lambda \in \Pi $ is compact and $\Lambda _{0}$ belongs to the interior of $\Pi$.
To state the result, further consider the matrices
\begin{equation*}
\Sigma \left( \theta _{\ast },\Lambda _{\ast }\right) := A\left( \theta
_{\ast },\Lambda _{\ast }\right) ^{-1}B\left( \theta _{\ast },\Lambda _{\ast
}\right) A\left( \theta _{\ast },\Lambda _{\ast }\right) ^{-1} \text{ and } \Omega \left( \Lambda _{\ast }\right) := C\left( \Lambda _{\ast }\right)
^{-1}D\left( \Lambda _{\ast }\right) C\left( \Lambda _{\ast }\right) ^{-1},
\label{4.11b}
\end{equation*}
where the auxiliary matrices $A, B, C$ and $D$ are defined as
\begin{eqnarray}
A\left( \theta _{\ast },\Lambda _{\ast }\right) :=\mathbb{E}\left( \tfrac{\mathcal{Y}_{t-1}%
\mathcal{Y}_{t-1}^{\prime }
}{V_{t}\left( \theta _{\ast },\Lambda _{\ast }\right) }\right) &\text{,}& B\left( \theta _{\ast },\Lambda
_{\ast }\right) :=\mathbb{E}\left( \tfrac{V_{t}\left( \theta _{\ast
},\Lambda _{0}\right) }{V_{t}\left( \theta _{\ast },\Lambda _{\ast }\right)
^{2}}\mathcal{Y}_{t-1}\mathcal{Y}_{t-1}^{\prime }\right),  \notag \\
C\left( \Lambda _{\ast }\right) :=\mathbb{E}\left( \tfrac{1}{V_{t}^{2}\left( \theta
_{0},\Lambda _{\ast }\right) }\tfrac{\partial V_{t}\left( \theta
_{0},\Lambda _{0}\right) }{\partial \Lambda }\tfrac{\partial V_{t}\left(
\theta _{0},\Lambda _{0}\right) }{\partial \Lambda ^{\prime }}\right) &\text{,}& D\left( \Lambda _{\ast }\right) :=\mathbb{E}\left( \tfrac{u_{t}^{2}\left( \theta
_{0},\Lambda _{0}\right) }{V_{t}^{4}\left( \theta _{0},\Lambda _{\ast
}\right) }\tfrac{\partial V_{t}\left( \theta _{0},\Lambda _{0}\right) }{%
\partial \Lambda }\tfrac{\partial V_{t}\left( \theta _{0},\Lambda
_{0}\right) }{\partial \Lambda ^{\prime }}\right).  \notag 
\end{eqnarray}

\begin{theo}
\label{teo43}
 \textit{Under model }$(\ref{3.18})$\textit{\ subject to 
\textbf{A0}** and \textbf{A1:} $\Lambda \in \Pi $ is compact,} 
\begin{eqnarray}
\text{i) \ } \widehat{\theta }_{1n}\overset{a.s.}{\underset{n\rightarrow
\infty }{\rightarrow }}\theta _{0}, 
\text{ \ ii) \ }\widehat{\theta }_{2n}%
\overset{a.s.}{\underset{n\rightarrow \infty }{\rightarrow }}\theta _{0}, 
\text{ \ iii) \ } \widehat{\Lambda }_{1n}\overset{a.s.}{\underset{n\rightarrow
\infty }{\rightarrow }}\Lambda _{0},
\text{ \ iv) \ }\widehat{\Lambda }_{2n}%
\overset{a.s.}{\underset{n\rightarrow \infty }{\rightarrow }}\Lambda _{0}. 
\label{4.12b}
\end{eqnarray}%
If, in addition, $\mathbb{E}\left( \varepsilon _{t}^{4}\right) <\infty$, $\mathbb{E}\left( \Phi _{it}^{4}\right) <1 \ (i=1,\dots,p)$ and \textbf{A2:} $\Lambda _{0}$ belongs to the interior of $\Pi$ holds, then
\begin{small}
\begin{eqnarray}
\text{i)} &&\sqrt{n}\left( \widehat{\theta }_{1n}-\theta _{0}\right) \overset%
{D}{\underset{n\rightarrow \infty }{\rightarrow }}\mathcal{N}\left( 0,\Sigma
\left( \Lambda _{0},\Lambda _{\ast }\right) \right),
\text{ii)} \sqrt{n}\left( \widehat{\theta }_{2n}-\theta _{0}\right) \overset%
{D}{\underset{n\rightarrow \infty }{\rightarrow }}\mathcal{N}\left( 0,\Sigma
\left( \Lambda _{0},\Lambda _{0}\right) \right) \notag \\
\text{ iii)\ } &&\sqrt{n}%
\left( \widehat{\Lambda }_{1n}-\Lambda _{0}\right) \overset{D}{\underset{%
n\rightarrow \infty }{\rightarrow }}\mathcal{N}\left( 0,\Omega \left(
\Lambda _{\ast }\right) \right), \text{ iv)\ }\sqrt{n}\left( 
\widehat{\Lambda }_{2n}-\Lambda _{0}\right) \overset{D}{\underset{%
n\rightarrow \infty }{\rightarrow }}\mathcal{N}\left( 0,\Omega \left(
\Lambda _{0}\right) \right).
\label{4.13}
\end{eqnarray}
 \end{small}  
\end{theo}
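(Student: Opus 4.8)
The plan is to follow the blueprint of the proof of Theorem \ref{teo41}, exploiting that the conditional mean $\mu_t(\theta_0)=1+\mathcal{Y}_{t-1}^{\prime}\theta_0$ is still linear in $\theta_0$, while confronting the new feature that, unlike models \eqref{3.1a}--\eqref{3.14}, the conditional variance $V_t(\theta_0,\Lambda_0)$ in \eqref{3.19b} is nonlinear in $\Lambda_0$ \emph{and} depends on $\theta_0$. By Theorem \ref{teo3}, under \textbf{A0**} the process $\{Y_t\}$ is strictly stationary and ergodic, so every sample average below converges a.s.\ by the ergodic theorem whenever the corresponding expectation is finite. The key ``everywhere'' mechanism is that each weight $V_t(\theta_\ast,\Lambda_\ast)$ is bounded below by a positive constant (it carries the additive term $1$) and grows like the quadratic statistics in $\mathcal{Z}_{t-1}$; hence self-normalized ratios such as $\mathcal{Y}_{t-1}\mathcal{Y}_{t-1}^{\prime}/V_t(\theta_\ast,\Lambda_\ast)$ are uniformly bounded, so $A(\theta_\ast,\Lambda_\ast)$, $B$, $C$, $D$ are finite regardless of whether $Y_t$ has finite moments. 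This is what delivers consistency and normality everywhere in the parameter space, including the infinite-mean regime.

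For the two $\theta$-stages, which remain closed-form WLS, I would argue exactly as in Theorem \ref{teo41}: writing $\sqrt n(\widehat\theta_{1n}-\theta_0)$ as the inverse of $\tfrac1n\sum \mathcal{Y}_{t-1}\mathcal{Y}_{t-1}^{\prime}/V_t(\theta_\ast,\Lambda_\ast)$ times $\tfrac{1}{\sqrt n}\sum \mathcal{Y}_{t-1}e_t/V_t(\theta_\ast,\Lambda_\ast)$, the denominator converges a.s.\ to $A(\theta_\ast,\Lambda_\ast)$ and the numerator is a stationary ergodic martingale-difference sum with respect to $\mathcal{F}_{t-1}^{Y}$ (since $e_t=Y_t-\mu_t$). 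The martingale CLT under $\mathbb{E}(\varepsilon_t^{4})<\infty$ and $\mathbb{E}(\Phi_{it}^{4})<1$ then yields the sandwich form $\Sigma(\Lambda_0,\Lambda_\ast)$, giving \eqref{4.12b}(i) and \eqref{4.13}(i). For $\widehat\theta_{2n}$ the estimated weight $V_t(\widehat\theta_{1n},\widehat\Lambda_{1n})$ is replaced by its a.s.\ limit $V_t(\theta_0,\Lambda_0)$ through the already-established consistency together with a uniform ergodic law of large numbers, producing $\Sigma(\Lambda_0,\Lambda_0)$.

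The genuinely new ingredient is stages (ii) and (iv), now nonlinear M-estimators. For consistency of $\widehat\Lambda_{1n}$ I would (a) define the limit criterion $Q(\Lambda)=\mathbb{E}[u_t^{2}(\theta_0,\Lambda)/V_t^{2}(\theta_\ast,\Lambda_\ast)]$ and prove identifiability: since $u_t$ is $\mathcal{F}_{t-1}^{Y}$-orthogonal, a direct expansion gives $Q(\Lambda)-Q(\Lambda_0)=\mathbb{E}[(V_t(\theta_0,\Lambda)-V_t(\theta_0,\Lambda_0))^{2}/V_t^{2}(\theta_\ast,\Lambda_\ast)]\ge 0$, with equality only at $\Lambda=\Lambda_0$ by non-degeneracy of the regressors; (b) invoke compactness \textbf{A1} and a uniform ergodic LLN, the dominating integrable envelope again being supplied by the self-normalizing weights, to get $\sup_{\Lambda\in\Pi}\lvert \tfrac1n\sum u_t^{2}(\widehat\theta_{1n},\Lambda)/V_t^{2}(\theta_\ast,\Lambda_\ast)-Q(\Lambda)\rvert\to 0$ a.s., the plug-in of $\widehat\theta_{1n}$ being absorbed by continuity; (c) conclude by the argmin theorem. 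Consistency of $\widehat\Lambda_{2n}$ is identical after substituting the limiting weight. For normality, assumption \textbf{A2} (interiority) licenses the first-order condition, and a mean-value expansion around $\Lambda_0$ expresses $\sqrt n(\widehat\Lambda_{1n}-\Lambda_0)$ as $-H_n^{-1}$ times the normalized score; since $\partial_\Lambda u_t=-\partial_\Lambda V_t$ is $\mathcal{F}_{t-1}^{Y}$-measurable and is multiplied by the martingale difference $u_t$, the score is again a martingale-difference sum, with $H_n\to C(\Lambda_\ast)$ and score variance $D(\Lambda_\ast)$, yielding $\Omega(\Lambda_\ast)$.

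The main obstacle is controlling the propagation of the plugged-in $\widehat\theta_{1n}$ (and $\widehat\theta_{2n}$, $\widehat\Lambda_{1n}$) into the $\Lambda$-asymptotics, a complication absent from Theorem \ref{teo41} because there $V_t$ did not depend on $\theta_0$. In the score expansion the extra contribution is $\mathbb{E}[\partial_{\theta^{\prime}}(u_t\,\partial_\Lambda V_t)/V_t^{2}(\theta_\ast,\Lambda_\ast)]\,\sqrt n(\widehat\theta_{1n}-\theta_0)$: the pieces carrying the martingale differences $e_t$ and $u_t$ have conditional mean zero and drop out, but the surviving deterministic cross term built from $\partial_\theta V_t\,\partial_{\Lambda} V_t$ must be reconciled with the stated covariances $\Omega(\Lambda_\ast)$ and $\Omega(\Lambda_0)$. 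I would resolve this by writing a single stacked martingale-difference CLT for $(\widehat\theta_{1n},\widehat\Lambda_{1n})$ and verifying that the block structure reproduces exactly the matrices $A$, $B$, $C$, $D$ of the statement, and by confirming invertibility of the limiting Hessian $C$ everywhere in the parameter space; establishing this reconciliation is where the real work lies.
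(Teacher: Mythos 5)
Your plan tracks the paper's own proof closely on three of the four fronts. The $\theta$-stages (closed-form WLS, a.s.\ bounded self-normalized ratios so that $A$, $B$ are finite everywhere, martingale-difference CLT plus Slutsky, replacement of $V_{t}(\widehat{\theta}_{1n},\widehat{\Lambda}_{1n})$ by $V_{t}(\theta_{0},\Lambda_{0})$ via consistency) are exactly the paper's argument, carried over from Theorem \ref{teo41}. Your identifiability step --- expanding $Q(\Lambda)-Q(\Lambda_{0})=\mathbb{E}\bigl[\bigl(V_{t}(\theta_{0},\Lambda)-V_{t}(\theta_{0},\Lambda_{0})\bigr)^{2}/V_{t}^{2}(\theta_{\ast},\Lambda_{\ast})\bigr]$ using the conditional orthogonality of $u_{t}$ --- is precisely Lemma \ref{a.8}, except that the paper does not stop at ``non-degeneracy of the regressors'': it separates the $(1+\mathcal{Y}_{t-1}^{\prime}\theta_{0})^{2}$ and $\mathcal{Z}_{t-1}^{\prime}$ components explicitly to force $\sigma_{\varepsilon}^{2}=\sigma_{0\varepsilon}^{2}$ and $\Delta=\Delta_{0}$; you should do the same, since both $\sigma_{\varepsilon}^{2}$ and $\Delta$ enter $V_{t}$ jointly. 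For uniformity over the compact $\Pi$ you invoke a ULLN with an integrable envelope, whereas the paper runs the Wald-type covering argument (Lemma \ref{a.9}: infima of $l_{t}$ over shrinking balls, ergodic theorem, Beppo--Levi) and neutralizes the plugged-in $\widehat{\theta}_{1n}$ through the separate comparison Lemma \ref{a.7}. These are interchangeable standard routes, so up to consistency, i.e.\ \eqref{4.12b}, your proposal is sound and essentially the paper's.

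The genuine gap is in the normality step for $\widehat{\Lambda}_{1n}$ and $\widehat{\Lambda}_{2n}$, and you name it yourself. You correctly observe that, unlike in Theorem \ref{teo41} where $V_{t}$ did not depend on $\theta_{0}$, the expected cross-derivative $2\,\mathbb{E}\bigl[\partial_{\theta}V_{t}\,\partial_{\Lambda}V_{t}^{\prime}/V_{t}^{2}(\theta_{0},\Lambda_{\ast})\bigr]$ does not vanish, so the plug-in term $\bigl[\tfrac{1}{n}\sum_{t}\partial_{\theta}\partial_{\Lambda}l_{t}\bigr]\sqrt{n}\,(\widehat{\theta}_{1n}-\theta_{0})$ is a priori $O_{p}(1)$ in the $\Lambda$-score expansion --- but then you defer the entire resolution to an unexecuted ``stacked CLT'' and concede that reconciling it with $\Omega(\Lambda_{\ast})=C(\Lambda_{\ast})^{-1}D(\Lambda_{\ast})C(\Lambda_{\ast})^{-1}$ ``is where the real work lies.'' That reconciliation \emph{is} the theorem: as written, your proposal never derives the limit laws \eqref{4.13} iii)--iv), since no argument shows the cross contribution is negligible, nor is the joint limit computed and shown to collapse to the stated sandwich. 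The paper closes this step differently: via a Taylor expansion it asserts that $\sqrt{n}\,\partial_{\Lambda}L_{n}\bigl(\widehat{\Lambda}_{1n},\Lambda_{\ast},\widehat{\theta}_{1n}\bigr)=\sqrt{n}\,\partial_{\Lambda}L_{n}\bigl(\widehat{\Lambda}_{1n},\Lambda_{\ast},\theta_{0}\bigr)+o_{a.s.}(1)$ (the bound preceding \eqref{A.20} and \eqref{A.20} itself), i.e.\ the plug-in effect on the score is asymptotically negligible, after which the problem reduces to the pure-$\Lambda$ expansion handled by Lemma \ref{S.4} (score CLT with variance $D(\Lambda_{\ast})$) and Lemma \ref{S.5} (Hessian limit $C(\Lambda_{\ast})$), with \textbf{A2} licensing the first-order condition. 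To complete your route you must either prove that negligibility claim directly --- showing the averaged cross term times $\sqrt{n}(\widehat{\theta}_{1n}-\theta_{0})$ vanishes, not merely bounding it --- or actually carry out the joint expansion and verify that the correction blocks cancel or are absorbed into $\Omega$. Your own computation of the surviving cross term shows this is not a routine detail, so its omission leaves the asymptotic-normality half of the theorem unproved.
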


\begin{proof}
See Appendix B.
\end{proof}

For some particular cases, the estimators in \eqref{4.9a}-\eqref{4.9b} can be considerably simplified. This is the case of the Poisson and the geometric distributions, for which the random coefficient variances $\Delta _{0}:=\left( \sigma _{0\omega
}^{2},\sigma _{0\phi _{1}}^{2},\dots,\sigma _{0\phi _{p}}^{2}\right) ^{\prime
} $ are functions of the random coefficient means $\theta _{0}$ with no extra parameters. Indeed, only the variance $\sigma _{0\varepsilon }^{2}$ of the innovation sequence has to be estimated in $\Lambda _{0}=\left( \sigma _{0\varepsilon }^{2},\Delta _{0}^{\prime }\right) $, as the components in $\Delta _{0}^{\prime }$ are
functions of the mean parameters in $\theta _{0}$. Thus, the 4SWLSE procedure \eqref{4.9a}-\eqref{4.9b} simplifies to
the triplet $[\widehat{\theta }_{1n},\widehat{\sigma }_{\varepsilon n}^{2},%
\widehat{\theta }_{2n}]$ where $\widehat{\sigma }_{\varepsilon n}^{2}$ is an
estimator for $\sigma _{0\varepsilon }^{2}$. Since the RMINAR
model with multiplicative errors \eqref{3.18} shares the same variance-to-mean structure as the MthINGARCH
model of Aknouche and Scotto $(2023)$, the same estimator is used for $\sigma
_{0\varepsilon }^{2}$. Rewrite (\ref{3.19b}) as follows%
\begin{equation*}
V_{t}:=\mathbb{V}\left( Y_{t}|\tciFourier _{t-1}^{Y}\right) =V_{t}\left( \theta_0, \Lambda
_{0}\right) =\left( \sigma _{\varepsilon }^{2}+1\right) \delta _{t}^2+\sigma
_{\varepsilon }^{2}\mu _{t}^{2},
\end{equation*}%
where $\delta _{t}^2\equiv \delta _{t}^2\left( \Delta _{0}\right) =\mathbb{V}\left( \lambda
_{t}|\tciFourier _{t-1}^{Y}\right) =\Delta _{0}\mathcal{Z}_{t-1}^{\prime }$.
Let $\widehat{\mu }_{t}:=\mu _{t}\left( \widehat{\theta }_{1n}\right) =1+%
\mathcal{Y}_{t-1}^{\prime }\widehat{\theta }_{1n}$, and $\widehat{\delta}_t^2:=\delta _{t}^2\left( \widehat{\Delta }_{n}\right) $ where $\widehat{%
\Delta }_{n}$ is a consistent estimator for $\Delta _{0}$ obtained as the
corresponding function of $\widehat{\theta }_{1n}$. Hence, (\ref{4.9a})-(\ref{4.9b}) becomes%
\begin{eqnarray}
\text{i) }\widehat{\theta }_{1n} &=&\left( \sum_{t=1}^{n}\tfrac{\mathcal{Y}_{t-1}\mathcal{Y}_{t-1}^{\prime
}}{V_t(\theta_{*},\Lambda_{*})}\right) ^{-1}\sum_{t=1}^{n}\tfrac{\mathcal{Y}_{t-1}Y_{t}}{V_t(\theta_{*},\Lambda_{*})}, \label{4.10a} \\
\text{ii) }\widehat{\sigma }_{\varepsilon n}^{2} &=&\tfrac{1}{n}%
\sum\limits_{t=1}^{n}\tfrac{\left( Y_{t}-\widehat{\mu }_{t}\right) ^{2}-%
\widehat{\delta }_{t}^2}{\widehat{\delta }_{t}^2+\widehat{\mu }_{t}^{2}}, \label{4.10b} \\
\text{iii) }\widehat{\theta }_{2n} &=&\left( \sum_{t=1}^{n}\tfrac{\mathcal{Y}_{t-1}\mathcal{%
Y}_{t-1}^{\prime}}{V_t(\widehat{\theta}_{1n},\widehat{\Lambda}_{1n})}\right) ^{-1}\sum_{t=1}^{n}\tfrac{\mathcal{Y}_{t-1}Y_{t}}{%
V_t(\widehat{\theta}_{1n},\Lambda_{0})},  \label{4.10c}
\end{eqnarray}%
where $\widehat{\Lambda}_{1n}$ is the corresponding function of $\widehat{\theta}_{1n}$ and (\ref{4.10a}) is directly obtained from Aknouche and Scotto ($2024$, eq.~3.13). 
For this particular case, the consistency and distribution of the resulting estimators are organized in the following corollary to Theorem \ref{teo43}.

\begin{proposition}
\textit{Under model }$(\ref{3.18})$ subject to 
\textbf{A0}**
\begin{equation*}
i\textit{)} \ \widehat{\theta }_{1n}\overset{a.s.}{\underset{n\rightarrow \infty }%
{\rightarrow }}\theta _{0}\text{, ii) }\widehat{\sigma }_{\varepsilon n}^{2}%
\overset{a.s.}{\underset{n\rightarrow \infty }{\rightarrow }}\sigma
_{0\varepsilon }^{2}\text{, iii) }\widehat{\theta }_{2n}\overset{a.s.}{%
\underset{n\rightarrow \infty }{\rightarrow }}\theta _{0}.  \label{4.14}
\end{equation*}%
In addition,
\begin{eqnarray*}
\text{i)} &&\sqrt{n}\left( \widehat{\theta }_{1n}-\theta _{0}\right) \overset%
{D}{\underset{n\rightarrow \infty }{\rightarrow }}\mathcal{N}\left( 0,\Sigma
\left( \theta_{*},\Lambda _{\ast }\right) \right),\\
\text{ ii)} && \sqrt{n}%
\left( \widehat{\sigma }_{\varepsilon n}^{2}-\sigma _{0\varepsilon
}^{2}\right) \overset{\mathcal{D}}{\underset{n\rightarrow \infty }{%
\rightarrow }}\mathcal{N}\left( 0,\Gamma \right),   \label{4.15a} \\
\text{iii)} &&\sqrt{n}\left( \widehat{\theta }_{2n}-\theta _{0}\right) 
\overset{D}{\underset{n\rightarrow \infty }{\rightarrow }}\mathcal{N}\left(
0,\Sigma \left( \theta_{0},\Lambda _{0}\right) \right),  \label{4.15b}
\end{eqnarray*}%
\textit{where}%
\begin{equation*}
\Gamma :=\mathbb{E}\left( \tfrac{\left( Y_{t}-\mu _{t}\left( \theta _{0}\right)
\right) ^{2}-\left( \delta _{t}^2\left( \Delta _{0}\right) +\left( \delta
_{t}^2\left( \Delta _{0}\right) +\mu _{t}^{2}\left( \theta _{0}\right) \right)
\sigma _{0\varepsilon }^{2}\right) }{\delta _{t}^2\left( \Delta _{0}\right)
+\mu _{t}^{2}\left( \theta _{0}\right) }\right) ^{2}\text{.}  \label{4.16}
\end{equation*}
\end{proposition}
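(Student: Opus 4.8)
The plan is to treat the three estimators separately, exploiting that this proposition is a specialization of Theorem~\ref{teo43} to the case where the coefficient-variance vector $\Delta_0$ is a known smooth function $\Delta(\theta_0)$ of the mean parameter $\theta_0$. First I would observe that $\widehat{\theta}_{1n}$ in \eqref{4.10a} is literally the first-stage estimator of \eqref{4.9a}, and that $\widehat{\theta}_{2n}$ in \eqref{4.10c} is the third-stage estimator of \eqref{4.9b} with plug-in weight $V_t(\widehat{\theta}_{1n},\widehat{\Lambda}_{1n})$, where $\widehat{\Lambda}_{1n}=(\widehat{\sigma}_{\varepsilon n}^{2},\Delta(\widehat{\theta}_{1n})^{\prime })^{\prime }$. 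Hence, once the consistency of $\widehat{\sigma}_{\varepsilon n}^{2}$ and of $\Delta(\widehat{\theta}_{1n})$ (the latter by continuity of $\Delta(\cdot)$) is established, the consistency and asymptotic normality of $\widehat{\theta}_{1n}$ and $\widehat{\theta}_{2n}$, with variances $\Sigma(\theta_{\ast },\Lambda_{\ast })$ and $\Sigma(\theta_{0},\Lambda_{0})$, follow directly from Theorem~\ref{teo43} with no new work.

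The substantive part is the explicit estimator $\widehat{\sigma}_{\varepsilon n}^{2}$ in \eqref{4.10b}. The key identity I would establish first is that, writing $V_{t}=(\sigma_{0\varepsilon}^{2}+1)\delta_{t}^{2}+\sigma_{0\varepsilon}^{2}\mu_{t}^{2}$ from \eqref{3.19b}, the variable
\[
g_{t}(\theta_{0}):=\frac{\left( Y_{t}-\mu_{t}(\theta_{0})\right)^{2}-\delta_{t}^{2}(\Delta_{0})}{\delta_{t}^{2}(\Delta_{0})+\mu_{t}^{2}(\theta_{0})}-\sigma_{0\varepsilon}^{2}
\]
satisfies $\mathbb{E}(g_{t}(\theta_{0})\mid \tciFourier_{t-1}^{Y})=0$, since $\mathbb{E}((Y_{t}-\mu_{t})^{2}\mid \tciFourier_{t-1}^{Y})=V_{t}$ and $V_{t}-\delta_{t}^{2}=\sigma_{0\varepsilon}^{2}(\delta_{t}^{2}+\mu_{t}^{2})$. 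Thus $(g_{t}(\theta_{0}))$ is a stationary ergodic martingale-difference sequence, strict stationarity and ergodicity coming from Theorem~\ref{teo3}. Consistency of $\widehat{\sigma}_{\varepsilon n}^{2}$ then follows by writing it as $h_{n}(\widehat{\theta}_{1n})$ with $h_{n}(\theta)=n^{-1}\sum_{t}[(Y_{t}-\mu_{t}(\theta))^{2}-\delta_{t}^{2}(\Delta(\theta))]/[\delta_{t}^{2}(\Delta(\theta))+\mu_{t}^{2}(\theta)]$. The ergodic theorem gives $h_{n}(\theta_{0})\to \sigma_{0\varepsilon}^{2}$ a.s. (the integrand is integrable because its conditional expectation lies between $\sigma_{0\varepsilon}^{2}$ and $\sigma_{0\varepsilon}^{2}+1$), and a uniform law of large numbers over a neighbourhood of $\theta_{0}$ (here the denominator is bounded below by $\sigma_{0\omega}^{2}+1>0$, so the summands are dominated) upgrades this to $h_{n}(\widehat{\theta}_{1n})\to \sigma_{0\varepsilon}^{2}$ via $\widehat{\theta}_{1n}\to \theta_{0}$.

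For asymptotic normality I would expand $\sqrt{n}\,(\widehat{\sigma}_{\varepsilon n}^{2}-\sigma_{0\varepsilon}^{2})=\sqrt{n}\,(h_{n}(\widehat{\theta}_{1n})-\sigma_{0\varepsilon}^{2})$ around $\theta_{0}$. The leading term is the martingale sum $n^{-1/2}\sum_{t=1}^{n}g_{t}(\theta_{0})$, to which the martingale central limit theorem applies under the fourth-order moment conditions $\mathbb{E}(\varepsilon_{t}^{4})<\infty$ and $\mathbb{E}(\Phi_{it}^{4})<1$ inherited from Theorem~\ref{teo43}; these guarantee $\mathbb{E}(g_{t}(\theta_{0})^{2})=\Gamma<\infty$ and yield exactly the variance $\Gamma$ displayed in the statement. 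The remaining term is the plug-in contribution $\left(\partial h_{n}/\partial \theta^{\prime }\right)(\bar{\theta})\,\sqrt{n}(\widehat{\theta}_{1n}-\theta_{0})$.

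The main obstacle is precisely this plug-in term. I would show that $\partial h_{n}/\partial \theta^{\prime }$ converges to a finite limit $J$ and combine it with the $O_{p}(1)$ behaviour of $\sqrt{n}(\widehat{\theta}_{1n}-\theta_{0})$ from Theorem~\ref{teo43}. The delicate point is that, since $\widehat{\delta}_{t}^{2}$ and $\widehat{\mu}_{t}$ both depend on $\widehat{\theta}_{1n}$, the term $J\,\sqrt{n}(\widehat{\theta}_{1n}-\theta_{0})$ is in general of the same order as the martingale term; one therefore has to verify carefully that this first-stage estimation contribution is asymptotically negligible, so that the limiting variance reduces to $\Gamma$ as claimed (the argument running parallel to the treatment of the plug-in weights in Theorem~\ref{teo43}). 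Once this step is secured, Slutsky's lemma and joint convergence with $\widehat{\theta}_{1n}$ and $\widehat{\theta}_{2n}$ close the proof, which are routine.
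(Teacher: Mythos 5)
Your reduction of parts i) and iii) to Theorem \ref{teo43} matches the paper's intent: the paper states this proposition as a corollary of Theorem \ref{teo43} (together with the MthINGARCH results of Aknouche and Scotto, 2024) and omits the proof entirely, and your observation that $\widehat{\Lambda}_{1n}=(\widehat{\sigma}_{\varepsilon n}^{2},\Delta(\widehat{\theta}_{1n})')'$ is consistent by continuity is the right glue. Your martingale-difference identity for $g_{t}(\theta_{0})$, the ergodic-theorem consistency argument, and the martingale CLT for the leading term of ii) are likewise sound. The genuine gap is the step you flag and then wave through: the claim that the first-stage plug-in contribution is asymptotically negligible ``parallel to the treatment of the plug-in weights in Theorem \ref{teo43}.'' That parallel fails. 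In Theorems \ref{teo41} and \ref{teo43} the $\sqrt{n}$-scaled Taylor remainders created by plugging in $\widehat{\theta}_{1n}$ or $\widehat{\Lambda}_{1n}$ are of the form (martingale difference)$\times$(bounded $\tciFourier_{t-1}^{Y}$-measurable factor) --- e.g.\ the term $-2e_{t}\dot{\mu}_{t}$ from differentiating the squared residual --- so their conditional means vanish and the remainders are $o_{p}(1)$. Here, by contrast, $\widehat{\theta}_{1n}$ also enters through $\widehat{\delta}_{t}^{2}=\delta_{t}^{2}(\Delta(\widehat{\theta}_{1n}))$ in both the numerator and the denominator of $\widehat{\sigma}_{\varepsilon n}^{2}$, and a direct computation gives $\mathbb{E}\bigl(\partial g_{t}(\theta_{0})/\partial\theta \mid \tciFourier_{t-1}^{Y}\bigr)=-\bigl(\dot{\delta}_{t}^{2}+\sigma_{0\varepsilon}^{2}(\dot{\delta}_{t}^{2}+2\mu_{t}\dot{\mu}_{t})\bigr)/\bigl(\delta_{t}^{2}+\mu_{t}^{2}\bigr)$, where $\dot{\delta}_{t}^{2}$ and $\dot{\mu}_{t}$ denote the gradients in $\theta$. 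In the Poisson case ($\Delta(\theta)=\theta$, hence $\dot{\delta}_{t}^{2}=\mathcal{Z}_{t-1}$ and $\dot{\mu}_{t}=\mathcal{Y}_{t-1}$) this is strictly negative componentwise, so the limit $J$ of $\partial h_{n}/\partial\theta$ is nonzero and the term $J'\sqrt{n}(\widehat{\theta}_{1n}-\theta_{0})$ is $O_{p}(1)$, correlated with $n^{-1/2}\sum_{t}g_{t}(\theta_{0})$; it cannot be killed by the zero-conditional-mean device that works for the weight plug-ins.

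To close part ii) as stated you would therefore need either a joint CLT for $\bigl(n^{-1/2}\sum_{t}g_{t}(\theta_{0}),\ \sqrt{n}(\widehat{\theta}_{1n}-\theta_{0})\bigr)$ --- which produces a limiting variance of the form $\Gamma+J'\Sigma\left(\theta_{\ast},\Lambda_{\ast}\right)J$ plus cross-covariance terms, not $\Gamma$ --- or a proof that these extra terms vanish in this model, which the displayed expression for $J$ shows they do not in general, or an appeal to the corresponding result in Aknouche and Scotto (2024), which is effectively what the paper does by omitting the proof. As written, your proposal correctly isolates the obstacle but resolves it by an analogy that does not hold, so the asserted asymptotic variance $\Gamma$ is not established.
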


\section{Numerical illustrations}\label{section5}
The behavior of the 4SWLSE in finite samples is assessed
for the three above models via simulation experiments. Furthermore, two real data applications are analyzed. 
\subsection{Simulation study}
For the $%
\mathbb{N}
_{0}$-valued RMINAR model (\ref{3.1a}) two instances are considered. Firstly, a RMINAR$\left( 4\right) $ driven by Poisson inputs with three sets of parameters, namely: i) stationary RMINAR(4) model with finite mean, $%
\sum_{i=1}^{4}\phi _{0i}=0.7<1$, ii) stationary RMINAR(4) model with
infinite mean and $\sum_{i=1}^{4}\phi _{0i}=1$, and iii) stationary RMINAR$%
(4)$ model with infinite mean and $\sum_{i=1}^{4}\phi _{0i}=1.2>1$ (cf.
Table \ref{table5.1}). The second instance (cf. Table \ref{table5.2}), still concerns a $%
\mathbb{N}
_{0}$-valued RMINAR(3) model, but considering several distributions for the inputs,
namely{\small \ }$\varepsilon _{t}\sim \mathcal{B}in\left( 5,\frac{\mu
_{\varepsilon }}{5}\right) $ (binomial) $\Phi _{1t}\sim \mathcal{P}\left(
\phi _{01}\right) $ (Poisson), $\Phi _{2t}\sim \mathcal{NB}\left( 3,\frac{3}{%
3+\phi _{02}}\right) $ (NB2), and $\Phi _{3t}\sim \mathcal{NB}\left( 2\phi
_{03},\frac{2}{3}\right) $ (NB1). As previously, two instances of parameters
are considered: the stationary model with finite mean $\sum_{i=1}^{3}\phi
_{0i}<1$ and the stationary model with infinite mean, $\sum_{i=1}^{3}\phi
_{0i}>1$. Regarding the $%
\mathbb{Z}
$-valued RMINAR model, $p=3$ is considered where all inputs are Skellam
distributed with two cases of stationary, namely stationary with finite
mean, corresponding to $\rho \left( \Phi \right) =0.5761<1$, and stationary
with infinite mean and $\rho \left( \Phi \right) =1$, where%
\begin{equation*}
\Phi :=E\left( \Phi _{t}\right) =\left( 
\begin{array}{ccc}
\phi _{01} & \phi _{02} & \phi _{03} \\ 
1 & 0 & 0 \\ 
0 & 1 & 0%
\end{array}%
\right) 
\end{equation*}%
and $\rho \left( \Phi \right) $ denoting  the spectral radius of $\Phi $ (cf.
Table \ref{table5.3}). Finally, for the multiplicative RMINAR(3) model, all inputs are Poisson
distributed with three cases for the parameters just like for the $%
\mathbb{N}
_{0}$-valued RMINAR.

The 4SWLSE in (\ref{4.3a})-(\ref{4.3d}) and (\ref{4.9a})-(\ref{4.9b}) are run on $1000$ simulated series generated from the mentioned RMINAR models with sample-sizes $1000$. For all
instances, the means, the standard deviations (StD), and
asymptotic standard errors (ASE) of $\widehat{\theta }_{2n}$ and $\widehat{%
\Lambda }_{2n}$ are obtained over the $1000$ replications (cf. Tables \ref{table5.1}-\ref{table5.3}). For
the third model, only $%
\widehat{\theta }_{2n}$  and $\widehat{\sigma }_{\varepsilon n}^{2}$ are obtained, since all inputs are Poisson distributed. The ASEs of all estimates are obtained from the sample estimates of the
asymptotic variances in (\ref{4.6a})-(\ref{4.6c}) and (\ref{4.11b}), where the expectations are
replaced by their sample mean estimates.%

\begin{table}[ht]
\begin{center}
\begin{tabular}[t]{lc}
\hline
$n=1000$ & (a) $\sum_{i=1}^{4}\phi _{0i}=0.7<1$ \\ \hline
$%
\begin{array}{c}
\widehat{\theta }_{2n} \\ 
\text{Mean} \\ 
\text{StD} \\ 
\text{ASE}%
\end{array}%
$ & $%
\begin{array}{c}
\mu _{\varepsilon }=2 \\ 
2.1249 \\ 
0.3399 \\ 
0.2540%
\end{array}%
\begin{array}{c}
\phi _{01}=0.3 \\ 
0.2913 \\ 
0.0562 \\ 
0.0526%
\end{array}%
\begin{array}{c}
\phi _{02}=0.2 \\ 
0.1911 \\ 
0.0514 \\ 
0.0504%
\end{array}%
\begin{array}{c}
\phi _{03}=0.1 \\ 
0.0946 \\ 
0.0475 \\ 
0.0449%
\end{array}%
\begin{array}{c}
\phi _{04}=0.1 \\ 
0.0929 \\ 
0.0462 \\ 
0.0436%
\end{array}%
$ \\ \hline
$%
\begin{array}{c}
\widehat{\Lambda }_{2n} \\ 
\text{Mean} \\ 
\text{StD} \\ 
\text{ASE}%
\end{array}%
$ & $%
\begin{array}{c}
\sigma _{\varepsilon }^{2}=2 \\ 
1.9826 \\ 
0.1321 \\ 
0.0318%
\end{array}%
\begin{array}{c}
\sigma _{\phi _{1}}^{2}=0.3 \\ 
0.2829 \\ 
0.0056 \\ 
0.0019%
\end{array}%
\begin{array}{c}
\sigma _{\phi _{2}}^{2}=0.2 \\ 
0.1863 \\ 
0.0078 \\ 
0.0022%
\end{array}%
\begin{array}{c}
\sigma _{\phi _{3}}^{2}=0.1 \\ 
0.0901 \\ 
0.0074 \\ 
0.0020%
\end{array}%
\begin{array}{c}
\sigma _{\phi _{4}}^{2}=0.1 \\ 
0.0891 \\ 
0.0076 \\ 
0.0021%
\end{array}%
$ \\ \hline
$n=1000$ & (b) $\sum_{i=1}^{4}\phi _{0i}=1$ \\ \hline
$%
\begin{array}{c}
\widehat{\theta }_{2n} \\ 
\text{Mean} \\ 
\text{StD} \\ 
\text{ASE}%
\end{array}%
$ & $%
\begin{array}{c}
\mu _{\varepsilon }=1 \\ 
1.1151 \\ 
0.2513 \\ 
0.1567%
\end{array}%
\begin{array}{c}
\phi _{01}=0.4 \\ 
0.3907 \\ 
0.0574 \\ 
0.0537%
\end{array}%
\begin{array}{c}
\phi _{02}=0.3 \\ 
0.2932 \\ 
0.0605 \\ 
0.0539%
\end{array}%
\begin{array}{c}
\phi _{03}=0.1 \\ 
0.0936 \\ 
0.0462 \\ 
0.0447%
\end{array}%
\begin{array}{c}
\phi _{04}=0.2 \\ 
0.1963 \\ 
0.0574 \\ 
0.0497%
\end{array}%
$ \\ \hline
$%
\begin{array}{c}
\widehat{\Lambda }_{2n} \\ 
\text{Mean} \\ 
\text{StD} \\ 
\text{ASE}%
\end{array}%
$ & $%
\begin{array}{c}
\sigma _{\varepsilon }^{2}=1 \\ 
0.9455 \\ 
0.0841 \\ 
0.0130%
\end{array}%
\begin{array}{c}
\sigma _{\phi _{1}}^{2}=0.4 \\ 
0.3769 \\ 
0.0083 \\ 
0.0026%
\end{array}%
\begin{array}{c}
\sigma _{\phi _{2}}^{2}=0.3 \\ 
0.2818 \\ 
0.0082 \\ 
0.0028%
\end{array}%
\begin{array}{c}
\sigma _{\phi _{3}}^{2}=0.1 \\ 
0.0900 \\ 
0.0076 \\ 
0.0026%
\end{array}%
\begin{array}{c}
\sigma _{\phi _{4}}^{2}=0.2 \\ 
0.1823 \\ 
0.0122 \\ 
0.0032%
\end{array}%
$ \\ \hline
$n=1000$ & (c) $\sum_{i=1}^{4}\phi _{0i}=1.2>1$ \\ \hline
$%
\begin{array}{c}
\widehat{\theta }_{2n} \\ 
\text{Mean} \\ 
\text{StD} \\ 
\text{ASE}%
\end{array}%
$ & $%
\begin{array}{c}
\mu _{\varepsilon }=0.1 \\ 
0.1028 \\ 
0.0190 \\ 
0.0188%
\end{array}%
\begin{array}{c}
\phi _{01}=0.5 \\ 
0.5087 \\ 
0.0662 \\ 
0.0635%
\end{array}%
\begin{array}{c}
\phi _{02}=0.2 \\ 
0.2073 \\ 
0.0540 \\ 
0.0573%
\end{array}%
\begin{array}{c}
\phi _{03}=0.3 \\ 
0.2892 \\ 
0.0671 \\ 
0.0639%
\end{array}%
\begin{array}{c}
\phi _{04}=0.2 \\ 
0.2008 \\ 
0.0592 \\ 
0.0583%
\end{array}%
$ \\  \hline
$%
\begin{array}{c}
\widehat{\Lambda }_{2n} \\ 
\text{Mean} \\ 
\text{StD } \\ 
\text{ASE}%
\end{array}%
$ & $%
\begin{array}{c}
\sigma _{\varepsilon }^{2}=0.1 \\ 
0.0551 \\ 
0.0141 \\ 
0.0045%
\end{array}%
\begin{array}{c}
\sigma _{\phi _{1}}^{2}=0.5 \\ 
0.5048 \\ 
0.0241 \\ 
0.0043%
\end{array}%
\begin{array}{c}
\sigma _{\phi _{2}}^{2}=0.2 \\ 
0.1956 \\ 
0.0295 \\ 
0.0046%
\end{array}%
\begin{array}{c}
\sigma _{\phi _{3}}^{2}=0.3 \\ 
0.3147 \\ 
0.0322 \\ 
0.0046%
\end{array}%
\begin{array}{c}
\sigma _{\phi _{4}}^{2}=0.2 \\ 
0.1703 \\ 
0.0430 \\ 
0.0054 
\end{array}%
 $
 \\ \hline
 \end{tabular}
\caption{Mean, StD and ASE of $\widehat{\theta }_{2n}$ and $\widehat{\Lambda }_{2n}$ for a Poisson RMINAR(4) under (a) strict stationarity with finite mean, (b)-(c) strict stationarity with infinite mean.}
\label{table5.1}
\end{center}
\end{table}
\begin{center}
 (Table \ref{table5.1} here)
\end{center}

\begin{table}[ht]
\begin{center}
 \begin{tabular}[t]{lc}
\hline
$n=1000$ & (a) $\sum_{i=1}^{3}\phi _{0i}=0.6<1$ \\ \hline
$%
\begin{array}{c}
\widehat{\theta }_{2n} \\ 
\text{Mean} \\ 
\text{StD} \\ 
\text{ASE}%
\end{array}%
$ & $%
\begin{array}{c}
\mu _{\varepsilon }=2 \\ 
2.0834 \\ 
0.2228 \\ 
0.2120%
\end{array}%
\begin{array}{c}
\phi _{01}=0.3 \\ 
0.2952 \\ 
0.0518 \\ 
0.0538%
\end{array}%
\begin{array}{c}
\phi _{02}=0.2 \\ 
0.1929 \\ 
0.0467 \\ 
0.0519%
\end{array}%
\begin{array}{c}
\phi _{03}=0.1 \\ 
0.0866 \\ 
0.0443 \\ 
0.0480%
\end{array}%
$ \\ \hline
$%
\begin{array}{c}
\widehat{\Lambda }_{2n} \\ 
\text{Mean} \\ 
\text{StD} \\ 
\text{ASE}%
\end{array}%
$ & $%
\begin{array}{c}
\sigma _{\varepsilon }^{2}=2 \\ 
2.0834 \\ 
0.2228 \\ 
0.0390%
\end{array}%
\begin{array}{c}
\sigma _{\phi _{1}}^{2}=0.3 \\ 
0.2952 \\ 
0.0518 \\ 
0.0024%
\end{array}%
\begin{array}{c}
\sigma _{\phi _{2}}^{2}=0.2 \\ 
0.1929 \\ 
0.0467 \\ 
0.0037%
\end{array}%
\begin{array}{c}
\sigma _{\phi _{3}}^{2}=0.1 \\ 
0.0866 \\ 
0.0443 \\ 
0.0038%
\end{array}%
$ \\ \hline
$n=1000$ & (b) $\sum_{i=1}^{3}\phi _{0i}=1.1>1$ \\ \hline
$%
\begin{array}{c}
\widehat{\theta }_{2n} \\ 
\text{Mean} \\ 
\text{StD} \\ 
\text{ASE}%
\end{array}%
$ & $%
\begin{array}{c}
\mu _{\varepsilon }=0.5 \\ 
0.5261 \\ 
0.0859 \\ 
0.0711%
\end{array}%
\begin{array}{c}
\phi _{01}=0.3 \\ 
0.2907 \\ 
0.0501 \\ 
0.0491%
\end{array}%
\begin{array}{c}
\phi _{02}=0.2 \\ 
0.2019 \\ 
0.0492 \\ 
0.0481%
\end{array}%
\begin{array}{c}
\phi _{03}=0.6 \\ 
0.6127 \\ 
0.0757 \\ 
0.0682%
\end{array}%
$ \\ \hline
$%
\begin{array}{c}
\widehat{\Lambda }_{2n} \\ 
\text{Mean} \\ 
\text{StD} \\ 
\text{ASE}%
\end{array}%
$ & $%
\begin{array}{c}
\sigma _{\varepsilon }^{2}=0.45 \\ 
0.4216 \\ 
0.0425 \\ 
0.0100%
\end{array}%
\begin{array}{c}
\sigma _{\phi _{1}}^{2}=0.3 \\ 
0.2975 \\ 
0.0113 \\ 
0.0026%
\end{array}%
\begin{array}{c}
\sigma _{\phi _{2}}^{2}=0.2133 \\ 
0.2075 \\ 
0.0137 \\ 
0.0038%
\end{array}%
\begin{array}{c}
\sigma _{\phi _{3}}^{2}=0.9 \\ 
0.8947 \\ 
0.0227 \\ 
0.0063%
\end{array}%
$%
 \\ \hline
\end{tabular}
\caption{Mean, StD and ASE of $\widehat{\theta }_{2n}$ and $\widehat{
\Lambda }_{2n}$ for a RMINAR(3) with $\varepsilon
_{t}\sim \mathcal{B}in\left( 5,\frac{\mu _{\varepsilon }}{5}\right), 
\Phi _{1t}\sim \mathcal{P}\left( \phi _{01}\right), \Phi _{2t}\sim 
\mathcal{NB}\left( 3,\frac{3}{3+\phi _{02}}\right)$ and $\Phi
_{3t}\sim \mathcal{NB}\left( 2\phi _{03},\frac{2}{3}\right)$ in both
finite and infinite mean cases.}
\label{table5.2}
\end{center}
\end{table}
\begin{center}
 (Table \ref{table5.2} here)
\end{center}

\begin{table}[ht]
\begin{center}
\begin{tabular}[t]{lc}
\hline
$n=1000$ & (a) $%
\begin{array}{c}
\mu _{1\varepsilon }=0.7\text{, }\mu _{2\varepsilon }=0.3\text{, }\phi
_{11}=0.1\text{, }\phi _{21}=0.3 \\ 
\phi _{12}=0.2\text{, }\phi _{22}=0.1\text{, }\phi _{13}=0.4\text{, }\phi
_{23}=0.2 \\ 
\rho \left( \Phi \right) =0.5761<1%
\end{array}%
$ \\ \hline
$%
\begin{array}{c}
\widehat{\theta }_{2n} \\ 
\text{Mean} \\ 
\text{StD} \\ 
\text{ASE}%
\end{array}%
$ & $%
\begin{array}{c}
\mu _{\varepsilon }=0.4 \\ 
0.4020 \\ 
0.0646 \\ 
0.0658%
\end{array}%
\begin{array}{c}
\phi _{01}=-0.2 \\ 
-0.1954 \\ 
0.0394 \\ 
0.0408%
\end{array}%
\begin{array}{c}
\phi _{02}=0.1 \\ 
0.0958 \\ 
0.0365 \\ 
0.0399%
\end{array}%
\begin{array}{c}
\phi _{03}=0.2 \\ 
0.1978 \\ 
0.0449 \\ 
0.0457%
\end{array}%
$ \\ \hline
$%
\begin{array}{c}
\widehat{\Lambda }_{2n} \\ 
\text{Mean} \\ 
\text{StD} \\ 
\text{ASE}%
\end{array}%
$ & $%
\begin{array}{c}
\sigma _{\varepsilon }^{2}=1 \\ 
0.9904 \\ 
0.0417 \\ 
0.0079%
\end{array}%
\begin{array}{c}
\sigma _{\phi _{1}}^{2}=0.4 \\ 
0.3969 \\ 
0.0116 \\ 
0.0036%
\end{array}%
\begin{array}{c}
\sigma _{\phi _{2}}^{2}=0.3 \\ 
0.2945 \\ 
0.0153 \\ 
0.0041%
\end{array}%
\begin{array}{c}
\sigma _{\phi _{3}}^{2}=0.6 \\ 
0.5980 \\ 
0.0117 \\ 
0.0038%
\end{array}%
$ \\ \hline
& (b) $%
\begin{array}{c}
\mu _{1\varepsilon }=0.3=\mu _{2\varepsilon },\text{ }\phi _{11}=0.5\text{, }%
\phi _{21}=0.1 \\ 
\phi _{12}=0.4\text{, }\phi _{22}=0.1\text{, }\phi _{13}=0.5\text{, }\phi
_{23}=0.2 \\ 
\rho \left( \Phi \right) =1%
\end{array}%
$ \\ \hline
$%
\begin{array}{c}
\widehat{\theta }_{2n} \\ 
\text{Mean} \\ 
\text{StD} \\ 
\text{ASE}%
\end{array}%
$ & $%
\begin{array}{c}
\mu _{\varepsilon }=0 \\ 
-0.0537 \\ 
0.1403 \\ 
0.0681%
\end{array}%
\begin{array}{c}
\phi _{01}=0.4 \\ 
0.3917 \\ 
0.0611 \\ 
0.0450%
\end{array}%
\begin{array}{c}
\phi _{02}=0.3 \\ 
0.2803 \\ 
0.0709 \\ 
0.0466%
\end{array}%
\begin{array}{c}
\phi _{03}=0.3 \\ 
0.2767 \\ 
0.0675 \\ 
0.0516%
\end{array}%
$ \\ \hline
$%
\begin{array}{c}
\widehat{\Lambda }_{2n} \\ 
\text{Mean} \\ 
\text{StD} \\ 
\text{ASE}%
\end{array}%
$ & $%
\begin{array}{c}
\sigma _{\varepsilon }^{2}=0.6 \\ 
0.5809 \\ 
0.0687 \\ 
0.0104%
\end{array}%
\begin{array}{c}
\sigma _{\phi _{1}}^{2}=0.6 \\ 
0.5900 \\ 
0.0183 \\ 
0.0055%
\end{array}%
\begin{array}{c}
\sigma _{\phi _{2}}^{2}=0.5 \\ 
0.4939 \\ 
0.0143 \\ 
0.0033%
\end{array}%
\begin{array}{c}
\sigma _{\phi _{3}}^{2}=0.7 \\ 
0.6780 \\ 
0.0297 \\ 
0.0050%
\end{array}%
$%
\\ \hline
\end{tabular}
\caption{Mean, StD and ASE of $\widehat{\theta }_{2n}$ and $\widehat{
\Lambda }_{2n}$ for a $\mathbb{Z}$-valued RMINAR(3)  with 
$\varepsilon _{t}\sim Skellam\left( \mu _{1\varepsilon },\mu _{2\varepsilon
}\right), \Phi _{it}\sim Skellam\left( \phi _{1i},\phi _{2i}\right),\mu _{1\varepsilon }=\mu _{1\varepsilon }-\mu _{2\varepsilon },  
\sigma _{\varepsilon }^{2}=\mu _{1\varepsilon }+\mu _{2\varepsilon },
\phi _{0i}=\phi _{1i}-\phi _{2i}$, and $\sigma _{\phi _{i}}^{2}=\phi
_{1i}+\phi _{2i}$.}
\label{table5.3}
\end{center}
\end{table}
\begin{center}
 (Table \ref{table5.3} here)
\end{center}


\begin{table}[ht]
\begin{center}
\begin{tabular}[t]{lc}
\hline
$n=1000$ & (a) $\sum_{i=1}^{4}\phi _{0i}=0.7<1$ \\ \hline
$%
\begin{array}{c}
(\widehat{\theta }_{2n},\widehat{\sigma }_{\varepsilon n}^{2}) \\ 
\text{Mean} \\ 
\text{StD} \\ 
\text{ASE}%
\end{array}%
$ & $%
\begin{array}{c}
\omega _{0}=1 \\ 
0.9855 \\ 
0.1521 \\ 
0.1553%
\end{array}%
\begin{array}{c}
\phi _{01}=0.4 \\ 
0.3981 \\ 
0.0659 \\ 
0.0683%
\end{array}%
\begin{array}{c}
\phi _{02}=0.3 \\ 
0.3039 \\ 
0.0620 \\ 
0.0638%
\end{array}%
\begin{array}{c}
\sigma _{\varepsilon }^{2}=1 \\ 
1.0042 \\ 
0.1222 \\ 
0.0287%
\end{array}%
$ \\ \hline
$n=1000$ & (b) $\sum_{i=1}^{4}\phi _{0i}=1$ \\ \hline
$%
\begin{array}{c}
(\widehat{\theta }_{2n},\widehat{\sigma }_{\varepsilon n}^{2}) \\ 
\text{Mean} \\ 
\text{StD} \\ 
\text{ASE}%
\end{array}%
$ & $%
\begin{array}{c}
\omega _{0}=1 \\ 
1.0168 \\ 
0.1609 \\ 
0.1669%
\end{array}%
\begin{array}{c}
\phi _{01}=0.5 \\ 
0.5008 \\ 
0.0692 \\ 
0.0738%
\end{array}%
\begin{array}{c}
\phi _{02}=0.5 \\ 
0.5031 \\ 
0.0814 \\ 
0.0768%
\end{array}%
\begin{array}{c}
\sigma _{\varepsilon }^{2}=1 \\ 
1.0064 \\ 
0.1207 \\ 
0.0273%
\end{array}%
$ \\ \hline
$n=1000$ & (c) $\sum_{i=1}^{4}\phi _{0i}=1.1>1$ \\ \hline
$%
\begin{array}{c}
(\widehat{\theta }_{2n},\widehat{\sigma }_{\varepsilon n}^{2}) \\ 
\text{Mean} \\ 
\text{StD} \\ 
\text{ASE}%
\end{array}%
$ & $%
\begin{array}{c}
\omega _{0}=1 \\ 
1.0075 \\ 
0.1579 \\ 
0.1684%
\end{array}%
\begin{array}{c}
\phi _{01}=0.5 \\ 
0.5013 \\ 
0.0820 \\ 
0.0727%
\end{array}%
\begin{array}{c}
\phi _{02}=0.6 \\ 
0.6069 \\ 
0.0817 \\ 
0.0826%
\end{array}%
\begin{array}{c}
\sigma _{\varepsilon }^{2}=1 \\ 
1.0472 \\ 
0.1458 \\ 
0.0304%
\end{array}%
$%
\\ \hline
\end{tabular}
\caption{Mean, StD and ASE of $\widehat{\theta }_{2n}$ and %
$\widehat{\sigma }_{\varepsilon n}^{2}$ for a multiplicative
\text{ RMINAR}(4) with Poisson inputs under (a) strict
stationarity with finite mean, and (b)-(c) strict stationarity with infinite mean.} 
\label{table5.4}
\end{center}
\end{table}
\begin{center}
 (Table \ref{table5.4} here)
\end{center}

From Tables \ref{table5.1}-\ref{table5.4} some general conclusions can be drawn. Firstly, the 4SWLE generates fairly good estimates with small bias and small standard errors for the three models. In particular, StDs and ASEs are quite close to each other. Secondly, the 4SWLE is convergent regardless of the placement of the parameters in the strict stationary domain, which is the entire parameter space. In particular, the estimates seem insensitive to the existence or not of the model mean. Other unreported simulations show that the results are consistent with the asymptotic theory in that the larger the sample size,
the more accurate the results. 

\subsection{Application to the Euro-Pound sterling exchange rate data}
The RMINAR model with either additive or multiplicative forms (\ref{3.1a} and \ref{3.17}) is applied to the number of ticks per minute
changes in the Euro to British Pound exchange rate (ExRate for short) on
December 12, from 9:00 $a.m.$ to 9:00 $p.m.$. The dataset is taken from Gorgi $%
(2020)$ who applied (heavy-tailed) observation-driven count models to
it. The ExRate series was also considered by Aknouche and Scotto (2024)
using the MthINGARCH$\left( 1,1\right) $ model. The dataset contains 720
observations and is highly over-dispersed with a sample mean and variance of 13.2153 and 224.2498, respectively (see Figure \ref{fig1}).%

\begin{figure}[h]
    \centering
\includegraphics[width=6.1in]{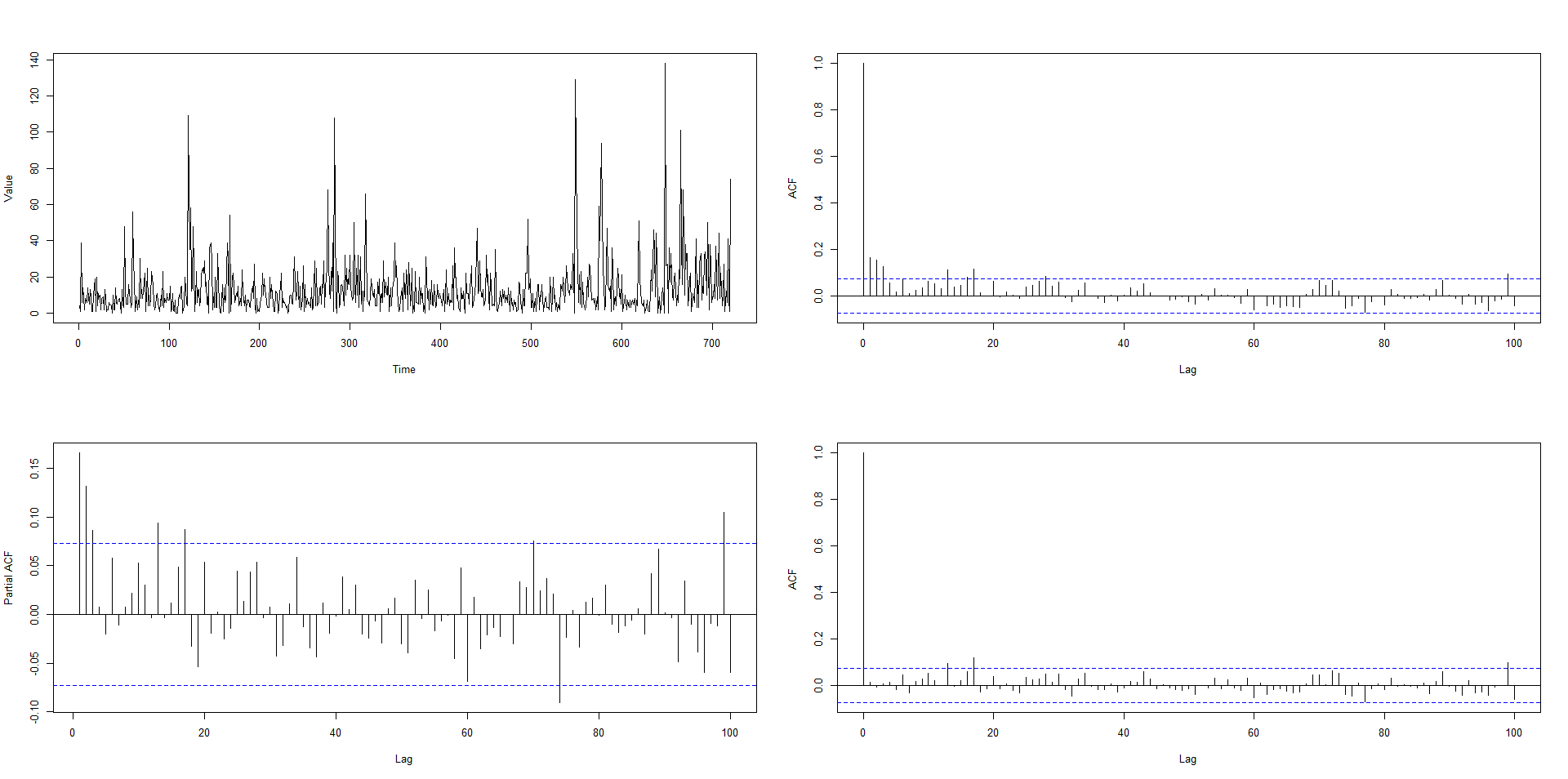}
\caption{(a) The ExRate series, (b) sample auto-correlation, (c) sample partial auto-correlation, (d) sample auto-correlation of the Pearson residuals of the additive RMINAR(3) model.}
\label{fig1}
    \end{figure}
\begin{center}
 (Figure \ref{fig1} here)
\end{center}

The first step in modeling the ExRate series consists of identifying the
order $p$ of the RMINAR$\left( p\right)$ models. Since
the likelihood of the unspecified RMINAR$\left( p\right) $ models is not
simple to obtain due to the model random coefficients, standard
information-based criteria (AIC, BIC, etc.) are out of reach. However, the
additive and multiplicative RMINAR$\left( p\right)$ models (\ref{3.1a}) and (\ref{3.17}), having an RCAR
form, have the same auto-covariance structure as an AR model since they are
both a weak AR$\left( p\right) $. Thus, the identification of $p$ can be
carried out by inspecting the shape of the sample partial auto-correlation
of the series (cf. Figure \ref{fig1}) and reinforced by other evocative criteria. For example, the generated
unconditional mean and variance (cf. (\ref{3.12}) and (\ref{3.13}),
respectively) by the additive RMINAR$\left( p\right) $ model could be
compared to the sample mean and sample variance of the series. Finally, the
mean absolute residual (MAR) defined as MAR$:=\tfrac{1}{n}%
\sum\limits_{t=1}^{n}\left\vert Y_{t}-\widehat{\mu }_{t}\right\vert $, the
mean square residual (MSR) MSR$:=\tfrac{1}{n}\sum\limits_{t=1}^{n}\left(
Y_{t}-\widehat{\mu }_{t}\right) ^{2}$, and the mean square Pearson residual,
MSPR$:=\tfrac{1}{n}\sum\limits_{t=1}^{n}\frac{\left( Y_{t}-\widehat{\mu }%
_{t}\right) ^{2}}{\widehat{V}_{t}}$ are also used. Note that the sample partial auto-correlation of the series (cf. Figure 5.1) shows a pronounced cut-off after the lag $p=3$, which suggests that $p=3$ is an adequate choice, but the values $p=1$ and $p=2$ are also possible. Table \ref{table5.5} displays the mean, variance, MAR, and MSPR of the estimated additive and multiplicative RMINAR$(p)$ models for $p=\in \{1,2\}$. 
For the additive RMINAR$\left(p\right) $ model, $p=3$ leads to the smallest MAR and MSR while the MSPR is comparable to those given
by $p\in \{1,2\}$.\ Additionally, $p=3$ provides the generated mean closest to the
sample mean which is not the case for the generated variance.
The multiplicative RMINAR$\left(3\right)$ model provides
the smallest MAR and MSR while the best MSPR is recorded for $p=1$, which
generates the closest mean. So the analysis proceeds with $p=3$, but following
the parsimony principle, $p\in\{1,2\}$ are also considered for comparison. The choice for optimal $p$ will be based on the out-of-sample forecast
ability of the RMINAR$\left(p\right)$ model in the set $p\in \left\{ 1,2,3\right\}$, as shown
below (see Table \ref{table11} in the Supplementary material).

\begin{table}[ht]
\begin{center}
\begin{tabular}{crccccc}
\hline\hline
A & $p$ & $\mathbb{E}\left( Y\right) $ & $\mathbb{V}\left( Y\right) $ & MAR
& MSR & MSPR \\ \hline
& $1$ & 13.2026 & 223.1584 & 9.3449 & 217.5623 & 1.0000 \\ \hline
& $2$ & 13.2793 & 254.0780 & 9.3048 & 213.7554 & 0.9996 \\ \hline
& $3$ & 13.2255 & 303.9467 & 9.2035 & 211.0007 & 1.0005 \\ \hline\hline
M & $p$ & $\mathbb{E}\left( Y\right) $ & - & MAR & MSR & MSPR \\ \hline
& $1$ & 12.0650 & - & 9.3569 & 217.5965 & 0.9958 \\ \hline
& $2$ & 11.8887 & - & 9.3381 & 214.0984 & 1.1203 \\ \hline
& $3$ & 11.6881 & - & 9.2736 & 212.7709 & 1.2195 \\ \hline\hline
\end{tabular}
\caption{Selecting the best order $p$, considering additive (A) or multiplicative (M) RMINAR($p$) models on the ExRate time series.}
\label{table5.5}
\end{center}
\end{table}
\begin{center}
 (Table \ref{table5.5} here)
\end{center}

The estimation of the RMINAR$\left(p\right)$ parameters is performed with the WLS procedure described in Section \ref{4SWLSE}, where the scheme in (\ref{2SLSE}) is directly applied for the additive RMINAR$\left( p\right)$
model. Regarding the
multiplicative RMINAR$\left( p\right) $ model, the procedure (\ref{4.1a}) is
used for simplicity and to inspect the impact of assuming a given
variance-to-mean relationship for the random parameters. In particular, the parameter variances are assumed proportional
to the parameter means,
i.e. $\left( \sigma _{\omega }^{2},\sigma _{\alpha _{1}}^{2},\dots,\sigma
_{\alpha _{p}}^{2}\right) =c\left( \omega ,\alpha _{1},\dots,\alpha
_{p}\right) $ where the constant of proportionality $c=0.135$ was estimated as to minimize the $\left\vert \mbox{MSPR-1}\right\vert$. For instance, unreported
results showed that both the Poisson assumption $\left( \sigma _{\omega
}^{2},\sigma _{\alpha _{1}}^{2},\dots,\sigma _{\alpha _{p}}^{2}\right) =\left(
\omega ,\alpha _{1},\dots,\alpha _{p}\right) $ and the geometric assumption $%
\left( \sigma _{\omega }^{2},\sigma _{\alpha _{1}}^{2},\dots,\sigma _{\alpha
_{p}}^{2}\right) =\left( \omega ,\alpha _{1},\dots,\alpha _{p}\right) \oplus
\left( \omega +1,\alpha _{1}+1,\dots,\alpha _{p}+1\right) $ give bad MSPR values.
Another reason for not applying directly the general nonlinear procedures %
(\ref{4.9a} ii) and (\ref{4.9b} ii) for estimating the variances of the
multiplicative RMINAR$\left( p\right) $ model is that the estimates were sensitive to the choice of the initial
parameter values in the optimization routine.

The initial values for the weights used for
the estimation procedure (\ref{2SLSE}) follow the general principle of starting the 4SWLSE with arbitrary weights. Therefore, the 4SWLE is
applied as many times as possible to arrive at very close successive
estimates. At each iteration, the estimated variance parameters are injected
as initial weights for the conditional variance in the next iteration. It
was concluded that all procedures stabilized at most at the sixth iteration.

Table \ref{table5.6} displays the parameter estimates and estimated asymptotic
standard errors for the RMINAR$\left( p\right) $ models
with $p\in\{1,2,3\}$. All parameters are significant and the corresponding models
are stationary in mean. In addition, all estimated additive RMINAR$%
\left( p\right) $ models are second-order stationary (SSC) as the SSC condition (\ref{3.9}) is verified. Note that the selected additive RMINAR$\left( 3\right) $ model has, in general, better in-sample properties than the MthINGARCH model (Aknouche and Scotto, 2024) since in the former case, the MAR and MSR are smaller while the MSPR is closer to 1. In addition, the RMINAR$\left(3\right)$ generates an unconditional mean that is closer to the sample mean of the series. However, the unconditional variance of the RMINAR$\left(3\right)$ is not the best out of the set of considered models, and the RMINAR$(1)$ is that providing a better variance than that of the MthINGARCH model.%

\begin{table}[ht]
\begin{center}
\begin{tabular}{lccl||ccccc}
\hline\hline
A & $p=1$ & $p=2$ & $p=3$ & M & $p=1$ & $p=2$ & $p=3$ \\ \hline
$\widehat{\mu }_{\varepsilon }$ & $\underset{\left( 0.2540\right) }{11.1727}$
& $\underset{\left( 0.3672\right) }{9.3941}$ & $\underset{\left(
0.4674\right) }{8.2255}$ & $\widehat{\mu }_{\varepsilon }$ & $1.0029$ & $%
1.0035$ & $1.0082$ \\ \hline
$\widehat{\phi }_{1}$ & $\underset{\left( 0.0298\right) }{0.1537}$ & $%
\underset{\left( 0.0343\right) }{0.1277}$ & $\underset{\left( 0.0382\right) }%
{0.1346}$ & $\widehat{\omega }$ & $\underset{\left( 0.8275\right) }{9.9797}$
& $\underset{\left( 0.8404\right) }{7.9889}$ & $\underset{\left(
0.8148\right) }{6.9170}$ \\ \hline
$\widehat{\phi }_{2}$ & - & $\underset{\left( 0.0356\right) }{0.1649}$ & $%
\underset{\left( 0.0393\right) }{0.1520}$ & $\widehat{\alpha }_{1}$ & $%
\underset{\left( 0.0457\right) }{0.1728}$ & $\underset{\left( 0.0495\right) }%
{0.1610}$ & $\underset{\left( 0.0544\right) }{0.1854}$ \\ \hline
$\widehat{\phi }_{3}$ & - & - & $\underset{\left( 0.0356\right) }{0.0914}$ & 
$\widehat{\alpha }_{2}$ & - & $\underset{\left( 0.0524\right) }{0.1670}$ & $%
\underset{\left( 0.0516\right) }{0.1441}$ \\ \hline
$\widehat{\sigma }_{\varepsilon }^{2}$ & $\underset{\left( 8.5459\right) }{%
212.4713}$ & $\underset{\left( 8.0192\right) }{157.1334}$ & $\underset{%
\left( 7.4415\right) }{112.2556}$ & $\widehat{\alpha }_{3}$ & - & - & $%
\underset{\left( 0.0557\right) }{0.0787}$ \\ \hline
$\widehat{\sigma }_{\phi _{1}}^{2}$ & $\underset{\left( 0.0099\right) }{%
0.0136}$ & $\underset{\left( 0.0084\right) }{0.0103}$ & $\underset{\left(
0.0108\right) }{0.0183}$ & $\widehat{\sigma }_{\varepsilon }^{2}$ & $%
\underset{\left( 0.0733\right) }{1.0143}$ & $\underset{\left( 0.0508\right) }%
{0.7118}$ & $\underset{\left( 0.0390\right) }{0.5585}$ \\ \hline
$\widehat{\sigma }_{\phi _{2}}^{2}$ & - & $\underset{\left( 0.0290\right) }{%
0.1854}$ & $\underset{\left( 0.0266\right) }{0.1519}$ & - & - & - & - \\ 
\hline
$\widehat{\sigma }_{\phi _{3}}^{2}$ & - & - & $\underset{\left(
0.0293\right) }{0.1882}$ & - & - & - & - \\ \hline
\text{SSC} & $0.0372$ & $0.3701$ & $0.5238$ & - & - & - & - \\ \hline\hline
\end{tabular}
\caption{4SWLSE estimates and asymptotic standard errors (in parenthesis) for the additive (A) or multiplicative (M) RMINAR$(p)$ models on the ExRate time series. The SSC stands for second-order stationary condition.}
\label{table5.6}
\end{center}
\end{table}
\begin{center}
 (Table \ref{table5.6} here)
\end{center}

Finally, the out-of-sample forecasting ability of the RMINAR$%
(3)$ models is examined and compared to that of the MthINGARCH(1,1) (Aknouche and Scotto, 2024). The setting $p=3$ is selected beause it yields the best out-of-sample forecast performance among the RMINAR models (see the Supplementary material). The RMINAR$\left( 3\right) $ models were estimated using the first $%
n_{c}$\ observations of the series ($1<n_{c}<n=720$). Then, the one-step ahead forecasts, $\widehat{\mu }_{t}$, over the period ($%
n_{c}+1,\dots,n$) were computed. The evaluation was based on i)\ the mean square forecast error given by MSFE $=\frac{1}{%
n-n_{c}}$\ $\sum\limits_{t=n_{c}+1}^{n}(Y_{t}-\widehat{\mu }_{t})^{2}$, ii)
the mean absolute forecast error, MAFE $=\frac{1}{n-n_{c}}%
\sum\limits_{t=n_{c}+1}^{n}\left\vert Y_{t}-\widehat{\mu }_{t}\right\vert $,
and iii) the mean squared Pearson forecast error, MSPFE $=\frac{1}{n-n_{c}}%
\sum\limits_{t=n_{c}+1}^{n}\frac{\left( Y_{t}-\widehat{\mu }_{t}\right) ^{2}%
}{\widehat{V}_{t}}$. The estimated conditional variance is given by $\widehat{V}_{t}:=\mathcal{Z}_{t-1}^{\prime }\widehat{\Lambda }%
_{n}$ for the additive model and by $\widehat{V}_{t}:=(\widehat{\sigma }%
_{n}^{2}+1)\mathcal{Z}_{t-1}^{\prime }\widehat{\Delta }_{n}+\widehat{\sigma }%
_{n}^{2}\widehat{\mu }_{t}^{2}$\ for the multiplicative model. Table \ref{table5.7} displays MSFE, MAFE, and MSPFE for the RMINAR models estimated from the series with sample size $n_{c}\in
\left\{ 300,420,520,680,700\right\} $. 
The corresponding results for the MthINGARCH(1,1) are directly taken from Aknouche and Scotto (2024). Note that the additive and multiplicative RMINAR$(p)$ models provide, in general, a comparable out-of-sample ability to the
MthINGARCH model. Specifically, for some values of $n_{c}$, especially those closer to the sample size $n=720$, the
RMINAR models provide better MSFE and MAFE. In contrast, the MthINGARCH slightly
outperforms the RMINARs for smaller $n_{c}$ values. This could be explained by the
fact that the MthINGACH model, having a moving average component, is more
persistent (and then has higher memory) than the RMINAR models. Note also that the multiplicative RMINAR provides the
worst MSPFE compared to the remaining models due to the assumed
variance-to-mean relationship.

\begin{table}[ht]
\begin{center}
\begin{tabular}{lc|ccccc}
\hline\hline
& $n_{c}$ & $300$ & $420$ & $520$ & $680$ & $700$ \\ \hline
A RMINAR(3) & $%
\begin{array}{c}
\text{MSFE} \\ 
\text{MAFE} \\ 
\text{MSPFE}%
\end{array}%
$ & \multicolumn{1}{|l}{$%
\begin{array}{c}
241.7308 \\ 
9.4727 \\ 
1.3535%
\end{array}%
$} & \multicolumn{1}{l}{$%
\begin{array}{c}
295.1503 \\ 
10.2364 \\ 
1.7079%
\end{array}%
$} & \multicolumn{1}{l}{$%
\begin{array}{c}
398.7492 \\ 
11.7017 \\ 
2.3473%
\end{array}%
$} & \multicolumn{1}{l}{$%
\begin{array}{c}
309.7075 \\ 
13.3228 \\ 
1.2327%
\end{array}%
$} & \multicolumn{1}{l}{$%
\begin{array}{c}
366.5074 \\ 
13.1845 \\ 
1.4226%
\end{array}%
$} \\ \hline
M RMINAR(3) & $%
\begin{array}{c}
\text{MSFE} \\ 
\text{MAFE} \\ 
\text{MSPFE}%
\end{array}%
$ & \multicolumn{1}{|l}{$%
\begin{array}{c}
241.1297 \\ 
9.4708 \\ 
1.6006%
\end{array}%
$} & \multicolumn{1}{l}{$%
\begin{array}{c}
288.3198 \\ 
10.0013 \\ 
1.8979%
\end{array}%
$} & \multicolumn{1}{l}{$%
\begin{array}{c}
396.5317 \\ 
11.6276 \\ 
2.3258%
\end{array}%
$} & \multicolumn{1}{l}{$%
\begin{array}{c}
313.0352 \\ 
13.3125 \\ 
2.1731%
\end{array}%
$} & \multicolumn{1}{l}{$%
\begin{array}{c}
369.2506 \\ 
13.1978 \\ 
2.1345%
\end{array}%
$} \\ \hline
MthINGARCH & $%
\begin{array}{c}
\text{MSFE} \\ 
\text{MAFE} \\ 
\text{MSPFE}%
\end{array}%
$ & \multicolumn{1}{|l}{$%
\begin{array}{c}
239.611 \\ 
9.4884 \\ 
1.1744%
\end{array}%
$} & \multicolumn{1}{l}{$%
\begin{array}{c}
291.263 \\ 
10.169 \\ 
1.5721%
\end{array}%
$} & \multicolumn{1}{l}{$%
\begin{array}{c}
395.830 \\ 
11.769 \\ 
2.2917%
\end{array}%
$} & \multicolumn{1}{l}{$%
\begin{array}{c}
305.882 \\ 
13.340 \\ 
1.2246%
\end{array}%
$} & \multicolumn{1}{l}{$%
\begin{array}{c}
363.530 \\ 
13.319 \\ 
1.6525%
\end{array}%
$} \\ \hline \hline
\end{tabular}
\caption{Out-of-sample forecasting ability of the additive (A)
RMINAR$\left(3\right)$, the multiplicative (M) RMINAR$\left(3\right)$, and the MthINGARCH$\left(1,1\right)$ on the ExRate time series. The $n_c$ stands for the number of observations used for parameter estimation. The MthINGARCH$\left(1,1\right)$ results are obtained from Aknouche and Scotto ($2024$).}
\label{table5.7}
\end{center}
\end{table}

\begin{center}
 (Table \ref{table5.7} here)
\end{center}

\subsection{Application to the daily (integer-valued) stock return series}
The second application concerns fitting the daily stock returns (Return) of
Bank of America from July 1, 2016 to September 28, 2018. The dataset with a
total of 566 observations is taken from Xu and Zhu $(2022)$ who divided the
returns by tick price to get a signed integer-valued series. The sample mean and variance of the series are 0.1184 and
978.8231, respectively, showing a strong overdispersion (see Figure \ref{fig2}). The authors proposed a $%
\mathbb{Z}
$-valued (GARCH-like) conditional volatility model for the series.
Such an approach could ignore the conditional mean effect and, in contrast, the RMINAR
model can represent both the conditional mean and conditional
volatility observed in the data. 

\begin{figure}[h]
    \centering
\includegraphics[width=6.1in]{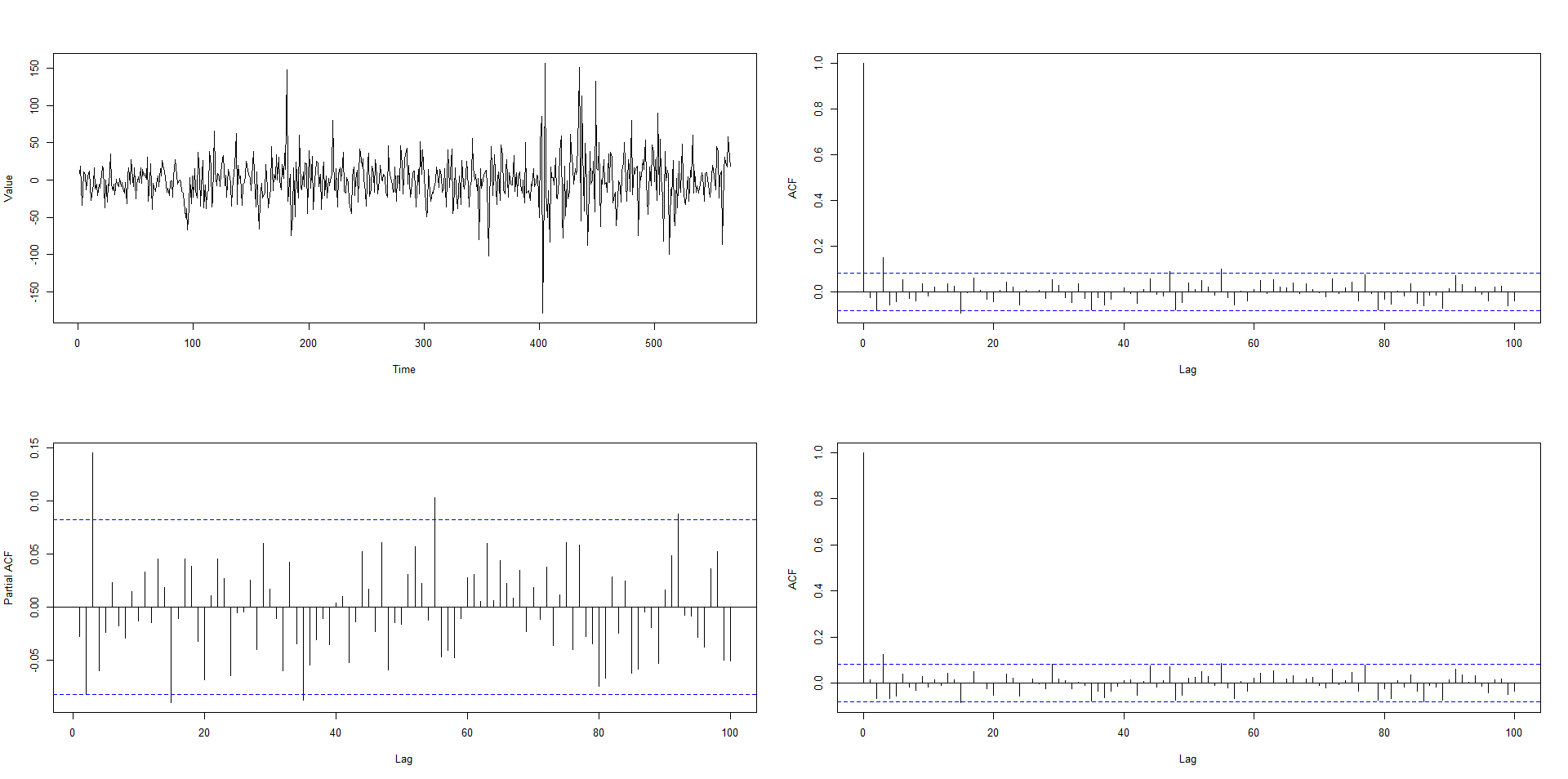}
\caption{(a) The Return series, (b) sample auto-correlation, (c) sample partial auto-correlation, (d) sample auto-correlation of the Pearson residuals of the $\mathbb{Z}$-valued RMINAR(1) model.}
\label{fig2}
    \end{figure}
\begin{center}
 (Figure \ref{fig2} here)
\end{center}

The sample partial auto-correlation of the series shows a small and
isolated peak at lag $3$ while the remaining auto-correlations are fairly small. Thus, as in the above ExRate example, the orders $p\in\{1,2,3\}$ were considered. Table \ref{table5.8} shows some measures to support the choice of the model order $p$. It
can be seen that $p=1$ gives the best mean and variance approximations,
while $p=3$ provides the smallest MAR and MSR and simultaneously the worst MSPR.

\begin{table}[ht]
\begin{center}
\begin{tabular}{cccccc}
\hline\hline
$p$ & $\mathbb{E}\left( Y\right) $ & $\mathbb{V}\left( Y\right) $ & MAR & MSR
& MSPR \\ \hline
$1$ & $0.1596$ & $999.885$ & $21.9768$ & $976.2304$ & $0.9982$ \\ \hline
$2$ & $-0.0584$ & $962.0471$ & $21.8832$ & $971.7091$ & $0.9961$ \\ \hline
$3$ & $0.0484$ & $904.8307$ & $21.6772$ & $951.4233$ & $1.0972$ \\ 
\hline\hline
\end{tabular}
\caption{Selecting the best order $p$ for the $\mathbb{Z}$-valued RMINAR$\left( p\right)$ model on the Return time series.} 
\label{table5.8}
\end{center}
\end{table}
\begin{center}
 (Table \ref{table5.8} here)
\end{center}
Parameter estimation is carried out using the algorithm (\ref{2SLSE}) with results reported in Table \ref{table5.9}. All estimated models are second-order stationary and all parameters are significant except $%
\sigma_{\phi _{3}}^{2}$ which is zero due to the artificial
non-negative least squares constraint used. Thus the random coefficient $%
\Phi _{t3}$ is degenerated at $\widehat{\phi}_{3}=0.1021$ and since $\Phi
_{t3}$ should be integer valued this implies it should be zero, which excludes
the order $p=3$. On the other hand, the retained orders $p=1$ and $p=2$
provide comparable in-sample performance (cf. Table \ref{table5.8}).

\begin{table}[ht]
\begin{center}
\begin{tabular}{llll}
\hline\hline
& $p=1$ & $p=2$ & $p=3$ \\ \hline
$\widehat{\mu }_{\varepsilon }$ & $\underset{\left( 0.0733\right) }{-0.03067}
$ & $\underset{\left( 0.0892\right) }{-0.0610}$ & $\underset{\left(
0.1161\right) }{0.0450}$ \\ \hline
$\widehat{\phi }_{1}$ & $\underset{\left( 0.0075\right) }{0.1645}$ & $%
\underset{\left( \ 0.0074\right) }{-0.0120}$ & $\underset{\left(
0.0091\right) }{-0.0092}$ \\ \hline
$\widehat{\phi }_{2}$ & - & $\underset{\left( 0.0097\right) }{-0.0322\text{ }%
}$ & $\underset{\left( 0.0127\right) }{-0.0229}$ \\ \hline
$\widehat{\phi }_{3}$ & - & - & $\underset{\left( 0.0092\right) }{0.1021}$
\\ \hline
$\widehat{\sigma }_{\varepsilon }^{2}$ & $\underset{\left( 41.9269\right) }{%
783.9099}$ & $\underset{\left( 41.9794\right) }{671.1159}$ & $\underset{%
\left( 38.0848\right) }{535.3662}$ \\ \hline
$\widehat{\sigma }_{\phi _{1}}^{2}$ & $\underset{\left( 0.0468\right) }{%
0.2151}$ & $\underset{\left( 0.0422\right) }{0.1781}$ & $\underset{\left(
0.0461\right) }{0.2484}$ \\ \hline
$\widehat{\sigma }_{\phi _{2}}^{2}$ & - & $\underset{\left( 0.0363\right) }{%
0.1231}$ & $\underset{\left( 0.0368\right) }{0.1488}$ \\ \hline
$\widehat{\sigma }_{\phi _{3}}^{2}$ & - & - & $\underset{\left(
0.0158\right) }{0.0000}$ \\ \hline
\text{SSC} & $0.1741$ & $0.2799$ & $0.4004$ \\ \hline\hline
\end{tabular}
\caption{4SWLSE results for the $\mathbb{Z}$-valued RMINAR$(p)$ model on the Return time series. Same caption as in Table \ref{table5.6}.}
\label{table5.9}
\end{center}
\end{table}
\begin{center}
 (Table \ref{table5.9} here)
\end{center}

The truncated series with $n_{c}\in \left\{
300,350,400,450,500\right\} $ are used to analyze the out-of-sample performance of the RMINAR$\left( p\right) $ model. Table \ref{table5.10} shows that the order $p=2$ provides the smallest MSFE and
MAFE, for the sizes $n_{c}\in \left\{
300,350,400,550\right\}$. However, the order $p=1$ gives better measures when $n_{c}\in \left\{
450,500\right\} $. Overall, there is no dominant order for all sample sizes $%
n_{c}$ and $p=1$ is be preferred following the parsimony principle.

\begin{table}[ht]
\begin{center}
\begin{tabular}{lc|cccccc}
\hline\hline
$p$ & $n_{c}$ & $300$ & $350$ & $400$ & $450$ & $500$ & $550$ \\ \hline
1 & $%
\begin{array}{c}
\text{MSFE} \\ 
\text{MAFE} \\ 
\text{MSPFE}%
\end{array}%
$ & \multicolumn{1}{|l}{$%
\begin{array}{c}
1404.3517 \\ 
26.0506 \\ 
2.2259%
\end{array}%
$} & \multicolumn{1}{l}{$%
\begin{array}{c}
1581.2839 \\ 
27.5554 \\ 
2.5298%
\end{array}%
$} & \multicolumn{1}{l}{$%
\begin{array}{c}
1854.6241 \\ 
30.1032 \\ 
2.9709%
\end{array}%
$} & \multicolumn{1}{l}{$%
\begin{array}{c}
1042.7839 \\ 
24.2280 \\ 
1.2397%
\end{array}%
$} & \multicolumn{1}{l}{$%
\begin{array}{c}
1106.3495 \\ 
24.5137 \\ 
1.2872%
\end{array}%
$} & $%
\begin{array}{c}
1203.23862 \\ 
27.60783 \\ 
1.37160%
\end{array}%
$ \\ \hline
2 & $%
\begin{array}{c}
\text{MSFE} \\ 
\text{MAFE} \\ 
\text{MSPFE}%
\end{array}%
$ & \multicolumn{1}{|l}{$%
\begin{array}{c}
1413.7965 \\ 
26.1138 \\ 
1.9150%
\end{array}%
$} & \multicolumn{1}{l}{$%
\begin{array}{c}
1582.6909 \\ 
27.5715 \\ 
2.5863%
\end{array}%
$} & \multicolumn{1}{l}{$%
\begin{array}{c}
1843.1118 \\ 
30.0951 \\ 
2.7693%
\end{array}%
$} & \multicolumn{1}{l}{$%
\begin{array}{c}
1060.6812 \\ 
24.6735 \\ 
1.3997%
\end{array}%
$} & \multicolumn{1}{l}{$%
\begin{array}{c}
1122.3612 \\ 
24.8328 \\ 
1.4153%
\end{array}%
$} & $%
\begin{array}{c}
1196.0980 \\ 
27.6674 \\ 
1.5050%
\end{array}%
$ \\ \hline \hline
\end{tabular}
\caption{Out-of-sample forecasting ability of the $\mathbb{Z}$-valued RMNINAR$\left( p\right)$ model on the Return time series.}
\label{table5.10}
\end{center}
\end{table}
\begin{center}
 (Table \ref{table5.10} here)
\end{center}

\section{Conclusion}\label{section6}

This paper proposed a random multiplication operator (RMO) to construct
integer-valued time series models in both $%
\mathbb{N}
$-valued and $%
\mathbb{Z}
$-valued cases. Compared to the random sum operator (RSO) which generally
requires specifying the full distribution (e.g.~binomial, geometric,
Poisson, etc.), the RMO is semi-parametric in the sense that it only specifies
the mean and variance. Additionally, the RMO allows larger over-dispersion and can be more heavy tailed but, more importantly, the variance does not
necessarily have to depend on the mean as is the case for most RSOs. This
allows more flexibility in modeling since the operator variance is estimated
in a separate procedure independently of the operator mean. Finally, the RMO
is simpler than the RSO because it consists of the usual
multiplication of the constant and variable operands.

This paper also shows how to build up simple and tractable RMO-based
integer-valued time series models that consist of auto-regressive type models with integer-valued random coefficients. Although
random coefficient time series models have been well-known for a long time,
their use was entirely focused on real-valued random coefficients. In
contrast, integer-valued random coefficients make the underlying model
universally (or everywhere) stationary and ergodic when generated from
stationary and ergodic inputs. An important consequence is that most
important M estimation methods such as quasi-maximum likelihood and weighted
least squares (WLS) are consistent and asymptotically Normal everywhere
for all parameter components. This makes the models more flexible than other
existing classes of integer-valued time series models mentioned in the
introduction.


The RMINAR framework can be extended in many perspectives. Firstly, the
persistence ability of the RMINAR models could be improved by proposing
INARMA models based on random multiplications with some moving average
dynamics. In particular, INGARCH-type models based on thinning (Aknouche and
Scotto, 2024) could be adapted while replacing the RSO with the RMO. Second,
multivariate RMINAR forms could be easily introduced and estimated using the
methods in Nicholls and Quinn $(1982)$ and Praskova and Vanecek $(2011)$.
These models for both $\mathbb{N}$-valued and $\mathbb{Z}$-valued series could constitute very straightforward alternatives to multivariate INGARCH models and multivariate INAR-type models.

\section*{Funding}

This work was partially funded by the Foundation for Science and Technology, FCT (\url{https://www.fct.pt/}), Portugal through national (MEC) and European structural (FEDER) funds, in the scope of the research projects IEETA/UA (UIDB/00127/2020, \url{www.ieeta.pt}) and CEMAT/IST/UL (UIDB/04621/2020, \url{http://cemat.ist.utl.pt}). 

\begin{appendices}
\section{Proof of Theorem  \ref{teo1}}
The proof of Theorem \ref{teo1} relies on Lemma \ref{lemmaA} and the proofs of \ref{teo2} and \ref{teo3} follow similarly.

\begin{lemma}
\label{lemmaA}
Consider the stochastic recurrence equation
\begin{equation}
Z_{t}=A_{t}Z_{t-1}+B_{t},\text{ \ }t\in \mathbb{Z},
\label{a1} \nonumber
\end{equation}%
driven by the iid sequence of pairs $\left\{ \left( A_{t}\mathbf{,%
}B_{t}\right) ,t\in \mathbb{Z} \right\}$ such that $A_{t}$ is the companion matrix defined in \eqref{eq: At} and $\left\{ a_{it},\text{ }t\in \mathbb{Z} \right\} (i=1,\dots,p)$ are iid and mutually
independent, satisfying
\begin{equation}
\mathbb{P}\left( \left( a_{1t},\dots,a_{pt}\right) ^{\prime }=\mathbf{0}_{p\times
1}\right) =\prod\limits_{i=1}^{p}\mathbb{P}\left( a_{it}=0\right) >0.  
\label{a3}
\end{equation}%
Then, the random series
\begin{equation}
\sum\limits_{j=0}^{\infty }\prod\limits_{i=0}^{j-1}A_{t-i}B_{t-j},  
\label{a4}
\end{equation}%
converges absolutely a.s.~for all $t\in \mathbb{Z}$.\\
\end{lemma}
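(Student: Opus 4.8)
The plan is to exploit condition \eqref{a3} to show that the matrix product $\prod_{i=0}^{j-1}A_{t-i}$ becomes \emph{exactly} the zero matrix for all sufficiently large $j$, almost surely, so that the series in \eqref{a4} reduces to a finite random sum and absolute convergence is immediate. First I would observe that whenever the first row of the companion matrix vanishes, i.e.\ $(a_{1t},\dots,a_{pt})' = \mathbf{0}_{p\times 1}$, the matrix $A_{t}$ coincides with the fixed nilpotent shift matrix
\begin{equation*}
N := \left(
\begin{array}{cc}
\left(0, \dots, 0\right) & 0 \\
\mathbf{I}_{p-1} & \mathbf{0}_{\left( p-1\right) \times 1}
\end{array}
\right),
\end{equation*}
which acts on the canonical basis by $N\mathbf{e}_j=\mathbf{e}_{j+1}$ for $1\le j\le p-1$ and $N\mathbf{e}_p=\mathbf{0}$, so that $N^{p}=\mathbf{0}$. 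By \eqref{a3} the event $\{A_{t}=N\}$ has the fixed positive probability $p_0:=\prod_{i=1}^{p}\mathbb{P}(a_{it}=0)>0$.

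Next I would partition the past into disjoint blocks of $p$ consecutive time indices and define, for $k\ge 0$, the event $F_k$ that $A_{s}=N$ for every $s$ in the block $\{\,t-(k+1)p+1,\dots,t-kp\,\}$. Because the pairs $(A_s,B_s)$ are iid and the blocks are disjoint, the events $\{F_k\}_{k\ge 0}$ are mutually independent with $\mathbb{P}(F_k)=p_0^{\,p}>0$ for every $k$. Hence $\mathbb{P}\big(\bigcap_{k}F_k^{c}\big)=\prod_{k}(1-p_0^{\,p})=0$, so almost surely at least one $F_k$ occurs (in fact infinitely many, by the second Borel--Cantelli lemma). On the event $F_k$ the contiguous sub-product $A_{t-kp}A_{t-kp-1}\cdots A_{t-(k+1)p+1}$ equals $N^{p}=\mathbf{0}$, and therefore for every $j\ge (k+1)p$ the full product $\prod_{i=0}^{j-1}A_{t-i}$ contains this zero block as a factor and vanishes.

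Consequently, almost surely only finitely many terms of \eqref{a4} are nonzero. Since each $B_{t-j}$ is almost surely finite, the sum of norms $\sum_{j\ge 0}\big\Vert \prod_{i=0}^{j-1}A_{t-i}\,B_{t-j}\big\Vert$ is a finite random variable, which gives absolute a.s.\ convergence for each fixed $t$; the argument is identical for every $t\in\mathbb{Z}$, and the reductions for Theorems \ref{teo2} and \ref{teo3} follow by taking $A_t$ and $B_t$ as the corresponding companion matrix and driving term.

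The step I expect to require the most care is the matrix-algebra observation that \eqref{a3} forces $A_{t}$ to equal a single nilpotent matrix $N$ with $N^{p}=\mathbf{0}$: this is what replaces the usual negative-Lyapunov-exponent contraction estimate, and it is precisely the reason the convergence holds \emph{everywhere} in the parameter space, with the product terminating in finite time rather than merely decaying geometrically. Once the nilpotency and the independence of the disjoint blocks are in place, the zero--one conclusion and the finiteness of the resulting sum are routine.
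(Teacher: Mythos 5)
Your proof is correct and follows essentially the same route as the paper's: both exploit \eqref{a3} to show that a product of $p$ consecutive companion matrices equals the zero matrix with positive probability, so that by independence across disjoint time blocks the product $\prod_{i=0}^{j-1}A_{t-i}$ vanishes for all sufficiently large $j$ almost surely and the series \eqref{a4} reduces to a finite random sum. Your write-up is in fact slightly more self-contained than the paper's terse argument, since you make the nilpotency $N^{p}=\mathbf{0}$, the disjoint-block independence, and the Borel--Cantelli step explicit, and your argument covers $p=1$ uniformly where the paper instead cites Vervaat ($1979$).
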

\begin{proof}{\textbf{(Lemma \ref{lemmaA})}} \\
For $p=1$, the result in \eqref{a4} directly follows from Vervaat ($1979$, Theorem 1.6 (c); Lemma 1.7). See also Brandt ($1986$, Theorem 1 and p.~215) for the case of stationary and ergodic non-necessarily independent
sequences. For $p>1$, by \eqref{a3}, the iid property of $\left\{ \left( A_{t}%
\mathbf{,}B_{t}\right) ,t\in \mathbb{Z} \right\} $, the mutual independence of $\left\{ a_{it},\text{ }t\in \mathbb{Z} \right\} \ (i=1,\dots,p)$, and the form \eqref{eq: At} of the companion matrix $A_{t}$, it follows that
\begin{equation*}
\mathbb{P}\left( A_{t}A_{t-1}\cdots A_{t-p+1}=\mathbf{0}_{p}\right)
=\prod\limits_{i=1}^{p}\mathbb{P}\left( a_{it}=0\right) >0.
\end{equation*}
Indeed, it can be easily seen that the matrix $A_{t}A_{t-1}\cdots A_{t-p+1}$
contains no element 1 and that all of its elements are formed by algebraic
operations of $\left\{ \left( a_{1k},\dots, a_{pk}\right) \text{, }%
k=t-p+1,\dots, t\right\} $ vectors that satisfy \eqref{a3}. Therefore,%
\[
\#\left\{ j\in \mathbb{N}_{0}:\prod\limits_{i=0}^{j-1}A_{t-i}B_{t-j}=0\right\} =\infty \text{ a.s.,}
\]%
where $\#A\in \mathbb{N}
_{0}\cup \left\{ \infty \right\} $ is the number of elements of the
set $A$. Thus, the series in \eqref{a4} has only finitely many positive terms a.s., which implies its a.s.~absolute convergence.
\end{proof}\\

\begin{proof}{\textbf{(Theorem \ref{teo1})}}\\
The fact that%
\[
\sum\limits_{j=0}^{\infty }\prod\limits_{i=0}^{j-1}A_{t-i}B_{t-j}<\infty \text{ a.s.,}
\]%
immediately follows from Lemma \ref{lemmaA}. By definition, $\mathbf{Y}_{t}:=\sum%
\limits_{j=0}^{\infty }\prod\limits_{i=0}^{j-1}A_{t-i}B_{t-j}
$, $t\in \mathbb{Z}$ and then $\left\{ \mathbf{Y}_{t}, \mbox{ } t\in \mathbb{Z} \right\}$ is a solution of the RMINAR model in the vector form \eqref{3.4}. Such a solution is unique and causal
(i.e.~future independent). Moreover, it is strictly stationary and ergodic
in view of the iid property of the random inputs of the RMINAR model \eqref{3.1a}.
\end{proof}\\






\section{Proof of Theorems \ref{teo41}, \ref{teorema42} and \ref{teo43}}
The proof of Theorem \ref{teo41} is first presented by introducing some lemmas (\ref{a.2}-\ref{a.6}) that support the results in the theorem. Then, the proof of Theorem \ref{teorema42} is omitted as it is quite similar to that of Theorem \ref{teo41}. The proofs of some results conveyed in Theorem \ref{teo43} are also omitted as they can be derived similarly as in the proof of Theorem \ref{teo41}.

\begin{lemma}
Consider the RMINAR model \eqref{3.1a} under \textbf{A0}.  For all positive real $\left( a_{i}\right) {(i=0,\dots ,p)}$ and all integers $r,s,k$ such that $r+s\leq 2k$, it holds that
\begin{equation}
\tfrac{1}{n}\sum\limits_{t=1}^{n}\tfrac{Y_{t-i}^{r}Y_{t-j}^{s}}{\left(
a_{0}+a_{1}Y_{t-1}^{2}+\cdots+a_{p}Y_{t-p}^{2}\right) ^{k}}\overset{a.s.}{%
\underset{n\rightarrow \infty }{\rightarrow }}\mathbb{E}\left( \tfrac{%
Y_{t-i}^{r}Y_{t-j}^{s}}{\left(
a_{0}+a_{1}Y_{t-1}^{2}+\cdots+a_{p}Y_{t-p}^{2}\right) ^{k}}\right) \text{, }%
i,j=1,\dots,p\text{.}  \label{A.4}
\end{equation}
\label{a.2}
\end{lemma}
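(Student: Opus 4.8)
The plan is to read the left-hand side as an ergodic average and invoke the strong law of large numbers for stationary ergodic sequences, so that the only genuine work reduces to an integrability check on the summand. First I would recall that, by Theorem \ref{teo1}, the solution $\{Y_t,\,t\in\mathbb{Z}\}$ of \eqref{3.1a} is strictly stationary and ergodic under \textbf{A0}. Writing
\[
g_t := \frac{Y_{t-i}^{r}Y_{t-j}^{s}}{\left(a_0+a_1Y_{t-1}^2+\cdots+a_pY_{t-p}^2\right)^{k}},
\]
I observe that $g_t$ is a fixed measurable function of the finite block $(Y_{t-1},\dots,Y_{t-p})$, so the sequence $\{g_t\}$ inherits strict stationarity and ergodicity from $\{Y_t\}$. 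Birkhoff's ergodic theorem then yields $\tfrac{1}{n}\sum_{t=1}^{n}g_t\to\mathbb{E}(g_t)$ almost surely, provided $\mathbb{E}|g_t|<\infty$.

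The crux is therefore to show that $g_t$ is in fact bounded by a deterministic constant, which is exactly where the hypotheses $a_\ell>0$ and $r+s\leq 2k$ enter. Since $Y_t\geq 0$ and every $a_\ell$ is positive, I may discard from the denominator all terms except those indexed by $i$ and $j$, keeping the lower bound $a_0+a_iY_{t-i}^2+a_jY_{t-j}^2$ (and $a_0+a_iY_{t-i}^2$ in the degenerate case $i=j$, where the numerator becomes $Y_{t-i}^{r+s}$). Setting $u=Y_{t-i}^2$ and $v=Y_{t-j}^2$, it suffices to bound $u^{r/2}v^{s/2}/(a_0+a_iu+a_jv)^{k}$. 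Applying the weighted arithmetic--geometric mean inequality with weights $\lambda_i=r/(2k)$, $\lambda_j=s/(2k)$ and $\lambda_0=1-(r+s)/(2k)\geq 0$ --- the last nonnegativity being precisely the condition $r+s\leq 2k$ --- produces a lower bound of the form $(a_0+a_iu+a_jv)^{k}\geq C\,u^{r/2}v^{s/2}$ for a constant $C>0$ depending only on $a_0,a_i,a_j,r,s,k$. Consequently $0\leq g_t\leq C^{-1}$ almost surely, whence $\mathbb{E}(g_t)<\infty$ and the ergodic theorem applies.

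I expect the uniform boundedness step to be the main (indeed the only) obstacle; once the weighted AM--GM estimate is in place, the convergence is immediate from stationarity and ergodicity, and the case $i=j$ is handled in parallel using the single weight $\lambda_i=(r+s)/(2k)\leq 1$. I would emphasise that no moment assumption on $Y_t$ is required, which is exactly why the statement holds everywhere in the parameter space, including the infinite-mean regime permitted by Theorem \ref{teo1}.
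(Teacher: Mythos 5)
Your proof is correct and follows essentially the same route as the paper: strict stationarity and ergodicity of $\{Y_t\}$ from Theorem \ref{teo1}, a.s.\ boundedness of the summand, and the ergodic theorem. The only difference is that the paper merely asserts the boundedness, whereas you actually justify it via the weighted AM--GM inequality with weights $r/(2k)$, $s/(2k)$, $1-(r+s)/(2k)$ --- a welcome clarification, since this is exactly where the hypotheses $a_\ell>0$ and $r+s\leq 2k$ are used.
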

\begin{proof}{\textbf{(Lemma \ref{a.2})}}

The result in (\ref{A.4}) follows from the stationar and ergodicity
of $\left\{ Y_{t},t\in 
\mathbb{Z}
\right\} $ under \textbf{A0}, the a.s. boundedness of $\tfrac{%
Y_{t-i}^{r}Y_{t-j}^{s}}{\left(
a_{0}+a_{1}Y_{t-1}^{2}+\cdots +a_{p}Y_{t-p}^{2}\right) ^{k}}$\ which entails $%
\mathbb{E}\left( \tfrac{Y_{t-i}^{r}Y_{t-j}^{s}}{\left(
a_{0}+a_{1}Y_{t-1}^{2}+\cdots+a_{p}Y_{t-p}^{2}\right) ^{k}}\right) <\infty $ and the ergodic theorem. 
\end{proof}
\begin{lemma}
Consider the RMINAR model \eqref{3.1a} under \textbf{A0}.  Then, for all $\Lambda _{\ast }>\mathbf{0}_{p+q+1}$,
\begin{eqnarray}
\tfrac{1}{n}\sum_{t=1}^{n}\mathcal{Z}_{t-1}\left( \tfrac{\left( Y_{t}-%
\mathcal{Y}_{t-1}^{\prime }\widehat{\theta }_{1n}\right) ^{2}}{\left( 
\mathcal{%
Z}_{t-1}^{\prime }\Lambda _{\ast }\right) ^{2}}-\tfrac{\left( Y_{t}-\mathcal{%
Y}_{t-1}^{\prime }\theta _{0}\right) ^{2}}{\left( 
\mathcal{Z}_{t-1}^{\prime }\Lambda
_{\ast }\right) ^{2}}\right)  &=&o_{a.s.}\left( 1\right),   \label{A.5} \\
\tfrac{1}{\sqrt{n}}\sum_{t=1}^{n}\mathcal{Z}_{t-1}\left( \tfrac{\left( Y_{t}-%
\mathcal{Y}_{t-1}^{\prime }\widehat{\theta }_{1n}\right) ^{2}}{\left(
\mathcal{%
Z}_{t-1}^{\prime }\Lambda _{\ast }\right) ^{2}}-\tfrac{\left( Y_{t}-\mathcal{%
Y}_{t-1}^{\prime }\theta _{0}\right) ^{2}}{\left( 
\mathcal{Z}_{t-1}^{\prime }\Lambda
_{\ast }\right) ^{2}}\right)  &=&o_{p}\left( 1\right),  \label{A.6} 
\end{eqnarray}
where $o_{a.s.}\left( 1\right) $ and $o_{p}\left(
1\right) $ converge to zero, respectively, as $n\rightarrow \infty $ a.s.~and in probability.
\label{a.3}
\end{lemma}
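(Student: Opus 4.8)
The plan is to reduce both displays to algebraic identities in the estimation error $\delta_n := \widehat{\theta}_{1n}-\theta_0$ and then control the resulting sample averages with the ergodic theorem. Writing $e_t := Y_t - \mathcal{Y}_{t-1}'\theta_0$ for the martingale-difference innovation of the conditional-mean regression, the elementary identity
\[
\left(Y_t-\mathcal{Y}_{t-1}'\widehat{\theta}_{1n}\right)^2-\left(Y_t-\mathcal{Y}_{t-1}'\theta_0\right)^2 = -2e_t\left(\mathcal{Y}_{t-1}'\delta_n\right)+\left(\mathcal{Y}_{t-1}'\delta_n\right)^2
\]
splits the common summand of \eqref{A.5}--\eqref{A.6} into a term linear in $\delta_n$ and a term quadratic in $\delta_n$. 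Substituting this and factoring $\delta_n$ out of the sums, the left-hand side of each display becomes $-2M_n\delta_n + Q_n[\delta_n]$, where $M_n := \tfrac1n\sum_{t=1}^n \tfrac{\mathcal{Z}_{t-1}e_t\mathcal{Y}_{t-1}'}{(\mathcal{Z}_{t-1}'\Lambda_*)^2}$ is a matrix and $Q_n[\delta_n]$ is the vector whose $\ell$-th entry is $\delta_n'\,R_{n,\ell}\,\delta_n$ with $R_{n,\ell}:=\tfrac1n\sum_{t=1}^n\tfrac{(\mathcal{Z}_{t-1})_\ell\,\mathcal{Y}_{t-1}\mathcal{Y}_{t-1}'}{(\mathcal{Z}_{t-1}'\Lambda_*)^2}$ (for \eqref{A.6} each term carries an extra factor $\sqrt n$).

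The matrices $R_{n,\ell}$ converge a.s.\ to finite limits: since the first entry of $\mathcal{Z}_{t-1}$ is $1$ and $\Lambda_*>0$, the denominator is bounded below and dominates the numerator, so each ratio is a.s.\ bounded and Lemma \ref{a.2} (equivalently the ergodic theorem) applies. For $M_n$ I would first substitute $e_t=\sum_{i=1}^p(\Phi_{it}-\phi_i)Y_{t-i}+(\varepsilon_t-\mu_\varepsilon)$; each resulting summand is a bounded, $\mathcal{F}_{t-1}^{Y}$-measurable ratio of the type in Lemma \ref{a.2} multiplied by one of the centred factors $\Phi_{it}-\phi_i$ or $\varepsilon_t-\mu_\varepsilon$, which are independent of $\mathcal{F}_{t-1}^{Y}$. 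Hence the summand is integrable and stationary ergodic, so $M_n$ converges a.s.\ to its expectation, which vanishes because $\mathbb{E}(e_t\mid\mathcal{F}_{t-1}^{Y})=0$; that is, $M_n\overset{a.s.}{\to}\mathbf 0$.

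For \eqref{A.5} the conclusion is then immediate: by the first-stage consistency $\delta_n\overset{a.s.}{\to}\mathbf 0$ (established for $\widehat{\theta}_{1n}$ from its closed form \eqref{4.3a} prior to, and independently of, this lemma), $-2M_n\delta_n\to\mathbf 0$ a.s., while the quadratic part satisfies $|Q_n[\delta_n]|\le\|\delta_n\|^2\max_\ell\|R_{n,\ell}\|=\|\delta_n\|^2\,O_{a.s.}(1)=o_{a.s.}(1)$. For \eqref{A.6} the extra $\sqrt n$ turns the linear part into $-2M_n(\sqrt n\,\delta_n)$, and here lies the main obstacle: since $\sqrt n\,\delta_n=O_p(1)$ only, a crude bound would leave this $O_p(1)$. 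It is exactly the martingale-difference cancellation $M_n\overset{a.s.}{\to}\mathbf 0$ that rescues the argument, giving $M_n(\sqrt n\,\delta_n)=o_p(1)\,O_p(1)=o_p(1)$; the quadratic part equals, componentwise, $(\sqrt n\,\delta_n)'R_{n,\ell}\delta_n=O_p(1)\,O_{a.s.}(1)\,o_p(1)=o_p(1)$, using $\sqrt n\,\delta_n=O_p(1)$ together with $\delta_n=o_p(1)$. Both parts being $o_p(1)$ yields \eqref{A.6}.
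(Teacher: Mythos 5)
Your proof is correct, and it reaches the lemma's conclusion by a cleaner mechanism than the paper's own argument. The paper proves Lemma \ref{a.3} via a first-order (mean-value) Taylor expansion of $g(\theta)=\left(Y_{t}-\mathcal{Y}_{t-1}^{\prime }\theta\right)^{2}$ around $\theta_{0}$, which leaves a random, data-dependent intermediate point $\theta_{u}$ between $\widehat{\theta }_{1n}$ and $\theta_{0}$ inside the sums; the resulting averages are then handled by the ergodic theorem together with the strong consistency of $\widehat{\theta }_{1n}$ and, for \eqref{A.6}, exactly the same device you use, namely $\widehat{\theta }_{1n}-\theta _{0}=\tfrac{1}{\sqrt{n}}O_{p}\left(1\right)$. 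You instead exploit the fact that $g$ is exactly quadratic, so your decomposition $-2e_{t}\left(\mathcal{Y}_{t-1}^{\prime }\delta_{n}\right)+\left(\mathcal{Y}_{t-1}^{\prime }\delta_{n}\right)^{2}$ is an identity with no remainder and no random expansion point; this lets you apply the ergodic theorem to genuinely stationary summands and makes the key cancellation explicit: after substituting $e_{t}=\sum_{i=1}^{p}\left(\Phi_{it}-\phi_{i}\right)Y_{t-i}+\left(\varepsilon_{t}-\mu_{\varepsilon}\right)$, the matrix $M_{n}$ averages bounded, $\mathcal{F}_{t-1}^{Y}$-measurable ratios times centred inputs independent of $\mathcal{F}_{t-1}^{Y}$, so $M_{n}\rightarrow 0$ a.s.\ because $\mathbb{E}\left(e_{t}\mid\mathcal{F}_{t-1}^{Y}\right)=0$. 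That cancellation is precisely what saves the $\sqrt{n}$-scaled display \eqref{A.6}, where $\sqrt{n}\,\delta_{n}$ is only $O_{p}\left(1\right)$ — the same point the paper handles only implicitly through the vanishing limit of its mean-value coefficient. Your dependency structure is also sound and matches the paper's: the strong consistency of $\widehat{\theta }_{1n}$ is established from the closed form \eqref{4.3a} and Lemma \ref{a.2} before this lemma, and the $\sqrt{n}$-tightness comes from the CLT lemma rather than from Lemma \ref{a.3} itself, so there is no circularity. What your route buys is rigor at the one soft spot of the paper's proof (applying the ergodic theorem to sums that still contain the random $\theta_{u}$); what it costs is generality, since the exact-quadratic identity is special to this linear-in-parameters conditional mean — the paper's Taylor device is reused essentially verbatim for the non-quadratic weight functions in Lemmas \ref{a.4} and \ref{a.5}, where your exact decomposition would not be available.
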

\begin{proof}{\textbf{(Lemma \ref{a.3})}}

i) A Taylor expansion of the function $g\left( \widehat{\theta }_{1n}\right) =\left( Y_{t}-\mathcal{Y}_{t-1}^{\prime }%
\widehat{\theta }_{1n}\right) ^{2}$ around $\theta _{0}$\ gives    
\begin{eqnarray}
&&\tfrac{1}{n}\sum_{t=1}^{n}\mathcal{Z}_{t-1}\tfrac{\left( Y_{t}-\mathcal{Y}%
_{t-1}^{\prime }\widehat{\theta }_{1n}\right) ^{2}}{\left(
Z_{t-1}^{\prime }\Lambda _{\ast }\right) ^{2}}=\tfrac{1}{n}\sum_{t=1}^{n}%
\tfrac{\left( Y_{t}-\mathcal{Y}_{t-1}^{\prime }\theta _{0}\right) ^{2}}{%
\left( 
\mathcal{Z}_{t-1}^{\prime }\Lambda _{\ast }\right) ^{2}} +\tfrac{1}{n}\sum_{t=1}^{n}\tfrac{2\left( Y_{t}-\mathcal{Y}_{t-1}^{\prime
}\theta _{u}\right)
}{\left( 
\mathcal{Z}_{t-1}^{\prime }\Lambda
_{\ast }\right) ^{2}}\mathcal{Y}_{t-1}^{\prime }\left( \widehat{\theta }%
_{1n}-\theta _{0}\right) .  \label{A.8}
\end{eqnarray}
Therefore, the result (\ref{A.5}) follows from the a.s. boundedness of 
\begin{equation*}
\tfrac{2\left(
Y_{t}-\mathcal{Y}_{t-1}^{\prime }\theta _{u}\right) 
}{\left( 
%
\mathcal{Z}_{t-1}^{\prime }\Lambda _{\ast }\right) ^{2}}\mathcal{Y}%
_{t-1}^{\prime },
\end{equation*}
which entails the finiteness of 
$\mathbb{E}\left( \tfrac{2\left(
Y_{t}-\mathcal{Y}_{t-1}^{\prime }\theta _{u}\right) 
}{\left( 
\mathcal{Z}_{t-1}^{\prime }\Lambda _{\ast }\right) ^{2}}\mathcal{Y}%
_{t-1}^{\prime }\right)
$, the ergodic theorem and the strong consistency of $\widehat{\theta }_{1n}$.

ii) One can write $\widehat{\theta }_{1n}-\theta _{0}=\tfrac{1%
}{\sqrt{n}}O_{p}\left( 1\right) $, where $O_{p}\left( 1\right) $ denotes a
term bounded in probability. Hence (\ref{A.8}) becomes
\begin{eqnarray*}
&&\left. \tfrac{1}{\sqrt{n}}\sum_{t=1}^{n}\mathcal{Z}_{t-1}\tfrac{\left(
Y_{t}-\mathcal{Y}_{t-1}^{\prime }\widehat{\theta }_{1n}\right) ^{2}}{\left(
\mathcal{Z}_{t-1}^{\prime }\Lambda _{\ast }\right) ^{2}}=\tfrac{1}{\sqrt{n}}%
\sum_{t=1}^{n}\tfrac{\left( Y_{t}-\mathcal{Y}_{t-1}^{\prime }\theta
_{0}\right) ^{2}}{\left( 
\mathcal{Z}_{t-1}^{\prime }\Lambda _{\ast }\right) ^{2}}%
\right.+\tfrac{1}{n}\sum_{t=1}^{n}\tfrac{%
2\left( Y_{t}-\mathcal{Y}_{t-1}^{\prime }\theta _{u}\right) 
}{\left( 
\mathcal{Z}_{t-1}^{\prime }\Lambda _{\ast }\right) ^{2}}%
\mathcal{Y}_{t-1}^{\prime }O_{p}\left( 1\right),
\end{eqnarray*}
so the result (\ref{A.6}) follows from the ergodic theorem, the consistency of $\widehat{%
\theta }_{1n}$ and the fact that $\theta _{u}$ is between $\widehat{\theta }%
_{1n}$ and $\theta _{0}$.
\end{proof}

\begin{lemma}
Consider the RMINAR model \eqref{3.1a} under \textbf{A0}. Then,
\begin{eqnarray}
\tfrac{1}{n}\sum_{t=1}^{n}\mathcal{Y}_{t-1}Y_{t}\left( \tfrac{1}{\left(
\mathcal{Z}_{t-1}^{\prime }\widehat{\Lambda }_{1n}\right) ^{2}}-\tfrac{1}{%
\left( 
\mathcal{Z}_{t-1}^{\prime }\Lambda _{0}\right) ^{2}}\right) 
&=&o_{a.s.}\left( 1\right),   \label{A.9} \\
\tfrac{1}{\sqrt{n}}\sum_{t=1}^{n}\mathcal{Y}_{t-1}Y_{t}\left( \tfrac{1}{%
\left( 
\mathcal{Z}_{t-1}^{\prime }\widehat{\Lambda }_{1n}\right) ^{2}}-\tfrac{1%
}{\left( 
\mathcal{Z}_{t-1}^{\prime }\Lambda _{0}\right) ^{2}}\right)  &=&o_{p}\left(
1\right),   \label{A.10} \\
\tfrac{1}{n}\sum_{t=1}^{n}\left( \tfrac{1}{\left( 
\mathcal{Z}_{t-1}^{\prime
}\widehat{\Lambda }_{1n}\right) ^{2}}-\tfrac{1}{\left( 
\mathcal{Z}_{t-1}^{\prime }\Lambda
_{0}\right) ^{2}}\right) \mathcal{Y}_{t-1}\mathcal{Y}_{t-1}^{\prime }
&=&o_{a.s.}\left( 1\right).   \label{A.11}
\end{eqnarray}
\label{a.4}
\end{lemma}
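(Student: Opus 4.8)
The plan is to mimic the argument of Lemma~\ref{a.3}, replacing the expansion in $\theta$ by an expansion in $\Lambda$. Writing $f_{t}(\Lambda):=(\mathcal{Z}_{t-1}^{\prime}\Lambda)^{-2}$, one has $\partial f_{t}/\partial\Lambda=-2\,\mathcal{Z}_{t-1}(\mathcal{Z}_{t-1}^{\prime}\Lambda)^{-3}$, so the multivariate mean value theorem produces, for each $t$, a point $\Lambda_{\xi,t}$ on the segment joining $\widehat{\Lambda}_{1n}$ and $\Lambda_{0}$ such that $f_{t}(\widehat{\Lambda}_{1n})-f_{t}(\Lambda_{0})=-2\,(\mathcal{Z}_{t-1}^{\prime}\Lambda_{\xi,t})^{-3}\,\mathcal{Z}_{t-1}^{\prime}(\widehat{\Lambda}_{1n}-\Lambda_{0})$. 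Inserting this identity into \eqref{A.9}--\eqref{A.11} pulls the (for fixed $n$, deterministic) error vector $\widehat{\Lambda}_{1n}-\Lambda_{0}$ outside the averages, so that each claim reduces to controlling an averaged coefficient multiplied by this error.

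The analytic core is then a uniform control of those coefficients. Since $\widehat{\Lambda}_{1n}\to\Lambda_{0}>0$ a.s.\ (the strong consistency already secured earlier in the cascade) and the leading entry of $\mathcal{Z}_{t-1}$ is $1$, eventually $\Lambda_{\xi,t}\geq\tfrac{1}{2}\Lambda_{0}$ componentwise, whence $\mathcal{Z}_{t-1}^{\prime}\Lambda_{\xi,t}\geq\tfrac{1}{2}\,\mathcal{Z}_{t-1}^{\prime}\Lambda_{0}$ is bounded below by a positive constant. Each coefficient is then dominated, uniformly in $n$, by a fixed stationary ergodic sequence built from $\mathcal{Y}_{t-1}$, $\mathcal{Z}_{t-1}$ and $(\mathcal{Z}_{t-1}^{\prime}\Lambda_{0})^{-3}$. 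The only factor not measurable with respect to ${\cal F}_{t-1}^{Y}$ is the current $Y_{t}$ appearing in \eqref{A.9}--\eqref{A.10}; conditioning on ${\cal F}_{t-1}^{Y}$ and using $\mathbb{E}(Y_{t}\mid{\cal F}_{t-1}^{Y})=\mathcal{Y}_{t-1}^{\prime}\theta_{0}$ from \eqref{3.2} turns every expectation into a ratio of lagged observations whose numerator degree never exceeds its denominator degree, so integrability follows verbatim from Lemma~\ref{a.2}. Consequently the averaged coefficients in \eqref{A.9} and \eqref{A.11} converge a.s.\ to finite limits, and multiplication by $\widehat{\Lambda}_{1n}-\Lambda_{0}=o_{a.s.}(1)$ yields the two $o_{a.s.}(1)$ claims.

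The $\sqrt{n}$-normalized statement \eqref{A.10} is the delicate one and the main obstacle: a crude bound only produces $O_{p}(1)$, because the averaged coefficient has a nonzero a.s.\ limit while $\widehat{\Lambda}_{1n}-\Lambda_{0}$ is only $O_{p}(n^{-1/2})$. The resolution I would pursue is to center the current observation, $Y_{t}=\mathcal{Y}_{t-1}^{\prime}\theta_{0}+e_{t}$ with $(e_{t})$ a martingale difference for $({\cal F}_{t}^{Y})$, and to replace the sample-dependent $\Lambda_{\xi,t}$ by the fixed $\Lambda_{0}$ up to a negligible remainder (legitimate by the uniform lower bound above), thereby restoring predictability. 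For the innovation part the summands $(\mathcal{Z}_{t-1}^{\prime}\Lambda_{0})^{-3}\mathcal{Y}_{t-1}\mathcal{Z}_{t-1}^{\prime}e_{t}$ are genuine martingale differences with finite first moment, so their $n^{-1/2}$-sum is $O_{p}(1)$ by the martingale central limit theorem and, multiplied by $o_{a.s.}(1)$, is $o_{p}(1)$; the predictable part is then controlled through the $\sqrt{n}$-rate $\widehat{\Lambda}_{1n}-\Lambda_{0}=O_{p}(n^{-1/2})$ supplied by the asymptotic normality of the first-stage variance estimator. Establishing that this interplay between the non-predictability of $Y_{t}$ and the estimation rate of $\widehat{\Lambda}_{1n}$ indeed delivers the stated $o_{p}(1)$ order is where the real work lies; the remaining manipulations are the same degree-counting and ergodic-theorem steps already used in Lemmas~\ref{a.2} and \ref{a.3}.
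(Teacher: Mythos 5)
Your proofs of \eqref{A.9} and \eqref{A.11} coincide with the paper's: a mean-value expansion of $\Lambda \mapsto (\mathcal{Z}_{t-1}^{\prime}\Lambda)^{-2}$ around $\Lambda_{0}$ (the appendix writes ``around $\theta_{0}$'', a typo), a.s.\ boundedness of the resulting coefficients because the intercept in $\mathcal{Z}_{t-1}$ keeps $\mathcal{Z}_{t-1}^{\prime}\Lambda$ bounded below, the ergodic theorem, and the strong consistency of $\widehat{\Lambda}_{1n}$ already secured in \eqref{4.7}-iii. Your treatment of the innovation part of \eqref{A.10} is also correct, and in fact slightly more economical than the paper's detailed computation, which writes $\widehat{\Lambda}_{1n}-\Lambda_{0}=n^{-1/2}O_{p}(1)$ and then kills the resulting $n^{-1}$-average via $\mathbb{E}\left( e_{t}\mid \mathcal{F}_{t-1}^{Y}\right)=0$ and the ergodic theorem; your $O_{p}(1)\times o_{p}(1)$ factorization needs only consistency of $\widehat{\Lambda}_{1n}$ for that term.

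The gap is exactly where you located it, and it cannot be closed: the predictable part of \eqref{A.10} is $O_{p}(1)$ with a nondegenerate limit, not $o_{p}(1)$. After your expansion it equals
\begin{equation*}
-2\left( \tfrac{1}{n}\sum_{t=1}^{n}\tfrac{\mathcal{Y}_{t-1}\left( \mathcal{Y}_{t-1}^{\prime }\theta _{0}\right) \mathcal{Z}_{t-1}^{\prime }}{\left( \mathcal{Z}_{t-1}^{\prime }\Lambda _{\xi ,t}\right) ^{3}}\right) \sqrt{n}\left( \widehat{\Lambda }_{1n}-\Lambda _{0}\right),
\end{equation*}
where, by your own uniform domination argument and the ergodic theorem, the averaged matrix converges a.s.\ to $M:=\mathbb{E}\left( \tfrac{\mathcal{Y}_{t-1}\left( \mathcal{Y}_{t-1}^{\prime }\theta _{0}\right) \mathcal{Z}_{t-1}^{\prime }}{\left( \mathcal{Z}_{t-1}^{\prime }\Lambda _{0}\right) ^{3}}\right)$, while $\sqrt{n}\left( \widehat{\Lambda }_{1n}-\Lambda _{0}\right)$ converges in law to a nondegenerate Gaussian by \eqref{4.8}-iii, which is proved independently of this lemma. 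For the $\mathbb{N}_{0}$-valued model all entries of $M$ are nonnegative and its $(1,1)$ entry is at least $\mu _{0\varepsilon }\,\mathbb{E}\left( \left( \mathcal{Z}_{t-1}^{\prime }\Lambda _{0}\right) ^{-3}\right) >0$, so $M\neq 0$ and the product has a nonzero Gaussian limit: \eqref{A.10} with the raw $Y_{t}$ is not provable as stated. This is a defect of the lemma's statement rather than of your strategy. The paper's appendix hides it behind ``follows in the same way as \eqref{A.9}'': in the analogous step of Lemma \ref{a.3} the factor $Y_{t}-\mathcal{Y}_{t-1}^{\prime }\theta _{u}$ is asymptotically centered, whereas $Y_{t}$ here is not, and the paper's own detailed verification in its supplementary material establishes only the centered version, with $e_{t}=Y_{t}-\mathcal{Y}_{t-1}^{\prime }\theta _{0}$ in place of $Y_{t}$. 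That weaker statement is also all that the proof of \eqref{4.8}-ii uses, since in $\widehat{\theta }_{2n}-\theta _{0}=\left( \sum_{t}w_{t}\mathcal{Y}_{t-1}\mathcal{Y}_{t-1}^{\prime }\right) ^{-1}\sum_{t}w_{t}\mathcal{Y}_{t-1}e_{t}$ the predictable part cancels identically, the same random weights $w_{t}$ appearing on both sides of the normal equations. Your martingale argument proves precisely this centered version, so restating \eqref{A.10} with $e_{t}$ makes your proof complete; with $Y_{t}$ neither your argument nor the paper's can succeed.
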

\begin{proof}{\textbf{(Lemma \ref{a.4})}}\\
Similarly to the proof of condition (\ref{A.5}), a Taylor expansion of the function 
$g\left( \widehat{\theta }_{1n}\right) =\tfrac{1}{\left( 
\mathcal{Z}%
_{t-1}^{\prime }\widehat{\Lambda }_{1n}\right) ^{2}}$
around $\theta _{0}$\
shows that $I=o_{a.s.}\left( 1\right) $. 
The proofs of (\ref{A.10}) and (\ref{A.11})
follow in the same way as for the result (\ref{A.9}). 
\end{proof}
\begin{lemma}
Consider the RMINAR model \eqref{3.1a} under \textbf{A0}. Then,
\begin{eqnarray}
\tfrac{1}{n}\sum_{t=1}^{n}\mathcal{Z}_{t-1}\left( \tfrac{\left( Y_{t}-%
\mathcal{Y}_{t-1}^{\prime }\widehat{\theta }_{2n}\right) ^{2}}{\left( 
\mathcal{%
Z}_{t-1}^{\prime }\widehat{\Lambda }_{1n}\right) ^{2}}-\tfrac{\left( Y_{t}-%
\mathcal{Y}_{t-1}^{\prime }\theta _{0}\right) ^{2}}{\left( 
\mathcal{Z}_{t-1}^{\prime
}\Lambda _{0}\right) ^{2}}\right)  &=&o_{a.s.}\left( 1\right),   \label{A.12}
\\
\tfrac{1}{\sqrt{n}}\sum_{t=1}^{n}\mathcal{Z}_{t-1}\left( \tfrac{\left( Y_{t}-%
\mathcal{Y}_{t-1}^{\prime }\widehat{\theta }_{2n}\right) ^{2}}{\left( 
\mathcal{%
Z}_{t-1}^{\prime }\widehat{\Lambda }_{1n}\right) ^{2}}-\tfrac{\left( Y_{t}-%
\mathcal{Y}_{t-1}^{\prime }\theta _{0}\right) ^{2}}{\left( 
\mathcal{Z}_{t-1}^{\prime
}\Lambda _{0}\right) ^{2}}\right)  &=&o_{p}\left( 1\right),   \label{A.13} \\
\tfrac{1}{n}\sum_{t=1}^{n}\left( \tfrac{1}{\left( 
\mathcal{Z}_{t-1}^{\prime
}\widehat{\Lambda }_{1n}\right) ^{2}}-\tfrac{1}{\left( 
\mathcal{Z}_{t-1}^{\prime }\Lambda
_{0}\right) ^{2}}\right) \mathcal{Z}_{t-1}\mathcal{Z}_{t-1}^{\prime }
&=&o_{a.s.}\left( 1\right).   \label{A.14}
\end{eqnarray}
\label{a.5}
\end{lemma}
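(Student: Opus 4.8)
The plan is to reduce \eqref{A.12}--\eqref{A.14} to the already-proved Lemmas \ref{a.2}--\ref{a.4}, exploiting that these three statements have exactly the same form as \eqref{A.5}, \eqref{A.6} and \eqref{A.11}, but with the first-stage quantities $(\widehat{\theta}_{1n},\Lambda_{\ast})$ replaced by the later-stage quantities $(\widehat{\theta}_{2n},\widehat{\Lambda}_{1n})$. By the cascade ordering of the 4SWLSE, the strong consistencies $\widehat{\theta}_{2n}\rightarrow\theta_{0}$ (third stage, via Lemma \ref{a.4}) and $\widehat{\Lambda}_{1n}\rightarrow\Lambda_{0}$ (second stage, via Lemma \ref{a.3}) are already in hand at this point, so I would use them freely and there is no circularity.

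For \eqref{A.12} and \eqref{A.13} I would swap one estimator at a time, writing the summand as
\begin{equation*}
\mathcal{Z}_{t-1}\frac{\left( Y_{t}-\mathcal{Y}_{t-1}^{\prime }\widehat{\theta }_{2n}\right) ^{2}-\left( Y_{t}-\mathcal{Y}_{t-1}^{\prime }\theta _{0}\right) ^{2}}{\left( \mathcal{Z}_{t-1}^{\prime }\widehat{\Lambda }_{1n}\right) ^{2}}+\mathcal{Z}_{t-1}\left( Y_{t}-\mathcal{Y}_{t-1}^{\prime }\theta _{0}\right) ^{2}\left( \frac{1}{\left( \mathcal{Z}_{t-1}^{\prime }\widehat{\Lambda }_{1n}\right) ^{2}}-\frac{1}{\left( \mathcal{Z}_{t-1}^{\prime }\Lambda _{0}\right) ^{2}}\right).
\end{equation*}
The first term is handled exactly as \eqref{A.5}: a Taylor expansion of the numerator about $\theta_{0}$ produces a linear factor $\mathcal{Y}_{t-1}^{\prime }(\widehat{\theta}_{2n}-\theta_{0})$, whose coefficient has finite expectation by Lemma \ref{a.2} and which therefore vanishes in $o_{a.s.}(1)$ (resp.\ $o_{p}(1)$ after rescaling by $\sqrt{n}$, using $\widehat{\theta}_{2n}-\theta_{0}=n^{-1/2}O_{p}(1)$). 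The second term is precisely the mechanism of \eqref{A.9}/\eqref{A.11}: a Taylor expansion of $1/(\mathcal{Z}_{t-1}^{\prime }\widehat{\Lambda}_{1n})^{2}$ about $\Lambda_{0}$ gives a factor linear in $\widehat{\Lambda}_{1n}-\Lambda_{0}$, which is killed by the consistency of $\widehat{\Lambda}_{1n}$. In the first term the innovation factor $Y_{t}-\mathcal{Y}_{t-1}^{\prime }\theta_{0}$ is controlled by conditioning on $\tciFourier _{t-1}^{Y}$, where its conditional mean is $\mathcal{Y}_{t-1}^{\prime }\theta_{0}$, keeping every relevant expectation finite.

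Statement \eqref{A.14} is the direct analogue of \eqref{A.11} with $\mathcal{Z}_{t-1}\mathcal{Z}_{t-1}^{\prime }$ in place of $\mathcal{Y}_{t-1}\mathcal{Y}_{t-1}^{\prime }$, and I would copy that proof: Taylor-expand $1/(\mathcal{Z}_{t-1}^{\prime }\widehat{\Lambda}_{1n})^{2}$ about $\Lambda_{0}$, so that the leading coefficient is $\mathcal{Z}_{t-1}\mathcal{Z}_{t-1}^{\prime }$ times a factor $\mathcal{Z}_{t-1}^{\prime }(\widehat{\Lambda}_{1n}-\Lambda_{0})/(\mathcal{Z}_{t-1}^{\prime }\Lambda_{u})^{3}$, whose numerator has total degree $6$ in the $Y_{t-i}$ against a denominator of degree $6$; thus $r+s\le 2k$ holds with $k=3$ and Lemma \ref{a.2} applies, while $\widehat{\Lambda}_{1n}-\Lambda_{0}\rightarrow 0$ annihilates the sum.

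The main obstacle is that the weights now depend on the whole sample through $\widehat{\Lambda}_{1n}$, so the ergodic theorem cannot be applied term-by-term to the raw summand; this is exactly what the one-at-a-time swaps above circumvent. The delicate point inside each Taylor expansion is that the intermediate value $\Lambda_{u}$ lying between $\widehat{\Lambda}_{1n}$ and $\Lambda_{0}$ must keep $\mathcal{Z}_{t-1}^{\prime }\Lambda_{u}$ bounded away from zero uniformly in $t$; this holds because the first component of $\mathcal{Z}_{t-1}$ equals $1$ and $\Lambda_{0}>0$, so for all $n$ large enough $\widehat{\Lambda}_{1n}$---and hence the whole segment---stays in a positive neighbourhood of $\Lambda_{0}$, guaranteeing the required a.s.\ boundedness and the validity of the expansions.
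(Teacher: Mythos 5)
Your handling of the first and third displays matches the paper's own device, and those parts are sound: for the $n^{-1}$-scale claims, one-at-a-time Taylor expansions around $\theta_{0}$ and $\Lambda_{0}$, the bounded-ratio ergodic averages of Lemma \ref{a.2} (with integrability secured by conditioning, since $e_{t}^{2}$ involves the unlagged $Y_{t}$), and the strong consistencies of $\widehat{\theta}_{2n}$ and $\widehat{\Lambda}_{1n}$ (correctly noted to be non-circular, as Lemma \ref{a.5} feeds only into the $\widehat{\Lambda}_{2n}$ results) suffice, and your observation that $\mathcal{Z}_{t-1}^{\prime}\Lambda_{u}$ stays bounded away from zero is exactly the paper's justification. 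The genuine gap is in your treatment of the $\sqrt{n}$-scale claim. The weight-swap term your decomposition isolates,
\begin{equation*}
\tfrac{1}{\sqrt{n}}\sum_{t=1}^{n}\mathcal{Z}_{t-1}\left( Y_{t}-\mathcal{Y}_{t-1}^{\prime }\theta _{0}\right) ^{2}\left( \tfrac{1}{\left( \mathcal{Z}_{t-1}^{\prime }\widehat{\Lambda }_{1n}\right) ^{2}}-\tfrac{1}{\left( \mathcal{Z}_{t-1}^{\prime }\Lambda _{0}\right) ^{2}}\right),
\end{equation*}
is \emph{not} ``killed by the consistency of $\widehat{\Lambda}_{1n}$'': consistency carries no rate at this scale, and even inserting the correct rate $\widehat{\Lambda}_{1n}-\Lambda_{0}=n^{-1/2}O_{p}(1)$ (available from the asymptotic normality of $\widehat{\Lambda}_{1n}$), the Taylor expansion leaves $-2\left( \tfrac{1}{n}\sum_{t}e_{t}^{2}\,\mathcal{Z}_{t-1}\mathcal{Z}_{t-1}^{\prime }/\left( \mathcal{Z}_{t-1}^{\prime }\Lambda _{u}\right) ^{3}\right) \sqrt{n}\left( \widehat{\Lambda }_{1n}-\Lambda _{0}\right)$, whose first factor converges a.s.\ to $C\left( \Lambda _{0}\right) \neq 0$ because $\mathbb{E}\left( e_{t}^{2}|\mathcal{F}_{t-1}^{Y}\right) =\mathcal{Z}_{t-1}^{\prime }\Lambda _{0}$. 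The term is therefore $O_{p}(1)$ with a non-degenerate Gaussian limit, not $o_{p}(1)$.

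The missing idea is martingale-difference centering, which is the mechanism the paper's detailed arguments actually rely on: a $\sqrt{n}$-scale swap term vanishes only because the coefficient multiplying the $O_{p}(n^{-1/2})$ estimation error is an ergodic average whose \emph{limit is zero}, owing to a factor $e_{t}$ or $u_{t}$ with zero conditional mean; mere finiteness of the expectation, which is all you invoke (``keeping every relevant expectation finite''), yields $O_{p}(1)$. The correct $\sqrt{n}$-statement replaces the raw numerators by the centered ones, $\left( Y_{t}-\mathcal{Y}_{t-1}^{\prime }\widehat{\theta }_{2n}\right) ^{2}-\mathcal{Z}_{t-1}^{\prime }\Lambda _{0}$ and $u_{t}$, after which your one-at-a-time swaps do work, since $\mathbb{E}\left( u_{t}|\mathcal{F}_{t-1}^{Y}\right) =0$ annihilates the limiting coefficients (this is how the supplementary material argues the analogous step for $\widehat{\theta}_{2n}$, with $e_{t}$ rather than $Y_{t}$ in the numerator). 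In fairness, the lemma's literal display shares the defect, and in the application to $\widehat{\Lambda}_{2n}$ the subtraction of $\mathcal{Z}_{t-1}^{\prime }\Lambda _{0}$ \emph{inside the same weights} is produced automatically by the normal equations defining $\widehat{\Lambda }_{2n}$, so the offending $O_{p}(1)$ piece cancels there; but as written, your proof of the second display would fail, and the same (repairable) imprecision affects your Term 1 at the $\sqrt{n}$ scale, where the linear coefficient must converge to zero by $\mathbb{E}\left( e_{t}|\mathcal{F}_{t-1}^{Y}\right) =0$, not merely be finite.
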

\begin{proof}{\textbf{(Lemma \ref{a.5})}}\\
The proof of condition (\ref{A.12}) uses the same device as that used in the proof of (\ref{A.9}),
giving 
\begin{eqnarray*}
&&\tfrac{1}{n}\sum_{t=1}^{n}\mathcal{Z}_{t-1}\left( \tfrac{\left( Y_{t}-%
\mathcal{Y}_{t-1}^{\prime }\widehat{\theta }_{2n}\right) ^{2}}{\left(
\mathcal{%
Z}_{t-1}^{\prime }\widehat{\Lambda }_{1n}\right) ^{2}}-\tfrac{\left( Y_{t}-%
\mathcal{Y}_{t-1}^{\prime }\theta _{0}\right) ^{2}}{\left( 
\mathcal{Z}_{t-1}^{\prime
}\Lambda _{0}\right) ^{2}}\right)
=o_{a.s.}\left( 1\right) .
\end{eqnarray*}
Results (\ref{A.13}) and (\ref{A.14}) follow using similar arguments.
\end{proof}
\begin{lemma}
Consider the RMINAR model \eqref{3.1a} under \textbf{A0}, $\mathbb{E}\left( \varepsilon _{t}^{4}\right) <\infty$ and $\mathbb{E}\left( \Phi
_{it}^{4}\right) <1$\textit{\ (}$i=1,\dots,p$)\textit{\ then, for all }$%
\Lambda _{\ast }>\mathbf{0}_{p+q+1}$,
\begin{eqnarray}
&&\tfrac{1}{\sqrt{n}}\sum_{t=1}^{n}\tfrac{\mathcal{Y}_{t-1}e_{t}}{\mathcal{Z}%
_{t-1}^{\prime }\Lambda _{\ast }}\overset{D}{\underset{n\rightarrow \infty }{%
\rightarrow }}\mathcal{N}\left( 0,B\left( \Lambda _{0},\Lambda _{\ast
}\right) \right),  \label{A.15} \\
&&\tfrac{1}{\sqrt{n}}\sum_{t=1}^{n}\tfrac{\mathcal{Z}_{t-1}u_{t}}{\left(
\mathcal{Z}%
_{t-1}^{\prime }\Lambda _{\ast }\right) ^{2}}\overset{D}{\underset{%
n\rightarrow \infty }{\rightarrow }}\mathcal{N}\left( 0,D\left( \Lambda
_{\ast }\right) \right).   \label{A.16}
\end{eqnarray}
\label{a.6}
\end{lemma}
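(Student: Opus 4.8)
The plan is to view both normalized sums as partial sums of vector martingale difference sequences and to apply the central limit theorem for stationary ergodic martingale differences together with the Cram\'er--Wold device. Set $\xi_{t}:=\frac{\mathcal{Y}_{t-1}e_{t}}{\mathcal{Z}_{t-1}^{\prime}\Lambda_{\ast}}$ and $\eta_{t}:=\frac{\mathcal{Z}_{t-1}u_{t}}{(\mathcal{Z}_{t-1}^{\prime}\Lambda_{\ast})^{2}}$, so that the left-hand sides of \eqref{A.15} and \eqref{A.16} are $\frac{1}{\sqrt{n}}\sum_{t=1}^{n}\xi_{t}$ and $\frac{1}{\sqrt{n}}\sum_{t=1}^{n}\eta_{t}$.

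First I would verify the martingale difference structure. The vectors $\mathcal{Y}_{t-1}$ and $\mathcal{Z}_{t-1}$ are ${\cal F}_{t-1}^{Y}$-measurable, while $e_{t}=Y_{t}-\mu_{t}$ and $u_{t}=e_{t}^{2}-V_{t}$ satisfy $\mathbb{E}(e_{t}|{\cal F}_{t-1}^{Y})=0$ and $\mathbb{E}(u_{t}|{\cal F}_{t-1}^{Y})=0$ by construction; hence $(\xi_{t})$ and $(\eta_{t})$ are vector martingale differences with respect to $({\cal F}_{t}^{Y})$. By Theorem \ref{teo1} the solution $\{Y_{t}\}$ is strictly stationary and ergodic, so the measurable images $(\xi_{t})$ and $(\eta_{t})$ are stationary and ergodic as well, and for any fixed vector $c$ the scalar sequences $(c^{\prime}\xi_{t})$ and $(c^{\prime}\eta_{t})$ are stationary ergodic martingale differences.

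Next I would identify the limiting covariance matrices and check their finiteness. By the tower property and $\mathbb{E}(e_{t}^{2}|{\cal F}_{t-1}^{Y})=V_{t}=\mathcal{Z}_{t-1}^{\prime}\Lambda_{0}$ from \eqref{3.2}, one gets $\mathbb{E}(\xi_{t}\xi_{t}^{\prime})=\mathbb{E}\!\left(\frac{(\mathcal{Z}_{t-1}^{\prime}\Lambda_{0})}{(\mathcal{Z}_{t-1}^{\prime}\Lambda_{\ast})^{2}}\mathcal{Y}_{t-1}\mathcal{Y}_{t-1}^{\prime}\right)=B(\Lambda_{0},\Lambda_{\ast})$ and, likewise, $\mathbb{E}(\eta_{t}\eta_{t}^{\prime})=D(\Lambda_{\ast})$. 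Their finiteness is the crucial point and is obtained exactly as in Lemma \ref{a.2}: each entry of the integrands is a ratio of polynomials in the $Y_{t-i}$ whose denominator degree matches the numerator degree---degree $4$ over degree $4$ for $B(\Lambda_{0},\Lambda_{\ast})$, and, after replacing $u_{t}^{2}$ by $\mathbb{E}(u_{t}^{2}|{\cal F}_{t-1}^{Y})$ (a polynomial of degree $4$ with leading terms $Y_{t-i}^{4}$), degree $8$ over degree $8$ for $D(\Lambda_{\ast})$. Since $\mathcal{Z}_{t-1}^{\prime}\Lambda_{\ast}\geq\sigma_{\ast\varepsilon}^{2}>0$, every such ratio is a.s.~bounded and therefore integrable, whence both matrices are finite; this is where $\mathbb{E}(\varepsilon_{t}^{4})<\infty$ and $\mathbb{E}(\Phi_{it}^{4})<1$ are used, namely to guarantee that $\mathbb{E}(u_{t}^{2}|{\cal F}_{t-1}^{Y})$ exists and has the stated degree, whereas \eqref{A.15} by itself needs only the second moments of the inputs built into the model.

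Finally, each $(c^{\prime}\xi_{t})$ being a stationary ergodic martingale difference with finite variance $c^{\prime}B(\Lambda_{0},\Lambda_{\ast})c$, the martingale central limit theorem for stationary ergodic sequences (Billingsley, $1961$) gives $\frac{1}{\sqrt{n}}\sum_{t=1}^{n}c^{\prime}\xi_{t}\overset{D}{\rightarrow}\mathcal{N}(0,c^{\prime}B(\Lambda_{0},\Lambda_{\ast})c)$; note that this version requires no separate Lindeberg verification. The Cram\'er--Wold device then yields \eqref{A.15}, and \eqref{A.16} follows verbatim with $\eta_{t}$ and $D(\Lambda_{\ast})$ in place of $\xi_{t}$ and $B(\Lambda_{0},\Lambda_{\ast})$. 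I expect the only real obstacle to be the finiteness of $D(\Lambda_{\ast})$: it forces one to expand $\mathbb{E}(u_{t}^{2}|{\cal F}_{t-1}^{Y})=\mathbb{V}(u_{t}|{\cal F}_{t-1}^{Y})$ in terms of the fourth moments of the inputs---as displayed for $p=1$ in Section \ref{4SWLSE}---and to confirm the degree matching against $(\mathcal{Z}_{t-1}^{\prime}\Lambda_{\ast})^{4}$, so that the a.s.-boundedness argument of Lemma \ref{a.2} remains valid even in the infinite-mean regime.
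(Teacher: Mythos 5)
Your proof is correct and takes essentially the same route as the paper: both treat $\xi_t=\mathcal{Y}_{t-1}e_t/(\mathcal{Z}_{t-1}^{\prime}\Lambda_{\ast})$ and $\eta_t=\mathcal{Z}_{t-1}u_t/(\mathcal{Z}_{t-1}^{\prime}\Lambda_{\ast})^{2}$ as stationary, ergodic, square-integrable martingale difference sequences, identify the limiting covariances $B(\Lambda_0,\Lambda_{\ast})$ and $D(\Lambda_{\ast})$ through the a.s.~boundedness of the self-normalized ratios, and conclude by the CLT for square-integrable martingales. Your explicit Cram\'er--Wold reduction and the degree-matching argument for the finiteness of $D(\Lambda_{\ast})$ (conditioning to replace $u_t^{2}$ by $\mathbb{E}(u_t^{2}\mid \mathcal{F}_{t-1}^{Y})$ before bounding) simply make precise steps that the paper's proof leaves implicit.
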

\begin{proof}{\textbf{(Lemma \ref{a.6})}}

i) By \textbf{A0} and the ergodic theorem, it follows that 
\begin{eqnarray*}
\sum_{t=1}^{n}\left( \tfrac{1}{\sqrt{n}}\tfrac{\mathcal{Y}_{t-1}e_{t}}{%
\mathcal{Z}_{t-1}^{\prime }\Lambda _{\ast }}\right) \left( \tfrac{1}{\sqrt{n}%
}\tfrac{\mathcal{Y}_{t-1}e_{t}}{\mathcal{Z}_{t-1}^{\prime }\Lambda _{\ast }}%
\right) ^{\prime } &=&\tfrac{1}{n}\sum_{t=1}^{n}\tfrac{e_{t}^{2}\mathcal{Y}_{t-1}%
\mathcal{Y}_{t-1}^{\prime }}{\left( 
\mathcal{Z}_{t-1}^{\prime }\Lambda _{\ast }\right) ^{2}} \overset{a.s.}{\underset{n\rightarrow \infty }{\rightarrow }}B\left(
\Lambda _{0},\Lambda _{\ast }\right),
\end{eqnarray*}%
where $B\left( \Lambda _{0},\Lambda _{\ast }\right) $ is finite by the a.s.~boundedness of $\tfrac{\mathcal{Z}_{t-1}^{\prime }\Lambda _{0}\mathcal{Y}_{t-1}%
\mathcal{Y}_{t-1}^{\prime }}{\left( 
\mathcal{Z}_{t-1}^{\prime }\Lambda _{\ast }\right) ^{2}}$ and the finiteness of the second moments of the
random coefficients. Since $\left\{ e_{t},t\in 
\mathbb{Z}
\right\} $ is square integrable martingale difference with respect to $%
\left\{ \tciFourier _{t}^{Y},t\in 
\mathbb{Z}
\right\} $, then (\ref{A.15}) follows from the central limit theorem (CLT) for
square-integrable martingales (e.g.~Billingsley, $2008$; Hall and Heyde, $%
1980$).

ii) Note that 
\begin{eqnarray}
\sum_{t=1}^{n}\left( \tfrac{1}{\sqrt{n}}\tfrac{\mathcal{Z}_{t-1}u_{t}}{%
\left( 
\mathcal{Z}_{t-1}^{\prime }\Lambda _{\ast }\right) ^{2}}\right) \left( 
\tfrac{1}{\sqrt{n}}\tfrac{\mathcal{Z}_{t-1}u_{t}}{\left( 
\mathcal{Z}_{t-1}^{\prime
}\Lambda _{\ast }\right) ^{2}}\right) ^{\prime } &=&\tfrac{1}{n}%
\sum_{t=1}^{n}\tfrac{u_{t}^{2}\mathcal{Z}_{t-1}\mathcal{Z}_{t-1}^{\prime }}{%
\left( 
\mathcal{Z}_{t-1}^{\prime }\Lambda _{\ast }\right) ^{4}}  
\overset{a.s.}{\underset{n\rightarrow \infty }{\rightarrow }}D\left(
\Lambda _{\ast }\right) .  \notag
\end{eqnarray}%
Hence (\ref{A.16}) is a consequence of the CLT for the square-integrable $%
\left\{ \tciFourier _{t}^{Y},t\in 
\mathbb{Z}
\right\} $-martingale difference $\left\{ u_{t},t\in 
\mathbb{Z}
\right\} $.
\end{proof}

Next, follows the proof of Theorem \ref{teo41} based on the established lemmas \ref{a.2} to \ref{a.6}.

\begin{proof}{\textbf{(Theorem \ref{teo41})}}\\
a) For the proof of (\ref{4.7})-i, combining (\ref{3.1a}) and  (\ref{4.3a}) leads to
\begin{equation}
\widehat{\theta }_{1n}-\theta _{0}=\left( \tfrac{1}{n}\sum_{t=1}^{n}\tfrac{\mathcal{Y}_{t-1}\mathcal{Y}%
_{t-1}^{\prime }
}{\mathcal{Z}_{t-1}^{\prime }\Lambda _{\ast }}\right) ^{-1}\tfrac{1}{n}\sum_{t=1}^{n}\mathcal{Y}_{t-1}%
\tfrac{e_{t}}{\mathcal{Z}_{t-1}^{\prime }\Lambda _{\ast }},  \label{A.17}
\end{equation}
so the result (\ref{4.7})-i follows from Lemma \ref{a.2} while using the fact that 
\begin{equation*}
\mathbb{E}\left( \mathcal{Y}_{t-1}\tfrac{e_{t}}{\mathcal{Z}_{t-1}^{\prime }\Lambda
_{\ast }}\right) = \mathbb{E}\left( \mathcal{Y}_{t-1}\tfrac{\mathbb{E}\left( e_{t}|\tciFourier
_{t-1}^{Y}\right) }{\mathcal{Z}_{t-1}^{\prime }\Lambda _{\ast }}\right) =0.
\end{equation*}
b) For the proof of (\ref{4.7})-ii, in view of (\ref{4.3b}), the regression%
\begin{equation}
\left( Y_{t}-\mathcal{Y}_{t-1}^{\prime }\theta _{0}\right) ^{2}=\mathcal{Z}%
_{t-1}^{\prime }\Lambda _{0}+u_{t},  \label{A.18}
\end{equation}%
and Lemma \ref{a.3}, it follows that 
\begin{eqnarray*}
\widehat{\Lambda }_{1n}-\Lambda _{0}=\left( \tfrac{1}{n}\sum_{t=1}^{n}\tfrac{%
\mathcal{Z}_{t-1}%
\mathcal{Z}_{t-1}^{\prime }}{\left( 
\mathcal{Z}_{t-1}^{\prime }\Lambda _{\ast }\right) ^{2}}\right) ^{-1}\tfrac{1}{n}\sum_{t=1}^{n}\mathcal{Z}%
_{t-1}\tfrac{u_{t}}{\left( 
\mathcal{Z}_{t-1}^{\prime }\Lambda _{\ast }\right) ^{2}} +o_{a.s.}\left( 1\right).
\end{eqnarray*}
Therefore, the result (\ref{4.7})-ii follows from Lemma \ref{a.2} and the fact that 
\begin{eqnarray*}
\mathbb{E}\left( \tfrac{u_{t}}{\left( 
\mathcal{Z}_{t-1}^{\prime }\Lambda _{\ast }\right) ^{2}}%
\right) =\mathbb{E}\left( \tfrac{\mathbb{E}\left( u_{t}|\tciFourier _{t-1}^{Y}\right) }{\left(
\mathcal{Z}%
_{t-1}^{\prime }\Lambda _{\ast }\right) ^{2}}\right) =0.
\end{eqnarray*}
c) The proof of (\ref{4.7})-iii is based on (\ref{4.3c}), (\ref{3.1a}) and Lemma \ref{a.4}, leading to 
\begin{equation*}
\widehat{\theta }_{2n}-\theta _{0}=\left( \tfrac{1}{n}\sum_{t=1}^{n}\tfrac{\mathcal{Y}_{t-1}\mathcal{Y}%
_{t-1}^{\prime }
}{\mathcal{Z}_{t-1}^{\prime }\Lambda _{0}}\right) ^{-1}\tfrac{1}{n}\sum_{t=1}^{n}\mathcal{Y}_{t-1}%
\tfrac{e_{t}}{\mathcal{Z}_{t-1}^{\prime }\Lambda _{0}}+o_{a.s.}\left(
1\right) ,
\end{equation*}%
so the result follows from Lemma \ref{a.2} in the same way as (\ref{4.7})-i.

d) The proof of (\ref{4.7})-iv follows by combining (\ref{4.3c}), (\ref{A.18}) and Lemma \ref{a.5} to obtain%
\begin{equation*}
\widehat{\Lambda }_{2n}-\Lambda _{0}=\left( \tfrac{1}{n}\sum_{t=1}^{n}\tfrac{%
\mathcal{Z}_{t-1}%
\mathcal{Z}_{t-1}^{\prime }}{\left( 
\mathcal{Z}_{t-1}^{\prime }\Lambda _{0}\right) ^{2}}\right) ^{-1}\tfrac{1}{n}\sum_{t=1}^{n}\tfrac{\mathcal{Z}%
_{t-1}u_{t}}{\left( 
\mathcal{Z}_{t-1}^{\prime }\Lambda _{0}\right) ^{2}}
+o_{a.s.}\left( 1\right) 
\end{equation*}%
and the result is a consequence of Lemma  \ref{a.2} using the same argument in
proving (\ref{4.7})-ii.

e) For the proof of (\ref{4.8})-i, condition (\ref{A.17}) is rewritten as 
\begin{equation*}
\sqrt{n}\left( \widehat{\theta }_{1n}-\theta _{0}\right) =\left( \tfrac{1}{n}%
\sum_{t=1}^{n}\tfrac{\mathcal{%
Y}_{t-1}\mathcal{Y}_{t-1}^{\prime }}{\mathcal{Z}_{t-1}^{\prime }\Lambda _{\ast }}\right) ^{-1}\tfrac{1}{\sqrt{n}}%
\sum_{t=1}^{n}\tfrac{\mathcal{Y}_{t-1}e_{t}}{\mathcal{Z}_{t-1}^{\prime
}\Lambda _{\ast }},
\end{equation*}%
and the result follows from Lemma \ref{a.2} and Lemma \ref{a.6} while using Slutsky's Lemma. 

f) For the proof of (\ref{4.8})-ii, from $(\ref{4.3c}$) and Lemma \ref{a.4}, one obtains
\begin{equation*}
\sqrt{n}\left( \widehat{\theta }_{2n}-\theta _{0}\right) =\left( \tfrac{1}{n}%
\sum_{t=1}^{n}\tfrac{\mathcal{Y}%
_{t-1}\mathcal{Y}_{t-1}^{\prime }}{\mathcal{Z}_{t-1}^{\prime }\Lambda _{0}}\right) ^{-1}\tfrac{1}{\sqrt{n}}%
\sum_{t=1}^{n}\tfrac{\mathcal{Y}_{t-1}e_{t}}{\mathcal{Z}_{t-1}^{\prime
}\Lambda _{0}}+o_{p}\left( 1\right), 
\end{equation*}%
and the result follows from Lemma \ref{a.6} and Lemma \ref{a.2}.

g) For the proof of (\ref{4.8})-iii, consider the Lemma \ref{a.3}. One can write
\begin{equation*}
\sqrt{n}\left( \widehat{\Lambda }_{1n}-\Lambda _{0}\right) =\left( \tfrac{1}{%
n}\sum_{t=1}^{n}\tfrac{\mathcal{Z}_{t-1}\mathcal{Z}_{t-1}^{\prime }}{\left(
\mathcal{Z}%
_{t-1}^{\prime }\Lambda _{\ast }\right) ^{2}}\right) ^{-1}\tfrac{1}{\sqrt{n}}%
\sum_{t=1}^{n}\tfrac{\mathcal{Z}_{t-1}u_{t}}{\left( 
\mathcal{Z}_{t-1}^{\prime }\Lambda
_{\ast }\right) ^{2}}+o_{p}\left( 1\right),
\end{equation*}%
with obvious notation. Hence, the result (\ref{4.8})-ii follows from Lemma \ref{a.6} and Lemma \ref{a.2}.

h) The proof of (\ref{4.8})-iv relies on (\ref{4.3d}) and Lemma \ref{a.5}. Then, 
\begin{equation*}
\sqrt{n}\left( \widehat{\Lambda }_{2n}-\Lambda _{0}\right) =\left( \tfrac{1}{%
n}\sum_{t=1}^{n}\tfrac{\mathcal{Z}_{t-1}\mathcal{Z}_{t-1}^{\prime }}{\left(
\mathcal{Z}%
_{t-1}^{\prime }\Lambda _{0}\right) ^{2}}\right) ^{-1}\tfrac{1}{\sqrt{n}}%
\sum_{t=1}^{n}\tfrac{\mathcal{Z}_{t-1}u_{t}}{\left( 
\mathcal{Z}_{t-1}^{\prime }\Lambda
_{0}\right) ^{2}}+o_{p}\left( 1\right) 
\end{equation*}%
and the result follows from Lemma \ref{a.6} (or \ref{A.16} with $\Lambda _{\ast
}=\Lambda _{0}$) and Lemma \ref{a.2}. This concludes the proof of the theorem.
\end{proof}\\


The proof of the results conveyed in Theorem \ref{teo43} is based on the intermediary results stated in the following lemmas. Consider the following notation
\begin{eqnarray*}
L_{n}\left( \Lambda ,\Lambda _{\ast },\widehat{\theta }_{1n}\right)  &=&%
\tfrac{1}{n}\sum_{t=1}^{n}l_{t}\left( \Lambda ,\Lambda _{\ast },\widehat{%
\theta }_{1n}\right)  \\
l_{t}\left( \Lambda ,\Lambda _{\ast },\widehat{\theta }_{1n}\right)  &=&%
\tfrac{u_{t}^{2}\left( \widehat{\theta }_{1n},\Lambda \right) }{%
V_{t}^{2}\left( \widehat{\theta }_{1n},\Lambda _{\ast }\right) }=\tfrac{%
\left( \left( Y_{t}-\mathcal{Y}_{t-1}^{\prime }\widehat{\theta }_{1n}\right)
^{2}-V_{t}\left( \widehat{\theta }_{1n},\Lambda \right) \right) ^{2}}{
V_{t}^{2} \left( \widehat{\theta }_{1n},\Lambda _{\ast }\right) }
\end{eqnarray*}%
where $V_{t}\left( \theta _{0},\Lambda _{0}\right) :=\mathbb{V}\left(
Y_{t}|\tciFourier _{t-1}^{Y}\right) $. 

\begin{lemma}
Under \textbf{A0**}, it holds that
\begin{equation*}
\sup\limits_{\Lambda \in \Pi }\left\vert L_{n}\left( \Lambda ,\Lambda _{\ast
},\widehat{\theta }_{1n}\right) -L_{n}\left( \Lambda ,\Lambda _{\ast
},\theta _{0}\right) \right\vert \overset{a.s.}{\underset{n\rightarrow
\infty }{\rightarrow }}0.
\end{equation*}
\label{a.7}
\end{lemma}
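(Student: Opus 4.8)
The plan is to treat Lemma \ref{a.7} as a standard plug-in negligibility result: since the already-established stage-(i) consistency gives $\widehat{\theta}_{1n}\overset{a.s.}{\to}\theta_0$ (Theorem \ref{teo43}, stage (i), which does not rely on this lemma), replacing $\theta_0$ by $\widehat{\theta}_{1n}$ inside the stage-(ii) criterion $L_n$ should be asymptotically harmless, uniformly over the compact set $\Pi$. First I would note that, almost surely, $\widehat{\theta}_{1n}$ eventually lies in a fixed compact neighbourhood $N\subset\Theta$ of $\theta_0$, so attention may be restricted to $\theta\in N$. A mean-value expansion of $\theta\mapsto l_t(\Lambda,\Lambda_\ast,\theta)$ between $\theta_0$ and $\widehat{\theta}_{1n}$ then gives, for some intermediate $\bar\theta_{nt}$ on the connecting segment,
\begin{equation*}
\sup_{\Lambda\in\Pi}\bigl|L_n(\Lambda,\Lambda_\ast,\widehat{\theta}_{1n})-L_n(\Lambda,\Lambda_\ast,\theta_0)\bigr|\le \bigl\|\widehat{\theta}_{1n}-\theta_0\bigr\|\cdot\frac1n\sum_{t=1}^{n}H_t,\qquad H_t:=\sup_{\Lambda\in\Pi}\sup_{\theta\in N}\Bigl\|\tfrac{\partial l_t(\Lambda,\Lambda_\ast,\theta)}{\partial\theta}\Bigr\|.
\end{equation*}
Because $\|\widehat{\theta}_{1n}-\theta_0\|\to0$ a.s., the whole task reduces to showing $\frac1n\sum_{t=1}^{n}H_t=O_{a.s.}(1)$.

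Next I would control the envelope $H_t$. Writing $\mu_t(\theta)=1+\mathcal{Y}_{t-1}^{\prime}\theta$, $e_t(\theta)=Y_t-\mu_t(\theta)$ and $V_t(\theta,\Lambda)=(\sigma_\varepsilon^2+1)\mathcal{Z}_{t-1}^{\prime}\Delta+\sigma_\varepsilon^2\mu_t^2(\theta)$, the gradient $\partial l_t/\partial\theta$ is a finite sum of terms built from $(e_t^2-V_t)$, its $\theta$-derivative, and $\partial V_t(\theta,\Lambda_\ast)/\partial\theta$, divided by powers of $V_t(\theta,\Lambda_\ast)$. The key structural fact is that for $\Lambda_\ast>0$ the denominator is bounded below, $V_t(\theta,\Lambda_\ast)\ge(\sigma_{\ast\varepsilon}^2+1)\sigma_{\ast\omega}^2>0$, and grows like $1+\sum_{i=1}^{p}Y_{t-i}^2$ in the past observations, while all $\theta$-derivatives of $V_t$ involve only past observations; the current $Y_t$ enters solely through $e_t$. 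Using compactness of $\Pi$ and $N$ together with positivity of the variance components, $H_t$ is then dominated by a single measurable ratio whose numerator is a polynomial in $(Y_t,Y_{t-1},\dots,Y_{t-p})$ of degree at most four in $Y_t$ and of total degree equal to that of the denominator, the latter a polynomial in $(Y_{t-1},\dots,Y_{t-p})$ only.

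The crucial point — and where the everywhere-stationarity of the model matters — is that $H_t$ may fail to have finite unconditional moments through $Y_t$ alone, since $\{Y_t\}$ can have an infinite mean. I would therefore bound $\mathbb{E}(H_t)$ by conditioning on $\tciFourier_{t-1}^Y$: the dominating denominator is $\tciFourier_{t-1}^Y$-measurable, and $\mathbb{E}(Y_t^k\mid\tciFourier_{t-1}^Y)$ is, for $k\le4$, a polynomial of degree $k$ in $(Y_{t-1},\dots,Y_{t-p})$ with coefficients fixed by the finite moments of the inputs up to order four. Substituting these conditional moments converts the bound on $\mathbb{E}(H_t\mid\tciFourier_{t-1}^Y)$ into a ratio of polynomials in the past observations of equal total degree, which is almost surely bounded by exactly the device used in Lemma \ref{a.2}. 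Hence $\mathbb{E}(H_t)<\infty$, and by the strict stationarity and ergodicity that hold everywhere under \textbf{A0**} (Theorem \ref{teo3}), the ergodic theorem yields $\frac1n\sum_{t=1}^{n}H_t\to\mathbb{E}(H_t)<\infty$ a.s.; combined with $\|\widehat{\theta}_{1n}-\theta_0\|\to0$ this gives the claimed uniform convergence.

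The main obstacle is precisely the integrability of $H_t$: the ratio defining $l_t$ is \emph{not} a.s. bounded, because the current $Y_t$ appears in the numerator but not in the $\tciFourier_{t-1}^Y$-measurable denominator, so a naive matching-degree argument fails. The resolution is the conditioning step, which replaces each power of $Y_t$ by a past-measurable conditional moment and thereby restores the matching-degree structure; carrying it out needs only finiteness of the fourth-order moments of $\varepsilon_t$, $\omega_t$ and $\Phi_{it}$, which holds for all the count distributions used here. A secondary, routine care point is constructing one integrable envelope valid uniformly in $\Lambda\in\Pi$, handled by compactness of $\Pi$ and the uniform positive lower bound on $V_t(\cdot,\Lambda_\ast)$.
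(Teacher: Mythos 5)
Your proposal is correct, and its skeleton is the same as the paper's: the paper also bounds $\sup_{\Lambda}|L_n(\Lambda,\Lambda_*,\widehat{\theta}_{1n})-L_n(\Lambda,\Lambda_*,\theta_0)|$ by the Cesàro average of the termwise differences, Taylor-expands $g_1(\theta)=u_t^2(\theta,\Lambda)/V_t^2(\theta,\Lambda_*)$ around $\theta_0$, and concludes from the strong consistency of $\widehat{\theta}_{1n}$ together with the ergodic theorem. Where you genuinely improve on the paper is at the one step the paper disposes of in a single clause: it asserts the ``a.s.\ boundedness of $\tfrac{\partial}{\partial\theta'}g_1(\theta_u)$,'' which, as you correctly observe, is false as literally stated, since $Y_t$ enters the numerator of the gradient (to degree up to three, via $u_t\,\nabla_\theta u_t$ and $u_t^2\,\nabla_\theta V_t$) while the denominator $V_t(\theta,\Lambda_*)$ is $\tciFourier_{t-1}^{Y}$-measurable. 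Your conditioning device is exactly the right repair: the lower bound $V_t(\theta,\Lambda_*)\geq(\sigma_{*\varepsilon}^2+1)(\sigma_{*\omega}^2+\sum_i\sigma_{*\phi_i}^2Y_{t-i}^2)$ makes $V_t^2$ and $V_t^3$ dominate past-observation polynomials of degrees $4$ and $6$ respectively, and replacing each power $Y_t^k$ ($k\leq 3$ in the first gradient term, $k\leq 4$ in the second) by its $\tciFourier_{t-1}^{Y}$-conditional moment restores the matching-degree structure, so $\mathbb{E}(H_t)<\infty$ and the ergodic theorem applies even when $\mathbb{E}(Y_t)=\infty$ --- this is the same mechanism the paper itself uses explicitly elsewhere (e.g.\ in the proof of Lemma \ref{a.6}, where $\mathbb{E}(e_t^2\mid\tciFourier_{t-1}^{Y})$ is substituted before bounding). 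Your explicit treatment of uniformity --- restricting $\widehat{\theta}_{1n}$ to a compact neighbourhood of $\theta_0$ and taking the envelope over $\Lambda\in\Pi$ via \textbf{A1} --- is also left implicit in the paper. One caveat worth flagging: your argument invokes fourth-order moments of the inputs, which go beyond the lemma's bare hypothesis \textbf{A0**} (in Theorem \ref{teo43} fourth moments are only assumed for asymptotic normality); however, since $l_t$ itself contains $e_t^4$, the paper's proof implicitly needs the same moments for $\mathbb{E}(l_t)<\infty$ (cf.\ Lemma \ref{a.8} i), so this is an honest accounting of a hypothesis the paper uses silently rather than a defect of your proof, and it is innocuous for the count distributions considered.
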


\begin{proof}{\textbf{ (Lemma \ref{a.7})}}\\
It holds that%
\begin{eqnarray*}
&&\left\vert L_{n}\left( \Lambda ,\Lambda _{\ast },\widehat{\theta }%
_{1n}\right) -L_{n}\left( \Lambda ,\Lambda _{\ast },\theta _{0}\right)
\right\vert  \\
&\leq &\tfrac{1}{n}\sum_{t=1}^{n}\left\vert \tfrac{\left( \left( Y_{t}-%
\mathcal{Y}_{t-1}^{\prime }\widehat{\theta }_{1n}\right) ^{2}-V_{t}\left( 
\widehat{\theta }_{1n},\Lambda \right) \right) ^{2}}{\left( 
V_{t}\left( 
\widehat{\theta }_{1n},\Lambda _{\ast }\right) \right) ^{2}}-\tfrac{\left(
\left( Y_{t}-\mathcal{Y}_{t-1}^{\prime }\theta _{0}\right) ^{2}-V_{t}\left(
\theta _{0},\Lambda \right) \right) ^{2}}{\left( 
V_{t}\left( \theta _{0},\Lambda
_{\ast }\right) \right) ^{2}}\right\vert .
\end{eqnarray*}%
Using the Taylor expansion of the function 
$
g_{1}\left( \widehat{\theta }%
_{1n}\right) =\tfrac{\left( \left( Y_{t}-\mathcal{Y}_{t-1}^{\prime }\widehat{%
\theta }_{1n}\right) ^{2}-V_{t}\left( \widehat{\theta }_{1n},\Lambda \right)
\right) ^{2}}{\left( 
V_{t}\left( \widehat{\theta }_{1n},\Lambda _{\ast
}\right) \right) ^{2}}    
$
around $\theta _{0}$\, as in Lemma \ref{a.3}, the result follows from the a.s. boundedness of $\tfrac{\partial }{\partial
\theta ^{\prime }}g_{1}\left( \theta _{u}\right) $ and the strong
consistency of $\widehat{\theta }_{1n}$.
\end{proof}

\begin{lemma}
Under \textbf{A0**}, it holds that
\begin{eqnarray*}
&&\text{i) }\mathbb{E}\left( l_{t}\left( \Lambda ,\Lambda _{\ast },\theta _{0}\right)
\right) <\infty,\\
&&\text{ii) } \mathbb{E}\left( l_{t}\left( \Lambda ,\Lambda _{\ast
},\theta _{0}\right) \right) \leq \mathbb{E}\left( l_{t}\left( \Lambda _{0},\Lambda
_{\ast },\theta _{0}\right) \right) \text{ for all }\Lambda \in \Pi, \text{ }
\\
&&\text{iii) }\mathbb{E}\left(l_{t}\left( \Lambda ,\Lambda _{\ast },\theta _{0}\right) \right)
=\mathbb{E}\left(l_{t}\left( \Lambda _{0},\Lambda _{\ast },\theta _{0}\right)\right)
\Longrightarrow \Lambda =\Lambda _{0}.
\end{eqnarray*}
\label{a.8}
\end{lemma}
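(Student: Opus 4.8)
The plan is to establish (i)--(iii) from a single martingale decomposition of $u_t(\theta_0,\Lambda)$ around its value at the true parameter $\Lambda_0$. Write $\bar u_t:=u_t(\theta_0,\Lambda_0)=e_t^2(\theta_0)-V_t(\theta_0,\Lambda_0)$, $D_t(\Lambda):=V_t(\theta_0,\Lambda)-V_t(\theta_0,\Lambda_0)$, and $w_t:=V_t^{-2}(\theta_0,\Lambda_\ast)$, so that $u_t(\theta_0,\Lambda)=\bar u_t-D_t(\Lambda)$ and
\begin{equation*}
l_t(\Lambda,\Lambda_\ast,\theta_0)=w_t\bigl(\bar u_t-D_t(\Lambda)\bigr)^2 .
\end{equation*}
Two facts are used repeatedly. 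First, $w_t$ and $D_t(\Lambda)$ are $\tciFourier_{t-1}^{Y}$-measurable, since $\mathcal{Z}_{t-1}$ and $\mu_t(\theta_0)$ depend only on $Y_{t-1},\dots,Y_{t-p}$. Second, because $V_t(\theta_0,\Lambda_0)=\mathbb{V}(Y_t\mid\tciFourier_{t-1}^{Y})$ by definition, $\mathbb{E}(\bar u_t\mid\tciFourier_{t-1}^{Y})=\mathbb{E}(e_t^2\mid\tciFourier_{t-1}^{Y})-V_t(\theta_0,\Lambda_0)=0$, i.e.\ $\{\bar u_t\}$ is a martingale difference with respect to $\{\tciFourier_t^{Y}\}$.

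For (i), expand $l_t=w_t\bar u_t^2-2w_t\bar u_tD_t(\Lambda)+w_tD_t^2(\Lambda)$ and bound each term. Because $\Lambda_\ast>0$, the denominator $V_t(\theta_0,\Lambda_\ast)$ is bounded away from zero and is a quadratic form in $(Y_{t-1},\dots,Y_{t-p})$ carrying strictly positive weights on $1,Y_{t-1}^2,\dots,Y_{t-p}^2$; it therefore dominates $V_t(\theta_0,\Lambda)$, while $V_t^2(\theta_0,\Lambda_\ast)$ dominates $\mathbb{E}(e_t^4\mid\tciFourier_{t-1}^{Y})$, up to a.s.\ bounded constants. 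Hence $w_tD_t^2(\Lambda)$ is a.s.\ bounded and $\mathbb{E}(w_t\bar u_t^2)=\mathbb{E}\bigl(w_t\,\mathbb{E}(\bar u_t^2\mid\tciFourier_{t-1}^{Y})\bigr)<\infty$ provided the fourth moments of the inputs $\varepsilon_t,\omega_t,\Phi_{it}$ are finite (as holds for the count laws under consideration); the cross term is then integrable by Cauchy--Schwarz, giving $\mathbb{E}(l_t(\Lambda,\Lambda_\ast,\theta_0))<\infty$.

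For (ii) and (iii), take expectations and use the two facts above to annihilate the cross term, $\mathbb{E}(w_t\bar u_tD_t(\Lambda))=\mathbb{E}\bigl(w_tD_t(\Lambda)\mathbb{E}(\bar u_t\mid\tciFourier_{t-1}^{Y})\bigr)=0$, which yields the orthogonal identity
\begin{equation*}
\mathbb{E}\bigl(l_t(\Lambda,\Lambda_\ast,\theta_0)\bigr)=\mathbb{E}\bigl(l_t(\Lambda_0,\Lambda_\ast,\theta_0)\bigr)+\mathbb{E}\bigl(w_tD_t^2(\Lambda)\bigr) .
\end{equation*}
Since $w_t>0$, the correction $\mathbb{E}(w_tD_t^2(\Lambda))\ge 0$ identifies $\Lambda_0$ as the minimizer of $\Lambda\mapsto\mathbb{E}(l_t(\Lambda,\Lambda_\ast,\theta_0))$ (the content of (ii)), with equality precisely when $D_t(\Lambda)=V_t(\theta_0,\Lambda)-V_t(\theta_0,\Lambda_0)=0$ a.s. It then remains to show this forces $\Lambda=\Lambda_0$. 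Substituting $V_t(\theta_0,\Lambda)=(\sigma_\varepsilon^2+1)\Delta'\mathcal{Z}_{t-1}+\sigma_\varepsilon^2\mu_t^2(\theta_0)$, with $\Lambda=(\sigma_\varepsilon^2,\Delta')'$, the equality $D_t(\Lambda)=0$ a.s.\ reads
\begin{equation*}
\bigl[(\sigma_\varepsilon^2+1)\Delta-(\sigma_{0\varepsilon}^2+1)\Delta_0\bigr]'\mathcal{Z}_{t-1}+(\sigma_\varepsilon^2-\sigma_{0\varepsilon}^2)\,\mu_t^2(\theta_0)=0\quad\text{a.s.}
\end{equation*}
Expanding $\mu_t^2(\theta_0)=(1+\omega_0+\sum_{i}\phi_{0i}Y_{t-i})^2$ produces linear monomials $Y_{t-i}$ (coefficient $2(1+\omega_0)\phi_{0i}$) that do not occur in $\mathcal{Z}_{t-1}=(1,Y_{t-1}^2,\dots,Y_{t-p}^2)'$. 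Provided some $\phi_{0i}\neq 0$, and using $1+\omega_0\ge 1>0$, the vanishing of the coefficient of $Y_{t-i}$ forces $\sigma_\varepsilon^2=\sigma_{0\varepsilon}^2$; the displayed identity then collapses to $(\Delta-\Delta_0)'\mathcal{Z}_{t-1}=0$ a.s., and non-singularity of $\mathbb{E}(\mathcal{Z}_{t-1}\mathcal{Z}_{t-1}')$ gives $\Delta=\Delta_0$, hence $\Lambda=\Lambda_0$, which is (iii). The main obstacle is exactly this identification step: because $V_t$ depends nonlinearly on $\sigma_\varepsilon^2$ and $\mu_t^2$ couples distinct lags, one must argue that the stationary law of $(Y_{t-1},\dots,Y_{t-p})$ is not supported on the zero set of a nontrivial degree-two polynomial with the above monomial pattern, equivalently that the relevant monomials are a.s.\ linearly independent, which holds here by the non-degeneracy and unbounded support generated by the RMINAR dynamics under \textbf{A0**}.
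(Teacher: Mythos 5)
Your proof is correct and takes essentially the same route as the paper's: the same decomposition $u_{t}\left( \theta _{0},\Lambda \right) =\bar{u}_{t}-D_{t}\left( \Lambda \right) $ with the cross term annihilated by the martingale-difference property of $\bar{u}_{t}$ and the $\mathcal{F}_{t-1}^{Y}$-measurability of the weights, followed by identification through $V_{t}\left( \theta _{0},\Lambda \right) =V_{t}\left( \theta _{0},\Lambda _{0}\right) $ a.s. You are in fact more explicit than the paper at the two points it glosses over --- the fourth-moment requirement behind (i), where the paper's claim of ``boundedness'' of $u_{t}^{2}/V_{t}^{2}$ holds only at the level of conditional expectations, and the monomial-matching step in (iii) that separates $\sigma _{\varepsilon }^{2}=\sigma _{0\varepsilon }^{2}$ from $\Delta =\Delta _{0}$, which the paper asserts as a bare chain of equivalences --- and note that both your argument and the paper's actually establish $\mathbb{E}\left( l_{t}\left( \Lambda ,\Lambda _{\ast },\theta _{0}\right) \right) \geq \mathbb{E}\left( l_{t}\left( \Lambda _{0},\Lambda _{\ast },\theta _{0}\right) \right) $, so the inequality as printed in the lemma statement is a typo with the direction reversed.
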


\begin{proof}{\textbf{(Lemma \ref{a.8})}}

i) The result obviously follows from the boundedness of $%
\tfrac{u_{t}^{2}\left( \widehat{\theta }_{1n},\Lambda \right) }{%
V_{t}^{2}\left( \widehat{\theta }_{1n},\Lambda _{\ast }\right) }$. 

ii) Standard arguments show that for all $\Lambda \in \Pi $%
\begin{eqnarray}
\mathbb{E}\left( l_{t}\left( \Lambda ,\Lambda _{\ast },\theta _{0}\right) \right) 
&=&\mathbb{E}\left( \tfrac{\left( e_{t}^{2}-V_{t}\left( \theta _{0},\Lambda
_{0}\right) +V_{t}\left( \theta _{0},\Lambda _{0}\right) -V_{t}\left( \theta
_{0},\Lambda \right) \right) ^{2}}{V_{t}^{2}\left( \widehat{\theta }%
_{1n},\Lambda _{\ast }\right) }\right)   \notag \\
&=&\mathbb{E}\left( \tfrac{\left( e_{t}^{2}-V_{t}\left( \theta _{0},\Lambda
_{0}\right) \right) ^{2}}{V_{t}^{2}\left( \widehat{\theta }_{1n},\Lambda
_{\ast }\right) }\right) +\mathbb{E}\left( \tfrac{\left( V_{t}\left( \theta
_{0},\Lambda _{0}\right) -V_{t}\left( \theta _{0},\Lambda \right) \right)
^{2}}{V_{t}^{2}\left( \widehat{\theta }_{1n},\Lambda _{\ast }\right) }%
\right)   \notag \\
&\geq &\mathbb{E}\left( \tfrac{\left( e_{t}^{2}-V_{t}\left( \theta _{0},\Lambda
_{0}\right) \right) ^{2}}{V_{t}^{2}\left( \widehat{\theta }_{1n},\Lambda
_{\ast }\right) }\right) =\mathbb{E}\left( l_{t}\left( \Lambda _{0},\Lambda _{\ast
},\theta _{0}\right) \right) .  \label{A.19}
\end{eqnarray}

iii) Inequality (\ref{A.19}) becomes an equality if and only if%
\begin{equation*}
\mathbb{E}\left( \tfrac{\left( V_{t}\left( \theta _{0},\Lambda _{0}\right)
-V_{t}\left( \theta _{0},\Lambda \right) \right) ^{2}}{V_{t}^{2}\left( 
\widehat{\theta }_{1n},\Lambda _{\ast }\right) }\right) =0,
\end{equation*}%
which holds if and only if\\
$
\left. \tfrac{\left( V_{t}\left( \theta _{0},\Lambda _{0}\right)
-V_{t}\left( \theta _{0},\Lambda \right) \right) ^{2}}{u_{t}^{2}\left( 
\widehat{\theta }_{1n},\Lambda _{\ast }\right) }=0\Leftrightarrow
V_{t}\left( \theta _{0},\Lambda \right) =V_{t}\left( \theta _{0},\Lambda
_{0}\right) \right.  \\
\Leftrightarrow \sigma _{\varepsilon }^{2}\left( \left( \mathcal{Y}
_{t-1}^{\prime }\theta _{0}\right) ^{2}+\delta _{t}^{2}\left( \Delta \right)
\right) +\delta _{t}^{2}\left( \Delta \right) =\sigma _{0\varepsilon
}^{2}\left( \left( \mathcal{Y}_{t-1}^{\prime }\theta _{0}\right) ^{2}+\delta
_{t}^{2}\left( \Delta _{0}\right) \right) +\delta _{t}^{2}\left( \Delta
_{0}\right)  \\
\Leftrightarrow \sigma _{\varepsilon }^{2}\left( \left( \mathcal{Y}%
_{t-1}^{\prime }\theta _{0}\right) ^{2}+\mathcal{Z}_{t-1}^{\prime }\Delta
\right) +\mathcal{Z}_{t-1}^{\prime }\Delta =\sigma _{0\varepsilon
}^{2}\left( \left( \mathcal{Y}_{t-1}^{\prime }\theta _{0}\right) ^{2}+%
\mathcal{Z}_{t-1}^{\prime }\Delta _{0}\right) +\mathcal{Z}_{t-1}^{\prime
}\Delta _{0} \\
\Leftrightarrow \sigma _{\varepsilon }^{2}\left( \left( \mathcal{Y}
_{t-1}^{\prime }\theta _{0}\right) ^{2}+\mathcal{Z}_{t-1}^{\prime }\Delta
\right) =\sigma _{0\varepsilon }^{2}\left( \left( \mathcal{Y}_{t-1}^{\prime
}\theta _{0}\right) ^{2}+\mathcal{Z}_{t-1}^{\prime }\Delta _{0}\right) \text{
and }\mathcal{Z}_{t-1}^{\prime }\Delta =\mathcal{Z}_{t-1}^{\prime }\Delta
_{0} \\
\Leftrightarrow \sigma _{\varepsilon }^{2}=\sigma _{0\varepsilon }^{2}%
\text{ and }\Delta =\Delta _{0}\Leftrightarrow \Lambda =\Lambda _{0}.
$
\end{proof}

\begin{lemma}
Under \textbf{A0** }for all $\Lambda \neq
\Lambda _{0}$, there is a neighborhood $\mathcal{V}\left( \Lambda
\right)$ such that%
\begin{equation*}
\underset{n\rightarrow \infty }{\lim \inf }\inf_{\Lambda _{1}\in \mathcal{V}%
\left( \Lambda \right) }L_{n}\left( \Lambda _{1},\Lambda _{\ast },\widehat{%
\theta }_{1n}\right) >\underset{n\rightarrow \infty }{\text{ }\lim \inf }%
L_{n}\left( \Lambda _{0},\Lambda _{\ast },\widehat{\theta }_{1n}\right) 
a.s.
\end{equation*}
\label{a.9}
\end{lemma}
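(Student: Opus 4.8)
The plan is to follow the classical Wald-type local-separation argument for strong consistency of an extremum estimator over a compact parameter space, exploiting the two ingredients already in hand: the uniform elimination of the estimated nuisance parameter $\widehat{\theta}_{1n}$ in favour of $\theta_0$ (Lemma \ref{a.7}) and the strict identifiability of the limiting criterion at $\Lambda_0$ (Lemma \ref{a.8}). First I would use Lemma \ref{a.7} to observe that $\sup_{\Lambda\in\Pi}\lvert L_n(\Lambda,\Lambda_\ast,\widehat{\theta}_{1n})-L_n(\Lambda,\Lambda_\ast,\theta_0)\rvert\to 0$ a.s., so throughout one may replace $\widehat{\theta}_{1n}$ by $\theta_0$ in every $\liminf$ appearing in the statement. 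In particular, the ergodic theorem applied to the stationary ergodic sequence $\{l_t(\Lambda_0,\Lambda_\ast,\theta_0)\}$, integrable by Lemma \ref{a.8}(i), gives
\[
\liminf_{n\to\infty}L_n(\Lambda_0,\Lambda_\ast,\widehat{\theta}_{1n})=\lim_{n\to\infty}L_n(\Lambda_0,\Lambda_\ast,\theta_0)=\mathbb{E}(l_t(\Lambda_0,\Lambda_\ast,\theta_0)).
\]

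Next, for the fixed $\Lambda\neq\Lambda_0$, I would introduce the closed balls $\mathcal{V}_k(\Lambda)$ of radius $1/k$ centred at $\Lambda$ and the localized integrands $\inf_{\Lambda_1\in\mathcal{V}_k(\Lambda)}l_t(\Lambda_1,\Lambda_\ast,\theta_0)$. Each is measurable and, under \textbf{A0**}, uniformly bounded, because the weighting denominator $V_t(\theta_0,\Lambda_\ast)$ is bounded away from zero by $\sigma_{\ast\varepsilon}^2>0$ while the numerator is controlled as in Lemma \ref{a.8}(i). Hence the ergodic theorem yields
\[
\tfrac{1}{n}\sum_{t=1}^n\inf_{\Lambda_1\in\mathcal{V}_k(\Lambda)}l_t(\Lambda_1,\Lambda_\ast,\theta_0)\overset{a.s.}{\underset{n\rightarrow\infty}{\rightarrow}}\mathbb{E}\Big(\inf_{\Lambda_1\in\mathcal{V}_k(\Lambda)}l_t(\Lambda_1,\Lambda_\ast,\theta_0)\Big).
\]
Since $l_t(\cdot,\Lambda_\ast,\theta_0)$ is continuous in its first argument (it is a smooth function of $V_t(\theta_0,\Lambda)$, which is affine in $\Lambda$) and the balls shrink to $\{\Lambda\}$, the infima increase monotonically to $l_t(\Lambda,\Lambda_\ast,\theta_0)$ as $k\to\infty$; by monotone convergence $\mathbb{E}(\inf_{\mathcal{V}_k(\Lambda)}l_t)\uparrow\mathbb{E}(l_t(\Lambda,\Lambda_\ast,\theta_0))$.

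Now I would invoke Lemma \ref{a.8}(ii)--(iii): because $\Lambda\neq\Lambda_0$, the limiting criterion is strictly larger at $\Lambda$, i.e. $\mathbb{E}(l_t(\Lambda,\Lambda_\ast,\theta_0))>\mathbb{E}(l_t(\Lambda_0,\Lambda_\ast,\theta_0))$. Combined with the monotone increase above, this lets me fix $k$ large enough that $\mathbb{E}(\inf_{\mathcal{V}_k(\Lambda)}l_t)>\mathbb{E}(l_t(\Lambda_0,\Lambda_\ast,\theta_0))$, and I set $\mathcal{V}(\Lambda):=\mathcal{V}_k(\Lambda)$. Using Lemma \ref{a.7} once more to drop $\widehat{\theta}_{1n}$ uniformly over $\mathcal{V}(\Lambda)$, together with the pointwise bound $\inf_{\mathcal{V}(\Lambda)}L_n\geq\tfrac{1}{n}\sum_t\inf_{\mathcal{V}(\Lambda)}l_t$, I obtain
\[
\liminf_{n\to\infty}\inf_{\Lambda_1\in\mathcal{V}(\Lambda)}L_n(\Lambda_1,\Lambda_\ast,\widehat{\theta}_{1n})\;\geq\;\mathbb{E}\Big(\inf_{\Lambda_1\in\mathcal{V}(\Lambda)}l_t(\Lambda_1,\Lambda_\ast,\theta_0)\Big)\;>\;\mathbb{E}(l_t(\Lambda_0,\Lambda_\ast,\theta_0)),
\]
and the right-hand side equals $\liminf_n L_n(\Lambda_0,\Lambda_\ast,\widehat{\theta}_{1n})$ by the first paragraph, which is exactly the claim.

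The delicate part is not the identifiability, delivered ready-made by Lemma \ref{a.8}, but the rigorous interchange of infimum and expectation: one must verify that $\inf_{\Lambda_1\in\mathcal{V}_k(\Lambda)}l_t$ is a genuine measurable and integrable random variable so the ergodic theorem applies, and that the monotone convergence as $k\to\infty$ is legitimate. Both hinge on the a.s. boundedness of the integrands, which under \textbf{A0**} follows from the strictly positive lower bound on $V_t(\theta_0,\Lambda_\ast)$ together with the finite-moment assumptions ensuring integrability of $e_t^2$, while affineness of $V_t(\theta_0,\cdot)$ in $\Lambda$ secures the pointwise convergence of the shrinking infima. A secondary point is that the replacement of $\widehat{\theta}_{1n}$ by $\theta_0$ must hold uniformly over the neighborhood, which is precisely what the supremum in Lemma \ref{a.7} provides.
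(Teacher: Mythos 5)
Your proposal is correct and follows essentially the same route as the paper's proof: shrinking balls $\mathcal{V}_k(\Lambda)$, the ergodic theorem applied to the localized infima $\inf_{\Lambda_1\in\mathcal{V}_k(\Lambda)}l_t\left(\Lambda_1,\Lambda_{\ast},\theta_0\right)$, monotone (Beppo--Levi) convergence of their expectations up to $\mathbb{E}\left(l_t\left(\Lambda,\Lambda_{\ast},\theta_0\right)\right)$, the strict identifiability of Lemma \ref{a.8} to fix $k$, and Lemma \ref{a.7} to replace $\widehat{\theta}_{1n}$ by $\theta_0$ uniformly. If anything, your write-up is more explicit than the paper's at the two points it glosses over --- the pointwise bound $\inf_{\mathcal{V}}L_n\geq\tfrac{1}{n}\sum_t\inf_{\mathcal{V}}l_t$ and the explicit choice of $k$ giving the strict inequality --- so no gap to report.
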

\begin{proof}\textbf{(Lemma \ref{a.9})}\\
For all $\overline{\Lambda }\in \Pi $ and $k\in 
\mathbb{N}
^{\ast }$, let $\mathcal{V}_{k}(\overline{\Lambda })$ be the open ball with
center $\overline{\Lambda }$ and radius $1/k$. Since $\inf_{\Lambda \in 
\mathcal{V}_{k}(\overline{\Lambda })\cap \Pi }l_{t}\left( \Lambda ,\Lambda
_{\ast },\theta _{0}\right) $ is a measurable function of the terms of $%
\left\{ Y_{t},t\in 
\mathbb{Z}
\right\} $, which is strictly stationary and ergodic under \textbf{A0}, then 
$\left\{ \inf_{\Lambda \in \mathcal{V}_{k}(\overline{\Lambda })\cap \Pi
}l_{t}\left( \Lambda ,\Lambda _{\ast },\theta _{0}\right) ,\text{ }t\in 
\mathbb{Z}
\right\} $ is also strictly stationary and ergodic where, by Lemma \ref{a.8},
\begin{equation*}
\mathbb{E}\left( \inf_{\Lambda \in \mathcal{V}_{k}(\overline{\Lambda })\cap \Pi
}l_{t}\left( \Lambda ,\Lambda _{\ast },\theta _{0}\right) \right) \in
\lbrack -\infty ,+\infty \lbrack .
\end{equation*}%
Therefore, in view of Lemma \ref{a.7} and the ergodic theorem
(Billingsley, $2008$), it follows that%
\begin{eqnarray*}
\underset{n\rightarrow \infty }{\lim \inf }\inf_{\Lambda \in \mathcal{V}_{k}(%
\overline{\Lambda })\cap \Pi }L_{n}\left( \Lambda ,\Lambda _{\ast },\widehat{%
\theta }_{1n}\right)  &=&\text{ }\underset{n\rightarrow \infty }{\lim \inf }%
\inf_{\Lambda \in \mathcal{V}_{k}(\overline{\Lambda })\cap \Pi }L_{n}\left(
\Lambda _{0},\Lambda _{\ast },\theta _{0}\right)  \\
&\geq &\mathbb{E}\left( \inf_{\Lambda \in \mathcal{V}_{k}(\overline{\Lambda })\cap
\Pi }l_{t}\left( \Lambda ,\Lambda _{\ast },\theta _{0}\right) \right) .
\end{eqnarray*}%
By the Beppo-Levi theorem, $\mathbb{E}\left( \inf_{\Lambda \in \mathcal{V}_{k}(%
\overline{\Lambda })\cap \Pi }l_{t}\left( \Lambda ,\Lambda _{\ast },\theta
_{0}\right) \right) $ converges while increasing to $\mathbb{E}\left( l_{t}\left( 
\overline{\Lambda },\Lambda _{\ast },\theta _{0}\right) \right) $ as $%
k\rightarrow \infty $. Hence, Lemma \ref{a.9}  follows from Lemma \ref{a.8} ii). 
\end{proof}\\

Next, it follows the proof of the Theorem \ref{teo43}, based on the previous lemmas.

\begin{proof}{\textbf{(Theorem \ref{teo43})}}

a) The proof of (\ref{4.12b})-i  is based on the results in Lemma (\ref{a.7}) and (\ref{a.9}). It is shown that for all $%
\overline{\Lambda }\neq \Lambda _{0}$, there exists a neighborhood $\mathcal{%
V}\left( \overline{\Lambda }\right) $ such that 
\begin{eqnarray*}
\underset{n\rightarrow \infty }{\lim \inf }\inf_{\Lambda \in \mathcal{V}_{k}(%
\overline{\Lambda })\cap \Pi }L_{n}\left( \Lambda ,\Lambda _{\ast },\widehat{%
\theta }_{1n}\right)  &\geq &\text{ }\underset{n\rightarrow \infty }{\lim
\inf }L_{n}\left( \Lambda _{0},\Lambda _{\ast },\widehat{\theta }%
_{1n}\right)  \\
&=&\underset{n\rightarrow \infty }{\text{ }\lim \inf }L_{n}\left( \Lambda
_{0},\Lambda _{\ast },\theta _{0}\right) =\mathbb{E}\left( l_{t}\left( \Lambda
_{0},\Lambda _{\ast },\theta _{0}\right) \right),
\end{eqnarray*}%
which concludes the proof.

b) The proof of (\ref{4.12b})-ii is similar to that of (\ref{4.12b})-i by replacing $\widehat{\theta }_{1n}$ by $\widehat{\theta }_{2n}$. Similarly, the proof of (\ref{4.12b})-iii and (\ref{4.12b})-iv is similar to that
of (\ref{4.12b})-i by replacing $\Lambda _{\ast }$ by $\widehat{\Lambda }_{1n}$ and $\Lambda _{\ast }$ by $\widehat{\Lambda }_{2n}$, respectively.

c) The proof of the results in (\ref{4.13}) is similar to all estimators and thus only the consistency for $\widehat{\Lambda }_{1n}$ is shown.

d) From standard arguments, the proof of the consistency of $\widehat{\Lambda }_{1n}$ is completed while using the compactness assumption \textbf{A1} of $\Pi $. Using again a Taylor expansion of 
\begin{eqnarray*}
\tfrac{\left(
\left( Y_{t}-\mathcal{Y}_{t-1}^{\prime }\widehat{\theta }_{1n}\right)
^{2}-V_{t}\left( \widehat{\theta }_{1n},\Lambda \right) \right) ^{2}}{\left(
V_{t}\left( \widehat{\theta }_{1n},\Lambda _{\ast }\right) \right) ^{2}},
\end{eqnarray*}
around $\theta _{0}$, it holds that
 \begin{eqnarray*}
&&\sqrt{n}\sup_{\theta \in \Theta }\left\Vert L_{n}\left( \Lambda ,\Lambda
_{\ast },\widehat{\theta }_{1n}\right) -L_{n}\left( \Lambda ,\Lambda _{\ast
},\theta _{0}\right) \right\Vert  \\
&\leq &\tfrac{1}{\sqrt{n}}\sum_{t=1}^{n}\left\vert \tfrac{\left( \left(
Y_{t}-\mathcal{Y}_{t-1}^{\prime }\widehat{\theta }_{1n}\right)
^{2}-V_{t}\left( \widehat{\theta }_{1n},\Lambda \right) \right) ^{2}}{\left(
V_{t}\left( \widehat{\theta }_{1n},\Lambda _{\ast }\right) \right) ^{2}}%
-\tfrac{\left( \left( Y_{t}-\mathcal{Y}_{t-1}^{\prime }\theta _{0}\right)
^{2}-V_{t}\left( \theta _{0},\Lambda \right) \right) ^{2}}{\left(
V_{t}\left( \theta
_{0},\Lambda _{\ast }\right) \right) ^{2}}\right\vert = o_{a.s.}\left( 1\right) \text{.}
\end{eqnarray*}

Now from \textbf{A2} and (\ref{4.12b})-i, the estimate $\widehat{\Lambda }_{1n}$
cannot be on the boundary of $\Pi $\ for a sufficiently large $n$.
Therefore, a Taylor expansion of $\sqrt{n}\tfrac{\partial L_{n}\left( 
\widehat{\Lambda }_{1n},\Lambda _{\ast },\widehat{\theta }_{1n}\right) }{%
\partial \Lambda }$ around $\Lambda _{0}$ implies that for some $\Lambda
_{u}\in \Pi $ between $\widehat{\Lambda }_{1n}$ and $\Lambda _{0}$,%
\begin{eqnarray}
0 &=&\sqrt{n}\tfrac{\partial L_{n}\left( \widehat{\Lambda }_{1n},\Lambda
_{\ast },\widehat{\theta }_{1n}\right) }{\partial \Lambda }=\sqrt{n}\tfrac{%
\partial L_{n}\left( \widehat{\Lambda }_{1n},\Lambda _{\ast },\theta
_{0}\right) }{\partial \Lambda }+o_{a.s.}\left( 1\right)   \notag \\
&=&\sqrt{n}\tfrac{L_{n}\left( \Lambda _{0},\Lambda _{\ast },\theta
_{0}\right) }{\partial \Lambda }+\sqrt{n}\tfrac{L_{n}\left( \Lambda
_{u},\Lambda _{\ast },\theta _{0}\right) }{\partial \Lambda }\left( \widehat{%
\Lambda }_{1n}-\Lambda _{0}\right) +o_{a.s.}\left( 1\right) .  \label{A.20}
\end{eqnarray}
Finally, given (\ref{A.20}) the result is established while the following
two lemmas are shown.
\end{proof}

\begin{lemma}
 Under \textbf{A0**}, \textit{\textbf{A1}}, \textit{\textbf{A2}} and the finiteness of the fourth moments of the random coefficients, then
\begin{equation*}
\sqrt{n}\tfrac{L_{n}\left( \Lambda _{0},\Lambda _{\ast },\theta _{0}\right) 
}{\partial \Lambda }\underset{n\rightarrow \infty }{\overset{\mathcal{L}}{%
\rightarrow }}\mathcal{N}\left( 0,D\left( \Lambda _{\ast }\right) \right) .
\end{equation*}
\label{S.4}
\end{lemma}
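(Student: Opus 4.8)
The plan is to write the gradient as a normalized sum of stationary martingale--difference increments and then invoke the central limit theorem for square--integrable martingales, exactly as was done for (\ref{A.15})--(\ref{A.16}) in Lemma \ref{a.6}. First I would differentiate
\[
l_{t}\left( \Lambda ,\Lambda _{\ast },\theta _{0}\right) =\frac{\left( \left( Y_{t}-\mathcal{Y}_{t-1}^{\prime }\theta _{0}\right) ^{2}-V_{t}\left( \theta _{0},\Lambda \right) \right) ^{2}}{V_{t}^{2}\left( \theta _{0},\Lambda _{\ast }\right) }
\]
with respect to $\Lambda $ and evaluate at $\Lambda =\Lambda _{0}$. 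Since the weight $V_{t}\left( \theta _{0},\Lambda _{\ast }\right) $ is free of $\Lambda $ and $\left( Y_{t}-\mathcal{Y}_{t-1}^{\prime }\theta _{0}\right) ^{2}-V_{t}\left( \theta _{0},\Lambda _{0}\right) =e_{t}^{2}-V_{t}=u_{t}$, the chain rule gives the increment
\[
w_{t}:=\frac{\partial l_{t}\left( \Lambda _{0},\Lambda _{\ast },\theta _{0}\right) }{\partial \Lambda }=-\,\frac{2\,u_{t}}{V_{t}^{2}\left( \theta _{0},\Lambda _{\ast }\right) }\,\frac{\partial V_{t}\left( \theta _{0},\Lambda _{0}\right) }{\partial \Lambda },\qquad \sqrt{n}\,\frac{\partial L_{n}\left( \Lambda _{0},\Lambda _{\ast },\theta _{0}\right) }{\partial \Lambda }=\frac{1}{\sqrt{n}}\sum_{t=1}^{n}w_{t},
\]
so the reported limiting covariance is the stationary covariance of $w_{t}$ (the scalar prefactor being carried along and cancelling against the matching factor in the Hessian expansion (\ref{A.20}), so that it coincides with $D\left( \Lambda _{\ast }\right) $ as defined in Theorem \ref{teo43}).

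Next I would verify that $\{w_{t}\}$ is a square--integrable, strictly stationary and ergodic martingale difference with respect to $\{\mathcal{F}_{t}^{Y}\}$. The factor $V_{t}^{-2}\left( \theta _{0},\Lambda _{\ast }\right) \,\partial V_{t}\left( \theta _{0},\Lambda _{0}\right) /\partial \Lambda $ is $\mathcal{F}_{t-1}^{Y}$--measurable, and $\mathbb{E}\left( u_{t}\mid \mathcal{F}_{t-1}^{Y}\right) =0$ because $u_{t}$ is the conditional--variance residual; hence $\mathbb{E}\left( w_{t}\mid \mathcal{F}_{t-1}^{Y}\right) =0$. Strict stationarity and ergodicity descend from $\{Y_{t}\}$ through Theorem \ref{teo3} under \textbf{A0**}. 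For square--integrability I would reproduce the a.s.--boundedness device of Lemmas \ref{a.2}--\ref{a.6}: the components of $\partial V_{t}/\partial \Lambda $ are $\delta _{t}^{2}+\mu _{t}^{2}$ and $\left( \sigma _{\varepsilon }^{2}+1\right) \mathcal{Z}_{t-1}$, i.e.~quadratic in the lagged observations, while $V_{t}^{2}\left( \theta _{0},\Lambda _{\ast }\right) $ is quartic in those same lags and bounded below by the positive constant $\left( \sigma _{\ast \varepsilon }^{2}+1\right) \sigma _{\ast \omega }^{2}+\sigma _{\ast \varepsilon }^{2}$. Writing $Y_{t}=\lambda _{t}\varepsilon _{t}$ so that $u_{t}$ grows quadratically in the lags as well, one sees that the observation--dependent part of $w_{t}w_{t}^{\prime }$ is a.s.~bounded, leaving only fourth powers of the current inputs $\varepsilon _{t}$ and $\Phi _{it}$; by their mutual independence of $\mathcal{F}_{t-1}^{Y}$ and the assumptions $\mathbb{E}\left( \varepsilon _{t}^{4}\right) <\infty $, $\mathbb{E}\left( \Phi _{it}^{4}\right) <1$, it follows that $\mathbb{E}\left( w_{t}w_{t}^{\prime }\right) =D\left( \Lambda _{\ast }\right) <\infty $.

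With these two facts in hand, the third step applies the CLT for square--integrable stationary ergodic martingale differences (Billingsley, $2008$; Hall and Heyde, $1980$), precisely as in the proof of Lemma \ref{a.6}. By the ergodic theorem the normalized conditional second--moment sum $\frac{1}{n}\sum_{t=1}^{n}w_{t}w_{t}^{\prime }$ converges a.s.~to $\mathbb{E}\left( w_{t}w_{t}^{\prime }\right) =D\left( \Lambda _{\ast }\right) $, and the conditional Lindeberg condition is secured by the finiteness of the fourth moments just invoked; this yields $\frac{1}{\sqrt{n}}\sum_{t=1}^{n}w_{t}\overset{\mathcal{L}}{\rightarrow }\mathcal{N}\left( 0,D\left( \Lambda _{\ast }\right) \right) $, which is the asserted conclusion.

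I expect the genuine obstacle to be the square--integrability step, namely establishing $\mathbb{E}\left( w_{t}w_{t}^{\prime }\right) <\infty$ \emph{everywhere} in the parameter space, including the infinite--mean regime where $\mathbb{E}\left( Y_{t}\right) =\infty$. The delicate point is that $u_{t}$ and $\partial V_{t}/\partial \Lambda $ individually diverge with the lagged observations, yet the self--normalizing weight $V_{t}^{-2}\left( \theta _{0},\Lambda _{\ast }\right) $---quartic in the same lags---exactly offsets that growth, so that only the (light--tailed, count--distributed) inputs' fourth moments survive. Turning this cancellation into a rigorous domination, rather than a heuristic order count, is the crux; the compactness \textbf{A1} and interior assumption \textbf{A2} are needed only to justify the Taylor expansion in (\ref{A.20}) that consumes this lemma, not for the lemma itself.
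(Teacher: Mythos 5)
Your proposal follows essentially the same route as the paper's own proof: you compute the gradient increment $\partial l_{t}\left( \Lambda _{0},\Lambda _{\ast },\theta _{0}\right) /\partial \Lambda =-2\,u_{t}\,V_{t}^{-2}\left( \theta _{0},\Lambda _{\ast }\right) \partial V_{t}\left( \theta _{0},\Lambda _{0}\right) /\partial \Lambda$, observe that it is a square-integrable stationary ergodic martingale difference with respect to $\left\{ \mathcal{F}_{t}^{Y}\right\}$ whose second-moment matrix is $D\left( \Lambda _{\ast }\right)$, and conclude by the martingale CLT, which is exactly the paper's (much terser) argument. Your extra care is all to the good: you spell out the a.s.-boundedness device establishing square integrability from the fourth moments of the inputs, which the paper only asserts, and you correctly flag the constant-factor bookkeeping (the factor $-2$, hence $4$ in the covariance, silently absorbed by the matching normalization of the Hessian in (\ref{A.20})) as well as, implicitly, the paper's typo writing $u_{t}^{2}\left( \theta _{0},\Lambda _{\ast }\right)$ where the weight $V_{t}^{2}\left( \theta _{0},\Lambda _{\ast }\right)$ is meant.
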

\begin{proof}{(\textbf{Lemma \ref{S.4}})}\\
Clearly $\left\{ \tfrac{\partial l_{t}\left( \Lambda
_{0},\Lambda _{\ast },\theta _{0}\right) }{\partial \theta },t\in 
\mathbb{Z}
\right\} $ is a martingale difference with respect to $\left\{ \tciFourier
_{t}^{Y},t\in 
\mathbb{Z}
\right\} $ where 
\begin{equation*}
\tfrac{\partial l_{t}\left( \Lambda ,\Lambda _{\ast },\theta _{0}\right) }{%
\partial \theta }=-2\tfrac{ \partial V_{t}\left( \theta _{0},\Lambda
\right) }{\partial \Lambda }\tfrac{e_{t}^{2}-V_{t}\left( \theta
_{0},\Lambda \right) }{u_{t}^{2}\left( \theta _{0},\Lambda _{\ast }\right) }
\mbox{ and } 
\sqrt{n}\tfrac{L_{n}\left( \Lambda _{0},\Lambda _{\ast },\theta
_{0}\right) }{\partial \Lambda }=\sum_{t=1}^{n}\tfrac{1}{\sqrt{n}}\tfrac{%
\partial l_{t}\left( \Lambda _{0},\Lambda _{\ast },\theta _{0}\right) }{%
\partial \Lambda }. 
\end{equation*}
By \textbf{A1} and \textbf{A2}, it holds that $\mathbb{E}\left( \tfrac{\partial
l_{t}\left( \Lambda _{0},\Lambda _{\ast },\theta _{0}\right) }{\partial
\Lambda }\tfrac{\partial l_{t}\left( \Lambda _{0},\Lambda _{\ast },\theta
_{0}\right) }{\partial \Lambda ^{\prime }}\right) =D\left( \Lambda _{\ast
}\right)$, and the lemma follows from the martingale CLT.    
\end{proof}
\begin{lemma}
Under \textbf{A0}**, \textbf{A1} and \textbf{A2} it follows that 
$
\tfrac{\partial ^{2}L_{n}\left( \Lambda _{u},\Lambda _{\ast },\theta
_{0}\right) }{\partial \Lambda \partial \Lambda ^{\prime }}\underset{%
n\rightarrow \infty }{\overset{a.s.}{\rightarrow }}C\left( \Lambda _{\ast
}\right).
$
\label{S.5}
\end{lemma}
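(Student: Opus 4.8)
The plan is to exploit the strong consistency $\widehat{\Lambda}_{1n}\overset{a.s.}{\to}\Lambda_0$ already established in (\ref{4.12b})-iii: since $\Lambda_u$ lies between $\widehat{\Lambda}_{1n}$ and $\Lambda_0$, it inherits $\Lambda_u\overset{a.s.}{\to}\Lambda_0$. The argument then proceeds in three moves. First I would write the Hessian $\partial^2 l_t/\partial\Lambda\partial\Lambda'$ explicitly and isolate its two pieces; then establish a \emph{uniform} strong law of large numbers for $\partial^2 L_n(\Lambda,\Lambda_\ast,\theta_0)/\partial\Lambda\partial\Lambda'$ over a compact neighbourhood of $\Lambda_0$; and finally identify the deterministic limit at $\Lambda_0$ with $C(\Lambda_\ast)$ through a martingale-difference cancellation.

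Differentiating once more the first-order expression of Lemma \ref{S.4} and applying the product rule produces two contributions, both normalised by $V_t^{2}(\theta_0,\Lambda_\ast)$: one built from the outer product $\tfrac{\partial V_t(\theta_0,\Lambda)}{\partial\Lambda}\tfrac{\partial V_t(\theta_0,\Lambda)}{\partial\Lambda'}$, and one built from $(e_t^2-V_t(\theta_0,\Lambda))\,\tfrac{\partial^2 V_t(\theta_0,\Lambda)}{\partial\Lambda\partial\Lambda'}$. For the multiplicative model (\ref{3.18}) one has $V_t(\theta_0,\Lambda)=(\sigma_\varepsilon^2+1)\,\mathcal{Z}_{t-1}'\Delta+\sigma_\varepsilon^2\mu_t^2$, which is bilinear in the blocks of $\Lambda=(\sigma_\varepsilon^2,\Delta')'$; hence $\partial^2 V_t/\partial\Lambda\partial\Lambda'$ does not vanish (its only non-zero block is $\mathcal{Z}_{t-1}$, arising from the cross derivative in $\sigma_\varepsilon^2$ and $\Delta$) but is $\tciFourier_{t-1}^Y$-measurable. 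Thus the second contribution genuinely appears and must be controlled through its expectation rather than discarded termwise.

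Next I would prove that $\partial^2 L_n(\Lambda,\Lambda_\ast,\theta_0)/\partial\Lambda\partial\Lambda'$ converges, uniformly for $\Lambda$ in a compact neighbourhood of $\Lambda_0$ inside $\Pi$ (available by \textbf{A1}), to the continuous map $\Lambda\mapsto\mathbb{E}(\partial^2 l_t(\Lambda,\Lambda_\ast,\theta_0)/\partial\Lambda\partial\Lambda')$. The ingredients are the strict stationarity and ergodicity of $\{Y_t\}$ under \textbf{A0**} together with an integrable envelope for the summands: because $\Lambda_\ast>0$ forces $\mathcal{Z}_{t-1}'\Lambda_\ast\ge\sigma_{\ast\varepsilon}^2>0$, the outer-product ratio $(\partial V_t/\partial\Lambda)(\partial V_t/\partial\Lambda')/V_t^2(\Lambda_\ast)$ is a.s.\ bounded while $(e_t^2-V_t)\,\mathcal{Z}_{t-1}/V_t^2(\Lambda_\ast)$ has finite mean, exactly the kind of bound exploited in Lemmas \ref{a.2} and \ref{a.7}. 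Combining this uniform convergence with $\Lambda_u\overset{a.s.}{\to}\Lambda_0$ and the continuity of the limit yields $\partial^2 L_n(\Lambda_u,\Lambda_\ast,\theta_0)/\partial\Lambda\partial\Lambda'\overset{a.s.}{\to}\mathbb{E}(\partial^2 l_t(\Lambda_0,\Lambda_\ast,\theta_0)/\partial\Lambda\partial\Lambda')$. In that expectation the second contribution drops out: at $\Lambda_0$ the factor $e_t^2-V_t(\theta_0,\Lambda_0)=u_t$ is a martingale difference with $\mathbb{E}(u_t\mid\tciFourier_{t-1}^Y)=0$, whereas $V_t^{-2}(\theta_0,\Lambda_\ast)\,\partial^2 V_t/\partial\Lambda\partial\Lambda'$ is $\tciFourier_{t-1}^Y$-measurable, so iterated expectations annihilate it. What survives is the ergodic limit of the first contribution, which is $C(\Lambda_\ast)$ as defined before Theorem \ref{teo43}.

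The hard part is the uniform law of the second move. Because the Hessian is evaluated at the random, sample-dependent point $\Lambda_u$, pointwise ergodic convergence does not suffice; one must upgrade it to convergence uniform over a neighbourhood of $\Lambda_0$ and pair it with continuity of the limit so that $\Lambda_u$ may legitimately be substituted. This rests on producing a single integrable envelope valid throughout that neighbourhood, which is precisely where compactness (\textbf{A1}) and the strict positivity of $\Lambda_\ast$ enter. Once the uniform convergence is secured, the martingale-difference cancellation and the identification of the limit with $C(\Lambda_\ast)$ are routine.
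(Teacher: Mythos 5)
Your proposal is correct and follows essentially the same route as the paper's proof: the paper also reduces the problem to a local uniform ergodic argument around $\Lambda _{0}$ (ergodic averages of the deviation $\bigl\vert \partial ^{2}l_{t}(\Lambda ,\Lambda _{\ast },\theta _{0})/\partial \Lambda _{i}\partial \Lambda _{j}-\mathbb{E}(\partial ^{2}l_{t}(\Lambda _{0},\Lambda _{\ast },\theta _{0})/\partial \Lambda _{i}\partial \Lambda _{j})\bigr\vert$ over shrinking balls $\mathcal{V}_{k}(\Lambda _{0})$, followed by dominated convergence as $k\rightarrow \infty $), which is exactly the mechanism underlying the uniform strong law plus continuity plus $\Lambda _{u}\overset{a.s.}{\rightarrow }\Lambda _{0}$ that you invoke. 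The only difference is expository: you spell out the Hessian decomposition, the integrable envelope furnished by $\Lambda _{\ast }>0$ and \textbf{A1}, and the martingale-difference cancellation of the $(e_{t}^{2}-V_{t})\,\partial ^{2}V_{t}/\partial \Lambda \partial \Lambda ^{\prime }$ term that identifies the limit with $C(\Lambda _{\ast })$ — details the paper's proof leaves implicit.
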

\begin{proof}{(\textbf{Lemma \ref{S.5}})}\\
Let $\mathcal{V}_{k}(\Lambda _{0})$ ($k\in 
\mathbb{N}
^{\ast }$) be the open ball of center $\Lambda _{0}$ and radius $1/k$.
Assume that $n$ is large enough so that $\Lambda _{u}$ belongs to $\mathcal{V%
}_{k}(\Lambda _{0})$. From the stationarity and ergodicity of 
\begin{eqnarray*}
\left\{
\inf_{\Lambda \in \mathcal{V}_{k}(\Lambda _{0})}\left\vert \tfrac{\partial
^{2}l_{t}\left( \Lambda _{0},\Lambda _{\ast },\theta _{0}\right) }{\partial
\Lambda _{i}\partial \Lambda _{j}}-\mathbb{E}\left( \tfrac{\partial
^{2}l_{t}\left( \Lambda _{0},\Lambda _{\ast },\theta _{0}\right) }{\partial
\Lambda _{i}\partial \Lambda _{j}}\right) \right\vert ,t\in 
\mathbb{Z}
\right\},    
\end{eqnarray*}
if follows that%
\begin{eqnarray*}
\left\vert \tfrac{\partial ^{2}L_{n}\left( \Lambda _{u},\Lambda _{\ast
},\theta _{0}\right) }{\partial \Lambda \partial \Lambda ^{\prime }}-C\left(
\Lambda _{\ast }\right) \right\vert  &\leq &\tfrac{1}{n}\sum_{t=1}^{n}\inf_{%
\Lambda \in \mathcal{V}(\Lambda _{0})}\left\vert \tfrac{\partial
^{2}l_{t}\left( \Lambda ,\Lambda _{\ast },\theta _{0}\right) }{\partial
\Lambda _{i}\partial \Lambda _{j}}-\mathbb{E}\left( \tfrac{\partial ^{2}l_{t}\left(
\Lambda _{0},\Lambda _{\ast },\theta _{0}\right) }{\partial \Lambda
_{i}\partial \Lambda _{j}}\right) \right\vert  \\
&&\underset{n\rightarrow \infty }{\overset{a.s.}{\rightarrow }}\mathbb{E}\left(
\inf_{\Lambda \in \mathcal{V}_{k}(\Lambda _{0})}\left\vert \tfrac{\partial
^{2}l_{t}\left( \Lambda ,\Lambda _{\ast },\theta _{0}\right) }{\partial
\Lambda _{i}\partial \Lambda _{j}}-\mathbb{E}\left( \tfrac{\partial ^{2}l_{t}\left(
\Lambda _{0},\Lambda _{\ast },\theta _{0}\right) }{\partial \Lambda
_{i}\partial \Lambda _{j}}\right) \right\vert \right) \text{.}
\end{eqnarray*}

Note that the Lebesgue-dominated convergence theorem entails 
\begin{eqnarray*}
&&\lim_{k\rightarrow \infty }\mathbb{E}\left( \inf_{\Lambda \in \mathcal{V}%
_{k}(\Lambda _{0})}\left\vert \tfrac{\partial ^{2}l_{t}\left( \Lambda
,\Lambda _{\ast },\theta _{0}\right) }{\partial \Lambda _{i}\partial \Lambda
_{j}}-\mathbb{E}\left( \tfrac{\partial ^{2}l_{t}\left( \Lambda _{0},\Lambda _{\ast
},\theta _{0}\right) }{\partial \Lambda _{i}\partial \Lambda _{j}}\right)
\right\vert \right)  \\
&=&\mathbb{E}\left( \lim_{k\rightarrow \infty }\sup_{\Lambda \in \mathcal{V}%
_{k}(\Lambda _{0})}\left\vert \tfrac{\partial ^{2}l_{t}\left( \Lambda
,\Lambda _{\ast },\theta _{0}\right) }{\partial \Lambda _{i}\partial \Lambda
_{j}}-\mathbb{E}\left( \tfrac{\partial ^{2}l_{t}\left( \Lambda _{0},\Lambda _{\ast
},\theta _{0}\right) }{\partial \Lambda _{i}\partial \Lambda _{j}}\right)
\right\vert \right) =0,
\end{eqnarray*}%
which completes the proof of the result. 
\end{proof} 

\end{appendices}
\newpage
\begin{center}
\Large{Supplementary Material}    
\end{center}
\begin{center}
(Table \ref{table11} here)
\end{center}
\begin{table}[ht]
\begin{center}
\begin{tabular}{lc|ccccc}
\hline\hline
& $n_{c}$ & $300$ & $420$ & $520$ & $680$ & $700$ \\ \hline
A RMINAR(1) & $%
\begin{array}{c}
\text{MSFE} \\ 
\text{MAFE} \\ 
\text{MSPFE}%
\end{array}%
$ & \multicolumn{1}{|l}{$%
\begin{array}{c}
246.9545 \\ 
9.4164 \\ 
1.360%
\end{array}%
$} & \multicolumn{1}{l}{$%
\begin{array}{c}
318.5958 \\ 
10.4993 \\ 
1.9864%
\end{array}%
$} & \multicolumn{1}{l}{$%
\begin{array}{c}
413.4456 \\ 
11.8362 \\ 
2.7857%
\end{array}%
$} & \multicolumn{1}{l}{$%
\begin{array}{c}
322.4681 \\ 
13.5432 \\ 
1.5346%
\end{array}%
$} & \multicolumn{1}{l}{$%
\begin{array}{c}
377.6246 \\ 
13.4922 \\ 
1.8009%
\end{array}%
$} \\ \hline
A RMINAR(2) & $%
\begin{array}{c}
\text{MSFE} \\ 
\text{MAFE} \\ 
\text{MSPFE}%
\end{array}%
$ & \multicolumn{1}{|l}{$%
\begin{array}{c}
243.8918 \\ 
9.4024 \\ 
1.4137%
\end{array}%
$} & \multicolumn{1}{l}{$%
\begin{array}{c}
314.6527 \\ 
10.4726 \\ 
2.0281%
\end{array}%
$} & \multicolumn{1}{l}{$%
\begin{array}{c}
406.6413 \\ 
11.7646 \\ 
2.7375%
\end{array}%
$} & \multicolumn{1}{l}{$%
\begin{array}{c}
316.6054 \\ 
13.2168 \\ 
1.7056%
\end{array}%
$} & \multicolumn{1}{l}{$%
\begin{array}{c}
381.6725 \\ 
13.1774 \\ 
2.2566%
\end{array}%
$} \\ \hline
A RMINAR(3) & $%
\begin{array}{c}
\text{MSFE} \\ 
\text{MAFE} \\ 
\text{MSPFE}%
\end{array}%
$ & \multicolumn{1}{|l}{$%
\begin{array}{c}
241.7308 \\ 
9.4727 \\ 
1.3535%
\end{array}%
$} & \multicolumn{1}{l}{$%
\begin{array}{c}
295.1503 \\ 
10.2364 \\ 
1.7079%
\end{array}%
$} & \multicolumn{1}{l}{$%
\begin{array}{c}
398.7492 \\ 
11.7017 \\ 
2.3473%
\end{array}%
$} & \multicolumn{1}{l}{$%
\begin{array}{c}
309.7075 \\ 
13.3228 \\ 
1.2327%
\end{array}%
$} & \multicolumn{1}{l}{$%
\begin{array}{c}
366.5074 \\ 
13.1845 \\ 
1.4226%
\end{array}%
$} \\ \hline \hline
\end{tabular}
\caption{Out-of-sample forecasting ability of the additive (A) RMINAR$\left( p\right)$ models on the ExRate time series.}
\label{table11}
\end{center}
\end{table}

\end{document}
\section{Supplementary material: Proof of Theorem 4.1 (details)}

We first prove strong consistency and asymptotic normality of $\widehat{\protect\theta}_{1n}$. First note that from \eqref{3.1a} and \eqref{3.2} we have
\begin{equation*}
Y_{t}=\mu _{t}+e_{t}=\mathcal{Y}_{t-1}^{\prime }\theta _{0}+e_{t},  
\end{equation*}%
so%
\begin{equation*}
e_{t}=Y_{t}-\mu _{t}=\sum\limits_{i=1}^{p}\left( \Phi _{it}-\phi _{i}\right)
Y_{t-i}+\varepsilon _{t}-\omega,  
\end{equation*}%
where $\mathbb{E}\left( e_{t}|\tciFourier _{t-1}\right) =0$ and%
\begin{equation*}
\mathbb{V}\left( e_{t}|\tciFourier _{t-1}\right) =\mathbb{E}\left( e_{t}^{2}|\tciFourier
_{t-1}\right) =\mathbb{V}\left( Y_{t}|\tciFourier _{t-1}\right) =\mathcal{Z}%
_{t-1}^{\prime }\Lambda _{0}.  
\end{equation*}%
Indeed, form the mutual independence of the random coefficients $\left\{
\Phi _{it},\text{ }t\in 
\mathbb{Z}
\right\} \ (i=1,\dots,p)$ and $\left\{ \varepsilon _{t},\text{ }t\in 
\mathbb{Z}
\right\} $ we obtain%
\begin{eqnarray*}
\mathbb{E}\left( e_{t}^{2}|\tciFourier _{t-1}\right) 
&=&\sum\limits_{i=1}^{p}Y_{t-i}^{2}\mathbb{E}\left( \Phi _{it}-\phi _{i}\right)
^{2}+\mathbb{E}\left( \varepsilon _{t}-\omega \right) ^{2} \\
&\equiv&\sum\limits_{i=1}^{p}Y_{t-i}^2\sigma_{\phi_i}^2+\sigma _{\varepsilon
}^{2} \\
&=&\mathcal{Z}_{t-1}^{\prime }\Lambda _{0}.
\end{eqnarray*}

Now rewriting \eqref{4.3a} we obtain%
\begin{eqnarray*}
\widehat{\theta }_{1n} &=&\left( \sum_{t=1}^{n}\tfrac{1}{\mathcal{Z}%
_{t-1}^{\prime }\Lambda _{\ast }}\mathcal{Y}_{t-1}\mathcal{Y}_{t-1}^{\prime
}\right) ^{-1}\sum_{t=1}^{n}\mathcal{Y}_{t-1}\tfrac{Y_{t}}{\mathcal{Z}%
_{t-1}^{\prime }\Lambda _{\ast }} \\
&=&\left( \sum_{t=1}^{n}\tfrac{1}{\mathcal{Z}_{t-1}^{\prime }\Lambda _{\ast }%
}\mathcal{Y}_{t-1}\mathcal{Y}_{t-1}^{\prime }\right) ^{-1}\sum_{t=1}^{n}%
\mathcal{Y}_{t-1}\tfrac{\mathcal{Y}_{t-1}^{\prime }\theta _{0}+e_{t}}{%
\mathcal{Z}_{t-1}^{\prime }\Lambda _{\ast }}
\end{eqnarray*}%
and%
\begin{equation}
\widehat{\theta }_{1n}-\theta _{0}=\left( \tfrac{1}{n}\sum_{t=1}^{n}\tfrac{1%
}{\mathcal{Z}_{t-1}^{\prime }\Lambda _{\ast }}\mathcal{Y}_{t-1}\mathcal{Y}%
_{t-1}^{\prime }\right) ^{-1}\tfrac{1}{n}\sum_{t=1}^{n}\mathcal{Y}_{t-1}%
\tfrac{e_{t}}{\mathcal{Z}_{t-1}^{\prime }\Lambda _{\ast }}.  \label{S.4}
\end{equation}
Under \textbf{A0}, the ergodic theorem entails 
\begin{equation}
\tfrac{1}{n}\sum_{t=1}^{n}\tfrac{1}{\mathcal{Z}_{t-1}^{\prime }\Lambda
_{\ast }}\mathcal{Y}_{t-1}\mathcal{Y}_{t-1}^{\prime }\overset{a.s.}{\underset%
{n\rightarrow \infty }{\rightarrow }}\mathbb{E}\left( \tfrac{1}{\mathcal{Z}%
_{t-1}^{\prime }\Lambda _{\ast }}\mathcal{Y}_{t-1}\mathcal{Y}_{t-1}^{\prime
}\right)   \label{S.5}
\end{equation}
where $\mathbb{E}\left( \tfrac{1}{\mathcal{Z}%
_{t-1}^{\prime }\Lambda _{\ast }}\mathcal{Y}_{t-1}\mathcal{Y}_{t-1}^{\prime
}\right)=: A\left( \Lambda _{\ast }\right)$ is finite for all $\Lambda _{\ast }>0$ since, by construction, $%
\tfrac{1}{\mathcal{Z}_{t-1}^{\prime }\Lambda _{\ast }}\mathcal{Y}_{t-1}%
\mathcal{Y}_{t-1}^{\prime }$ is bounded a.s. In addition, under the
finiteness of the means of the random coefficients, we have 
\begin{equation*}
\tfrac{1}{n}\sum_{t=1}^{n}\mathcal{Y}_{t-1}\tfrac{e_{t}}{\mathcal{Z}%
_{t-1}^{\prime }\Lambda _{\ast }}\overset{a.s.}{\underset{n\rightarrow
\infty }{\rightarrow }}\mathbb{E}\left( \tfrac{e_{t}}{\mathcal{Z}_{t-1}^{\prime
}\Lambda _{\ast }}\right) =0.
\end{equation*}%
Hence, we have proved that 
\begin{equation*}
\widehat{\theta }_{1n}\overset{a.s.}{\underset{n\rightarrow \infty }{%
\rightarrow }}\theta _{0},  
\end{equation*}%
for all fixed $\Lambda _{\ast }>0.$
Next, we turn to the assymptotic normallity of  $\widehat{\protect\theta }_{1n}$. To this end, rewrite \eqref{S.4} as follows%
\begin{equation*}
\sqrt{n}\left( \widehat{\theta }_{1n}-\theta _{0}\right) =\left( \tfrac{1}{n}%
\sum_{t=1}^{n}\tfrac{1}{\mathcal{Z}_{t-1}^{\prime }\Lambda _{\ast }}\mathcal{%
Y}_{t-1}\mathcal{Y}_{t-1}^{\prime }\right) ^{-1}\tfrac{1}{\sqrt{n}}%
\sum_{t=1}^{n}\mathcal{Y}_{t-1}\tfrac{e_{t}}{\mathcal{Z}_{t-1}^{\prime
}\Lambda _{\ast }}.  
\end{equation*}
Note that by \textbf{A0} and the ergodic theorem%
\begin{eqnarray*}
\sum_{t=1}^{n}\left( \tfrac{1}{\sqrt{n}}\mathcal{Y}_{t-1}\tfrac{e_{t}}{%
\mathcal{Z}_{t-1}^{\prime }\Lambda _{\ast }}\right) \left( \tfrac{1}{\sqrt{n}%
}\mathcal{Y}_{t-1}\tfrac{e_{t}}{\mathcal{Z}_{t-1}^{\prime }\Lambda _{\ast }}%
\right) ^{\prime } &=&\tfrac{1}{n}\sum_{t=1}^{n}\left( \tfrac{e_{t}}{%
\mathcal{Z}_{t-1}^{\prime }\Lambda _{\ast }}\right) ^{2}\mathcal{Y}_{t-1}%
\mathcal{Y}_{t-1}^{\prime }  \notag \\
&=&\tfrac{1}{n}\sum_{t=1}^{n}\left( \tfrac{\mathcal{Z}_{t-1}^{\prime
}\Lambda _{0}}{\mathcal{Z}_{t-1}^{\prime }\Lambda _{\ast }}\right) ^{2}%
\mathcal{Y}_{t-1}\mathcal{Y}_{t-1}^{\prime }  \notag \\
&\overset{a.s.}{\underset{n\rightarrow \infty }{\rightarrow }}&\mathbb{E}\left(
\left( \tfrac{\mathcal{Z}_{t-1}^{\prime }\Lambda _{0}}{\mathcal{Z}%
_{t-1}^{\prime }\Lambda _{\ast }}\right) ^{2}\mathcal{Y}_{t-1}\mathcal{Y}%
_{t-1}^{\prime }\right)\\
&=:& B\left( \Lambda _{0},\Lambda _{\ast }\right) <\infty, 
\end{eqnarray*}%
by the a.s. boundedness of $\left( \tfrac{\mathcal{Z}_{t-1}^{\prime }\Lambda
_{0}}{\mathcal{Z}_{t-1}^{\prime }\Lambda _{\ast }}\right) ^{2}\mathcal{Y}%
_{t-1}\mathcal{Y}_{t-1}^{\prime }$ and the finiteness of the second moments
of the random coefficients. Therefore, the central limit theorem of
square-integrable martingales (e.g. Billingsley, $2008$; Hall and Heyde, $%
1980$) entails%
\begin{equation}
\tfrac{1}{\sqrt{n}}\sum_{t=1}^{n}\mathcal{Y}_{t-1}\tfrac{e_{t}}{\mathcal{Z}%
_{t-1}^{\prime }\Lambda _{\ast }}\overset{D}{\underset{n\rightarrow \infty }{%
\rightarrow }}\mathcal{N}\left( 0,B\left( \theta _{0},\Lambda _{0},\Lambda
_{\ast }\right) \right)   \label{S.9}
\end{equation}%
Combining \eqref{S.5} and \eqref{S.9}, we obtain by Slutski's theorem%
\begin{equation}
\sqrt{n}\left( \widehat{\theta }_{1n}-\theta _{0}\right) \overset{D}{%
\underset{n\rightarrow \infty }{\rightarrow }}\mathcal{N}\left( 0,A\left(
\Lambda _{\ast }\right) ^{-1}B\left( \Lambda _{0},\Lambda _{\ast }\right)
A\left( \Lambda _{\ast }\right) ^{-1}\right),   \label{S.10}
\end{equation}%
for all fixed $\Lambda _{\ast }>0.$\\
Next, we move to the strong consistency and asymptotic normality of $\widehat{\Lambda }_{1n}$. From \eqref{3.2} we have%
\begin{equation*}
\left( Y_{t}-\mathcal{Y}_{t-1}^{\prime }\theta _{0}\right) ^{2}=V_{t}+u_{t}=%
\mathcal{Z}_{t-1}^{\prime }\Lambda _{0}+u_{t},  
\end{equation*}%
being
\begin{eqnarray*}
u_{t} &=&\left( Y_{t}-\mu _{t}\right) ^{2}-V_{t}=\left( Y_{t}-\mathcal{Y}%
_{t-1}^{\prime }\theta _{0}\right) ^{2}-\mathcal{Z}_{t-1}^{\prime }\Lambda
_{0}=e_{t}^{2}-\mathcal{Z}_{t-1}^{\prime }\Lambda _{0}  \notag \\
&=&\left( \sum\limits_{i=1}^{p}\left( \Phi _{it}-\phi _{i}\right)
Y_{t-i}+\left( \varepsilon _{t}-\omega \right) \right) ^{2}-\mathcal{Z}%
_{t-1}^{\prime }\Lambda _{0},  
\end{eqnarray*}%
where $\mathbb{E}\left( u_{t}|\tciFourier _{t-1}\right) =0$ and%
\begin{eqnarray*}
\mathbb{V}\left( u_{t}|\tciFourier _{t-1}^{Y}\right) &=&\mathbb{E}\left(
u_{t}^{2}|\tciFourier _{t-1}\right) =\mathbb{E}\left[ \left( \left( Y_{t}-\mu
_{t}\right) ^{2}-V_{t}\right) ^{2}|\tciFourier _{t-1}^{Y}\right]  \notag \\
&=&\mathbb{E}\left( \left( Y_{t}-\mu _{t}\right) ^{4}|\tciFourier _{t-1}^{Y}\right)
-2V_{t}\mathbb{E}\left( \left( Y_{t}-\mu _{t}\right) ^{2}|\tciFourier
_{t-1}^{Y}\right) +V_{t}^{2}  \notag \\
&=&\mathbb{E}\left( \left( Y_{t}-\mu _{t}\right) ^{4}|\tciFourier _{t-1}^{Y}\right)
-V_{t}^{2}  \notag \\
&=&\mathbb{E}\left[ \left( \sum\limits_{i=1}^{p}\left( \Phi _{it}-\phi _{i}\right)
Y_{t-i}+\left( \varepsilon _{t}-\omega \right) \right) ^{4}|\tciFourier
_{t-1}^{Y}\right] -\left( \mathcal{Z}_{t-1}^{\prime }\Lambda _{0}\right)
^{2}.  
\end{eqnarray*}
Note that for $p=1$, 
\begin{equation*}
\mathbb{E}\left[ \left( \left( \Phi _{1t}-\phi _{1}\right) Y_{t-1}+\left( \varepsilon
_{t}-\omega \right) \right) ^{4}|\tciFourier _{t-1}^{Y}\right] =\mathbb{E}\left( \Phi
_{1t}-\phi _{1}\right) ^{4}Y_{t-1}^{4}+6\sigma _{\phi_1}^{2}\sigma
_{\varepsilon }^{2}Y_{t-1}^{2}+\mathbb{E}\left( \varepsilon _{t}-\omega
\right) ^{4},
\end{equation*}%
so%
\begin{equation*}
\mathbb{V}\left( u_{t}|\tciFourier _{t-1}^{Y}\right) =\mathbb{E}\left( \Phi _{1t}-\phi
_{1}\right) ^{4}Y_{t-1}^{4}+6\sigma _{\phi_1}^{2}\sigma _{\varepsilon
}^{2}Y_{t-1}^{2}+\mathbb{E}\left( \varepsilon _{t}-\omega \right) ^{4}
-\left( \mathcal{Z}_{t-1}^{\prime }\Lambda _{0}\right) ^{2}.
\end{equation*}
For example, when $\Phi_{1t}\sim \mathcal{P}\left( \phi _{1}\right) $ and $%
\varepsilon _{t}\sim \mathcal{P}\left( \phi _{1}\right) $, $\mathbb{E}\left( \Phi
_{1t}-\phi _{1}\right) ^{4}=3\phi _{1}^{2}+\phi _{1}$ and $E\left(
\varepsilon _{t}-\omega \right) ^{4} =3\omega ^{2}+\omega $. Now, a Taylor expansion of the function $g\left( \widehat{\theta }%
_{1n}\right) =\tfrac{\left( Y_{t}-\mathcal{Y}_{t-1}^{\prime }\widehat{\theta 
}_{1n}\right) ^{2}}{\left( \left( Y_{t}-\mathcal{Y}_{t-1}^{\prime }\widehat{%
\theta }_{1n}\right) ^{2}-\mathcal{Z}_{t-1}^{\prime }\Lambda _{\ast }\right)
^{2}}$ around $\theta _{0}$\ gives%
\begin{eqnarray}
\tfrac{\left( Y_{t}-\mathcal{Y}_{t-1}^{\prime }\widehat{\theta }_{1n}\right)
^{2}}{\left( \left( Y_{t}-\mathcal{Y}_{t-1}^{\prime }\widehat{\theta }%
_{1n}\right) ^{2}-\mathcal{Z}_{t-1}^{\prime }\Lambda _{\ast }\right) ^{2}}
&=&\tfrac{\left( Y_{t}-\mathcal{Y}_{t-1}^{\prime }\theta _{0}\right) ^{2}}{%
\left( \left( Y_{t}-\mathcal{Y}_{t-1}^{\prime }\theta _{0}\right) ^{2}-%
\mathcal{Z}_{t-1}^{\prime }\Lambda _{\ast }\right) ^{2}}+\tfrac{\partial }{%
\partial \theta ^{\prime }}g\left( \theta _{u}\right) \left( \widehat{\theta 
}_{1n}-\theta _{0}\right)  \notag \\
&=&\tfrac{\left( Y_{t}-\mathcal{Y}_{t-1}^{\prime }\theta _{0}\right) ^{2}}{%
\left( \left( Y_{t}-\mathcal{Y}_{t-1}^{\prime }\theta _{0}\right) ^{2}-%
\mathcal{Z}_{t-1}^{\prime }\Lambda _{\ast }\right) ^{2}}  \notag \\
&+&\tfrac{2\left( Y_{t}-\mathcal{Y}_{t-1}^{\prime }\theta _{u}\right) \left(
Y_{t}^{2}-2Y_{t}\mathcal{Y}_{t-1}^{\prime }\theta _{u}+\left( \mathcal{Y}%
_{t-1}^{\prime }\theta _{u}\right) ^{2}+\mathcal{Z}_{t-1}^{\prime }\Lambda
_{\ast }\right) }{\left( \left( Y_{t}-\mathcal{Y}_{t-1}^{\prime }\theta
_{u}\right) ^{2}-\mathcal{Z}_{t-1}^{\prime }\Lambda _{\ast }\right) ^{2}}%
\mathcal{Y}_{t-1}^{\prime }\left( \widehat{\theta }_{1n}-\theta _{0}\right) .
\label{S.14}
\end{eqnarray}%
Therefore%
\begin{eqnarray}
&&\tfrac{1}{n}\sum_{t=1}^{n}\mathcal{Z}_{t-1}\tfrac{\left( Y_{t}-\mathcal{Y}%
_{t-1}^{\prime }\widehat{\theta }_{1n}\right) ^{2}}{\left( \left( Y_{t}-%
\mathcal{Y}_{t-1}^{\prime }\widehat{\theta }_{1n}\right) ^{2}-\mathcal{Z}%
_{t-1}^{\prime }\Lambda _{\ast }\right) ^{2}}=\tfrac{1}{n}\sum_{t=1}^{n}%
\tfrac{\left( Y_{t}-\mathcal{Y}_{t-1}^{\prime }\theta _{0}\right) ^{2}}{%
\left( \left( Y_{t}-\mathcal{Y}_{t-1}^{\prime }\theta _{0}\right) ^{2}-%
\mathcal{Z}_{t-1}^{\prime }\Lambda _{\ast }\right) ^{2}}  \notag \\
&&+\tfrac{1}{n}\sum_{t=1}^{n}\tfrac{2\left( Y_{t}-\mathcal{Y}_{t-1}^{\prime
}\theta _{u}\right) \left( Y_{t}^{2}-2Y_{t}\mathcal{Y}_{t-1}^{\prime }\theta
_{u}+\left( \mathcal{Y}_{t-1}^{\prime }\theta _{u}\right) ^{2}+\mathcal{Z}%
_{t-1}^{\prime }\Lambda _{\ast }\right) }{\left( \left( Y_{t}-\mathcal{Y}%
_{t-1}^{\prime }\theta _{u}\right) ^{2}-\mathcal{Z}_{t-1}^{\prime }\Lambda
_{\ast }\right) ^{2}}\mathcal{Y}_{t-1}^{\prime }\left( \widehat{\theta }%
_{1n}-\theta _{0}\right) .  \label{S.15}
\end{eqnarray}
Since 
\begin{equation*}
\tfrac{2\left( Y_{t}-\mathcal{Y}_{t-1}^{\prime }\theta _{u}\right)
\left( Y_{t}^{2}-2Y_{t}\mathcal{Y}_{t-1}^{\prime }\theta _{u}+\left( 
\mathcal{Y}_{t-1}^{\prime }\theta _{u}\right) ^{2}+\mathcal{Z}_{t-1}^{\prime
}\Lambda _{\ast }\right) }{\left( \left( Y_{t}-\mathcal{Y}_{t-1}^{\prime
}\theta _{u}\right) ^{2}-\mathcal{Z}_{t-1}^{\prime }\Lambda _{\ast }\right)
^{2}}\mathcal{Y}_{t-1}^{\prime },
\end{equation*}
is bounded a.s. then%
\begin{equation*}
\mathbb{E}\left( \tfrac{2\left( Y_{t}-\mathcal{Y}_{t-1}^{\prime }\theta _{u}\right)
\left( Y_{t}^{2}-2Y_{t}\mathcal{Y}_{t-1}^{\prime }\theta _{u}+\left( 
\mathcal{Y}_{t-1}^{\prime }\theta _{u}\right) ^{2}+\mathcal{Z}_{t-1}^{\prime
}\Lambda _{\ast }\right) }{\left( \left( Y_{t}-\mathcal{Y}_{t-1}^{\prime
}\theta _{u}\right) ^{2}-\mathcal{Z}_{t-1}^{\prime }\Lambda _{\ast }\right)
^{2}}\mathcal{Y}_{t-1}^{\prime }\right) <\infty .
\end{equation*}%
Consequently, under \textbf{A0}, the ergodic theorem entails%
\begin{eqnarray*}
&&\tfrac{1}{n}\sum_{t=1}^{n}\tfrac{2\left( Y_{t}-\mathcal{Y}_{t-1}^{\prime
}\theta _{u}\right) \left( Y_{t}^{2}-2Y_{t}\mathcal{Y}_{t-1}^{\prime }\theta
_{u}+\left( \mathcal{Y}_{t-1}^{\prime }\theta _{u}\right) ^{2}+\mathcal{Z}%
_{t-1}^{\prime }\Lambda _{\ast }\right) }{\left( \left( Y_{t}-\mathcal{Y}%
_{t-1}^{\prime }\theta _{u}\right) ^{2}-\mathcal{Z}_{t-1}^{\prime }\Lambda
_{\ast }\right) ^{2}}\mathcal{Y}_{t-1}^{\prime } \\
&&\overset{a.s.}{\underset{n\rightarrow \infty }{\rightarrow }}E\left( 
\tfrac{2\left( Y_{t}-\mathcal{Y}_{t-1}^{\prime }\theta _{u}\right) \left(
Y_{t}^{2}-2Y_{t}\mathcal{Y}_{t-1}^{\prime }\theta _{u}+\left( \mathcal{Y}%
_{t-1}^{\prime }\theta _{u}\right) ^{2}+\mathcal{Z}_{t-1}^{\prime }\Lambda
_{\ast }\right) }{\left( \left( Y_{t}-\mathcal{Y}_{t-1}^{\prime }\theta
_{u}\right) ^{2}-\mathcal{Z}_{t-1}^{\prime }\Lambda _{\ast }\right) ^{2}}%
\mathcal{Y}_{t-1}^{\prime }\right) <\infty 
\end{eqnarray*}%
and then%
\begin{equation*}
\tfrac{1}{n}\sum_{t=1}^{n}\tfrac{2\left( Y_{t}-\mathcal{Y}_{t-1}^{\prime
}\theta _{u}\right) \left( Y_{t}^{2}-2Y_{t}\mathcal{Y}_{t-1}^{\prime }\theta
_{u}+\left( \mathcal{Y}_{t-1}^{\prime }\theta _{u}\right) ^{2}+\mathcal{Z}%
_{t-1}^{\prime }\Lambda _{\ast }\right) }{\left( \left( Y_{t}-\mathcal{Y}%
_{t-1}^{\prime }\theta _{u}\right) ^{2}-\mathcal{Z}_{t-1}^{\prime }\Lambda
_{\ast }\right) ^{2}}\mathcal{Y}_{t-1}^{\prime }\left( \widehat{\theta }%
_{1n}-\theta _{0}\right) \overset{a.s.}{\underset{n\rightarrow \infty }{%
\rightarrow }}0,
\end{equation*}%
by the strong consistency of $\widehat{\theta }_{1n}$. Thus \eqref{S.14} becomes%
\begin{equation}
\tfrac{1}{n}\sum_{t=1}^{n}\mathcal{Z}_{t-1}\tfrac{\left( Y_{t}-\mathcal{Y}%
_{t-1}^{\prime }\widehat{\theta }_{1n}\right) ^{2}}{\left( \left( Y_{t}-%
\mathcal{Y}_{t-1}^{\prime }\widehat{\theta }_{1n}\right) ^{2}-\mathcal{Z}%
_{t-1}^{\prime }\Lambda _{\ast }\right) ^{2}}=\tfrac{1}{n}\sum_{t=1}^{n}%
\mathcal{Z}_{t-1}\tfrac{\left( Y_{t}-\mathcal{Y}_{t-1}^{\prime }\theta
_{0}\right) ^{2}}{\left( \left( Y_{t}-\mathcal{Y}_{t-1}^{\prime }\theta
_{0}\right) ^{2}-\mathcal{Z}_{t-1}^{\prime }\Lambda _{\ast }\right) ^{2}}%
+o_{a.s.}\left( 1\right),   \label{S.16}
\end{equation}%
where $o_{a.s.}\left( 1\right) $ is a term converging a.s. towards zero as $%
n\rightarrow \infty $. Using again a Taylor expansion of $\tfrac{1}{\left(
\left( Y_{t}-\mathcal{Y}_{t-1}^{\prime }\widehat{\theta }_{1n}\right) ^{2}-%
\mathcal{Z}_{t-1}^{\prime }\Lambda _{\ast }\right) ^{2}}$ around $\theta _{0}
$, a similar argument to that used in proving \eqref{S.16} immediately shows that%
\begin{equation}
\tfrac{1}{n}\sum_{t=1}^{n}\tfrac{1}{\left( \left( Y_{t}-\mathcal{Y}%
_{t-1}^{\prime }\widehat{\theta }_{1n}\right) ^{2}-\mathcal{Z}_{t-1}^{\prime
}\Lambda _{\ast }\right) ^{2}}\mathcal{Z}_{t-1}\mathcal{Z}_{t-1}^{\prime }=%
\tfrac{1}{n}\sum_{t=1}^{n}\tfrac{1}{\left( \left( Y_{t}-\mathcal{Y}%
_{t-1}^{\prime }\theta _{0}\right) ^{2}-\mathcal{Z}_{t-1}^{\prime }\Lambda
_{\ast }\right) ^{2}}\mathcal{Z}_{t-1}\mathcal{Z}_{t-1}^{\prime
}+o_{a.s.}\left( 1\right).   \label{S.17}
\end{equation}

Combining \eqref{4.3b} \eqref{S.16}, and \eqref{S.17} we finally obtain%
\begin{eqnarray*}
\widehat{\Lambda }_{1n} &=&\left( \tfrac{1}{n}\sum_{t=1}^{n}\tfrac{1}{\left(
\left( Y_{t}-\mathcal{Y}_{t-1}^{\prime }\theta _{0}\right) ^{2}-\mathcal{Z}%
_{t-1}^{\prime }\Lambda _{\ast }\right) ^{2}}\mathcal{Z}_{t-1}\mathcal{Z}%
_{t-1}^{\prime }\right) ^{-1}\tfrac{1}{n}\sum_{t=1}^{n}\mathcal{Z}_{t-1}%
\tfrac{\left( Y_{t}-\mathcal{Y}_{t-1}^{\prime }\theta _{0}\right) ^{2}}{%
\left( \left( Y_{t}-\mathcal{Y}_{t-1}^{\prime }\theta _{0}\right) ^{2}-%
\mathcal{Z}_{t-1}^{\prime }\Lambda _{\ast }\right) ^{2}}+o_{a.s.}\left(
1\right)  \\
&=&\Lambda _{0}+\left( \tfrac{1}{n}\sum_{t=1}^{n}\tfrac{1}{\left( \left(
Y_{t}-\mathcal{Y}_{t-1}^{\prime }\theta _{0}\right) ^{2}-\mathcal{Z}%
_{t-1}^{\prime }\Lambda _{\ast }\right) ^{2}}\mathcal{Z}_{t-1}\mathcal{Z}%
_{t-1}^{\prime }\right) ^{-1}\tfrac{1}{n}\sum_{t=1}^{n}\mathcal{Z}_{t-1}%
\tfrac{u_{t}}{\left( \left( Y_{t}-\mathcal{Y}_{t-1}^{\prime }\theta
_{0}\right) ^{2}-\mathcal{Z}_{t-1}^{\prime }\Lambda _{\ast }\right) ^{2}}%
+o_{a.s.}\left( 1\right) 
\end{eqnarray*}%
so that%
\begin{equation*}
\widehat{\Lambda }_{1n}-\Lambda _{0}=\left( \tfrac{1}{n}\sum_{t=1}^{n}\tfrac{%
1}{\left( \left( Y_{t}-\mathcal{Y}_{t-1}^{\prime }\theta _{0}\right) ^{2}-%
\mathcal{Z}_{t-1}^{\prime }\Lambda _{\ast }\right) ^{2}}\mathcal{Z}_{t-1}%
\mathcal{Z}_{t-1}^{\prime }\right) ^{-1}\tfrac{1}{n}\sum_{t=1}^{n}\mathcal{Z}%
_{t-1}\tfrac{u_{t}}{\left( \left( Y_{t}-\mathcal{Y}_{t-1}^{\prime }\theta
_{0}\right) ^{2}-\mathcal{Z}_{t-1}^{\prime }\Lambda _{\ast }\right) ^{2}}%
+o_{a.s.}\left( 1\right). 
\end{equation*}

The almost sure convergence $\widehat{\Lambda }_{1n}\overset{a.s.}{\underset{%
n\rightarrow \infty }{\rightarrow }}\Lambda _{0}$ thus follows from%
\begin{equation*}
\tfrac{1}{n}\sum_{t=1}^{n}\tfrac{1}{\left( \left( Y_{t}-\mathcal{Y}%
_{t-1}^{\prime }\theta _{0}\right) ^{2}-\mathcal{Z}_{t-1}^{\prime }\Lambda
_{\ast }\right) ^{2}}\mathcal{Z}_{t-1}\mathcal{Z}_{t-1}^{\prime }\overset{%
a.s.}{\underset{n\rightarrow \infty }{\rightarrow }}\mathbb{E}\left( \tfrac{1}{\left(
\left( Y_{t}-\mathcal{Y}_{t-1}^{\prime }\theta _{0}\right) ^{2}-\mathcal{Z}%
_{t-1}^{\prime }\Lambda _{\ast }\right) ^{2}}\mathcal{Z}_{t-1}\mathcal{Z}%
_{t-1}^{\prime }\right) <\infty  
\end{equation*}%
and%
\begin{eqnarray*}
\tfrac{1}{n}\sum_{t=1}^{n}\mathcal{Z}_{t-1}\tfrac{u_{t}}{\left( \left(
Y_{t}-\mathcal{Y}_{t-1}^{\prime }\theta _{0}\right) ^{2}-\mathcal{Z}%
_{t-1}^{\prime }\Lambda _{\ast }\right) ^{2}}&\overset{a.s.}{\underset{%
n\rightarrow \infty }{\rightarrow }}&\mathbb{E}\left( \mathcal{Z}_{t-1}\tfrac{u_{t}}{%
\left( \left( Y_{t}-\mathcal{Y}_{t-1}^{\prime }\theta _{0}\right) ^{2}-%
\mathcal{Z}_{t-1}^{\prime }\Lambda _{\ast }\right) ^{2}}\right)  \\
&=&\mathbb{E}\left( \mathcal{Z}_{t-1}\tfrac{\mathbb{E}\left( u_{t}|\tciFourier
_{t-1}^{Y}\right) }{\left( \left( Y_{t}-\mathcal{Y}_{t-1}^{\prime }\theta
_{0}\right) ^{2}-\mathcal{Z}_{t-1}^{\prime }\Lambda _{\ast }\right) ^{2}}%
\right)\\
&=&0,
\end{eqnarray*}%
which, in turns, follows by \textbf{A0}, the ergodic theorem, and the a.s.
boundedness of%
\begin{equation*}
\tfrac{1}{\left( \left( Y_{t}-\mathcal{Y}_{t-1}^{\prime }\theta _{0}\right)
^{2}-\mathcal{Z}_{t-1}^{\prime }\Lambda _{\ast }\right) ^{2}}\mathcal{Z}%
_{t-1}\mathcal{Z}_{t-1}^{\prime }\text{ and }\mathcal{Z}_{t-1}\tfrac{1}{%
\left( \left( Y_{t}-\mathcal{Y}_{t-1}^{\prime }\theta _{0}\right) ^{2}-%
\mathcal{Z}_{t-1}^{\prime }\Lambda _{\ast }\right) ^{2}}.
\end{equation*}
Now, we prove the Asymptotic normality of $\widehat{\Lambda }_{1n}$. We first rewrite \eqref{S.15} as follows%
\begin{eqnarray}
&&\left. \tfrac{1}{\sqrt{n}}\sum_{t=1}^{n}\mathcal{Z}_{t-1}\tfrac{\left(
Y_{t}-\mathcal{Y}_{t-1}^{\prime }\widehat{\theta }_{1n}\right) ^{2}}{\left(
\left( Y_{t}-\mathcal{Y}_{t-1}^{\prime }\widehat{\theta }_{1n}\right) ^{2}-%
\mathcal{Z}_{t-1}^{\prime }\Lambda _{\ast }\right) ^{2}}=\tfrac{1}{\sqrt{n}}%
\sum_{t=1}^{n}\tfrac{\left( Y_{t}-\mathcal{Y}_{t-1}^{\prime }\theta
_{0}\right) ^{2}}{\left( \left( Y_{t}-\mathcal{Y}_{t-1}^{\prime }\theta
_{0}\right) ^{2}-\mathcal{Z}_{t-1}^{\prime }\Lambda _{\ast }\right) ^{2}}%
\right.   \notag \\
&&\text{ \ \ \ \ \ \ \ \ \ \ \ \ \ \ }+\tfrac{1}{\sqrt{n}}\sum_{t=1}^{n}%
\tfrac{2\left( Y_{t}-\mathcal{Y}_{t-1}^{\prime }\theta _{u}\right) \left(
Y_{t}^{2}-2Y_{t}\mathcal{Y}_{t-1}^{\prime }\theta _{u}+\left( \mathcal{Y}%
_{t-1}^{\prime }\theta _{u}\right) ^{2}+\mathcal{Z}_{t-1}^{\prime }\Lambda
_{\ast }\right) }{\left( \left( Y_{t}-\mathcal{Y}_{t-1}^{\prime }\theta
_{u}\right) ^{2}-\mathcal{Z}_{t-1}^{\prime }\Lambda _{\ast }\right) ^{2}}%
\mathcal{Y}_{t-1}^{\prime }\left( \widehat{\theta }_{1n}-\theta _{0}\right) .
\label{S.19}
\end{eqnarray}%
In view of \eqref{S.10}, we have%
\begin{equation*}
\widehat{\theta }_{1n}-\theta _{0}=\tfrac{1}{\sqrt{n}}O_{p}\left( 1\right),  
\end{equation*}%
where $O_{p}\left( 1\right) $ denotes a term bounded in probability.
Therefore, by the ergodic theorem, the consistency of $\widehat{\theta }_{1n}
$ and the fact that $\theta _{u}$ is between $\widehat{\theta }_{1n}$ and $%
\theta _{0}$, the second term in the right-hand-side of \eqref{S.19} becomes%
\begin{eqnarray*}
&&\left. \tfrac{2}{\sqrt{n}}\sum_{t=1}^{n}\tfrac{\left( Y_{t}-\mathcal{Y}%
_{t-1}^{\prime }\theta _{u}\right) \left( Y_{t}^{2}-2Y_{t}\mathcal{Y}%
_{t-1}^{\prime }\theta _{u}+\left( \mathcal{Y}_{t-1}^{\prime }\theta
_{u}\right) ^{2}+\mathcal{Z}_{t-1}^{\prime }\Lambda _{\ast }\right) }{\left(
\left( Y_{t}-\mathcal{Y}_{t-1}^{\prime }\theta _{u}\right) ^{2}-\mathcal{Z}%
_{t-1}^{\prime }\Lambda _{\ast }\right) ^{2}}\mathcal{Y}_{t-1}^{\prime
}\left( \widehat{\theta }_{1n}-\theta _{0}\right) =\right.  \\
&&\text{ \ \ \ \ \ \ \ \ \ \ \ }\tfrac{2}{n}\sum_{t=1}^{n}\tfrac{\left(
Y_{t}-\mathcal{Y}_{t-1}^{\prime }\theta _{u}\right) \left( Y_{t}^{2}-2Y_{t}%
\mathcal{Y}_{t-1}^{\prime }\theta _{u}+\left( \mathcal{Y}_{t-1}^{\prime
}\theta _{u}\right) ^{2}+\mathcal{Z}_{t-1}^{\prime }\Lambda _{\ast }\right) 
}{\left( \left( Y_{t}-\mathcal{Y}_{t-1}^{\prime }\theta _{u}\right) ^{2}-%
\mathcal{Z}_{t-1}^{\prime }\Lambda _{\ast }\right) ^{2}}\mathcal{Y}%
_{t-1}^{\prime }O_{p}\left( 1\right)  \\
&&\text{ \ \ \ \ \ \ \ \ \ \ \ \ \ \ }\overset{p}{\underset{n\rightarrow
\infty }{\rightarrow }}2K\mathbb{E}\left( \tfrac{\left( Y_{t}-\mathcal{Y}%
_{t-1}^{\prime }\theta _{0}\right) \left( Y_{t}^{2}-2Y_{t}\mathcal{Y}%
_{t-1}^{\prime }\theta _{u}+\left( \mathcal{Y}_{t-1}^{\prime }\theta
_{u}\right) ^{2}+\mathcal{Z}_{t-1}^{\prime }\Lambda _{\ast }\right) }{\left(
\left( Y_{t}-\mathcal{Y}_{t-1}^{\prime }\theta _{u}\right) ^{2}-\mathcal{Z}%
_{t-1}^{\prime }\Lambda _{\ast }\right) ^{2}}\mathcal{Y}_{t-1}^{\prime
}1_{p+1}\right)  \\
&&\text{ \ \ \ \ \ \ \ \ \ \ \ \ \ \ \ \ \ }\left. =2K\mathbb{E}\left( \tfrac{%
e_{t}\left( Y_{t}^{2}-2Y_{t}\mathcal{Y}_{t-1}^{\prime }\theta _{u}+\left( 
\mathcal{Y}_{t-1}^{\prime }\theta _{u}\right) ^{2}+\mathcal{Z}_{t-1}^{\prime
}\Lambda _{\ast }\right) }{\left( \left( Y_{t}-\mathcal{Y}_{t-1}^{\prime
}\theta _{u}\right) ^{2}-\mathcal{Z}_{t-1}^{\prime }\Lambda _{\ast }\right)
^{2}}\mathcal{Y}_{t-1}^{\prime }1_{p+1}\right) \right.  \\
&&\text{ \ \ \ \ \ \ \ \ \ \ \ \ \ \ \ \ \ }\left. =2K\left( \mathbb{E}\left( \mathbb{E}\left(
e_{t}|\tciFourier _{t-1}^{Y}\right) \tfrac{\left( Y_{t}^{2}-2Y_{t}\mathcal{Y}%
_{t-1}^{\prime }\theta _{u}+\left( \mathcal{Y}_{t-1}^{\prime }\theta
_{u}\right) ^{2}+\mathcal{Z}_{t-1}^{\prime }\Lambda _{\ast }\right) }{\left(
\left( Y_{t}-\mathcal{Y}_{t-1}^{\prime }\theta _{u}\right) ^{2}-\mathcal{Z}%
_{t-1}^{\prime }\Lambda _{\ast }\right) ^{2}}\mathcal{Y}_{t-1}^{\prime
}1_{p+1}\right) \right) =0.\right. 
\end{eqnarray*}%
where $K$\ is a a bounding constant. Consequently, \eqref{S.19} becomes%
\begin{equation*}
\tfrac{1}{\sqrt{n}}\sum_{t=1}^{n}\mathcal{Z}_{t-1}\tfrac{\left( Y_{t}-%
\mathcal{Y}_{t-1}^{\prime }\widehat{\theta }_{1n}\right) ^{2}}{\left( \left(
Y_{t}-\mathcal{Y}_{t-1}^{\prime }\widehat{\theta }_{1n}\right) ^{2}-\mathcal{%
Z}_{t-1}^{\prime }\Lambda _{\ast }\right) ^{2}}=\tfrac{1}{\sqrt{n}}%
\sum_{t=1}^{n}\tfrac{\left( Y_{t}-\mathcal{Y}_{t-1}^{\prime }\theta
_{0}\right) ^{2}}{\left( \left( Y_{t}-\mathcal{Y}_{t-1}^{\prime }\theta
_{0}\right) ^{2}-\mathcal{Z}_{t-1}^{\prime }\Lambda _{\ast }\right) ^{2}}%
+o_{p}\left( 1\right) ,  
\end{equation*}%
where $o_{p}\left( 1\right) $ stands for a term converging to zero in
probability. Now%
\begin{eqnarray*}
\sqrt{n}\widehat{\Lambda }_{1n} &=&\left( \tfrac{1}{n}\sum_{t=1}^{n}\tfrac{1%
}{\left( \left( Y_{t}-\mathcal{Y}_{t-1}^{\prime }\widehat{\theta }%
_{1n}\right) ^{2}-\mathcal{Z}_{t-1}^{\prime }\Lambda _{\ast }\right) ^{2}}%
\mathcal{Z}_{t-1}\mathcal{Z}_{t-1}^{\prime }\right) ^{-1}\tfrac{1}{\sqrt{n}}%
\sum_{t=1}^{n}\mathcal{Z}_{t-1}\tfrac{\left( Y_{t}-\mathcal{Y}_{t-1}^{\prime
}\widehat{\theta }_{1n}\right) ^{2}}{\left( \left( Y_{t}-\mathcal{Y}%
_{t-1}^{\prime }\widehat{\theta }_{1n}\right) ^{2}-\mathcal{Z}_{t-1}^{\prime
}\Lambda _{\ast }\right) ^{2}} \\
&=&\left( \tfrac{1}{n}\sum_{t=1}^{n}\tfrac{1}{\left( \left( Y_{t}-\mathcal{Y}%
_{t-1}^{\prime }\theta _{0}\right) ^{2}-\mathcal{Z}_{t-1}^{\prime }\Lambda
_{\ast }\right) ^{2}}\mathcal{Z}_{t-1}\mathcal{Z}_{t-1}^{\prime }\right)
^{-1}\tfrac{1}{\sqrt{n}}\sum_{t=1}^{n}\mathcal{Z}_{t-1}\tfrac{\left( Y_{t}-%
\mathcal{Y}_{t-1}^{\prime }\theta _{0}\right) ^{2}}{\left( \left( Y_{t}-%
\mathcal{Y}_{t-1}^{\prime }\theta _{0}\right) ^{2}-\mathcal{Z}_{t-1}^{\prime
}\Lambda _{\ast }\right) ^{2}}+o_{p}\left( 1\right)  \\
&=&\left( \tfrac{1}{n}\sum_{t=1}^{n}\tfrac{1}{\left( \left( Y_{t}-\mathcal{Y}%
_{t-1}^{\prime }\theta _{0}\right) ^{2}-\mathcal{Z}_{t-1}^{\prime }\Lambda
_{\ast }\right) ^{2}}\mathcal{Z}_{t-1}\mathcal{Z}_{t-1}^{\prime }\right)
^{-1}\tfrac{1}{\sqrt{n}}\sum_{t=1}^{n}\mathcal{Z}_{t-1}\tfrac{\mathcal{Z}%
_{t-1}^{\prime }\Lambda _{0}+u_{t}}{\left( \left( Y_{t}-\mathcal{Y}%
_{t-1}^{\prime }\theta _{0}\right) ^{2}-\mathcal{Z}_{t-1}^{\prime }\Lambda
_{\ast }\right) ^{2}}+o_{p}\left( 1\right)  \\
&=&\left( \tfrac{1}{n}\sum_{t=1}^{n}\tfrac{1}{\left( \left( Y_{t}-\mathcal{Y}%
_{t-1}^{\prime }\theta _{0}\right) ^{2}-\mathcal{Z}_{t-1}^{\prime }\Lambda
_{\ast }\right) ^{2}}\mathcal{Z}_{t-1}\mathcal{Z}_{t-1}^{\prime }\right)
^{-1}\times  \\
&&\left[ \tfrac{\sqrt{n}}{n}\sum_{t=1}^{n}\tfrac{\mathcal{Z}_{t-1}\mathcal{Z}%
_{t-1}^{\prime }}{\left( \left( Y_{t}-\mathcal{Y}_{t-1}^{\prime }\theta
_{0}\right) ^{2}-\mathcal{Z}_{t-1}^{\prime }\Lambda _{\ast }\right) ^{2}}%
\Lambda _{0}+\tfrac{1}{\sqrt{n}}\sum_{t=1}^{n}\tfrac{\mathcal{Z}_{t-1}u_{t}}{%
\left( \left( Y_{t}-\mathcal{Y}_{t-1}^{\prime }\theta _{0}\right) ^{2}-%
\mathcal{Z}_{t-1}^{\prime }\Lambda _{\ast }\right) ^{2}}\right] +o_{p}\left(
1\right)  \\
&=&\sqrt{n}\Lambda _{0}+\left( \tfrac{1}{n}\sum_{t=1}^{n}\tfrac{\mathcal{Z}%
_{t-1}\mathcal{Z}_{t-1}^{\prime }}{\left( \left( Y_{t}-\mathcal{Y}%
_{t-1}^{\prime }\theta _{0}\right) ^{2}-\mathcal{Z}_{t-1}^{\prime }\Lambda
_{\ast }\right) ^{2}}\right) ^{-1}\tfrac{1}{\sqrt{n}}\sum_{t=1}^{n}\mathcal{Z%
}_{t-1}\tfrac{u_{t}}{\left( \left( Y_{t}-\mathcal{Y}_{t-1}^{\prime }\theta
_{0}\right) ^{2}-\mathcal{Z}_{t-1}^{\prime }\Lambda _{\ast }\right) ^{2}}%
+o_{p}\left( 1\right) 
\end{eqnarray*}%
so that%
\begin{equation}
\sqrt{n}\left( \widehat{\Lambda }_{1n}-\Lambda _{0}\right) =\left( \tfrac{1}{%
n}\sum_{t=1}^{n}\tfrac{\mathcal{Z}_{t-1}\mathcal{Z}_{t-1}^{\prime }}{\left(
\left( Y_{t}-\mathcal{Y}_{t-1}^{\prime }\theta _{0}\right) ^{2}-\mathcal{Z}%
_{t-1}^{\prime }\Lambda _{\ast }\right) ^{2}}\right) ^{-1}\tfrac{1}{\sqrt{n}}%
\sum_{t=1}^{n}\mathcal{Z}_{t-1}\tfrac{u_{t}}{\left( \left( Y_{t}-\mathcal{Y}%
_{t-1}^{\prime }\theta _{0}\right) ^{2}-\mathcal{Z}_{t-1}^{\prime }\Lambda
_{\ast }\right) ^{2}}+o_{p}\left( 1\right)   \label{S.22}
\end{equation}

Now, by \textbf{A0}, $\mathbb{E}\left( \varepsilon _{t}^{4}\right) <\infty $, $%
\mathbb{E}\left( \Phi _{it}^{4}\right) <\infty $, and the ergodic theorem 
\begin{equation}
\tfrac{1}{n}\sum_{t=1}^{n}\tfrac{1}{\left( \left( Y_{t}-\mathcal{Y}%
_{t-1}^{\prime }\theta _{0}\right) ^{2}-\mathcal{Z}_{t-1}^{\prime }\Lambda
_{\ast }\right) ^{2}}\mathcal{Z}_{t-1}\mathcal{Z}_{t-1}^{\prime }\overset{%
a.s.}{\underset{n\rightarrow \infty }{\rightarrow }}C\left( \Lambda _{\ast
}\right) :=\mathbb{E}\left( \tfrac{1}{\left( \left( Y_{t}-\mathcal{Y}_{t-1}^{\prime
}\theta _{0}\right) ^{2}-\mathcal{Z}_{t-1}^{\prime }\Lambda _{\ast }\right)
^{2}}\mathcal{Z}_{t-1}\mathcal{Z}_{t-1}^{\prime }\right) <\infty , 
\label{S.23}
\end{equation}
and 
\begin{eqnarray*}
&&\left. \sum_{t=1}^{n}\left( \tfrac{1}{\sqrt{n}}\tfrac{\mathcal{Z}%
_{t-1}u_{t}}{\left( \left( Y_{t}-\mathcal{Y}_{t-1}^{\prime }\theta
_{0}\right) ^{2}-\mathcal{Z}_{t-1}^{\prime }\Lambda _{\ast }\right) ^{2}}%
\right) \left( \tfrac{1}{\sqrt{n}}\tfrac{\mathcal{Z}_{t-1}u_{t}}{\left(
\left( Y_{t}-\mathcal{Y}_{t-1}^{\prime }\theta _{0}\right) ^{2}-\mathcal{Z}%
_{t-1}^{\prime }\Lambda _{\ast }\right) ^{2}}\right) ^{\prime }=\tfrac{1}{n}%
\sum_{t=1}^{n}\tfrac{u_{t}^{2}\mathcal{Z}_{t-1}\mathcal{Z}_{t-1}^{\prime }}{%
\left( \left( Y_{t}-\mathcal{Y}_{t-1}^{\prime }\theta _{0}\right) ^{2}-%
\mathcal{Z}_{t-1}^{\prime }\Lambda _{\ast }\right) ^{4}}\right.   \notag \\
&&\overset{a.s.}{\underset{n\rightarrow \infty }{\rightarrow }}\mathbb{E}\left( 
\tfrac{\mathbb{E}\left( u_{t}^{2}|\tciFourier _{t-1}^{4}\right) }{\left( \left( Y_{t}-%
\mathcal{Y}_{t-1}^{\prime }\theta _{0}\right) ^{2}-\mathcal{Z}_{t-1}^{\prime
}\Lambda _{\ast }\right) ^{4}}\mathcal{Z}_{t-1}\mathcal{Z}_{t-1}^{\prime
}\right)   \notag \\
&=:&D\left( \Lambda _{\ast }\right) .  
\end{eqnarray*}%
Hence, the CLT for square-integrable martingale differences implies that%
\begin{equation}
\tfrac{1}{\sqrt{n}}\sum_{t=1}^{n}\mathcal{Z}_{t-1}\tfrac{u_{t}}{\left(
\left( Y_{t}-\mathcal{Y}_{t-1}^{\prime }\theta _{0}\right) ^{2}-\mathcal{Z}%
_{t-1}^{\prime }\Lambda _{\ast }\right) ^{2}}\overset{D}{\underset{%
n\rightarrow \infty }{\rightarrow }}\mathcal{N}\left( 0,D\left( \Lambda
_{\ast }\right) \right) .  \label{S.25}
\end{equation}%
Combining \eqref{S.23} and \eqref{S.25} in relationship \eqref{S.22}, we obtain%
\begin{equation*}
\sqrt{n}\left( \widehat{\Lambda }_{1n}-\Lambda _{0}\right) \overset{D}{%
\underset{n\rightarrow \infty }{\rightarrow }}\mathcal{N}\left( 0,C\left(
\Lambda _{\ast }\right) ^{-1}D\left( \Lambda _{\ast }\right) C\left( \Lambda
_{\ast }\right) ^{-1}\right) 
\end{equation*}%
for all $\Lambda _{\ast }>0$, which is the result to be proved.\\
Likewise, we prove strong consistency of $\widehat{\protect\theta }_{2n}$. First, let us show that%
\begin{eqnarray}
\tfrac{1}{n}\sum_{t=1}^{n}\left( \tfrac{1}{\left( \left( Y_{t}-\mathcal{Y}%
_{t-1}^{\prime }\widehat{\theta }_{1n}\right) ^{2}-\mathcal{Z}_{t-1}^{\prime
}\widehat{\Lambda }_{1n}\right) ^{2}}-\tfrac{1}{\left( \left( Y_{t}-\mathcal{%
Y}_{t-1}^{\prime }\theta _{0}\right) ^{2}-\mathcal{Z}_{t-1}^{\prime }\Lambda
_{0}\right) ^{2}}\right) \mathcal{Y}_{t-1}\mathcal{Y}_{t-1}^{\prime }
&=&o_{a.s.}\left( 1\right)   \label{S.26} \\
\tfrac{1}{n}\sum_{t=1}^{n}\mathcal{Y}_{t-1}Y_{t}\left( \tfrac{1}{\left(
\left( Y_{t}-\mathcal{Y}_{t-1}^{\prime }\widehat{\theta }_{1n}\right) ^{2}-%
\mathcal{Z}_{t-1}^{\prime }\widehat{\Lambda }_{1n}\right) ^{2}}-\tfrac{1}{%
\left( \left( Y_{t}-\mathcal{Y}_{t-1}^{\prime }\theta _{0}\right) ^{2}-%
\mathcal{Z}_{t-1}^{\prime }\Lambda _{0}\right) ^{2}}\right) 
&=&o_{a.s.}\left( 1\right)   \label{S.27}
\end{eqnarray}%
A Taylor expansion of the function $h\left( \widehat{\theta }_{1n}\right)
=\left( \left( Y_{t}-\mathcal{Y}_{t-1}^{\prime }\widehat{\theta }%
_{1n}\right) ^{2}-\mathcal{Z}_{t-1}^{\prime }\widehat{\Lambda }_{1n}\right)
^{-2}$ around $\theta _{0}$ gives%
\begin{eqnarray*}
\left( \left( Y_{t}-\mathcal{Y}_{t-1}^{\prime }\widehat{\theta }
_{1n}\right) ^{2}-\mathcal{Z}_{t-1}^{\prime }\widehat{\Lambda }_{1n}\right)
^{-2} &=&\left( \left( Y_{t}-\mathcal{Y}_{t-1}^{\prime }\theta _{0}\right) ^{2}-%
\mathcal{Z}_{t-1}^{\prime }\widehat{\Lambda }_{1n}\right) ^{-2}+\tfrac{%
\partial h\left( \theta _{u}\right) }{\partial \theta ^{\prime }}\left( 
\widehat{\theta }_{1n}-\theta _{0}\right)  \\
&=&\left( \left( Y_{t}-\mathcal{Y}_{t-1}^{\prime }\theta _{0}\right) ^{2}-%
\mathcal{Z}_{t-1}^{\prime }\widehat{\Lambda }_{1n}\right) ^{-2}+\\&+&4\tfrac{%
\left( Y_{t}-\mathcal{Y}_{t-1}^{\prime }\theta _{u}\right) }{\left( \left( 
\mathcal{Y}_{t-1}^{\prime }\theta _{u}\right) ^{2}-2Y_{t}\mathcal{Y}%
_{t-1}^{\prime }\theta _{u}+Y_{t}^{2}-\mathcal{Z}_{t-1}^{\prime }\widehat{%
\Lambda }_{1n}\right) ^{3}}\mathcal{Y}_{t-1}^{\prime }\left( \widehat{\theta 
}_{1n}-\theta _{0}\right) 
\end{eqnarray*}%
so%
\begin{equation*}
\tfrac{1}{n}\sum_{t=1}^{n}\left( \tfrac{1}{\mathcal{Z}_{t-1}^{\prime
}\Lambda _{0}}-\tfrac{1}{\mathcal{Z}_{t-1}^{\prime }\widehat{\Lambda }_{1n}}%
\right) \mathcal{Y}_{t-1}\mathcal{Y}_{t-1}^{\prime }=\tfrac{1}{n}%
\sum_{t=1}^{n}\tfrac{1}{\left( \mathcal{Z}_{t-1}^{\prime }\Lambda
_{u}\right) ^{2}}\mathcal{Z}_{t-1}^{\prime }\left( \widehat{\Lambda }%
_{1n}-\Lambda _{0}\right) \mathcal{Y}_{t-1}\mathcal{Y}_{t-1}^{\prime },
\end{equation*}%
where $\theta _{u}$ is between $\theta _{0}$\ and $\widehat{\theta }_{1n}$.
Since%
\begin{equation*}
\tfrac{1}{n}\sum_{t=1}^{n}\tfrac{1}{\left( \mathcal{Z}_{t-1}^{\prime
}\Lambda _{u}\right) ^{2}}\mathcal{Z}_{t-1}^{\prime }1_{p+1}\mathcal{Y}_{t-1}%
\mathcal{Y}_{t-1}^{\prime }\overset{a.s.}{\underset{n\rightarrow \infty }{%
\rightarrow }}\mathbb{E}\left( \tfrac{1}{\left( \mathcal{Z}_{t-1}^{\prime }\Lambda
_{u}\right) ^{2}}\mathcal{Z}_{t-1}^{\prime }1_{p+1}\mathcal{Y}_{t-1}\mathcal{%
Y}_{t-1}^{\prime }\right) <\infty 
\end{equation*}%
then by the strong consistency of $\widehat{\Lambda }_{1n}$ we find%
\begin{equation*}
\tfrac{1}{n}\sum_{t=1}^{n}\left( \tfrac{1}{\mathcal{Z}_{t-1}^{\prime
}\Lambda _{0}}-\tfrac{1}{\mathcal{Z}_{t-1}^{\prime }\widehat{\Lambda }_{1n}}%
\right) \mathcal{Y}_{t-1}\mathcal{Y}_{t-1}^{\prime }=o_{a.s.}\left( 1\right)
,
\end{equation*}%
which is \eqref{S.25}. The same reasoning shows \eqref{S.26}. Hence 
\begin{equation*}
\widehat{\theta }_{2n}=\left( \tfrac{1}{n}\sum_{t=1}^{n}\tfrac{1}{\mathcal{Z}%
_{t-1}^{\prime }\Lambda _{0}}\mathcal{Y}_{t-1}\mathcal{Y}_{t-1}^{\prime
}\right) ^{-1}\tfrac{1}{n}\sum_{t=1}^{n}\mathcal{Y}_{t-1}\tfrac{Y_{t}}{%
\mathcal{Z}_{t-1}^{\prime }\Lambda _{0}}+o_{a.s.}\left( 1\right) 
\end{equation*}%
and%
\begin{equation}
\widehat{\theta }_{2n}-\theta _{0}=\left( \tfrac{1}{n}\sum_{t=1}^{n}\tfrac{1%
}{\mathcal{Z}_{t-1}^{\prime }\Lambda _{0}}\mathcal{Y}_{t-1}\mathcal{Y}%
_{t-1}^{\prime }\right) ^{-1}\tfrac{1}{n}\sum_{t=1}^{n}\mathcal{Y}_{t-1}%
\tfrac{e_{t}}{\mathcal{Z}_{t-1}^{\prime }\Lambda _{0}}+o_{a.s.}\left(
1\right) .  \label{S.28}
\end{equation}%
Thus the strong consistency of $\widehat{\theta }_{2n}$ follows in view of
\eqref{S.28}\ from%
\begin{equation*}
\tfrac{1}{n}\sum_{t=1}^{n}\tfrac{1}{\mathcal{Z}_{t-1}^{\prime }\Lambda _{0}}%
\mathcal{Y}_{t-1}\mathcal{Y}_{t-1}^{\prime }\overset{a.s.}{\underset{%
n\rightarrow \infty }{\rightarrow }}E\left( \tfrac{1}{\mathcal{Z}%
_{t-1}^{\prime }\Lambda _{0}}\mathcal{Y}_{t-1}\mathcal{Y}_{t-1}^{\prime
}\right) 
\end{equation*}%
and%
\begin{equation*}
\tfrac{1}{n}\sum_{t=1}^{n}\mathcal{Y}_{t-1}\tfrac{e_{t}}{\mathcal{Z}%
_{t-1}^{\prime }\Lambda _{0}}\overset{a.s.}{\underset{n\rightarrow \infty }{%
\rightarrow }}E\left( \mathcal{Y}_{t-1}\tfrac{e_{t}}{\mathcal{Z}%
_{t-1}^{\prime }\Lambda _{0}}\right) =0.
\end{equation*}
Next, we prove the asymptotic normality of $\widehat{\protect\theta }_{2n}$. Note that
\begin{eqnarray*}
\widehat{\theta }_{2n} &=&\left( \tfrac{1}{n}\sum_{t=1}^{n}\tfrac{1}{%
\mathcal{Z}_{t-1}^{\prime }\widehat{\Lambda }_{1n}}\mathcal{Y}_{t-1}\mathcal{%
Y}_{t-1}^{\prime }\right) ^{-1}\tfrac{1}{n}\sum_{t=1}^{n}\mathcal{Y}_{t-1}%
\tfrac{\mathcal{Y}_{t-1}^{\prime }\theta _{0}+e_{t}}{\mathcal{Z}%
_{t-1}^{\prime }\widehat{\Lambda }_{1n}} \\
&=&\theta _{0}+\left( \tfrac{1}{n}\sum_{t=1}^{n}\tfrac{1}{\mathcal{Z}%
_{t-1}^{\prime }\widehat{\Lambda }_{1n}}\mathcal{Y}_{t-1}\mathcal{Y}%
_{t-1}^{\prime }\right) ^{-1}\tfrac{1}{n}\sum_{t=1}^{n}\mathcal{Y}_{t-1}%
\tfrac{e_{t}}{\mathcal{Z}_{t-1}^{\prime }\widehat{\Lambda }_{1n}}
\end{eqnarray*}%
so%
\begin{equation}
\sqrt{n}\left( \widehat{\theta }_{2n}-\theta _{0}\right) =\left( \tfrac{1}{n}%
\sum_{t=1}^{n}\tfrac{1}{\mathcal{Z}_{t-1}^{\prime }\widehat{\Lambda }_{1n}}%
\mathcal{Y}_{t-1}\mathcal{Y}_{t-1}^{\prime }\right) ^{-1}\tfrac{1}{\sqrt{n}}%
\sum_{t=1}^{n}\mathcal{Y}_{t-1}\tfrac{e_{t}}{\mathcal{Z}_{t-1}^{\prime }%
\widehat{\Lambda }_{1n}}.  \label{S.29}
\end{equation}%
We first show%
\begin{equation}
\tfrac{1}{\sqrt{n}}\sum_{t=1}^{n}\mathcal{Y}_{t-1}e_{t}\left( \tfrac{1}{%
\mathcal{Z}_{t-1}^{\prime }\widehat{\Lambda }_{1n}}-\tfrac{1}{\mathcal{Z}%
_{t-1}^{\prime }\Lambda _{0}}\right) =o_{p}\left( 1\right),  \label{S.30}
\end{equation}%
using again the above Taylor expansion of $g\left( \widehat{\Lambda }%
_{1n}\right) $ around $\Lambda _{0}$. This gives%
\begin{eqnarray*}
\tfrac{1}{\sqrt{n}}\sum_{t=1}^{n}\mathcal{Y}_{t-1}e_{t}\left( \tfrac{1}{%
\mathcal{Z}_{t-1}^{\prime }\widehat{\Lambda }_{1n}}-\tfrac{1}{\mathcal{Z}%
_{t-1}^{\prime }\Lambda _{0}}\right) &=&\tfrac{1}{\sqrt{n}}\sum_{t=1}^{n}%
\mathcal{Y}_{t-1}\tfrac{e_{t}}{\left( \mathcal{Z}_{t-1}^{\prime }\Lambda
_{u}\right) ^{2}}\mathcal{Z}_{t-1}^{\prime }\left( \widehat{\Lambda }%
_{1n}-\Lambda _{0}\right) \\
&=&\tfrac{1}{\sqrt{n}}\sum_{t=1}^{n}\mathcal{Y}_{t-1}\tfrac{e_{t}}{\left( 
\mathcal{Z}_{t-1}^{\prime }\Lambda _{u}\right) ^{2}}\mathcal{Z}%
_{t-1}^{\prime }O_{p}\left( 1\right) \tfrac{1}{\sqrt{n}} \\
&=&\tfrac{C}{n}\sum_{t=1}^{n}\mathcal{Y}_{t-1}\tfrac{e_{t}}{\left( \mathcal{Z%
}_{t-1}^{\prime }\Lambda _{u}\right) ^{2}}\mathcal{Z}_{t-1}^{\prime }\overset%
{p}{\underset{n\rightarrow \infty }{\rightarrow }}K\mathbb{E}\left( \mathcal{Y}_{t-1}%
\tfrac{e_{t}}{\left( \mathcal{Z}_{t-1}^{\prime }\Lambda _{u}\right) ^{2}}%
\mathcal{Z}_{t-1}^{\prime }\right) \\
&=&0
\end{eqnarray*}%
showing \eqref{S.30}, where $K$ is a bounding constant. Hence, \eqref{S.29} becomes%
\begin{equation*}
\sqrt{n}\left( \widehat{\theta }_{2n}-\theta _{0}\right) =\left( \tfrac{1}{n}%
\sum_{t=1}^{n}\tfrac{1}{\mathcal{Z}_{t-1}^{\prime }\Lambda _{0}}\mathcal{Y}%
_{t-1}\mathcal{Y}_{t-1}^{\prime }\right) ^{-1}\tfrac{1}{\sqrt{n}}%
\sum_{t=1}^{n}\mathcal{Y}_{t-1}\tfrac{e_{t}}{\mathcal{Z}_{t-1}^{\prime
}\Lambda _{0}}+o_{p}\left( 1\right).  
\end{equation*}

The asymptotic normality of $\widehat{\theta }_{2n}$ thus follow from%
\begin{equation*}
\tfrac{1}{n}\sum_{t=1}^{n}\tfrac{1}{\mathcal{Z}_{t-1}^{\prime }\Lambda _{0}}%
\mathcal{Y}_{t-1}\mathcal{Y}_{t-1}^{\prime }\overset{a.s.}{\underset{%
n\rightarrow \infty }{\rightarrow }}\mathbb{E}\left( \tfrac{1}{\mathcal{Z}%
_{t-1}^{\prime }\Lambda _{0}}\mathcal{Y}_{t-1}\mathcal{Y}_{t-1}^{\prime
}\right) =\mathbb{E}\left( \Lambda _{0}\right)
\end{equation*}%
and%
\begin{eqnarray*}
\sum_{t=1}^{n}\left( \tfrac{1}{\sqrt{n}}\mathcal{Y}_{t-1}\tfrac{e_{t}}{%
\mathcal{Z}_{t-1}^{\prime }\Lambda _{0}}\right) \left( \tfrac{1}{\sqrt{n}}%
\mathcal{Y}_{t-1}\tfrac{e_{t}}{\mathcal{Z}_{t-1}^{\prime }\Lambda _{0}}%
\right) ^{\prime } &=&\tfrac{1}{n}\sum_{t=1}^{n}\left( \tfrac{e_{t}^{2}}{%
\left( \mathcal{Z}_{t-1}^{\prime }\Lambda _{0}\right) ^{2}}\mathcal{Y}_{t-1}%
\mathcal{Y}_{t-1}^{\prime }\right) \\
&&\overset{a.s.}{\underset{n\rightarrow \infty }{\rightarrow }}\mathbb{E}\left( 
\tfrac{e_{t}^{2}}{\left( \mathcal{Z}_{t-1}^{\prime }\Lambda _{0}\right) ^{2}}%
\mathcal{Y}_{t-1}\mathcal{Y}_{t-1}^{\prime }\right) \\
&=&\mathbb{E}\left( \mathbb{E}\left( \tfrac{e_{t}^{2}}{\left( \mathcal{Z}_{t-1}^{\prime
}\Lambda _{0}\right) ^{2}}\mathcal{Y}_{t-1}\mathcal{Y}_{t-1}^{\prime
}\right) |\tciFourier _{t-1}\right) \\
&=&\mathbb{E}\left( \tfrac{\mathbb{E}\left( e_{t}^{2}|\tciFourier _{t-1}\right) }{\left( 
\mathcal{Z}_{t-1}^{\prime }\Lambda _{0}\right) ^{2}}\mathcal{Y}_{t-1}%
\mathcal{Y}_{t-1}^{\prime }\right) \\
&=&\mathbb{E}\left( \tfrac{\mathcal{Z}_{t-1}^{\prime }\Lambda _{0}}{\left( \mathcal{Z}%
_{t-1}^{\prime }\Lambda _{0}\right) ^{2}}\mathcal{Y}_{t-1}\mathcal{Y}%
_{t-1}^{\prime }\right) \\
&=&E\left( \tfrac{1}{\left( \mathcal{Z}_{t-1}^{\prime }\Lambda _{0}\right) }%
\mathcal{Y}_{t-1}\mathcal{Y}_{t-1}^{\prime }\right) =:B\left( \Lambda
_{0}\right),
\end{eqnarray*}%
which by the martingale CLT implies that%
\begin{equation*}
\tfrac{1}{\sqrt{n}}\sum_{t=1}^{n}\mathcal{Y}_{t-1}\tfrac{e_{t}}{\mathcal{Z}%
_{t-1}^{\prime }\Lambda _{0}}\overset{D}{\underset{n\rightarrow \infty }{%
\rightarrow }}\mathcal{N}\left( 0,B\left( \Lambda _{0}\right) \right).
\end{equation*}
Since $B\left( \Lambda _{0}\right) =A\left( \Lambda _{0}\right) $, which
implies%
\begin{equation*}
A\left( \Lambda _{0}\right) ^{-1}B\left( \Lambda _{0}\right) A\left( \Lambda
_{0}\right) ^{-1}=A\left( \Lambda _{0}\right) ^{-1},
\end{equation*}%
we finally obtain 
\begin{equation*}
\sqrt{n}\left( \widehat{\theta }_{2n}-\theta _{0}\right) \overset{D}{%
\underset{n\rightarrow \infty }{\rightarrow }}\mathcal{N}\left( 0,A\left(
\Lambda _{0}\right) ^{-1}\right),
\end{equation*}%
as required.\\
Now, we move to the strong consistency of $\widehat{\Lambda }_{2n}$. Similarly to \eqref{S.14}, using a Taylor expansion of $g\left( \widehat{\theta }%
_{2n}\right) =\tfrac{\left( Y_{t}-\mathcal{Y}_{t-1}^{\prime }\widehat{\theta 
}_{2n}\right) ^{2}}{\left( \left( Y_{t}-\mathcal{Y}_{t-1}^{\prime }\widehat{%
\theta }_{2n}\right) ^{2}-\mathcal{Z}_{t-1}^{\prime }\widehat{\Lambda }%
_{1n}\right) ^{2}}$ around $\theta _{0}$\ gives%
\begin{eqnarray*}
\tfrac{\left( Y_{t}-\mathcal{Y}_{t-1}^{\prime }\widehat{\theta }_{2n}\right)
^{2}}{\left( \left( Y_{t}-\mathcal{Y}_{t-1}^{\prime }\widehat{\theta }%
_{2n}\right) ^{2}-\mathcal{Z}_{t-1}^{\prime }\widehat{\Lambda }_{1n}\right)
^{2}} &=&\tfrac{\left( Y_{t}-\mathcal{Y}_{t-1}^{\prime }\theta _{0}\right)
^{2}}{\left( \left( Y_{t}-\mathcal{Y}_{t-1}^{\prime }\theta _{0}\right) ^{2}-%
\mathcal{Z}_{t-1}^{\prime }\widehat{\Lambda }_{1n}\right) ^{2}}  \notag \\
&+&\tfrac{2\left( Y_{t}-\mathcal{Y}_{t-1}^{\prime }\theta _{u}\right) \left(
Y_{t}^{2}-2Y_{t}\mathcal{Y}_{t-1}^{\prime }\theta _{u}+\left( \mathcal{Y}%
_{t-1}^{\prime }\theta _{u}\right) ^{2}+\mathcal{Z}_{t-1}^{\prime }\widehat{%
\Lambda }_{1n}\right) }{\left( \left( Y_{t}-\mathcal{Y}_{t-1}^{\prime
}\theta _{u}\right) ^{2}-\mathcal{Z}_{t-1}^{\prime }\widehat{\Lambda }%
_{1n}\right) ^{2}}\mathcal{Y}_{t-1}^{\prime }\left( \widehat{\theta }%
_{2n}-\theta _{0}\right) .\text{ \ \ \ \ \ \ \ \ \ \ \ }  
\end{eqnarray*}%
Using again a Taylor expansion of%
\begin{equation*}
h(\widehat{\Lambda }_{1n}):=\tfrac{1}{\left( \left( Y_{t}-\mathcal{Y}%
_{t-1}^{\prime }\theta _{0}\right) ^{2}-\mathcal{Z}_{t-1}^{\prime }\widehat{%
\Lambda }_{1n}\right) ^{2}}\text{ \ and \ }l(\widehat{\Lambda }_{1n})=\tfrac{%
\left( Y_{t}^{2}-2Y_{t}\mathcal{Y}_{t-1}^{\prime }\theta _{u}+\left( 
\mathcal{Y}_{t-1}^{\prime }\theta _{u}\right) ^{2}+\mathcal{Z}_{t-1}^{\prime
}\widehat{\Lambda }_{1n}\right) }{\left( \left( Y_{t}-\mathcal{Y}%
_{t-1}^{\prime }\theta _{u}\right) ^{2}-\mathcal{Z}_{t-1}^{\prime }\widehat{%
\Lambda }_{1n}\right) ^{2}},
\end{equation*}
around $\Lambda _{0}$ and by the consistency of $\widehat{\theta }_{2n}$ and 
$\widehat{\Lambda }_{1n}$, \eqref{S.14}, \eqref{S.26}, and\eqref{S.27} we get%
\begin{eqnarray}
\tfrac{1}{n}\sum_{t=1}^{n}\mathcal{Z}_{t-1}\tfrac{\left( Y_{t}-\mathcal{Y}%
_{t-1}^{\prime }\widehat{\theta }_{2n}\right) ^{2}}{\left( \left( Y_{t}-%
\mathcal{Y}_{t-1}^{\prime }\widehat{\theta }_{2n}\right) ^{2}-\mathcal{Z}%
_{t-1}^{\prime }\widehat{\Lambda }_{1n}\right) ^{2}} &=&\tfrac{1}{n}%
\sum_{t=1}^{n}\mathcal{Z}_{t-1}\tfrac{\left( Y_{t}-\mathcal{Y}_{t-1}^{\prime
}\theta _{0}\right) ^{2}}{\left( \left( Y_{t}-\mathcal{Y}_{t-1}^{\prime
}\theta _{0}\right) ^{2}-\mathcal{Z}_{t-1}^{\prime }\Lambda _{0}\right) ^{2}}%
+o_{a.s.}\left( 1\right)   \notag \\
\tfrac{1}{n}\sum_{t=1}^{n}\tfrac{\mathcal{Z}_{t-1}\mathcal{Z}_{t-1}^{\prime }%
}{\left( \left( Y_{t}-\mathcal{Y}_{t-1}^{\prime }\widehat{\theta }%
_{2n}\right) ^{2}-\mathcal{Z}_{t-1}^{\prime }\widehat{\Lambda }_{1n}\right)
^{2}} &=&\tfrac{1}{n}\sum_{t=1}^{n}\tfrac{\mathcal{Z}_{t-1}\mathcal{Z}%
_{t-1}^{\prime }}{\left( \left( Y_{t}-\mathcal{Y}_{t-1}^{\prime }\theta
_{0}\right) ^{2}-\mathcal{Z}_{t-1}^{\prime }\Lambda _{0}\right) ^{2}}%
+o_{a.s.}\left( 1\right).\text{ \ \ \ \ \ \ \ \ \ \ \ }  \label{S.33}
\end{eqnarray}%
From \eqref{4.3d} and \eqref{S.33} we finally obtain 
\begin{eqnarray*}
\widehat{\Lambda }_{2n} &=&\left( \sum_{t=1}^{n}\tfrac{1}{\left( \left(
Y_{t}-\mathcal{Y}_{t-1}^{\prime }\widehat{\theta }_{2n}\right) ^{2}-\mathcal{%
Z}_{t-1}^{\prime }\widehat{\Lambda }_{1n}\right) ^{2}}\mathcal{Z}_{t-1}%
\mathcal{Z}_{t-1}^{\prime }\right) ^{-1}\sum_{t=1}^{n}\mathcal{Z}_{t-1}%
\tfrac{\left( Y_{t}-\mathcal{Y}_{t-1}^{\prime }\widehat{\theta }_{2n}\right)
^{2}}{\left( \left( Y_{t}-\mathcal{Y}_{t-1}^{\prime }\widehat{\theta }%
_{2n}\right) ^{2}-\mathcal{Z}_{t-1}^{\prime }\widehat{\Lambda }_{1n}\right)
^{2}} \\
&=&\Lambda _{0}+\left( \tfrac{1}{n}\sum_{t=1}^{n}\tfrac{1}{\left( \left(
Y_{t}-\mathcal{Y}_{t-1}^{\prime }\theta _{0}\right) ^{2}-\mathcal{Z}%
_{t-1}^{\prime }\Lambda _{0}\right) ^{2}}\mathcal{Z}_{t-1}\mathcal{Z}%
_{t-1}^{\prime }\right) ^{-1}\tfrac{1}{n}\sum_{t=1}^{n}\mathcal{Z}_{t-1}%
\tfrac{u_{t}}{\left( \left( Y_{t}-\mathcal{Y}_{t-1}^{\prime }\theta
_{0}\right) ^{2}-\mathcal{Z}_{t-1}^{\prime }\Lambda _{0}\right) ^{2}}%
+o_{a.s.}\left( 1\right), 
\end{eqnarray*}%
so,%
\begin{equation*}
\widehat{\Lambda }_{2n}-\Lambda _{0}=\left( \tfrac{1}{n}\sum_{t=1}^{n}\tfrac{%
\mathcal{Z}_{t-1}\mathcal{Z}_{t-1}^{\prime }}{\left( \left( Y_{t}-\mathcal{Y}%
_{t-1}^{\prime }\theta _{0}\right) ^{2}-\mathcal{Z}_{t-1}^{\prime }\Lambda
_{0}\right) ^{2}}\right) ^{-1}\tfrac{1}{n}\sum_{t=1}^{n}\tfrac{\mathcal{Z}%
_{t-1}u_{t}}{\left( \left( Y_{t}-\mathcal{Y}_{t-1}^{\prime }\theta
_{0}\right) ^{2}-\mathcal{Z}_{t-1}^{\prime }\Lambda _{0}\right) ^{2}}%
+o_{a.s.}\left( 1\right),   
\end{equation*}%
from which follows the strong consistency of $\widehat{\Lambda }_{2n}$ while
using the fact that%
\begin{eqnarray*}
\tfrac{1}{n}\sum_{t=1}^{n}\tfrac{1}{\left( \left( Y_{t}-\mathcal{Y}%
_{t-1}^{\prime }\theta _{0}\right) ^{2}-\mathcal{Z}_{t-1}^{\prime }\Lambda
_{0}\right) ^{2}}\mathcal{Z}_{t-1}\mathcal{Z}_{t-1}^{\prime }\overset{a.s.}{%
\underset{n\rightarrow \infty }{\rightarrow }}\mathbb{E}\left( \tfrac{1}{\left(
\left( Y_{t}-\mathcal{Y}_{t-1}^{\prime }\theta _{0}\right) ^{2}-\mathcal{Z}%
_{t-1}^{\prime }\Lambda _{0}\right) ^{2}}\mathcal{Z}_{t-1}\mathcal{Z}%
_{t-1}^{\prime }\right)  &<&\infty  \\
\tfrac{1}{n}\sum_{t=1}^{n}\mathcal{Z}_{t-1}\tfrac{u_{t}}{\left( \left( Y_{t}-%
\mathcal{Y}_{t-1}^{\prime }\theta _{0}\right) ^{2}-\mathcal{Z}_{t-1}^{\prime
}\Lambda _{0}\right) ^{2}}\overset{a.s.}{\underset{n\rightarrow \infty }{%
\rightarrow }}\mathbb{E}\left( \mathcal{Z}_{t-1}\tfrac{u_{t}}{\left( \left( Y_{t}-%
\mathcal{Y}_{t-1}^{\prime }\theta _{0}\right) ^{2}-\mathcal{Z}_{t-1}^{\prime
}\Lambda _{0}\right) ^{2}}\right)  &=&0.
\end{eqnarray*}
Finally, we prove asymptotic normality of $\widehat{\Lambda }_{2n}$. Using the same arguments as above we obtain%
\begin{eqnarray*}
&&\sqrt{n}\left( \widehat{\Lambda }_{2n}-\Lambda _{0}\right)   \notag \\
&=&\left( \tfrac{1}{n}\sum_{t=1}^{n}\tfrac{1}{\left( \left( Y_{t}-\mathcal{Y}%
_{t-1}^{\prime }\theta _{0}\right) ^{2}-\mathcal{Z}_{t-1}^{\prime }\Lambda
_{0}\right) ^{2}}\mathcal{Z}_{t-1}\mathcal{Z}_{t-1}^{\prime }\right) ^{-1}%
\tfrac{1}{\sqrt{n}}\sum_{t=1}^{n}\mathcal{Z}_{t-1}\tfrac{u_{t}}{\left(
\left( Y_{t}-\mathcal{Y}_{t-1}^{\prime }\theta _{0}\right) ^{2}-\mathcal{Z}%
_{t-1}^{\prime }\Lambda _{0}\right) ^{2}}+o_{p}\left( 1\right) .\text{ \ \ \
\ \ \ \ \ \ }  
\end{eqnarray*}%
From the fact that%
\begin{equation*}
\tfrac{1}{n}\sum_{t=1}^{n}\tfrac{1}{\left( \left( Y_{t}-\mathcal{Y}%
_{t-1}^{\prime }\theta _{0}\right) ^{2}-\mathcal{Z}_{t-1}^{\prime }\Lambda
_{0}\right) ^{2}}\mathcal{Z}_{t-1}\mathcal{Z}_{t-1}^{\prime }\overset{a.s.}{%
\underset{n\rightarrow \infty }{\rightarrow }}\mathbb{E}\left( \tfrac{1}{\left(
\left( Y_{t}-\mathcal{Y}_{t-1}^{\prime }\theta _{0}\right) ^{2}-\mathcal{Z}%
_{t-1}^{\prime }\Lambda _{0}\right) ^{2}}\mathcal{Z}_{t-1}\mathcal{Z}%
_{t-1}^{\prime }\right) =:C\left( \Lambda _{0}\right) 
\end{equation*}%
and 
\begin{equation*}
\tfrac{1}{\sqrt{n}}\sum_{t=1}^{n}\mathcal{Z}_{t-1}\tfrac{u_{t}}{\left(
\left( Y_{t}-\mathcal{Y}_{t-1}^{\prime }\theta _{0}\right) ^{2}-\mathcal{Z}%
_{t-1}^{\prime }\Lambda _{0}\right) ^{2}}\overset{D}{\underset{n\rightarrow
\infty }{\rightarrow }}\mathcal{N}\left( 0,D\left( \Lambda _{0}\right)
\right), 
\end{equation*}%
where%
\begin{eqnarray*}
D\left( \Lambda _{0}\right)  &=&\mathbb{E}\left( \tfrac{\mathbb{E}\left( u_{t}^{2}|\tciFourier
_{t-1}^{Y}\right) }{\left( \left( Y_{t}-\mathcal{Y}_{t-1}^{\prime }\theta
_{0}\right) ^{2}-\mathcal{Z}_{t-1}^{\prime }\Lambda _{0}\right) ^{4}}%
\mathcal{Z}_{t-1}\mathcal{Z}_{t-1}^{\prime }\right) =\mathbb{E}\left( \tfrac{u_{t}^{2}%
}{\left( \left( Y_{t}-\mathcal{Y}_{t-1}^{\prime }\theta _{0}\right) ^{2}-%
\mathcal{Z}_{t-1}^{\prime }\Lambda _{0}\right) ^{4}}\mathcal{Z}_{t-1}%
\mathcal{Z}_{t-1}^{\prime }\right)  \\
&=&\mathbb{E}\left( \tfrac{1}{\left( \left( Y_{t}-\mathcal{Y}_{t-1}^{\prime }\theta
_{0}\right) ^{2}-\mathcal{Z}_{t-1}^{\prime }\Lambda _{0}\right) ^{2}}%
\mathcal{Z}_{t-1}\mathcal{Z}_{t-1}^{\prime }\right) =C\left( \Lambda
_{0}\right).
\end{eqnarray*}%
Therefore, we finally obtain 
\begin{equation*}
\sqrt{n}\left( \widehat{\Lambda }_{2n}-\Lambda _{0}\right) \overset{D}{%
\underset{n\rightarrow \infty }{\rightarrow }}\mathcal{N}\left( 0,C\left(
\Lambda _{0}\right) ^{-1}\right), 
\end{equation*}%
as required.

\begin{thebibliography}{99}

\bibitem{}
Ahmad, A. and Francq, C. (2016). Poisson QMLE of count time series models. \textit{Journal of Time Series Analysis} \textbf{37}, 291--314.

\bibitem{} Aknouche, A. (2012). Multistage weighted least squares estimation
of ARCH processes in the stable and unstable cases. \textit{Statistical
Inference for Stochastic Processes} \textbf{15}, 241--256.

\bibitem{} Aknouche, A. $(2013)$.
Two-stage weighted least squares estimation of nonstationary random coefficient autoregressions. \textit{Journal of Time Series Econometrics} \textbf{5}, 25--46.

\bibitem{} Aknouche, A. $(2015)$. Quadratic random coefficient
autoregression with linear-in-parameters volatility. \textit{Statistical
Inference for Stochastic Processes} \textbf{18}, 99--125.

\bibitem{} Aknouche, A., Bendjeddou, S. and Touche, N. (2018). Negative Binomial quasi-likelihood inference for general integer-valued time series models. \textit{Journal of Time Series Analysis} \textbf{39}, 192--211.



\bibitem{}
Aknouche, A. and Francq, C. (2021). Count and duration time series with equal conditional stochastic and mean orders. \textit{Econometric Theory} \textbf{37}, 248--280.

\bibitem{}
Aknouche, A. and Francq, C. (2023). Two-stage weighted least squares estimator of the conditional mean of observation-driven time
series models. \textit{Journal of Econometrics}, to appear.


\bibitem{}
Aknouche, A. and Scotto, M.G. (2024). A multiplicative thinning-based integer-valued GARCH model. \textit{Journal of Time Series Analysis} \textbf{45}, 4--26.


\bibitem{} Alomani, G., Alzaid, A.A. and Omair, M.A. (2018). A Skellam INGARCH
model. \textit{Brazilian Journal of Probability and Statistics} \textbf{32}, 200--214.

\bibitem{}
Al-Osh, M.A. and Alzaid, A.A. (1987). First-order integer-valued autoregressive (INAR(1)) process. \textit{Journal of Time Series Analysis}
\textbf{8}, 261--275.

\bibitem{} Alzaid, A.A. and Omair, M.A. (2010). On the Poisson difference
distribution inference and applications. \textit{Bulletin of the Malaysian
Mathematical Sciences Society} \textbf{33}, 17--45.

\bibitem{} Alzaid, A.A. and Omair, M.A. (2014). Poisson difference integer
valued autoregressive model of order one. \textit{Bulletin of the Malaysian
Mathematical Sciences Society} \textbf{37}, 465--485.


\bibitem{} Aue, A., Horv\'{a}th, L. and Steinebach, J. $(2006)$. Estimation
in random coefficient autoregressive models. \textit{Journal of Time Series Analysis} \textbf{27}, 61--76.

\bibitem{}
Aue, A. and Horv\'{a}th, L. (2011). Quasi-likelihood estimation in stationary and nonstationary autoregressive models with random coefficients. \textit{Statistica Sinica} \textbf{21}, 973--99.

\bibitem{} Barndorff-Nielsen, O.E., Pollard, D.G. and Shephard, N. (2012).
Integer-valued Levy procces and low latency financial econometrics. \textit{Quantitative Finance} \textbf{12}, 587--605.

\bibitem{} Barra, I. and Koopman, S.J. (2018). Bayesian dynamic modeling of high-frequency integer price changes. \textit{Journal of Financial Econometrics} \textbf{16}, 384--424.

\bibitem{}
Benjamin, M.A., Rigby, R.A. and Stasinopoulos, D.M. (2003). Generalized autoregressive moving average models. \textit{Journal of the American Statistical Association}, \textbf{98}, 214--223. 


\bibitem{} Brandt, A. $(1986)$. The stochastic equation $X_{n+1}$ $
=A_{n}X_{n}+B_{n}$ with stationary coefficients. \textit{Advances in Applied Probability} \textbf{18}, 211--220.


\bibitem{}
Cox, D.R. $(1981)$. Statistical analysis of time series: Some recent developments. \textit{Scandinavian Journal of Statistics} \textbf{8}, 93--115.

\bibitem{} Cui, Y., Li, Q. and Zhu, F. (2021). Modeling $\mathbb{Z}$-valued time series based on new versions of the Skellam INGARCH model
\textit{Brazilian Journal of Probability and Statistics} \textbf{35},  93--314.

\bibitem{} Davis, R.A. and Dunsmuir, W.T. (2016). State space models for count
time series. In: \textit{Handbook of Discrete-Valued Time Series}. Chapman
and Hall: CRC Press, pp. 121--144.

\bibitem{} 
Davis, R.A., Holan, S.H., Lund, R. and Ravishanker, N. (2016). \textit{Handbook of Discrete-Valued Time Series}. Chapman and Hall: CRC Press.

\bibitem{}
Davis, R.A. and Liu, H. (2016). Theory and inference for a class of
observation-driven models with application to time series of counts. \textit{Statistica Sinica} \textbf{26}, 1673--1707.

\bibitem{}
Davis, R.A. and Rodriguez-Yam, G. (2005). Estimation for state-space models
based on likelihood approximation. \textit{Statistica Sinica} \textbf{15}, 381--406.

\bibitem{}
Davis, R.A. and Wu, R. (2009). A negative binomial model for time series of
counts. \textit{Biometrika} \textbf{96}, 735--749.

\bibitem{}
Diaconis, P. and Freedman, D. (1999). Iterated random functions. \textit{SIAM Review} \textbf{41}, 45--76. 

\bibitem{}
Du, J.G. and Li, Y. (1991). The integer-valued autoregressive (INAR($p$)) model. \textit{Journal of Time Series Analysis} \textbf{12}, 129--142.

\bibitem{} Feigin, P.D. and Tweedie, R.L. (1985). Random coefficient
autoregressive processes: a Markov chain analysis of stationarity and
finiteness of moments. \textit{Journal of Time Series Analysis} \textbf{6},
1--14.

\bibitem{}
Ferland, R., Latour, A. and Oraichi, D. (2006). Integer-valued GARCH process.  \textit{Journal of Time Series Analysis} \textbf{7}, 923--942.

\bibitem{}
Fokianos, K., Rahbek, A. and Tjøstheim, D. (2009). Poisson autoregression. \textit{Journal of the American Statistical Association} \textbf{140}, 1430--1439.

\bibitem{Francq2019} Francq, C. and Zakoian, J.M. (2019). \textit{GARCH Models, Structure, Statistical Inference and Financial Applications}. John Wiley \& Sons, Ltd.

\bibitem{}
Furstenberg, H. and Kesten, H. (1960). Products of random matrices. \textit{Annals of Mathematical Statistics} \textbf{1}, 457--469.

\bibitem{} Goldie, C.M. $(1991)$. Implicit renewal theory and tails of
solutions of random equations. \textit{Annals of Applied Probability}
\textbf{1}, 126--166.

\bibitem{} Gorgi, P. $(2020)$. Beta-negative binomial auto-regressions for
modelling integer-valued time series with extreme observations. \textit{Journal of
the Royal Statistical Society $B$}, \textbf{82}, 1325--1347.

\bibitem{} Grey, D.R. $(1994)$. Regular variation in the tail behaviour of
solutions of random difference equations. \textit{Annals of Applied
Probability} \textbf{4}, 169--183.

\bibitem{} Heinen, A. (2003). Modelling time series count data: An autoregressive conditional Poisson model. CORE Discussion Paper
2003/62.

\bibitem{} Irwin, J.O. (1937). The frequency distribution of the difference
between two independent variates following the same Poisson distribution. 
\textit{Journal of the Royal Statistical Society: Series A} \textbf{100}, 415--416.

\bibitem{Kesten1973} Kesten, H. $(1973)$. Random difference equations and renewal theory for products of random. \textit{Acta Mathematica} \textbf{131}, 207--248.

\bibitem{} Kim, H.-Y. and Park, Y. (2008). A non-stationary integer-valued
autoregressive model. \textit{Statistical Papers} \textbf{49}, 485--502.

\bibitem{} Koopman, S., Lit, R. and Lucas, A. (2017). Intraday stochastic
volatility in discrete price changes: the dynamic Skellam model. \textit{Journal of the American Statistical Association} \textbf{112}, 1490--1503.

\bibitem{}
Latour, A. (1998). Existence and stochastic structure of a non-negative integer-valued autoregressive process. \textit{Journal of Time Series Analysis} \textbf{19}, 439--455.

\bibitem{} Ling, S. $(1999)$. On the stationarity and the existence of moments of conditional heteroskedastic $ARMA$ models. \textit{Statistica Sinica} \textbf{9}, 1119--1130.

\bibitem{}
McKenzie, E. (1985). Some simple models for discrete variate time series. \textit{Water Resources Bulletin} \textbf{21}, 645--650.

\bibitem{} Nicholls, F. and Quinn, B.G. $(1982)$. \textit{Random
Coefficient Autoregressive Models}: \textit{An Introduction}.
Springer-Verlag, New York.

\bibitem{} Praskova, Z. and Vanecek, P. (2011). On a class of estimators in
multivariate RCA(1) model. \textit{Kybernetika} \textbf{47}, 501--518. 

\bibitem{} Regis, M., Serra, P. and van den Heuvel, E.R. (2022). Random autoregressive models: A structured overview. \textit{Econometric
Reviews} \textbf{41}, 207--230.

\bibitem{}
Rydberg, T.H. and Shephard, N. (2000). A modelling framework for the prices and times of trades made on the New York stock exchange. \textit{Nonlinear and Nonstationary Signal Processing}, 217--246.

\bibitem{} 
Schick, A. (1996). Consistent estimation in a random coefficient autoregressive model. \textit{Australian Journal of Statistics} \textbf{38}, 155--160.

\bibitem{} Scotto, M.G., Weiss, C.H. and Gouveia, S. (2015). Thinning-based models in the analysis of integer-valued time series: a review.
\textit{Statistical Modelling} \textbf{15}, 590--618.

\bibitem{} Skellam, J.G. (1946). The frequency distribution of the
difference between two Poisson variates belonging to different populations. \textit{Journal of the Royal Statistical Society, Series A} \textbf{109}, 296.

\bibitem{}
Steutel, F.W. and varn Harn, K. (1979). Discrete analogues of self-decomposability and stability. \textit{Annals of Probability} \textbf{7}, 893--899.

\bibitem{} Trapani, L. (2021). Testing for strict stationarity in a random
coefficient autoregressive model. \textit{Econometric Reviews} \textbf{40}, 220--256.

\bibitem{} 
Tsay, R.S. (1987). Conditional heteroskedastic time series models. \textit{Journal of the American Statistical Association} \textbf{7}, 590--604.

\bibitem{} Vervaat, W. $(1979)$. On a stochastic difference equation and a
representation of non negative infinitely divisible random variables.
\textit{Advances in Applied Probability} \textbf{11}, 750--783.

\bibitem{}
Weiss, C.H. (2018). \textit{An Introduction to Discrete-valued Time Series}. Wiley, Oxford, UK.

\bibitem{} Xu, Y. and Zhu, F. (2022). A new asymmetric power ARCH model for
Z-valued time series. \textit{Journal of Time Series Analysis} \textbf{43},
490--500.

\bibitem{}
Zeger, S.L. (1988). A regression model for time series of counts. \textit{Biometrika} \textbf{75}, 621--629.

\bibitem{}
Zeger, S.L. and Qaqish, B. $(1988)$. Markov regression models for time series: A quasi-likelihood approach. \textit{Biometrics} \textbf{44}, 1019--1031. 

\bibitem{}
Zheng, H.T., Basawa, I.V. and Datta, S. (2006). Inference for $p$th-order random coefficient integer-valued autoregressive processes. \textit{Journal of Time Series Analysis} \textbf{27}, 411--440.


\bibitem{}
Zheng, H.T., Basawa, I.V. and Datta, S. (2007). The first order random coefficient integer-valued autoregressive processes. \textit{Journal of Statistical Planning and Inference} \textbf{173}, 212--229.

\bibitem{}
Zhang, H., Wang, D. and Zhu, F. (2009). Inference for inar(p)
processes with signed generalized power series thinning operator. \textit{Journal of Statistical Planning and Inference} \textbf{140}, 667--683.

\bibitem{}
Zhu, F. (2011). A negative binomial integer-valued GARCH model. \textit{Journal of Time Series Analysis} \textbf{32}, 54--67.

































































%


\end{thebibliography}
\end{document}